\declaretheorem[name=Lemma]{lemma}
\definecolor{Darkblue}{rgb}{0,0,0.4}
\definecolor{Brown}{cmyk}{0,0.61,1.,0.60}
\definecolor{Purple}{cmyk}{0.45,0.86,0,0}
\newcommand{\initOneLiners}{%
	\setlength{\itemsep}{0pt}
	\setlength{\parsep }{0pt}
	\setlength{\topsep }{0pt}
}
\newenvironment{OneLiners}[1][\ensuremath{\bullet}]
{\begin{list}
		{#1}
		{\initOneLiners}}
	{\end{list}}
\newcommand{\initTwoLiners}{%
	\setlength{\itemsep}{1pt}
	\setlength{\parsep }{0pt}
	\setlength{\topsep }{0pt}
}
\newenvironment{TwoLiners}[1][\ensuremath{\bullet}]
{\begin{list}
		{#1}
		{\initTwoLiners}}
	{\end{list}}
\newtheorem{theorem}{Theorem}
\newtheorem{corollary}{Corollary}
\newtheorem{remark}{Remark}
\newtheorem{claim}{Claim}
\newtheorem{definition}{Definition}
\newtheorem{observation}[lemma]{Observation}
\newtheorem{conjecture}{Conjecture}
\numberwithin{equation}{section}
\newcommand{\namedref}[2]{\hyperref[#2]{#1~\ref*{#2}}}
\newcommand{\E}{{\mathbb{E}}}
\newcommand{\N}{\mathbb{N}}
\newcommand{\R}{\mathbb{R}}
\newcommand{\Z}{\mathbb{Z}}
\newcommand{\cP}{\mathcal{P}}
\newcommand{\cR}{\mathcal{R}}
\newcommand{\cC}{\mathcal{C}}
\newcommand{\poly}{{\rm poly}}
\newcommand{\polylog}{{\rm polylog}}
\newcommand{\Exp}{\mathsf{Exp}}
\newcommand{\ctop}{c_\top}
\newcommand{\etal}{{et al. \xspace}}
\newcommand{\vol}{\mbox{\rm Vol}}
\newcommand{\rt}{\mbox{\rm rt}}
\newcommand{\ddim}{\mbox{\rm ddim}}
\newcommand{\sddim}{\mbox{\small\rm ddim}}
\newcommand{\tddim}{\mbox{\tiny\rm ddim}}
\newcommand{\eps}{\epsilon}
\newcommand{\SPD}{\textsf{SPD}\xspace}
\newcommand{\SPDs}{{\SPD}s\xspace}
\newcommand{\SPDdepth}{\textsf{SPDdepth}\xspace}
\newcommand{\SPR}{\textsf{SPR}\xspace}
\newcommand{\UTSP}{\textsf{UTSP}\xspace}
\newcommand{\TSP}{\textsf{TSP}\xspace}
\newcommand{\UST}{\textsf{UST}\xspace}
\newcommand{\cint}{c_{\text{\tiny int}}}
\newcommand{\QED}{\\{\color{white}.}\hfill\qedsymbol}
\title{Scattering and Sparse Partitions, and their Applications\thanks{This research was supported by the Israel Science Foundation (grant No. 1042/22).}}
\author{Arnold Filtser\\Bar Ilan University\\
	Email: \texttt{arnold.filtser@biu.ac.il} }
\date{\today}
\begin{document}
	\maketitle
	\thispagestyle{empty}
	\nonumber
\begin{abstract}
A partition $\mathcal{P}$ of a weighted graph $G$ is $(\sigma,\tau,\Delta)$-sparse if every cluster has diameter at most $\Delta$, and every ball of radius $\Delta/\sigma$ intersects at most $\tau$ clusters.
Similarly, $\mathcal{P}$ is $(\sigma,\tau,\Delta)$-scattering if instead for balls we require that every shortest path of length  at most $\Delta/\sigma$ intersects at most $\tau$ clusters.
Given a graph $G$ that admits a $(\sigma,\tau,\Delta)$-sparse partition for all $\Delta>0$,  
Jia et al. [STOC05] constructed a solution for the Universal Steiner Tree problem (and also Universal TSP) with stretch $O(\tau\sigma^2\log_\tau n)$.
Given a graph $G$ that admits a $(\sigma,\tau,\Delta)$-scattering partition for all $\Delta>0$,  
we construct a solution for the Steiner Point Removal problem with stretch $O(\tau^3\sigma^3)$.
We then construct sparse and scattering partitions for various different graph families, receiving many new results for the Universal Steiner Tree and Steiner Point Removal problems.
\end{abstract}
\vfill
{\small \setcounter{tocdepth}{1} \tableofcontents}
\newpage
\pagenumbering{arabic}

\section{Introduction}
Graph and metric clustering are widely used for various algorithmic applications (e.g., divide and conquer). Such partitions come in a  variety of forms, satisfying different requirements.
This paper is dedicated to the study of bounded diameter partitions, where small neighborhoods are guaranteed to intersect only a bounded number of clusters.

The first problem we study is the \emph{Steiner Point Removal} (\SPR) problem.
Here we are given an undirected weighted graph $G=(V,E,w)$ and a subset of terminals $K\subseteq V$ of size $k$ (the non-terminal vertices are called Steiner vertices).
The goal is to construct a new weighted graph $M=(K,E',w')$, with the terminals as its vertex set, such that: (1) $M$ is a graph minor of $G$, and (2) the distance between every pair of terminals $t,t'$ in $M$ is distorted by at most a multiplicative factor of $\alpha$, formally
$$\forall t,t'\in K,~~d_G(t,t')\le d_{M}(t,t')\le \alpha \cdot d_G(t,t')~.$$
Property (1) expresses preservation of the topological structure of the original graph. For example if $G$ was planar, so will $M$ be. Whereas property (2) expresses preservation of the geometric structure of the original graph, that is, distances between terminals.
The question is thus: given a graph family $\mathcal{F}$, what is the minimal $\alpha$ such that every graph in $\mathcal{F}$ with a terminal set of size $k$ will admit a solution to the SPR problem with distortion $\alpha$.

Consider a weighted graph $G=(V,E,w)$ with a shortest path metric $d_G$.
The \emph{weak} diameter of a cluster $C\subseteq V$ is the maximal distance between a pair of vertices in the cluster w.r.t. $d_G$ (i.e.,  $\max_{u,v\in C}d_G(u,v)$). The \emph{strong} diameter is the maximal distance  w.r.t. the shortest path metric in the induced graph $G[C]$ (i.e., $\max_{u,v\in C}d_{G[C]}(u,v)$).
A partition $\mathcal{P}$ of $G$  has weak (resp. strong) diameter $\Delta$ if every cluster $C\in \mathcal{P}$ has weak (resp. strong) diameter at most $\Delta$. Partition $\mathcal{P}$ is \emph{connected}, if the graph induced by every cluster $C\in\mathcal{P}$ is connected.
Given a shortest path $\mathcal{I}=\{v_0,v_1,\dots,v_s\}$, denote by $Z_{\mathcal{I}}(\mathcal{P})=\sum_{C\in \mathcal{P}}\mathds{1}_{C\cap\mathcal{I}\ne\emptyset}$ the number of clusters in $\mathcal{P}$ intersecting $\mathcal{I}$. If $Z_{\mathcal{I}}(\mathcal{P})\le\tau$, we say that $\mathcal{I}$ is $\tau$-\emph{scattered} by $\mathcal{P}$. 
\begin{definition}[Scattering Partition]
	Given a weighted graph $G=(V,E,w)$, we say that a partition $\mathcal{P}$ is $(\sigma,\tau,\Delta)$-scattering if the following conditions hold:
	\begin{OneLiners}
		\item $\mathcal{P}$ is connected and has weak diameter $\Delta$.
		\item Every shortest path $\mathcal{I}$ of length at most $\Delta/\sigma$ is $\tau$-scattered by $\mathcal{P}$, i.e., $Z_{\mathcal{I}}(\mathcal{P})\le\tau$.
	\end{OneLiners}
	We say that a graph $G$ is $(\sigma,\tau)$-\emph{scatterable} if for every 
	parameter $\Delta$, $G$ admits a $(\sigma,\tau,\Delta)$-scattering partition that can be computed efficiently.
\end{definition}

The main contribution of this paper is the finding that scattering partitions imply solutions for the \SPR problem. 
The proof appears in \Cref{sec:scat_to_spr}.\footnote{In \Cref{obs:beta1} we argue that $(\sigma,\tau,\Delta)$-scattering partition is also $(1,\tau\sigma,\Delta)$-scattering. We study the general case, even though \Cref{thm:Scattering_Implies_SPR} requires only $\sigma=1$. This is as we find the more general case theoretically interesting, as well as potentially applicable.}

\begin{theorem}[Scattering Partitions imply \SPR]\label{thm:Scattering_Implies_SPR}
	Let $G=(V,E,w)$ be a weighted graph such that for every subset $A\subseteq V$, $G[A]$ is  $(1,\tau)$-scatterable.
	Let $K\subseteq V$ be some subset of terminals. Then there is a solution to the \SPR problem with distortion $O(\tau^3)$ that can be computed efficiently.
\end{theorem}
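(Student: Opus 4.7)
The plan is to construct, for each terminal $t\in K$, a connected set $X_t\subseteq V$ containing $t$, so that $\{X_t\}_{t\in K}$ partitions $V$; the minor $M$ is obtained by contracting each $X_t$ onto its terminal $t$. With this setup, any shortest path $\pi$ between terminals $t,t'$ in $G$ induces a sequence of super-clusters $X_{t_0}=X_t,X_{t_1},\dots,X_{t_s}=X_{t'}$ that it crosses in order, and $d_M(t,t')$ is upper bounded by the telescoping sum $\sum_j d_G(t_{j-1},t_j)$. The goal is to show this sum is $O(\tau^3)\cdot d_G(t,t')$.

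The construction is a top-down hierarchical decomposition at geometric scales $\Delta_i=2^i$. Starting from $\{V\}$ at a scale above the diameter, at each step I refine every current cluster $C$ by applying the hypothesis to $G[C]$ to obtain a $(1,\tau,\Delta_i)$-scattering partition of $C$; the fact that the hypothesis holds for \emph{every} induced subgraph is what makes this recursion possible. Each refined sub-cluster that already contains a terminal picks one as its \emph{owner}, while a terminal-free sub-cluster is absorbed into the super-cluster of a neighboring owner via a BFS-style propagation that keeps every $X_t$ connected. Iterating down to scales below the minimum inter-terminal distance yields the final super-clusters $X_t$.

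For the distortion analysis, fix $t,t'$ at distance $\ell$ and choose the scale $\Delta_i\approx\ell$. The path $\pi$ has length at most $\Delta_i$, so by the scattering property it meets at most $\tau$ scale-$i$ clusters; consequently $\pi$ crosses at most $\tau$ super-clusters, contributing one factor of $\tau$ to the telescoping hop-count. The heart of the argument is then a lemma stating that at scale $i$, each super-cluster $X_t$ has weak diameter $O(\tau^2\cdot \Delta_i)$ in $G$; granting this yields $d_G(t_{j-1},t_j)=O(\tau^2\cdot\Delta_i)$ for consecutive owners along $\pi$, and hence $d_M(t,t')=O(\tau^3\cdot\ell)$. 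I plan to prove the weak-diameter lemma by induction on $i$: at each level, the freshly absorbed terminal-free pieces of $X_t$ contribute at most $O(\tau)$ clusters of diameter $\Delta_i$ (one per scattered segment emanating from $t$), and the sub-scale contributions telescope geometrically thanks to the inductive hypothesis, with the BFS-absorption depth costing another factor of $\tau$.

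The main obstacle I anticipate is precisely this weak-diameter invariant: scattering partitions only guarantee weak (not strong) diameter, and the absorption of terminal-free clusters threatens to accumulate diameter inside $X_t$ faster than the scattering bound allows, because distances that are short in $G$ need not be short inside any single induced cluster along the hierarchy. Keeping the invariant alive likely demands a careful absorption rule — e.g., always absorb into the $d_G$-closest owner, or charge each absorbed cluster to a specific scattered path from its owner — and showing that this rule simultaneously preserves connectivity, respects the hierarchical refinement, and composes with the per-level factor of $\tau$ coming from scattering. Getting these three $\tau$ factors (path-crossing, cluster-diameter, and absorption-depth) to compose cleanly is where I expect nearly all of the technical work to lie, and is what ultimately produces the $O(\tau^3)$ distortion bound.
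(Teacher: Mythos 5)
Your proposal is structurally different from the paper's, and it has gaps that the paper's design is specifically engineered to avoid. The paper does \emph{not} build a top-down hierarchical refinement with absorption. It proceeds bottom-up: it sorts vertices into annuli $\cR_i=\{v : 2^{i-1}\le D(v)<2^i\}$ according to their distance $D(v)$ to the terminal set, and at iteration $i$ builds a fresh $(1,\tau,2^{i-1})$-scattering partition of $G[V\setminus V_{i-1}]$, the induced subgraph on the as-yet-unassigned vertices. Clusters are attached to previously assigned vertices along a chain of ``linking vertices'' whose length is at most $\tau$ (this is the paper's first $\tau$-factor, giving $d_G(v,f(v))=O(\tau)\cdot D(v)$). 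The distortion is then analyzed via ``intervals'' of $P_{t,t'}$ whose lengths are $\Theta(D(\cdot))$ locally, and a ``detour'' bookkeeping showing each interval is charged by at most $O(\tau^2)$ detours (the other two $\tau$-factors, from scattering at iterations $i_Q$ and $i_Q+1$). Crucially, the annulus-driven order guarantees that when an interval is scattered, it is a genuine shortest path \emph{inside} $G[V\setminus V_{i-1}]$ and shorter than $\Delta_{i_Q}$.

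Two concrete gaps in your route. First, the claim ``the path $\pi$ crosses at most $\tau$ super-clusters'' does not follow from $\tau$-scattering at a single scale $\Delta_i\approx\ell$. The super-cluster boundaries are determined by absorption happening at \emph{all} scales below $\Delta_i$, so a single scale-$i$ scattered cluster can be split across many super-clusters by lower-level refinements; moreover $\pi$ is a shortest path in $G$, not in the induced subgraph $G[C]$ being scattered, and since $C$ has only weak diameter, $\pi$ can exit and re-enter $C$, so the scattering hypothesis does not even apply to it. Second, your weak-diameter invariant cannot be stated about the final $X_t$: a terminal that is remote from all others yields an $X_t$ of diameter independent of any small $\Delta_i$, and the BFS-absorption you describe pushes vertices to a terminal at distance up to $O(\tau\Delta)$ at scale $\Delta$, which is $\gg O(\tau)\cdot D(v)$ when $\Delta\gg D(v)$; the hierarchical accumulation you worry about is real. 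The paper's key design choice — assigning $v$ at the iteration of scale $\Theta(D(v))$ and bounding the linking-chain length by $\tau$ — is exactly what keeps $d_G(v,f(v))$ proportional to $D(v)$ rather than to the ambient cluster scale. Without an analogous locality mechanism, the top-down construction has no control over how far absorbed vertices end up from their owners.
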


Jia, Lin, Noubir, Rajaraman, and Sundaram \cite{JLNRS05} \footnote{Awerbuch and Peleg \cite{AP90} were the first to study sparse covers (see \Cref{def:SparseCover}). Their notion of sparse partition is somewhat different from the one used here (introduced by \cite{JLNRS05}).} defined the notion of \emph{sparse partitions}, which is closely related to scattering partitions. Consider a partition $\mathcal{P}$. Given a ball $B=B_G(x,r)$, denote by $Z_{B}(\mathcal{P})=\sum_{C\in \mathcal{P}}\mathds{1}_{C\cap B\ne\emptyset}$ the number of clusters in $\mathcal{P}$ intersecting $B$.
\begin{definition}[Strong/Weak Sparse Partition]\label{def:sparsePartition}
	Given a weighted graph $G=(V,E,w)$, we say that a partition $\mathcal{P}$ is $(\sigma,\tau,\Delta)$-weak (resp. strong) sparse partition if the following conditions hold:
	\begin{OneLiners}
		\item $\mathcal{P}$ has weak (resp. strong) diameter $\Delta$.
		\item Every ball $B=B_G(v,r)$ of radius $r\le\Delta/\sigma$ intersects at most $\tau$ clusters, i.e., $Z_{B}(\mathcal{P})\le\tau$.
	\end{OneLiners}
	We say that a graph $G$ admits a $(\sigma,\tau)$-\emph{weak} (resp. strong) sparse partition scheme if for every parameter $\Delta$, $G$ admits an efficiently computable $(\sigma,\tau,\Delta)$-weak (resp. strong) sparse partition.
\end{definition}

\begin{wrapfigure}{r}{0.2\textwidth}
	\begin{center}
		\vspace{-20pt}
		\includegraphics[width=0.9\textwidth]{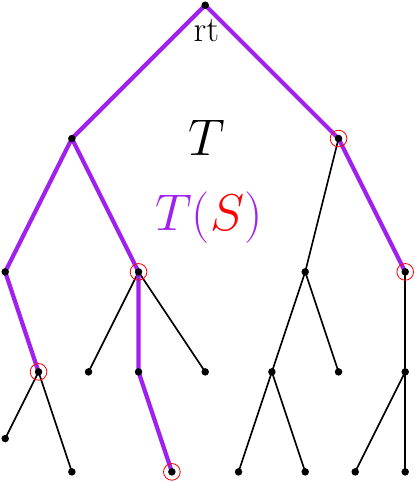}
		\vspace{-5pt}
	\end{center}
	\vspace{-10pt}
\end{wrapfigure}
Jia \etal \cite{JLNRS05} found a connection between sparse partitions to the \emph{Universal Steiner Tree Problem} (\UST).\footnote{A closely related problem is the \emph{Universal Traveling Salesman Problem} (\UTSP), see \Cref{subsec:related}.}
Consider a complete weighted graph $G=(V,E,w)$ (or a metric space $(X,d)$) where there is a special server vertex $\rt\in V$, which is frequently required to multicast messages to different subsets of clients $S\subseteq V$. The cost of a multicast is the total weight of all edges used for the communication.
Given a subset $S$, the optimal solution is to use the minimal Steiner tree spanning $S\cup\{\rt\}$. 
In order to implement an infrastructure for multicasting, or in order to make routing decisions much faster (and not compute it from scratch once $S$ is given), a better solution will be to compute a \emph{Universal Steiner Tree} (\UST). 
A \UST is a tree $T$ over $V$, such that for every subset $S$, the message will be sent using the sub-tree $T(S)$ spanning $S\cup\{\rt\}$. See illustration on the right, where the set $S$ is surrounded by red circles and $T(S)$ is purple.
The stretch of $T$ is the maximum ratio among all subsets $S\subseteq X$ between the weight of $T(S)$ and the weight of the minimal Steiner tree spanning $S\cup\{\rt\}$, $\max_{S\subseteq X}\frac{w(T(S))}{\mbox{Opt}(S\cup\{\rt\})}$.

Jia \etal \cite{JLNRS05} proved that given a sparse partition scheme, one can efficiently construct a \UST with low stretch (the same statement holds w.r.t. $\UTSP$ as well).
\begin{theorem}[Sparse Partitions imply \UST, \cite{JLNRS05}]\label{thm:JLNRS05}
	Suppose that an $n$-vertex graph $G$ admits a $(\sigma,\tau)$-weak sparse partition scheme, then there is a polynomial time algorithm that given a root $\rt\in V$ computes a  \UST with stretch $O(\tau\sigma^2\log_\tau n)$.
\end{theorem}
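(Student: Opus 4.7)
The plan is to follow the hierarchical partition strategy. I would build a chain of laminar sparse partitions at geometrically growing diameter scales, convert it into a rooted tree by designating portals and linking each portal to its parent's portal via a shortest path in $G$, and then bound the stretch of the resulting \UST by a level-by-level charging argument against the optimal Steiner tree.

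Concretely, pick scales $\Delta_0 < \Delta_1 < \cdots < \Delta_L$ growing geometrically (with the ratio to be tuned so the final level count is $O(\log_\tau n)$), and let $\mathcal{P}_i$ be the $(\sigma,\tau,\Delta_i)$-weak sparse partition produced by the scheme, with $\mathcal{P}_0$ the singleton partition and $\mathcal{P}_L=\{V\}$. Enforce laminarity top-down by refining $\mathcal{P}_i$ with $\mathcal{P}_{i-1}$; this only increases $\sigma$ and $\tau$ by a constant. Designate a portal $p_C\in C$ in every cluster, with $p_V:=\rt$ at the top, and for every non-top cluster $C$ at level $i$ add to the tree $T$ a shortest path in $G$ from $p_C$ to $p_{\mathrm{parent}(C)}$; this path has length at most the parent's weak diameter $\Delta_{i+1}$. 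Breaking any cycles produces a tree rooted at $\rt$ reaching every vertex via its singleton cluster at level $0$.

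For the stretch analysis, fix $S\subseteq V$ and let $T^\star$ be an optimal Steiner tree for $S\cup\{\rt\}$ of weight $\opt$. Call a cluster $C\in\mathcal{P}_i$ \emph{active} if its subtree in the laminar hierarchy contains a vertex of $S\cup\{\rt\}$; then $w(T(S))$ is at most the sum, over all active non-top clusters $C$, of the length of the connecting path, which is at most $\Delta_{\mathrm{level}(C)+1}$. To count the number $A_i$ of active level-$i$ clusters, traverse $T^\star$ by an Euler-style tour of length $2\opt$ and partition it into sub-paths of length at most $\Delta_i/\sigma$. Each sub-path sits inside a ball of radius $\Delta_i/\sigma$ which, by the sparse property, meets at most $\tau$ level-$i$ clusters. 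This gives $A_i=O\!\left(\tau+\tau\sigma\opt/\Delta_i\right)$, and summing $A_i\cdot\Delta_{i+1}$ over the relevant levels yields the total weight of $T(S)$.

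The main obstacle, and the source of the $\log_\tau n$ factor, is to replace the naive $\log(\text{aspect ratio})$ count of levels by $\log_\tau n$ without inflating the per-level cost. The fix is to tune the geometric growth ratio of $\Delta_i$ to be comparable to $\tau$: inside a single level-$(i+1)$ cluster the sparsity guarantee caps the number of simultaneously useful level-$i$ children by $O(\tau)$, so the hierarchy has effective branching $\tau$ and only $O(\log_\tau n)$ scales contribute nontrivially. With this choice, a careful accounting of $A_i\cdot\Delta_{i+1}$ against $\opt$ gives a per-scale contribution of $O(\tau\sigma^2)\cdot\opt$, and summing over the $O(\log_\tau n)$ effective scales yields the claimed stretch of $O(\tau\sigma^2\log_\tau n)$. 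The construction is polynomial because it invokes the sparse partition scheme on only $O(\log n)$ scales and otherwise performs standard shortest-path computations.
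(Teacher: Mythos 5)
The paper does not actually include a proof of \Cref{thm:JLNRS05}; it is stated as a cited result from \cite{JLNRS05}, so there is no in-paper proof to compare your attempt against. Evaluating your proposal on its own terms, the overall skeleton (hierarchical partitions at geometric scales, portals connected to parent portals, an Euler-tour charging of active clusters against $\opt$) is the right family of ideas, but the two places where the claimed bound actually has to be earned are exactly the places your sketch waves its hands, and as stated they do not hold up.

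The central gap is your explanation of the $\log_\tau n$ factor. You write that ``inside a single level-$(i+1)$ cluster the sparsity guarantee caps the number of simultaneously useful level-$i$ children by $O(\tau)$, so the hierarchy has effective branching $\tau$.'' This is not what the sparsity condition says. Sparsity bounds the number of level-$i$ clusters met by a \emph{ball of radius} $\Delta_i/\sigma$, whereas a level-$(i+1)$ cluster has diameter up to $\Delta_{i+1}=\gamma\Delta_i$; with $\gamma>1$ and $\sigma\ge 1$ a parent cluster is never contained in such a ball, and the number of its level-$i$ children can be far larger than $\tau$ (e.g.\ of order $\tau\cdot(\gamma\sigma)^{\sddim}$ in a doubling space). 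So ``effective branching $\tau$'' does not follow, and hence neither does the claim that only $O(\log_\tau n)$ scales contribute. Even setting that aside, tuning $\gamma\approx\tau$ gives $\log_\gamma D$ levels where $D$ is the \emph{aspect ratio}, not $n$; you would still need an aspect-ratio reduction or a counting argument (e.g.\ bounding the number of levels with more than one active cluster) to replace $D$ by a function of $n$. Note also that with $\gamma=\tau$ your own per-level estimate $A_i\Delta_{i+1}=O(\tau\sigma\gamma\,\opt)$ becomes $O(\tau^2\sigma\,\opt)$ per level, which does not match the target $O(\tau\sigma^2\,\opt)$ per level; the exponents on $\tau$ and $\sigma$ in the theorem are telling you the ratio and the accounting have to be chosen more carefully than ``comparable to $\tau$.''

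Two further, smaller gaps. First, the laminarization step ``refine $\mathcal{P}_i$ with $\mathcal{P}_{i-1}$; this only increases $\sigma$ and $\tau$ by a constant'' is unjustified: intersecting $\mathcal{P}_i$ with the level below multiplies the intersection count by up to $\tau$, and doing this across $L$ levels top-down compounds to $\tau^{\Theta(L)}$ unless you show otherwise. (The construction in \cite{JLNRS05} avoids forcing laminarity: each level-$i$ portal is simply attached to the portal of whichever level-$(i+1)$ cluster contains it, which already yields a tree.) Second, for levels $i$ with $\Delta_i\gg\sigma\,\opt$ your bound gives $A_i\le\tau$, not $A_i\le 1$, and the per-level cost $\tau\Delta_{i+1}$ summed over those high levels is a geometric series dominated by $\tau\Delta_L$, which is unbounded relative to $\opt$. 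You would need, for instance, to designate $\rt$ as the portal of every cluster containing it and argue that the other $\le\tau-1$ active clusters at high levels shrink away or have their portal paths charged correctly; the proposal does not address this.
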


Jia \etal \cite{JLNRS05} constructed $(O(\log n),O(\log n))$-weak sparse partition scheme for general graphs, receiving a solution with stretch $\polylog(n)$ for the \UST problem.
In some applications the communication is allowed to flow only in certain routes. It is therefore natural to consider the case where $G=(V,E,w)$ is not a complete graph, and the \UST is required to be a subgraph of $G$, we refer to this as the Subgraph \UST problem.
Busch, Jaikumar, Radhakrishnan, Rajaraman, and Srivathsan \cite{BDRRS12} proved a theorem in the spirit of \Cref{thm:JLNRS05}, stating that given a $(\sigma,\tau,\gamma)$-\emph{hierarchical strong sparse partition},
one can efficiently construct a subgraph \UST with stretch $O(\sigma^2\tau^2\gamma\log n)$.
A $(\sigma,\tau,\gamma)$-hierarchical strong sparse partition is a laminar collection of partitions $\{\mathcal{P}_i\}_{i\ge 0}$ such that $\mathcal{P}_i$ is $(\sigma,\tau,\gamma^i)$-strong sparse partition which is a refinement of $\mathcal{P}_{i+1}$.\footnote{We assume here w.l.o.g. that the minimal distance in $G$ is $1$.}
Busch \etal constructed a $\big(2^{O(\sqrt{\log n})},2^{O(\sqrt{\log n})},2^{O(\sqrt{\log n})}\big)$-hierarchical strong sparse partition, obtaining a $2^{O(\sqrt{\log n})}$ stretch algorithm for the subgraph \UST problem. We tend to believe that  poly-logarithmic stretch should be possible. It is therefore interesting to construct strong sparse partitions, as it eventually may lead to hierarchical ones.

A notion which is closely related to sparse partitions is \emph{sparse covers}.
\begin{definition}[Strong/Weak Sparse cover]\label{def:SparseCover}
	Given a weighted graph $G=(V,E,w)$, a $(\sigma,\tau,\Delta)$-weak (resp. strong) sparse cover is a set of clusters $\mathcal{C}\subset 2^V$, where all the clusters have weak (resp. strong) diameter at most $\Delta$, and the following conditions hold:
	\begin{OneLiners}
		\item Cover: $\forall u\in V$, there exists $C\in\mathcal{C}$ such that $B_G(u,\frac\Delta\sigma)\subseteq C$.
		\item Sparsity: every vertex $u\in V$ belongs to at most  $\left|\{C\in\mathcal{C}\mid u\in C\}\right|\le \tau$ clusters.
	\end{OneLiners}
	We say that a graph $G$ admits a $(\sigma,\tau)$-\emph{weak} (resp. strong) sparse cover scheme if for every parameter $\Delta$, $G$ admits a $(\sigma,\tau,\Delta)$-weak (resp. strong) sparse cover that can be computed efficiently.
\end{definition}
It was (implicitly) proven in \cite{JLNRS05} that given $(\sigma,\tau,\Delta)$-weak sparse cover $\mathcal{C}$, one can construct a $(\sigma,\tau,\Delta)$-weak sparse partition. 
In fact, most previous constructions of weak sparse partitions were based on sparse covers.

\subsection{Previous results}
\paragraph{\SPR} 
Given an $n$-point tree, Gupta \cite{G01} provided an upper bound of $8$ for the \SPR problem (on trees). This result was recently reproved by the author, Krauthgamer, and Trabelsi \cite{FKT19} using the \texttt{Relaxed-Voronoi} framework.
Chan, Xia, Konjevod, and Richa \cite{CXKR06} provided a lower bound of $8$ for trees.  This is the best known lower bound for the general SPR problem. 
Basu and Gupta \cite{BG08} provided an $O(1)$ upper bound for the family of outerplanar graphs.\footnote{Actually the manuscript \cite{BG08} was never published, and thus did not go through a peer review process.\label{foot:BG08}}
For general $n$-vertex graphs with $k$ terminals the author \cite{Fil18,Fil19sicomp} recently proved an $O(\log k)$ upper bound for the \SPR problem using the  \texttt{Relaxed-Voronoi} framework, improving upon previous works by  Kamma, Krauthgamer, and Nguyen \cite{KKN15} ($O(\log^5 k)$), and Cheung \cite{Che18} ($O(\log^2 k)$) (which were based on the \texttt{Ball-Growing} algorithm). 
Interestingly, there are no results on any other restricted graph family, although several attempts have been made (see \cite{EGKRTT14,KNZ14,CGH16}).

\paragraph{\UST} 
Given an $n$-point metric space and root $\rt$, Gupta, Hajiaghayi and R{\"{a}}cke \cite{GHR06} constructed a \UST with stretch $O(\log^2 n)$, improving upon a previous $O(\log^4 n/\log\log n)$ result by \cite{JLNRS05}. \cite{JLNRS05} is based on sparse partitions, while \cite{GHR06} is based on \emph{tree covers}.
Jia \etal \cite{JLNRS05} proved a lower bound of $\Omega(\log n)$ to the \UST problem, based on a lower bound to the online Steiner tree problem by Alon and Azar \cite{AA92}.
Using the same argument, they \cite{JLNRS05} proved an $\Omega(\frac{\log n}{\log\log n})$ lower bound for the case where the space is the $n\times n$ grid (using \cite{IW91}).
Given a space with doubling dimension $\ddim$, \footnote{A metric space $(X, d)$ has doubling dimension $\ddim$ if every ball of radius $2r$ can be 	covered by $2^{\tddim}$ balls of radius $r$. The doubling dimension of a graph is the doubling dimension of its induced shortest path metric.}
Jia \etal \cite{JLNRS05} provided a solution with stretch $2^{O(\sddim)}\cdot \log n$, using sparse partitions.
Given an $n$ vertex planar graph, Busch, LaFortune, and Tirthapura \cite{BLT14} proved an $O(\log n)$ upper bound (improving over Hajiaghayi, Kleinberg, and Leighton \cite{HKL06}). 
More generally, for graphs $G$ excluding a fixed minor, both Hajiaghayi \etal \cite{HKL06} (implicitly) and  Busch \etal \cite{BLT14} (explicitly)  provided a solution with stretch $O(\log^2 n)$. Both constructions used sparse covers.
Finally, Busch \etal \cite{BDRRS12} constructed a subgraph \UST with stretch $\polylog (n)$ for graphs excluding a fixed minor (using hierarchical strong sparse partitions).

\paragraph{Scattering Partitions}
As we are the first to define scattering partitions there is not much previous work.
Nonetheless, Kamma \etal \cite{KKN15} implicitly proved that general $n$-vertex graphs are $(O(\log n),O(\log n))$-scatterable.\footnote{This follows from Theorem 1.6 in \cite{KKN15} by choosing parameters $t=\beta=O(\log n)$ and using union bounds over all $n^2$ shortest paths. Note that they assume that for every pair of vertices there is a unique shortest path.}

\paragraph{Sparse Covers and Partitions}
Awerbuch and Peleg \cite{AP90} introduced the notion of sparse covers and constructed $(O(\log n),O(\log n))$-strong sparse cover scheme for $n$-vertex weighted graphs.\footnote{More generally, for $k\in\N$, \cite{AP90} constructed a $(4k-2,2k\cdot n^{\frac1k})$-strong sparse cover scheme.} 
Jia \etal \cite{JLNRS05} induced an $\left(O(\log n),O(\log n)\right)$-weak sparse partition scheme .
Hajiaghayi \etal \cite{HKL06} constructed an $\left(O(1),O(\log n)\right)$-weak sparse cover scheme for $n$-vertex planar graph, concluding an $\left(O(1),O(\log n)\right)$-weak sparse partition scheme. Their construction is based on the \cite{KPR93} clustering algorithm.
Abraham, Gavoille, Malkhi, and Wieder \cite{AGMW10} constructed $(O(r^2),2^{O(r)}\cdot r!)$-strong sparse cover scheme for $K_r$-free graphs.
Busch \etal \cite{BLT14} constructed a $(48,18)$-strong sparse cover scheme for planar graphs \footnote{Busch \etal argued that they constructed  $(24,18)$-strong sparse covering scheme. However they measured radius rather than diameter.} and $(8,O(\log n))$-strong sparse cover scheme for graphs excluding a fixed minor, concluding a $(48,18)$  and $(8,O(\log n))$-weak sparse partition schemes for these families (respectively). 
For graphs with doubling dimension $\ddim$, Jia \etal \cite{JLNRS05} constructed an $(1,8^{\sddim})$-weak sparse scheme.
Abraham \etal \cite{AGGM06} constructed a $(2,4^{\sddim})$-strong sparse cover scheme.
In a companion paper, the author \cite{Fil19Padded} constructed an $\left(O(\ddim),O(\ddim\cdot\log \ddim)\right)$-strong sparse cover scheme.\footnote{More generally, for a parameter  $t=\Omega(1)$, \cite{Fil19Padded} constructed  $\left(O(t),O(2^{\nicefrac{\tddim}{t}}\cdot\ddim\cdot\log t)\right)$-sparse cover scheme.}
Busch \etal \cite{BDRRS12} constructed $\left(O(\log^4 n),O(\log^3 n),O(\log^4 n)\right)$-hierarchical strong sparse partition for graphs excluding a fixed minor.

\subsection{Our Contribution}\label{subsec:contribution}

\begin{figure}[p]
	\floatbox[{\capbeside\thisfloatsetup{capbesideposition={left,top},capbesidewidth=7cm}}]{figure}[\FBwidth]
	{\caption{\small \it Classification of various graph families according to the possibility of construction different partitions. 
			Graphs with bounded doubling dimension or \SPD$^{\ref{foot:SPD}}$ (pathwidth) admit strong sparse partitions with parameters depending only on the dimension/\SPDdepth. Trees, Chordal and Cactus graphs admit both $(O(1),O(1))$-weak sparse and scattering partitions, while similar strong partitions are impossible. $\mathbb{R}^d$ with norm $2$ admit $(1,2d)$ scattering partition while weak sparse partition with constant padding will have an exponential number of intersections. Planar graphs admit $(O(1),O(1))$-weak sparse partitions, while it is an open question whether similar scattering partitions exist. Finally, while sparse partitions for general graphs are well understood, we lack a lower bound for scattering partitions.}\label{fig:venn}}
	{\includegraphics[width=9cm]{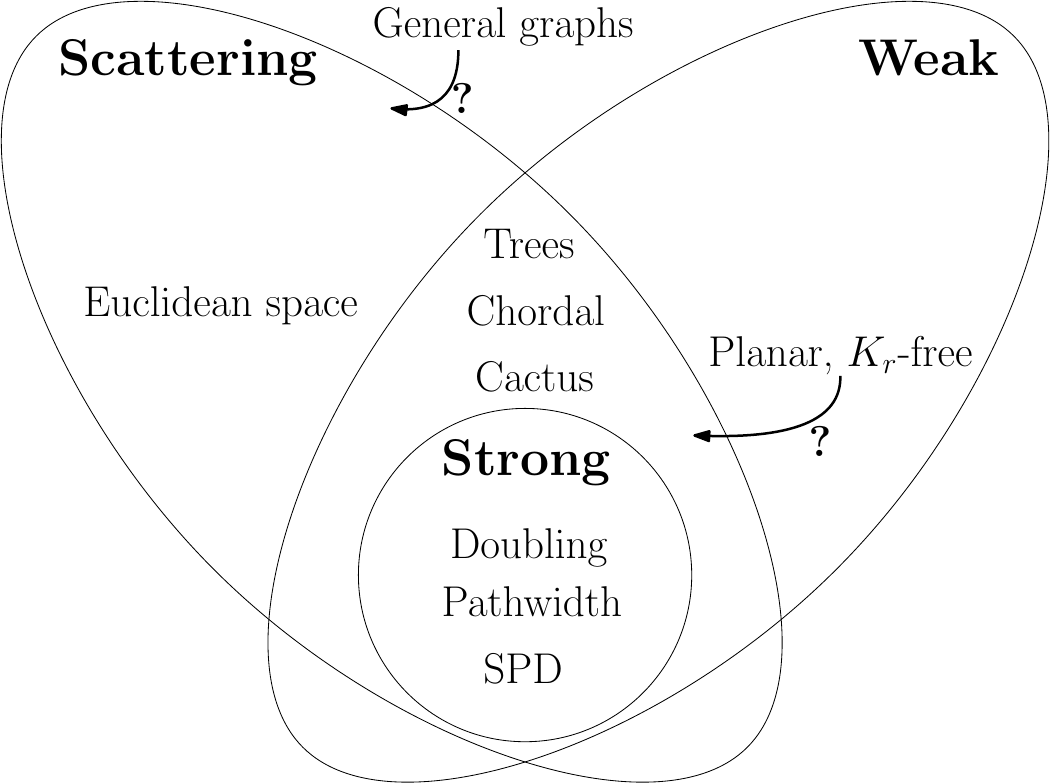}}
\end{figure}
\renewcommand{\arraystretch}{1.18}
\begin{table}[p]
	\begin{tabular}{|l|l|l|l|ll|}
		\hline
		\textbf{Family}&\textbf{Partition type} & \textbf{Padding ($\sigma$)}  & \textbf{\#intersections ($\tau$)}       & \textbf{Ref/Notes}                      &\\ \hline\hline
		\multirow{4}{*}{\textbf{\begin{tabular}[c]{@{}l@{}}General \\ $n$-vertex \\ Graphs\end{tabular}}}&Weak                    & $O(\log n)$              & $O(\log n)$            &  \cite{JLNRS05}                                 &\\  \cline{2-6} 
		&Scattering                  & $O(\log n)$              & $O(\log n)$            & \cite{KKN15}                                &\\  \cline{2-6} 
		&Strong                  & $O(\log n)$              & $O(\log n)$            & \Cref{thm:Generalstrong}   &$^\clubsuit$\\ \cline{2-6} 
		&Weak L.B. & $\Omega(\nicefrac{\log n}{\log\log n})$&$O(\log n)$&\Cref{thm:GeneralWeakLB} &$^\blacksquare$\\ \hline		\hline
		
		\multirow{2}{*}{\textbf{\begin{tabular}[c]{@{}l@{}}\ddim~ doubling   \\ dimension\end{tabular}}}&Weak          & $1$            & $8^{\sddim}$                             &  \cite{JLNRS05}                                      &\\ \cline{2-6}
		&Strong          & $O(\ddim)$           & $\tilde{O}(\ddim)$                 &  \Cref{thm:ddimStrong}  &$^\spadesuit$\\ \hline\hline
		\multirow{3}{*}{\textbf{\begin{tabular}[c]{@{}l@{}}\textbf{Euclidean space} \\ $\boldsymbol{\left(\R^d,\|\cdot\|_2\right)}$ \end{tabular}}}&Scattering           &  $1$          &    $2d$                     &  \Cref{thm:L2scattering}                                    &\\ \cline{2-6}
		&	\multirow{2}{*}{\begin{tabular}[c]{@{}l@{}}Weak L.B.\end{tabular}} & $O(1)$ & $2^{\Omega(d)}$  &\multirow{2}{*}{\Cref{thm:L2LB}} &\multirow{2}{*}{$^\blacklozenge$}\\
		& &$\Omega(\nicefrac{d}{\log d})$ &  $\poly(d)$ &&\\
		\hline\hline
				\multirow{4}{*}{\textbf{Trees}}&Scattering       & $2$         & $3$                                        &   \Cref{thm:tressScat}                                     &\\  \cline{2-6}		
		&Weak             &  $4$         & $3$                                        &     \Cref{thm:treeWeak}                                    &\\  \cline{2-6}		
		&\multirow{2}{*}{\begin{tabular}[c]{@{}l@{}}Strong L.B.\end{tabular}}   & $\nicefrac{\log n}{\log\log n}$& $\log n$ &\multirow{2}{*}{\Cref{thm:treeLBStrong}}&\multirow{2}{*}{$^\bigstar$}\\
		&  & $\sqrt{\log n}$ &$2^{\sqrt{\log n}}$ &  &\\
		\hline\hline
		\multirow{2}{*}{\textbf{\begin{tabular}[c]{@{}l@{}}\textbf{Pathwidth $\rho$}  \\ (\SPDdepth$^{\ref{foot:SPD}}$) \end{tabular}}} &Strong               & $O(\rho)$        & $O(\rho^2)$                             &         \Cref{thm:SPDstrong}                                &\\ \cline{2-6}	
		&Weak               & $8$       & $5\rho$                                       &   \Cref{thm:SPDweak}                                     &\\ \hline\hline
		\multirow{2}{*}{\textbf{Chordal}}&Scattering                               & $2$       & $3$                  &  \Cref{thm:chordalScat}                                      &\\  \cline{2-6}	
		&Weak             	 & $24$   	  & $3$                                     &    \Cref{cor:chordalWeak}                                      &\\ \hline	\hline			
		$\boldsymbol{K_r}$ \textbf{free}&Weak                          & $O(r^2)$         & $2^{r}$       & 
		\Cref{cor:WeakMinor}
		&\\ \hline\hline		
		\textbf{Cactus}&Scattering            & $4$         & $5$                     &\Cref{thm:cactus}&\\ \hline\hline
		\textbf{Series-parallel} & Scattering            & $O(1)$         & $O(1)$                     &\cite{HL22}&\\ \hline
	\end{tabular}
	\caption{\small \it Summary of the various new/old, weak/strong scattering/sparse partitions. 
		\newline Table footnotes:  
		$^\clubsuit$~More generally, there is a partition $\mathcal{P}$ s.t. every ball of radius $\frac{\Delta}{8\alpha}$ intersects at most $\tilde{O}(n^{\nicefrac{1}{\alpha}})$ clusters, for all $\alpha>1$ simultaneously.
		$^\blacksquare$~More generally, it must hold that $\tau\ge n^{\Omega(\nicefrac{1}{\sigma})}$.
		$^\spadesuit$~More generally, there is a partition $\mathcal{P}$ s.t. every ball of radius $\Omega(\frac{\Delta}{\alpha}$) intersects at most $O(2^{\nicefrac{\tddim}{\alpha}})$ clusters, for all $\alpha>1$ simultaneously.
		$^\blacklozenge$~More generally, it must hold that $\tau>(1+\frac{1}{2\sigma})^d$.
		$^\bigstar$~Note that this lower bound holds for chordal/cactus/planar/$K_r$-free graphs. More generally, it must hold that $\tau\ge \Omega(n^{\nicefrac{2}{\sigma+1})})$.
		\label{tab:partitions}}
\end{table}

The main contribution of this paper is the definition of scattering partition and the finding that good scattering partitions imply low distortion solutions for the \SPR problem (\Cref{thm:Scattering_Implies_SPR}).
We construct various scattering and sparse partition schemes for many different graph families, and systematically classify them according to the partition types they admit. In addition, we provide several lower bounds. The specific partitions and lower bounds are described below.  Our findings are summarized in \Cref{tab:partitions}, while the resulting classification is illustrated in \Cref{fig:venn}.

Recall that \cite{JLNRS05} (implicitly) showed that sparse covers imply weak sparse partitions (\Cref{lem:coverToPartition}).
We show that the opposite direction is also true. That is, given a $(\sigma,\tau,\Delta)$-weak sparse partition,  one can construct a $(\sigma+2,\tau,(1+\frac2\sigma)\Delta)$-weak sparse cover (\Cref{lem:partitionToCover}). 
Interestingly, in addition we show that strong sparse partitions imply strong sparse covers, while the opposite is not true. Specifically there are graph families that admit $(O(1),O(1))$-strong sparse cover schemes, while there are no constants $\sigma,\tau$, such that they admit $(\sigma,\tau)$-strong sparse partitions.
All our findings on the connection between sparse partitions and sparse covers, and a classification of various graph families, appears in \Cref{sec:covers} and are summarized in \Cref{fig:Venn_covers}. 

The scattering partitions we construct imply new solutions for the \SPR problem previously unknown. Specifically, for every graph with pathwidth $\rho$ we provide a solution to the \SPR problem with distortion $\poly (\rho)$, independent of the number of terminals (\Cref{cor:pathwidth-SPR}).
After trees \cite{G01} and outerplanar graphs \cite{BG08} $^{\ref{foot:BG08}}$, this is the first graph family to have solution for the \SPR problem independent from the number of terminals.
Furthermore, we obtain solutions with constant distortion for Chordal and Cactus graphs (\Cref{cor:chordalSPR,cor:cactusSPR}).\footnote{Note that the family of cactus graph is contained in the family of outerplanar graph. Basu and Gupta \cite{BG08} solved the \SPR problem directly on outerplanar graphs with constant distortion. However, this manuscript was never published. See also \cref{foot:BG08}.\label{foot:cactusWasKnown}}

The weak sparse partitions we construct imply improved solutions for the \UST (and \UTSP) problem. Specifically, we conclude that for graphs with doubling dimension $\ddim$ a \UST (and \UTSP) with stretch $\poly(\ddim)\cdot \log n$ can be efficiently computed (\Cref{cor:ddimUTSP}), providing an exponential improvement in the dependence on $\ddim$ compared with the previous state of the art \cite{JLNRS05} of $2^{O(\sddim)}\cdot \log n$.
For $K_r$-minor free graphs we conclude that an \UST (or \UTSP) with stretch $2^{O(r)}\cdot \log n$ can be efficiently computed (\Cref{cor:minorUTSP}), providing a quadratic improvement in the dependence on $n$ compared with the previous state of the art \cite{GHR06} of $O(\log^2 n)$. \footnote{This result is a mere corollary obtained by assembling previously existing parts together. Mysteriously, although \UTSP on minor free graphs was studied before \cite{HKL06,BLT14}, this corollary was never drawn, see \Cref{subsec:MinorFreeCovers}.}
Finally, for pathwidth $\rho$ graphs (or more generally, graphs with \SPDdepth $\rho$) we can compute a \UST (or \UTSP) with stretch $O(\rho\cdot \log n)$ (\Cref{cor:SPD-UTSP}), improving over previous solutions that were exponential in $\rho$ (based on the fact that pathwidth $\rho$ graphs are $K_{\rho+2}$-minor free).

Before we proceed to describe our partitions we make two observations.  
\begin{observation}\label{obs:StrongImplyScatt}
	Every $(\sigma,\tau,\Delta)$-strong sparse partition is also a scattering partition and a weak sparse partition with the same parameters.
\end{observation}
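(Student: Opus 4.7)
The plan is to verify each of the defining properties in turn, observing that strong sparse partitions are at least as restrictive as both weak sparse and scattering partitions. The argument is essentially a "containment of requirements" and should be short; the only technical step worth noting is the inclusion of a shortest path inside a ball.

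First, I would note that a strong diameter bound $\Delta$ forces $G[C]$ to be connected for every cluster $C\in\mathcal{P}$ (else $\max_{u,v\in C}d_{G[C]}(u,v)=\infty$), which gives the connectedness requirement of the scattering definition for free. Next, since $d_G(u,v)\le d_{G[C]}(u,v)$ for every $u,v\in C$, the weak diameter of each cluster is bounded by its strong diameter, hence by $\Delta$; this yields the diameter condition for both the weak sparse partition and the scattering partition.

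The ball-intersection condition for the weak sparse partition is literally the same condition as the one assumed for the strong sparse partition, so nothing needs to be shown there. The only remaining item is the scattering condition: given a shortest path $\mathcal{I}=\{v_0,\ldots,v_s\}$ with $\sum_{i}w(v_{i-1}v_i)\le\Delta/\sigma$, every vertex $v_i$ lies within $d_G$-distance at most $\Delta/\sigma$ from $v_0$, so $\mathcal{I}\subseteq B_G(v_0,\Delta/\sigma)$. Hence the number of clusters meeting $\mathcal{I}$ is bounded by the number of clusters meeting $B_G(v_0,\Delta/\sigma)$, which is at most $\tau$ by hypothesis.

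There is no real obstacle here; the statement is essentially a bookkeeping observation reconciling the three definitions. The only subtlety to flag is the fact that one must interpret the ``length" of the shortest path in the scattering definition as the $d_G$-length (which coincides with the total edge weight for a shortest path), so that the inclusion $\mathcal{I}\subseteq B_G(v_0,\Delta/\sigma)$ is immediate and the bound $\tau$ transfers from balls to paths.
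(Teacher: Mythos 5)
Your proof is correct and takes the same route as the paper, whose entire justification is the one-liner that a shortest path of bounded length lies inside a ball of the same radius. Your additional bookkeeping (connectedness from finite strong diameter, weak diameter bounded by strong diameter, the ball condition transferring verbatim) is all sound and simply spells out what the paper leaves implicit.
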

\begin{observation}\label{obs:beta1}
	Every $(\sigma,\tau,\Delta)$-scattering partition is also $(1,\sigma\tau,\Delta)$-scattering partition.
\end{observation}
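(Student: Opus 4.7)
The plan is to keep the partition $\mathcal{P}$ itself fixed and only verify the new scattering requirement with the updated parameters $(1, \sigma\tau, \Delta)$. Connectedness and the weak-diameter bound of $\Delta$ are inherited from the hypothesis, so the only thing to check is that every shortest path $\mathcal{I}$ of length at most $\Delta/1 = \Delta$ is $(\sigma\tau)$-scattered by $\mathcal{P}$, i.e., that $Z_{\mathcal{I}}(\mathcal{P}) \le \sigma\tau$.

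The key observation I would use is that every sub-path of a shortest path is itself a shortest path between its endpoints, and hence the original $(\sigma,\tau,\Delta)$-scattering guarantee can be applied to any piece of $\mathcal{I}$ of length at most $\Delta/\sigma$. The strategy is therefore to partition $\mathcal{I} = \{v_0, v_1, \ldots, v_s\}$ greedily into consecutive sub-paths $\mathcal{I}_1, \ldots, \mathcal{I}_k$, each of length at most $\Delta/\sigma$. Concretely, I would walk from $v_0$ along $\mathcal{I}$ and begin a new piece whenever appending the next edge would push the current piece above length $\Delta/\sigma$. Since the total length $|\mathcal{I}|$ is at most $\Delta$ and every completed piece (except possibly the last) has length at least roughly $\Delta/\sigma$, this procedure produces $k \le \sigma$ pieces (up to the customary rounding).

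Finally, I would invoke the hypothesis on each piece to get $Z_{\mathcal{I}_t}(\mathcal{P}) \le \tau$, and a union bound over the pieces yields $Z_{\mathcal{I}}(\mathcal{P}) \le \sum_{t=1}^{k} Z_{\mathcal{I}_t}(\mathcal{P}) \le k\tau \le \sigma\tau$, as desired. The argument is essentially bookkeeping, and I do not anticipate any real obstacle; the only mild point of care is ensuring that the greedy decomposition into pieces of length at most $\Delta/\sigma$ fits within $\sigma$ parts, which follows immediately from $|\mathcal{I}| \le \Delta$.
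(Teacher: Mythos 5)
Your proposal is correct and matches the paper's argument exactly: the paper justifies Observation~\ref{obs:beta1} in one line by noting that every shortest path of length at most $\Delta$ can be written as a concatenation of at most $\sigma$ shortest sub-paths each of length at most $\Delta/\sigma$, then applies the hypothesis to each piece. Your greedy decomposition and union bound is just a more detailed spelling-out of the same idea.
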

\Cref{obs:StrongImplyScatt} follows as every path of weight $\sigma\Delta$ is contained in a ball of radius $\sigma\Delta$. \Cref{obs:beta1} follows as every shortest path of length $\le\Delta$ can be assembled as a concatenation of at most $\sigma$ shortest paths of length $\le\frac\Delta\sigma$.

\begin{TwoLiners}
	\item[\textbf{General Graphs}:] Given an $n$-vertex general graph and parameter $\Delta>0$ we construct a single partition $\mathcal{P}$ which is simultaneously $\left(8k,O(n^{\nicefrac1k}\cdot \log n),\Delta\right)$-strong sparse partition for all parameters $k\ge 1$ (\Cref{thm:Generalstrong}). Thus we generalize the result of \cite{JLNRS05} and obtain a strong diameter guarantee.
	This partition implies that general graphs are $\left(O(\log n),O(\log n)\right)$-scatterable (reproving \cite{KKN15} via an easier proof), inducing a solution for the \SPR problem with stretch $\polylog(|K|)$ (\Cref{cor:general-SPR}). While quantitatively better solutions are known, this one is arguably the simplest, and induced by a general framework.
	Further, we provide a lower bound, showing that if all $n$-vertex graphs admit $(\sigma,\tau)$-weak sparse partition scheme, then $\tau\ge n^{\Omega(\frac{1}{\sigma})}$ (\Cref{thm:GeneralWeakLB}). In particular there is no sparse partition scheme with parameters smaller than $\left(\Omega(\nicefrac{\log n}{\log\log n}),\Omega(\log n)\right)$. This implies that both our results and \cite{JLNRS05} are tight up to second order terms.
	Although we do not provide any lower bound for scattering partitions, we present some evidence that general graphs are not $\left(O(1),O(1)\right)$-scatterable. Specifically, we define a stronger notion of partitions called \emph{super-scattering} and show that  general graphs are not $\left(1,\Omega(\log n)\right)$-super scatterable  (\Cref{thm:generalLBsuper}).
	
	\item[\textbf{Trees}:] Trees are the most basic of the restricted graph families. Weak sparse partitions for trees follows from the existence of sparse covers. Nevertheless, in order to improve parameters and understanding we construct $(4,3)$-weak sparse partition scheme for trees (\Cref{thm:treeWeak}).
	Further, we prove that trees are $(2,3)$-scatterable (\Cref{thm:tressScat}).
	Finally, we show that there are no good strong sparse partition for trees. Specifically, we prove that if all $n$-vertex trees admit $(\sigma,\tau)$-strong sparse partition scheme, then $\tau\ge \frac13\cdot n^{\frac{2}{\sigma+1}}$ (\Cref{thm:treeLBStrong}). This implies that for strong sparse partitions, trees are essentially as bad as general graphs.
	
	\item[\textbf{Doubling Dimension}:] We prove that for every graph with doubling dimension $\ddim$ and parameter $\Delta>0$, there is a partition $\mathcal{P}$ which is simultaneously $\left(58\alpha,2^{\nicefrac{\tddim}{\alpha}}\cdot\tilde{O}(\ddim),\Delta\right)$-strong sparse partition for all parameters $\alpha\ge 1$ (\Cref{thm:ddimStrong}). Note that this implies an $\left(O(\ddim),\tilde{O}(\ddim)\right)$-strong sparse partition scheme.
	
	\item[\textbf{Euclidean Space}:] We prove that the $d$-dimensional Euclidean space $(\R^d,\|\cdot\|_2)$  is $(1,2d)$-scatterable \footnote{See \Cref{sec:Euclidean} for clarifications.} (\Cref{thm:L2scattering}), while for every $(\sigma,\tau)$-weak sparse partition scheme it holds that $\tau>(1+\frac{1}{2\sigma})^d$ (\Cref{thm:L2LB}). In particular, if $\sigma$ is at most a constant, then $\tau$ must be exponential. This provides an interesting example of a family where scattering partitions have considerably better parameters than sparse partitions.
	
	\item[\textbf{\SPDdepth}:] \footnote{Every (weighted) path graph has an \SPDdepth $1$. A graph $G$ has an \SPDdepth $\rho$ if there exist a \emph{shortest path} $P$, such that every connected component in $G\setminus P$ has an \SPDdepth $\rho-1$. This family includes graphs with pathwidth at most $\rho$, and more. See \cite{AFGN23}.\label{foot:SPD}} We prove that every graph with \SPDdepth $\rho$ (in particular graph with pathwidth $\rho$) admit $\left(O(\rho),O(\rho^2)\right)$-strong sparse partition scheme (\Cref{thm:SPDstrong}). 
	Further, we prove that such graphs admit $\left(8,5\rho\right)$-weak sparse partition scheme (\Cref{thm:SPDweak}).
	
	\item[\textbf{Chordal Graphs}:]
	 We prove that every Chordal graph is $(2,3)$-scatterable (\Cref{thm:chordalScat}).
	 
	\item[\textbf{Cactus Graphs}: ] We prove that every Cactus graph is $(4,5)$-scatterable (\Cref{thm:cactus}). 
\end{TwoLiners}

\subsection{Follow-Up Work}
In a follow up work, Hershkowitz and Li \cite{HL22} proved that series-parallel graphs are $(O(1),O(1))$-scatterable. Using our framework (\Cref{thm:Scattering_Implies_SPR}) they concluded that every instance of the \SPR problem in series parallel graphs admit a solution with $O(1)$ distortion. 
Recently, Chang \etal \cite{CCLMST23,CCLMST24} constructed shortcut partitions with constant parameters for planar, and minor free graphs. Shortcut partitions are a relaxed version of our scattering partitions. 
Chang \etal then generalized our \Cref{thm:Scattering_Implies_SPR} to shortcut partitions, and concluded that planar graphs \cite{CCLMST24Steiner}, as well as fixed minor free graph \cite{CCLMST24} admit a solution with constant stretch for the \SPR problem. The question of whether these graph families admit scattering partitions remains open.

Recently, the author \cite{Fil24} construct sparse covers, and sparse partition for fixed minor free graphs with parameters exponentially improving over our \Cref{cor:WeakMinor}.
Another recent work by Busch \etal \cite{BCFHHR23} constructed a hierarchy of strong sparse partitions for general graphs, doubling graphs, and graphs with bounded pathwidth (building on the techniques in this paper). These partitions implied a solution to the subgraph \UST problem with poly-logarithmic stretch.
Finally, sparse partitions were recently used to solve the facility location problem in the streaming model \cite{CJK0Y22} (they called it geometric hashing). In the context of our paper, Czumaj \etal \cite{CJK0Y22} constructed space efficient sparse partition for the Euclidean space with the same parameters as our \Cref{thm:ddimStrong} (see the arXiv version \cite{CFJKVY22}), or space and time efficient partitions with worse parameters.

\subsection{Technical Ideas}

\paragraph{Scattering Partition Imply \SPR.}
Similarly to previous works on the \SPR problem, we construct a minor via a terminal partition. That is, a partition of $V$ into $k$ connected clusters, where each cluster contains a single terminal. The minor is then induced by contracting all the internal edges.
Intuitively, to obtain small distortion, one needs to ensure that every Steiner vertex is clustered into a terminal not much further than its closest terminal, and that every shortest path between a pair of terminals intersects only a small number of clusters. However, the local partitioning of each area in the graph requires a different scale, according to the distance to the closest terminal.
Our approach is similar in spirit to the algorithm of Englert \etal \cite{EGKRTT14}, who constructed a minor with small expected distortion \footnote{A distribution $\mathcal{D}$ over solutions to the \SPR problem has expected distortion $\alpha$ if $\forall t,t'\in K,~\mathbb{E}_{M\sim\mathcal{D}}[d_M(t,t')]\le\alpha\cdot d_G(t,t')$ .\label{foot:expectedDistortion}} using stochastic decomposition for all possible distance scales. We however, work in the more restrictive regime of worst case distortion guarantee.
Glossing over many details, we create different scattering partitions for different areas, where vertices at distance $\approx\Delta$ to the terminal set are partitioned using a $(1,\tau,\Delta)$-scattering partition. Afterwards, we assemble the different clusters from the partitions in all possible scales into a single terminal partition. We use the scattering property twice. First to argue that each vertex $v$ is clustered to a terminal at distance at most $O(\tau)\cdot D(v)$ (here $D(v)$ is the distance to the closest terminal). Second, to argue that every shortest path, where all the vertices are at similar distance to the terminal set, intersects the clusters of at most $O(\tau^2)$ terminals. 

\paragraph{Miller, Peng and Xu \cite{MPX13} clustering algorithm.}
We use \cite{MPX13} to create partitions for general graphs, and graphs with either bounded doubling dimension or \SPDdepth. 
In short, there is a set of centers $N$, where each center $t$ samples a starting time $\delta_t$. Vertex $v$ joins the cluster of the center $t$ maximizing $f_v(t)=\delta_t-d_G(t,v)$. Denote this center by $t_v$. 
The diameter guarantee obtained using this  algorithm is inherently that of a strong diameter.
The key observation is the following: if the cluster of the center $t$ intersects the ball $B_G(v,r)$, then necessarily $f_v(t)\ge f_v(t_v)-2r$.  Thus in order to bound the number of intersecting clusters it is enough to bound the number of centers whose $f_v$ values falls inside the interval $[f_v(t_v)-2r,f_v(t_v)]$. 
For each family we choose the starting times $\{\delta_t\}_{t\in N}$ appropriately. 	

\paragraph{Strong Sparse Partition for General Graphs.}
For general graphs, we use the \cite{MPX13} clustering algorithm with the set of all vertices as centers. The starting times $\{\delta_t\}_{t\in N}$ are chosen i.i.d. using an exponential distribution with expectation $O(\frac{\Delta}{\log n})$. We then use the memoryless property to argue that for every vertex $v$, and parameter $\alpha$, w.h.p. there are no more than $\tilde{O}(n^{\nicefrac{1}{\alpha}})$ centers whose $f_v$ value falls in $[f_v(t_v)-\Omega(\frac{\Delta}{\alpha}),f_v(t_v)]$. 

\paragraph{Doubling Dimension.}
In a companion paper, the author \cite{Fil19Padded} constructed a strong sparse cover scheme for doubling graphs. Together with \Cref{lem:coverToPartition}, this implies sparse partition with weak diameter guarantee only (and thus does not imply scattering). While both \cite{Fil19Padded} and \Cref{thm:ddimStrong} are \cite{MPX13} based, here we have additional complications.
For a graph with doubling dimension $\ddim$ we use \cite{MPX13}, with a $\Delta$-net serving as the set of centers.
The idea is to use the same analysis as for general graphs in a localized fashion, where each vertex is ``exposed'' to $2^{O(\sddim)}$ centers, thus replacing the $\log n$ parameter in the number of intersections with $\ddim$.
The standard solution will be to use a \emph{truncated exponential distribution}.\footnote{A truncated exponential distribution is an exponential distribution conditioned on the event that the outcome lies in a certain interval. This is usually used when the maximal possible value must be bounded. See e.g. \cite{Fil19Padded}.}
Although this will indeed guarantee the ``locality'' we are aiming for, it is not clear how to analyze the density of the centers in $[f_v(t_v)-\Omega(\frac{\Delta}{\alpha}),f_v(t_v)]$. 
Instead, we are using \emph{betailed exponential distribution}. This is an exponential distribution with a threshold parameter $\lambda_T$ such that all possible values above the threshold collapse to the threshold. 
In order to bound the density of the centers in $[f_v(t_v)-\Omega(\frac{\Delta}{\alpha}),f_v(t_v)]$, we first treat the distribution as a standard exponential distribution. Afterwards we argue that the actual number of ``betailed'' centers is small. Interestingly, we bound the number of centers whose $f_v$ value falls in $[f_v(t_{s})-\Omega(\frac{\Delta}{\alpha}),f_v(t_{s})]$, where $t_{s}$ is the center with the $s$'th largest $f_v$ value. Then, we argue that the bound on the density in the actual interval we care about, can withstand any $s-1$ occurrences of ``betailing''.
Eventually, we use the Lov\'asz Local Lemma to argue that we can choose the starting times such that the density of centers in all the possible intervals is small (``intervalwise'').

\paragraph{\SPD.} Our strong sparse partition for a graph with \SPDdepth $\rho$ is also produced using the \cite{MPX13} clustering algorithm. Interestingly, unlike all previous executions of \cite{MPX13} in the literature, there is no randomness involved.
The \SPD produces a hierarchy of partial partitions $\{\mathcal{X}_1,\dots,\mathcal{X}_\rho\}$, where $\mathcal{P}_i$ is the set of shortest paths deleted from each of the connected components in $\mathcal{X}_i$, to create $\mathcal{X}_{i+1}$.
The set of centers consists of $\eps\Delta$-nets $\{N_i\}_{i=1}^\rho$ taken from all the shortest paths $\{\mathcal{P}_i\}_{i=1}^\rho$, where the starting time $\delta_t$ is equal for all centers $t\in N_i$ with the same hierarchical depth, and decreasing with the steps ($t\in N_i,t'\in N_{i+1}\Rightarrow \delta_t>\delta_{t'}$).
The parameters are chosen in such a way that each vertex $v$ can join the cluster of a center $t\in N_i\subseteq \mathcal{P}_i$, only if $v$ and $t$ belong to the same connected component in $\mathcal{X}_i$.
Consider a small ball $B$. If a vertex from $B$ belongs to $\mathcal{P}_i$, then no vertex of $B$ will join a cluster of a center in $\mathcal{P}_{i'}$ for $i'>i$.
To conclude, in each step $i$, the vertices of $B$ might join the cluster of centers lying on a single shortest path from $\mathcal{P}_i$. As all these centers have equal starting time, $B$ can intersect only a small number of them. A bound on the number of intersections follows.

\paragraph{Chordal Graphs.} The partition is created inductively using the tree decomposition (where each bag is a clique). The label of a vertex $v$, is its distance in $G$ to the cluster center $t$ (w.r.t. $d_G$). We maintain the property that all the vertices with label strictly smaller than $\Delta/2$, within a single bag, belong to the same cluster. Interestingly, while the partition is connected, the diameter guarantee we obtain is only weak (in contrast to all other scattering partitions in the paper).

\paragraph{Lower Bounds.} For the lower bound for strong sparse partitions in trees we use a full $d$-ary tree, of depth $D$. It holds that for every partition with diameter smaller than $2D$, there must be an internal vertex with all its children belonging to different clusters. Thus we have a ball of radius $1$ intersecting $d+1$ clusters. Noting that such a tree has less than $2\cdot d^{D+1}$ vertices, the theorem follows. \\
For the lower bound on weak partitions of general graphs we use expanders. The cluster $A$ will intersect all radius $\frac\Delta\sigma$ balls with centers in $B_G(A,\frac\Delta\sigma)$.
By expansion property, $|B_G(A,\frac\Delta\sigma)|\ge \Omega(1)^{\frac\Delta\sigma}\cdot |A|$.
We uniformly sample a center $v$, the expected number of clusters intersecting $B_G(v,\frac\Delta\sigma)$ is \\ $\Sigma_{A\in\mathcal{P}}\nicefrac{|B_G(A,\frac\Delta\sigma)|}{n}=\Omega(1)^{\frac\Delta\sigma}\cdot\Sigma_{A\in\mathcal{P}}\nicefrac{|A|}{n}=\Omega(1)^{\frac\Delta\sigma}$.\\
For Euclidean space the lower bound follows similar ideas, where the expansion property is replaced by the Brunn-Minkowski theorem.

\subsection{Related Work}\label{subsec:related}
In the functional analysis community, the notion of \emph{Nagata dimension} was studied. The Nagata dimension of a metric space $(X,d)$, $\dim_N X$, is the infimum over all integers $n$ such that there exists a constant $c$ s.t. $X$ admits a $(c,n+1)$-weak sparse partition scheme. In contrast, in this paper our goal is to minimize this constant $c$. See \cite{LS05} and the references therein.

A closely related problem to \UST is the \emph{Universal Traveling Salesman Problem} (\UTSP).
Consider a postman providing post service for a set $X$ of clients with $n$ different locations (with distance measure $d_X$). Each morning the postman receives a subset $S\subset X$ of the required deliveries for the day. In order to minimize the total tour length, one solution may be to compute each morning an (approximation of an) Optimal \TSP tour for the set $S$. An alternative solution will be to compute a \emph{Universal \TSP} (\UTSP) tour. This is a universal tour $R$ containing all the points $X$. Given a subset $S$, $R(S)$ is the tour visiting all the points in $S$ w.r.t. the order induced by $R$.
Given a tour $T$ denote its length by $|T|$. The \emph{stretch} of $R$ is the maximum ratio among all subsets $S\subseteq X$ between the length of $R(S)$ and the length of the optimal \TSP tour on $S$, $\max_{S\subseteq X}\frac{|R(S)|}{|\mbox{Opt}(S)|}$.

All the sparse partition based upper bounds for the \UST problem translated directly to the \UTSP problem with the same parameters.
The first to study the problem were Platzman and Bartholdi \cite{PB89}, who given $n$ points in the Euclidean plane constructed a solution with stretch $O(\log n)$, using \emph{space filling curves}.
Recently, Christodoulou, and Sgouritsa \cite{CS17} proved a lower and upper bound of $\Theta(\log n/\log\log n)$ for the $n\times n$ grid, improving a previous $\Omega(\sqrt[6]{\log n/\log\log n})$ lower bound of Hajiaghayi, Kleinberg, and Leighton \cite{HKL06} (and the $O(\log n)$ upper bound of \cite{PB89}).
For general $n$ vertex graphs Gupta \etal \cite{GHR06} proved an  $O(\log^2 n)$ upper bound, while
Gorodezky, Kleinberg, Shmoys, and Spencer \cite{GKSS10} proved an $\Omega(\log n)$ lower bound.
From the computational point of view, Schalekamp and Shmoys \cite{SS08} showed that if the input graph is a tree, an \UTSP with optimal stretch can be computed efficiently.

The \emph{A Priori \TSP} problem is similar to the \UTSP problem. In addition there is a distribution $\mathcal{D}$ over subsets $S\subseteq V$ and the stretch of tour a $R$ is the expected ratio between the induced solution to optimal  $\mathbb{E}_{S\sim\mathcal{D}}\frac{|R(S)|}{|\mbox{Opt}(S)|}$ (instead of a worst case like in \UTSP). Similarly, \emph{A Priori Steiner Tree} was studied (usually omitting $\rt$ from the problem). See \cite{Jal88,SS08,GKSS10} for further details.
Another similar problem is the  \emph{Online (or dynamic) Steiner Tree} problem. Here the set $S$ of vertices that should be connected is evolving over time, see \cite{IW91,AA92,GGK16} and references therein.

Unlike the definition used in this paper (taken from \cite{JLNRS05}), sparse partitions were also defined in the literature as partitions where only a small fraction of the edges are inter-cluster (see for example \cite{AGMW10}). 
A closely related notion to sparse partitions are padded and separating decompositions. 
A graph $G$ is $\beta$-decomposable if for every $\Delta>0$, there is a distribution $\mathcal{D}$ over $\Delta$ bounded partitions such that for every $u,v\in V$, the probability that $u$ and $v$ belong to different clusters is at most $\beta\cdot\frac{d_G(u,v)}{\Delta}$. Note that by linearity of expectation, a path $\mathcal{I}$ of length $\Delta/\sigma$ intersects at most $1+\beta/\sigma$ clusters in expectation. For comparison, in scattering partition we replace the distribution by a single partition and receive a bound on the number of intersections in the worst case. See \cite{KPR93,Bar96,FT03,GKL03,ABN11,AGMW10,AGGNT19,FN22,Fil19Padded} for further details.

Englert \etal \cite{EGKRTT14} showed that every graph which is $\beta$-decomposable, admits a distribution $\mathcal{D}$ over solution to the \SPR problem with expected distortion $O(\beta\log \beta)$. $^{\ref{foot:expectedDistortion}}$ 
In particular this implies constant expected distortion for graphs excluding a fixed minor, or bounded doubling dimension.

For a set $K$ of terminals of size $k$, Krauthgamer, Nguyen and Zondiner \cite{KNZ14} showed that if we allow the minor $M$ to contain at most ${k\choose 2}^2$ Steiner vertices (in addition to the terminals), then distortion $1$ can be achieved. They further showed that for graphs with constant treewidth, $O(k^2)$ Steiner points will suffice for distortion $1$.
Cheung, Gramoz and Henzinger \cite{CGH16} showed that allowing $O(k^{2+\frac2t})$ Steiner vertices, one can achieve distortion $2t-1$. For planar graphs, Cheung \etal al. achieved $1+\eps$ distortion with $\tilde{O}((\frac k\epsilon)^2)$ Steiner points.

There is a long line of work focusing on preserving the cut/flow structure among the terminals by a graph minor. See 
\cite{Moitra09,LM10,CLLM10,MM10,EGKRTT14,Chuzhoy12,KR13,AGK14,GHP17,KR20}.

There were works studying metric embeddings and metric data structures concerned with preserving distances among terminals, or from terminals to other vertices, out of the context of minors. See \cite{CE06,RTZ05,GNR10,KV13,EFN17,EFN18,BFN19,EN21,KLR19,FGK20,EN22prior}.

\section{Preliminaries}\label{sec:prem}
All the logarithms in the paper are in base $2$. 
We use $\tilde{O}$ notation to suppress constants and logarithmic factors, that is $\tilde{O}(f(j))=f(j)\cdot\polylog(f(j))$.
\paragraph{Graphs.}
We consider connected undirected graphs $G=(V,E)$ with edge weights
$w: E \to \R_{\ge 0}$. 
Let $d_{G}$ denote the shortest path metric in $G$.
$B_G(v,r)=\{u\in V\mid d_G(v,u)\le r\}$ is the ball of radius $r$ around $v$. For a vertex $v\in V$ and a subset $A\subseteq V$, let $d_{G}(x,A):=\min_{a\in A}d_G(x,a)$, where $d_{G}(x,\emptyset)= \infty$. For a subset of vertices
$A\subseteq V$, let $G[A]$ denote the induced graph on $A$, and let $G\setminus A := G[V\setminus A]$.

\paragraph{Doubling dimension.}  The doubling dimension of a metric space is a measure of its local ``growth rate''. 
A metric space $(X,d)$ has doubling constant $\lambda$ if for every $x\in X$ and radius
$r>0$, the ball $B(x,2r)$ can be covered by $\lambda$ balls of radius $r$. The doubling dimension is defined as $\ddim=\log\lambda$.
We say that a weighted graph $G=(V,E,w)$ has doubling dimension $\ddim$, if the corresponding shortest path metric $(V,d_G)$ has doubling dimension $\ddim$.
A $d$-dimensional $\ell_p$ space has $\ddim=\Theta(d)$, every $n$ point vertex graph has $\ddim=O(\log n)$, and every weighted path has $\ddim=1$.
The following lemma gives the standard packing property of doubling metrics (see, e.g., \cite{GKL03}).
\begin{lemma}[Packing Property] \label{lem:doubling_packing}
	Let $(X,d)$ be a metric space  with doubling dimension $\ddim$.
	If $S \subseteq X$ is a subset of points with minimum interpoint distance $r$ that is contained in a ball of radius $R$, then 
	$|S| \le \left(\frac{2R}{r}\right)^{\sddim}$ .
\end{lemma}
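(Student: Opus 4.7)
The plan is to use the doubling property iteratively to produce a fine cover of the enclosing ball, and then bound the number of points of $S$ per cover element by one via the minimum-distance hypothesis.

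First I would fix a center $x$ and radius $R$ with $S\subseteq B(x,R)$, as guaranteed by hypothesis. The doubling definition states that every ball of radius $2\rho$ is coverable by $2^{\ddim}$ balls of radius $\rho$; a straightforward induction on $k$ upgrades this to a cover of $B(x,R)$ by at most $(2^{\ddim})^k = 2^{k\cdot\ddim}$ balls of radius $R/2^k$.

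Next I would choose $k=\lceil\log_2(2R/r)\rceil$ so that every ball in the cover has radius at most $r/2$. Since any two distinct points of $S$ lie at pairwise distance at least $r$, the triangle inequality forces each such ball to contain at most one point of $S$ (the degenerate equality case being handled by an $\varepsilon$-shrinkage of $r/2$ or by the open-ball convention). Summing over the cover, $|S|\le 2^{k\cdot\ddim}\le (2R/r)^{\ddim}$, with the ceiling absorbed into the exponent as is standard in the literature.

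The only subtlety here is the boundary case of the triangle inequality, which is not a genuine obstacle; the argument is a classical iterated-doubling packing bound, and no step beyond the induction on the doubling property is needed.
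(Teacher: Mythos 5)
The paper does not prove this lemma; it states it as a standard fact and points to the literature. Your iterated-doubling argument is the standard way to establish it, and it is essentially correct. One point is imprecise, though: with $k=\lceil\log_2(2R/r)\rceil$, the count you actually obtain is $2^{k\cdot\sddim}$, which in the worst case is close to $\left(\frac{4R}{r}\right)^{\sddim}$ rather than $\left(\frac{2R}{r}\right)^{\sddim}$. The ceiling is not ``absorbed into the exponent''; it contributes an extra multiplicative factor of up to $2^{\sddim}$. This, together with the boundary-equality issue you already flag (two points at distance exactly $r$ can share a closed ball of radius $r/2$, so one really wants strict separation, which is how the paper uses the lemma for $\Delta$-nets), is precisely why different sources state the packing bound with slightly different constants. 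None of this affects anything in the paper, where only the order of magnitude of the packing bound is used. Apart from this slack in the constant, your argument is complete and is the intended one.
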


\paragraph{Nets.} A set $N\subseteq V$ is called a $\Delta$-net, if for every vertex $v\in V$ there is a net point $x\in N$ at distance at most $d_G(v,x)\le \Delta$, while every pair of net points $x,y\in N$, is farther than $d_G(x,y)>\Delta$.
A $\Delta$-net can be constructed efficiently in a greedy manner. In particular, by \Cref{lem:doubling_packing}, given a $\Delta$-net $N$ in a graph of doubling dimension $\ddim$, a ball of radius $R\ge\Delta$, will contain at most 
$\left(\frac{2R}{\Delta}\right)^{\sddim}$ net points.

\paragraph{Exponential Distribution}
$\Exp(\lambda)$ denotes the \emph{exponential distribution} with mean
$\lambda$ and density function $f(x)=\frac{1}{\lambda}e^{-\frac{x}{\lambda}}$ for $x\ge0$. 
A useful property of exponential distribution is \emph{memoryless}: let $X\sim\Exp(\lambda)$, for every $a,b\ge 0$, $\Pr[X\ge a+b\mid X\ge a]=\Pr[X\ge b]$. In other words, given that $X\ge a$, it holds that $X-a\sim \Exp(\lambda)$. 

\paragraph{Special graph families.}
A graph $H$ is a \emph{minor} of a graph $G$ if we can obtain $H$ from
$G$ by edge deletions/contractions, and vertex deletions.  A graph
family $\mathcal{G}$ is \emph{$H$-minor-free} if no graph
$G\in\mathcal{G}$ has $H$ as a minor.
Some examples of minor free graph families are planar graphs ($K_5$ and $K_{3,3}$ free), outerplanar graphs ($K_4$ and $K_{3,2}$ free), series-parallel graphs ($K_4$ free), Cactus graphs (also known as tree of cycles) (\includegraphics[width=0.02\textwidth]{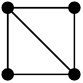} free), and trees ($K_3$ free).

Given a graph $G=(V,E)$, a \emph{tree decomposition} of $G$ is a tree
$T$ with nodes $B_1,\dots,B_s$ (called \emph{bags}) where each $B_i$ is
a subset of $V$ such that the following properties hold:
\begin{OneLiners}
	\item For every edge $\{u,v\}\in E$, there is a bag $B_i$ containing
	both $u$ and $v$.
	\item For every vertex $v\in V$, the set of bags containing $v$ form a
	connected subtree of $T$.
\end{OneLiners}
The \emph{width} of a tree decomposition is $\max_i\{|B_i|-1\}$. The \emph{treewidth} of $G$ is the minimal
width of a tree decomposition of $G$.
A \emph{path decomposition} of $G$ is a special kind of tree
decomposition where the underlying tree is a path. The \emph{pathwidth}
of $G$ is the minimal width of a path decomposition of $G$.

\emph{Chordal graphs} are unweighted graphs where each cycle of length greater
then $4$ contains a chordal. In other words, if the induced graph on a set of vertices $V'$ is the cycle graph, than necessarily $|V'|\le 3$.
Chordal graphs contain interval graphs, subtree intersection graphs and other interesting sub families.
A characterization of Chordal graphs is that they have a tree decomposition such that each bag is a clique. That is, there is a tree decomposition $T$ of $G$ where there is no upper bound on the size of a bag, but for every bag $B\in T$ the induced graph $G[B]$ is a clique.

A \emph{Cactus graph} (a.k.a. tree of cycles) is a graph where each edge belongs to at most one simple cycle. Alternatively it can be defined as the graph family that excludes $K_4$ minus an edge  (\includegraphics[width=0.02\textwidth]{CactusMinorFree}) as a minor.

Abraham \etal \cite{AFGN23} defined \emph{shortest path decompositions} (\SPDs) of ``low depth''.
Every (weighted) path graph has an \SPDdepth $1$. A graph $G$ has an \SPDdepth
$k$ if there exist a \emph{shortest path} $P$, such that every connected
component in $G\setminus P$ has an \SPDdepth $k-1$.
In other words, given a graph, in \SPD we hierarchically delete shortest paths from each connected component, until no vertices remain. See \Cref{sec:SPD} for formal definition (and also \cref{foot:SPD}).
Every graph with pathwidth $\rho$ has \SPDdepth at most $\rho+1$, treewidth $\rho$ implies \SPDdepth at most $O(\rho\log n)$, and every graph excluding a fixed minor has \SPDdepth $O(\log n)$.
See \cite{AFGN23,Fil20face} for further details and applications.

\subsection{Clustering algorithm using shifted starting times \cite{MPX13}}\label{subsec:MPX}
In some of our clustering algorithms we will use a generalized\footnote{There are two difference between the algorithm here and the original \cite{MPX13} clustering algorithm. First, in \cite{MPX13} all the vertices are potential cluster centers (while we are allowing to use only a subset of the vertices as cluster centers). Secondly, in \cite{MPX13} the shifts $\delta_t$ are sampled randomly using exponential distribution. In contrast, here we are allowing to choose the shifts in an arbitrary manner. In particular, in \Cref{thm:Generalstrong,thm:ddimStrong} we will sample the shifts using betailed exponential distribution, while in \Cref{thm:SPDstrong} the shifts will be chosen deterministically.} version of the clustering algorithm by Miller \etal \cite{MPX13}.
This version has also been applied by the author in a companion paper \cite{Fil19Padded}. Here we describe the algorithm and some of its basic properties. One of the main advantages of this algorithm is that it inherently produces clusters with strong diameter. Additionally, the inter-relationship between the clusters is relatively easy to analyze.  

Consider a weighted graph $G=(V,E,w)$. Let $N\subseteq V$ be some set of centers, where each $t\in N$ admits a parameter $\delta_t$. The choice of the centers and the parameters differs among different implementations of the algorithm. 
For each vertex $v$ set a function $f_v:N\rightarrow \R$ as follows: for a center $t$, $f_v(t)=\delta_t-d_G(t,v)$. The vertex $v$ will join the cluster $C_t$ of the center $t\in N$ maximizing $f_v(t)$.  Ties are broken in a consistent manner.\footnote{That is we have some order $x_1,x_2,\dots$. Among the centers $x_i$ that maximize $f_v(t)$, $v$ joins the cluster of the center with minimal index.}
Note that it is possible that a center $t\in N$ will join the cluster of a different center $t'\in N$.
An intuitive way to think about the clustering process is as follows: each center $t$ wakes up at time $-\delta_t$ and begins to ``spread'' in a continuous manner. The spread of all centers is performed in the same unit tempo. A vertex $v$ joins the cluster of the first center that reaches it. 

\begin{claim}\label{clm:MPXshortestpath}
	Suppose that a vertex $v$ joined the cluster of the center $t$. Let $\mathcal{I}$ be a shortest path from $v$ to $t$, then all the vertices on $\mathcal{I}$ joined the cluster of $t$.
\end{claim}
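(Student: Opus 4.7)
The plan is to use the distance equality guaranteed by $u$ lying on a shortest path from $v$ to $t$, together with the triangle inequality for other centers, in order to show that the ``advantage'' $f(t) - f(t')$ can only grow as we move along $\mathcal{I}$ from $v$ toward $t$.

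First I would pick an arbitrary vertex $u \in \mathcal{I}$ and an arbitrary center $t' \in N$, and compare $f_u(t)$ to $f_u(t')$. Because $u$ lies on a shortest path between $v$ and $t$, we have the exact identity $d_G(t,v) = d_G(t,u) + d_G(u,v)$, which gives $f_u(t) = \delta_t - d_G(t,u) = f_v(t) + d_G(u,v)$. For the competing center $t'$, the triangle inequality gives only $d_G(t',u) \ge d_G(t',v) - d_G(u,v)$, hence $f_u(t') \le f_v(t') + d_G(u,v)$. Subtracting, I obtain
\[
f_u(t) - f_u(t') \;\ge\; f_v(t) - f_v(t') \;\ge\; 0,
\]
where the last inequality uses the assumption that $v$ joined $C_t$, so $t$ maximizes $f_v$.

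The main subtlety will be dealing with ties, since the algorithm only breaks them consistently and a strict inequality is not available. The key observation is that if equality $f_u(t) = f_u(t')$ actually occurs, then by the chain of inequalities above we must also have $f_v(t) = f_v(t')$. Since $v$ chose $t$ over $t'$, the consistent tie-breaking rule prefers $t$ to $t'$; applying the same rule at $u$ forces $u$ to choose $t$ as well. Thus $t$ maximizes $f_u$ under the tie-breaking order, and $u$ joins $C_t$. As $u \in \mathcal{I}$ was arbitrary, every vertex on $\mathcal{I}$ joins the cluster of $t$, proving the claim. No additional machinery or obstacle is anticipated; the proof is essentially a one-line calculation once the shortest-path identity is used to replace the triangle inequality with an equality for $t$.
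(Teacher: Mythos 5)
Your proof is correct and uses the same computation as the paper's: the shortest-path identity $d_G(t,u)=d_G(t,v)-d_G(u,v)$ gives an equality for $f_u(t)$, the triangle inequality bounds $f_u(t')$, and subtracting with $f_v(t)\ge f_v(t')$ yields $f_u(t)\ge f_u(t')$. You go slightly further by handling the tie case explicitly (showing any $t'$ tying with $t$ at $u$ must also tie at $v$, so consistent tie-breaking again picks $t$), which the paper's proof leaves implicit.
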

\begin{proof}
	For every vertex $u\in\mathcal{I}$ and center $t'\in N$, it holds that
	\[
	\delta(t)-d_{G}(u,t)=\delta(t)-\left(d_{G}(v,t)-d_{G}(v,u)\right)\ge\delta(t')-d_{G}(v,t')+d_{G}(v,u)\ge\delta(t')-d_{G}(u,t')~,
	\]
	where the first inequality holds as  $f_v(t)\ge f_v(t')$. We conclude that $f_u(t)\ge f_u(t')$, hence $u$ joins the cluster of $t$.
\end{proof}
By \Cref{clm:MPXshortestpath} it follows that if the distance between every vertex to the cluster center $v$ is bounded by $\Delta$ w.r.t. $d_G$, then the cluster has strong diameter $2\Delta$.
\begin{claim}\label{clm:MPXintersectionProperty}
	Consider a ball $B=B_G(v,r)$, and let $t$ be the center maximizing $f_v$. Then for every center $t'$ such that $f_v(t)-f_v(t')>2r$, the intersection between $B$ and the cluster centered at $t'$ is empty.
\end{claim}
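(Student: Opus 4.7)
The plan is to argue by contraposition: I will assume that some vertex $u$ lies in both $B$ and the cluster of $t'$, and then show that this forces $f_v(t)-f_v(t')\le 2r$. The hypothesis $u\in C_{t'}$ means $u$ chose $t'$ over every other center, so in particular
\[
f_u(t')\ge f_u(t), \quad\text{i.e.,}\quad \delta_{t'}-d_G(t',u)\ge \delta_t-d_G(t,u).
\]
Rearranging, $\delta_t-\delta_{t'}\le d_G(t,u)-d_G(t',u)$. This is the only place I use that $u$ belongs to $C_{t'}$.

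Next I will relate the $f_u$ values at $u$ to the $f_v$ values at $v$. Since $d_G(v,u)\le r$, the triangle inequality gives $d_G(t,u)\le d_G(t,v)+r$ and $d_G(t',u)\ge d_G(t',v)-r$, and hence
\[
d_G(t,u)-d_G(t',u)\le d_G(t,v)-d_G(t',v)+2r.
\]
Plugging this into the previous inequality yields $\delta_t-\delta_{t'}\le d_G(t,v)-d_G(t',v)+2r$, which is exactly $f_v(t)-f_v(t')\le 2r$. So whenever $f_v(t)-f_v(t')>2r$ we cannot have any $u\in B\cap C_{t'}$, proving the claim.

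I do not anticipate any real obstacle here; the whole argument is two triangle inequalities combined with the definition of the clustering rule. The only thing to keep in mind is that $t'$ need not be the center $t$ that maximizes $f_v$, but the ``$t$ maximizes'' hypothesis is actually irrelevant for the proof (it is only used upstream to guarantee $f_v(t)\ge f_v(t')$ and hence a meaningful gap). The argument shows more generally that for any two centers $t,t'$ and any vertex $u\in B_G(v,r)\cap C_{t'}$, we have $f_v(t)-f_v(t')\le 2r$, so the ball's intersections with clusters are confined to centers whose $f_v$-value lies within $2r$ of the maximum. This is precisely the ``interval'' reformulation exploited in the MPX-based constructions described in the technical overview.
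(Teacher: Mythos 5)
Your proof is correct and essentially the same as the paper's: the paper shows directly that for every $u\in B$, $f_u(t)>f_u(t')$ (so $t'$ cannot win $u$), whereas you run the identical two triangle inequalities in the contrapositive direction, concluding $f_v(t)-f_v(t')\le 2r$ from $u\in B\cap C_{t'}$. Your closing remark is also right—the paper's argument likewise does not use that $t$ maximizes $f_v$, only the gap hypothesis.
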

\begin{proof}
	For every vertex $u\in B$ it holds that,
	\[
	f_{u}(t)=\delta(t)-d_{G}(u,t)\ge \delta(t)-d_{G}(v,t)-r>\delta(t')-d_{G}(v,t')+r\ge \delta(t')-d_{G}(u,t')=f_{u}(t')~,
	\]
	where the second inequality holds as $f_v(t)>f_v(t')+2r$. It follows that $t'$ is not maximizing $f_u$ for any vertex in $B$, the claim follows.
\end{proof}

\section{From Scattering Partitions to \SPR: Proof of \Cref{thm:Scattering_Implies_SPR}}\label{sec:scat_to_spr}
This entire section is devoted to the proof of  \Cref{thm:Scattering_Implies_SPR}.
We will assume w.l.o.g. that the minimal pairwise distance in the graph is exactly $1$, otherwise we can scale all the weights accordingly.
The set of terminals is denoted by $K=\{t_1,\dots,t_k\}$.
For every vertex $v\in V$, denote by $D(v)=d_G(v,K)$ the distance to its closest terminal. Note that $\min_{v\in V\setminus K}D(v)\ge 1$.

	Similarly to previous papers on the \SPR problem, we will create a minor using \emph{terminal partitions}.
	Specifically, we partition the vertices into $k$ connected clusters, with a single terminal in each cluster. Such a partition induces a minor by contracting all the internal edges in each cluster.
	\begin{figure}[]
		\centering{\includegraphics[scale=0.85]{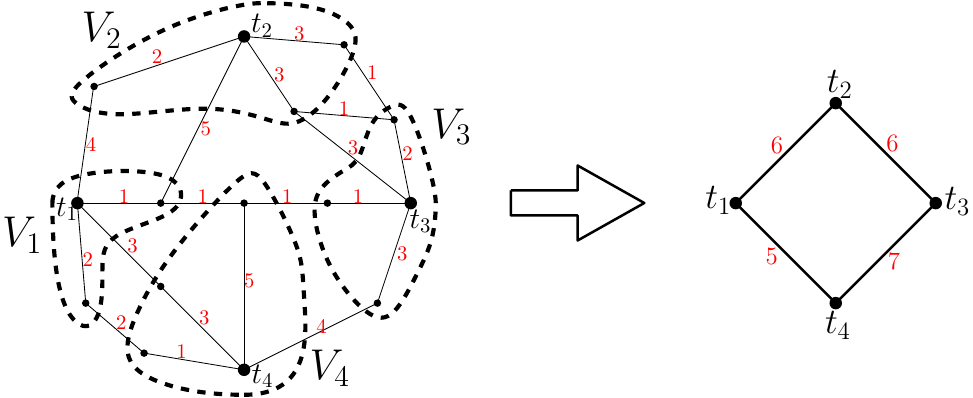}} 
	\caption{\label{fig:contraction}\small \it 
	The left side of the figure contains a weighted graph $G=(V,E)$, with weights specified in red, and four terminals $\{t_1,t_2,t_3,t_4\}$.
	The dashed black curves represent a terminal partition of the vertex set $V$ into the subsets $V_1,V_2,V_3,V_4$.
	The right side of the figure represents the minor $M$ induced by the terminal partition. The distortion is realized between $t_1$ and $t_3$, and is $\frac{d_M(t_1,t_3)}{d_G(t_1,t_3)}=\frac{12}4=3$.}
	\end{figure}
	More formally, a partition $\{V_1,\dots,V_k\}$ of $V$ is called a terminal partition (w.r.t to $K$) if for every $1\le i\le k$, $t_i\in V_i$, and the induced graph $G[V_i]$ is connected. 
		For a vertex $v\in V_i$, we say that $v$ is assigned to $t_i$.
	See \Cref{fig:contraction} for an illustration.
	The \emph{induced minor} by the terminal partition  $\{V_1,\dots,V_k\}$, is a minor $M$, where each set $V_i$ is contracted into a single vertex called (abusing  notation)  $t_i$. 
	Note that there is an edge in $M$ from  $t_i$ to $t_j$ if and only if there are vertices $v_i\in V_i$ and $v_j\in V_j$ such that $\{v_i,v_j\}\in E$. 
	We determine the weight of the edge $\{t_i,t_j\}\in E(M)$ to be $d_G(t_i,t_j)$. Note that by the triangle inequality, for every pair of (not necessarily neighboring) terminals $t_i,t_j$, it holds that $d_M(t_i,t_j)\ge d_G(t_i,t_j)$.
	The \emph{distortion} of the induced minor is  $\max_{i,j}\frac{d_M(t_i,t_j)}{d_G(t_i,t_j)}$.

	\subsection{Algorithm}\label{subsec:scat_to_spr_alg}
We create the terminal partition in an iterative manner, where initially each set $\{t_i\}$ is a singleton, and gradually more vertices are joining.
We will denote the stage of the terminal partition after $i$ steps, using a function $f_i:V\rightarrow K\cup\{\perp\}$. For a yet unassigned vertex $v$ we write  $f_i(v)=\perp$, otherwise the vertex $v$ will be assigned to $f_i(v)$. Initially for every terminal $t_j$, $f_0(t_j)=t_j$ while for every Steiner vertex $v\in V\setminus K$, $f_0(v)=\perp$. In iteration $i$ we will define $f_i$ by ``extending'' $f_{i-1}$. That is, unassigned vertices may be assigned (i.e., for $v$ such that $f_{i-1}(v)=\perp$ it might be $f_i(v)=t_j$), while the function will remain the same on the set of assigned vertices ($f_{i-1}(v)\ne\perp\ \Rightarrow\ f_i(v)=f_{i-1}(v)$). 
We will guarantee that all the vertices in $\cR_i$ will be assigned in $f_i$. In particular, after $\log\left(\max_v D(v)\right)$ steps, all the vertices will be assigned.

Denote by $V_{i}$ the set of vertices assigned by $f_{i}$. Initially $V_0=K=\cR_0$. 
By induction we will assume that $\cup_{j\le i-1}\cR_j\subseteq V_{i-1}$.
Let $G_i=G[V\setminus V_{i-1}]$ be the graph induced by the set of yet unassigned vertices.
Fix $\Delta_i=2^{i-1}$. Let $\cP_i$ be an $(1,\tau,\Delta_i)$-scattering partition of $G_i$. 
Let $\cC_i\subseteq\cP_i$ be the set of clusters $C$ which contain at least one vertex $v\in \cR_i$.
	All the vertices in $\cup\cC_i$ (that is the union of all the clusters in $\cC_i$) will be assigned by $f_i$.
	
We say that a cluster $C\in \cC_{i}$ is at level $1$, noting $\delta_i(C)=1$, if there is an edge $\{v,u_C\}$ (in $G$) from a vertex $v\in C$ to a vertex $u_C\in V_{i-1}$ of weight at most $2^{i}$.
In general, $\delta_i(C)=l$, if $l$ is the minimal index such that there is an edge $\{v,u_C\}$ from a vertex $v\in C$ to a vertex $u_C\in C'$ of weight at most $2^{i}$, such that $\delta_i(C')=l-1$.
In both cases $u_C$ is called the \emph{linking} vertex of $C$.
Next, we define $f_i$ based on $f_{i-1}$.
For every vertex $v\in V_{i-1}$ set $f_i(v)=f_{i-1}(v)$. For every vertex not in $\cup\cC_i\cup V_{i-1}$ set $f_i(v)=\perp$.
For a cluster $C\in \cC_{i}$ s.t. $\delta_i(C)=1$, let $u_C\in V_{i-1}$ be its linking vertex. For every $v\in C$ set $f_i(v)= f_i(u_C)$. 
Generally, for level $l$ suppose that $f_i$ is already defined on all the clusters of level $l-1$. Let $C\in \cC_{i}$ s.t. $\delta_i(C)=l$. Let $u_C$ be the linking vertex of $C$. For every $v\in C$, set $f_i(v)= f_i(u_C)$. 
Note that for every cluster, all the vertices are mapped to the same terminal.
This finishes the definition of $f_i$. 
	
The algorithm continues until there is $f_i$ where all the Steiner vertices are assigned. Set $f=f_i$. The algorithm returns the terminal-centered minor $M$ of $G$ induced by $\{f^{-1}(t_1),\ldots,f^{-1}(t_k)\}$.

	\subsection{Basic Properties}\label{subsec:basciProp}
	It is straightforward from the construction that $f^{-1}(t_1),\ldots,f^{-1}(t_k)$ define a terminal partition. We will prove that every vertex $v$ will be assigned during either iteration $\lceil\log D(v)\rceil$ or $\lceil\log D(v)\rceil-1$ (\Cref{clm:scatteringAssignTime}), to a terminal at distance at most $O(\tau)\cdot D(v)$ from $v$ (\Cref{cor:dvfv}).
	We begin by arguing that in each iteration, the maximum possible level of a cluster is $\tau$.
	\begin{claim}\label{clm:scatteringLevelBound}
	For every cluster $C\in\cC_i$, $\delta_i(C)\le\tau$.
\end{claim}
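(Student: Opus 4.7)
The plan is to trace, for a given cluster $C \in \cC_i$, a shortest path from some witness $v \in C \cap \cR_i$ to a nearest terminal $t$, truncate it at the moment it first enters $V_{i-1}$, and read off the level chain of $C$ along that prefix. Concretely, pick $v \in C \cap \cR_i$ and a terminal $t \in K$ with $d_G(v,t) = D(v) < 2^i$, and let $\mathcal{I} = (v = v_0, v_1, \ldots, v_s = t)$ be a shortest $v$-to-$t$ path in $G$. Since $v \in \cR_i$ is not in $V_{i-1}$ (by the inductive invariant $\bigcup_{j<i}\cR_j \subseteq V_{i-1}$) while $t \in K \subseteq V_{i-1}$, there is a least $j \ge 1$ with $v_j \in V_{i-1}$; the prefix $\mathcal{I}' = (v_0, \ldots, v_{j-1})$ lies entirely in $V(G_i)$, and $\{v_{j-1}, v_j\}$ is an edge of weight at most $d_G(v,t) < 2^i$.

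The heart of the argument is the length bound on $\mathcal{I}'$. For every $0 \le m \le j-1$, the invariant forces $D(v_m) \ge 2^{i-1}$, since otherwise $v_m$ would belong to $\bigcup_{j<i}\cR_j \subseteq V_{i-1}$, contradicting $v_m \notin V_{i-1}$. Taking $m = j-1$, and using that $t$ is a terminal, yields $d_G(v_{j-1}, t) \ge D(v_{j-1}) \ge 2^{i-1}$, and since $v_{j-1}$ lies on a $G$-shortest $v$-to-$t$ path,
\[
|\mathcal{I}'| \;=\; d_G(v,t) - d_G(v_{j-1},t) \;<\; 2^i - 2^{i-1} \;=\; \Delta_i.
\]
Moreover $D(v_m) \le d_G(v_m, t) \le d_G(v,t) < 2^i$, so each $v_m \in \cR_i$, and every $\cP_i$-cluster met by $\mathcal{I}'$ belongs to $\cC_i$. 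Being a subpath of a $G$-shortest path that avoids $V_{i-1}$, $\mathcal{I}'$ is also a shortest path in $G_i$.

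Walk $\mathcal{I}'$ from $v_{j-1}$ back to $v$ and list the distinct $\cP_i$-clusters encountered as $C^{(1)}, C^{(2)}, \ldots, C^{(s')}$, so that $v_{j-1} \in C^{(1)}$ and $v \in C^{(s')} = C$. The edge $\{v_{j-1}, v_j\}$ witnesses $\delta_i(C^{(1)}) = 1$; for each $k \ge 2$ the edge of $\mathcal{I}'$ bridging $C^{(k-1)}$ and $C^{(k)}$ has weight at most $|\mathcal{I}'| < 2^i$, so a straightforward induction gives $\delta_i(C^{(k)}) \le k$, whence $\delta_i(C) \le s'$. Applying the $(1,\tau,\Delta_i)$-scattering property of $\cP_i$ to the $G_i$-shortest path $\mathcal{I}'$ of length less than $\Delta_i$ yields $s' \le \tau$, so $\delta_i(C) \le \tau$.

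The only delicate point is the length bound $|\mathcal{I}'| < \Delta_i$: the naive estimate $|\mathcal{I}'| \le d_G(v,t) < 2\Delta_i$ would allow only the weaker bound $\delta_i(C) \le 2\tau$ (for instance by invoking \Cref{obs:beta1} on two scattering-length pieces). The sharp bound hinges on using the inductive invariant to certify that the endpoint $v_{j-1}$ itself satisfies $D(v_{j-1}) \ge \Delta_i$, which consumes half of the $D(v)$ budget and brings $|\mathcal{I}'|$ strictly below $\Delta_i$, exactly matching the scattering-length threshold.
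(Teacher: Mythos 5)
Your proof is correct and takes the same route as the paper's: truncate a shortest path from $v$ to its nearest terminal at the boundary of $V_{i-1}$, use the inductive invariant $\bigcup_{j<i}\cR_j\subseteq V_{i-1}$ to show every vertex on the truncation satisfies $D(\cdot)\ge 2^{i-1}$ and hence the truncated length is below $\Delta_i$, apply the $(1,\tau,\Delta_i)$-scattering property of $\cP_i$, and chain levels along the at most $\tau$ clusters encountered. One small misattribution: the reason $v\notin V_{i-1}$ is not really the one-directional containment $\bigcup_{j<i}\cR_j\subseteq V_{i-1}$ (which says nothing about whether some $\cR_i$-vertices were assigned early), but simply that $v$ lies in a cluster of $\cP_i$, which by construction partitions $G_i=G[V\setminus V_{i-1}]$.
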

\begin{proof}

Consider a cluster $C\in\cC_i$, and let $v\in C$ be a  vertex s.t. $D(v)\le2^{i}$. Let $P_v=\{v=v_0,\dots,v_q\}$ be a shortest path from $v$ to a terminal $v_q\in K$ of total weight $D(v)$. Let $s+1$ be a minimal index of a clustered vertex. That is $f(v_{s+1})\ne \perp$, and for every $j\le s$, $f(v_s)=\perp$. Such an index exist as $f(v_q)=v_q$. 
 In particular, $v_{s+1}\in V_{i-1}$.
The prefix $P=\{v=v_0,\dots,v_s\}$ is a shortest path in $G_i$.
$\cP_i$ is a $(1,\tau,\Delta_i)$-scattering partition. Hence the vertices of $P$ are partitioned to $\tau'\le \tau$ clusters $C_1,\dots,C_{\tau'}$ where $v_s\in C_1$, $v_0\in C_{\tau'}$ and there is an edge from $C_j$ to $C_{j+1}$ of weight at most $2^{i-1}<2^i$, while the edge $\{v_s,v_{s+1}\}$ is from $C_1$ to $V_{i-1}$ of weight at most $2^i$. 
It holds that $\delta_i(C_1)=1$, and by induction $\delta(C_j)\le j$. In particular $\delta(C)\le \tau'\le\tau$.
\end{proof}

	\begin{claim}\label{clm:scatteringDistTof}
	For every vertex $v$ which is assigned during the $i$'th iteration (i.e., $v\in C\in\cC_i$) it holds that $d_G(v,f(v))\le 3\tau\cdot 2^i$.
\end{claim}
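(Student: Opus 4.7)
The plan is to proceed by strong induction on the iteration index $i$ at which the vertex $v$ is assigned. The base case $i=0$ is vacuous (respectively trivial), since vertices of $\cR_0 = K$ are terminals and satisfy $f(v)=v$, giving distance $0$. For the inductive step, the idea is to follow the ``linking chain'' produced by the level function $\delta_i$ backwards from $v$'s cluster $C$ to $V_{i-1}$, then invoke the inductive hypothesis on the endpoint.

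Concretely, set $l := \delta_i(C)$. By \Cref{clm:scatteringLevelBound} we have $l \le \tau$, so by the inductive definition of the level function there exists a sequence of clusters $C = C_l, C_{l-1}, \ldots, C_1$ in $\cC_i$ together with linking vertices: for each $j \ge 2$ there is an edge of weight at most $2^i$ from some vertex in $C_j$ to the linking vertex $u_{C_j} \in C_{j-1}$, and from $C_1$ there is an edge of weight at most $2^i$ to $u_{C_1} \in V_{i-1}$. By construction of $f_i$, all vertices along this chain are assigned to the same terminal, so $f(v) = f(u_{C_1})$.

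I would now bound $d_G(v, u_{C_1})$ by concatenating, for each $j$ from $l$ down to $1$, an ``inside-$C_j$'' segment of length at most the weak diameter $\Delta_i = 2^{i-1}$ (using that $\cP_i$ is $(1,\tau,\Delta_i)$-scattering, and that $d_G \le d_{G_i}$) with one linking edge of weight at most $2^i$. This yields
\[
d_G(v, u_{C_1}) \;\le\; l \cdot \bigl(2^{i-1} + 2^i\bigr) \;\le\; \tau \cdot \tfrac{3}{2}\cdot 2^i.
\]
Since $u_{C_1} \in V_{i-1}$, it was assigned in some iteration $j \le i-1$, and the inductive hypothesis gives $d_G(u_{C_1}, f(u_{C_1})) \le 3\tau \cdot 2^j \le \tfrac{3}{2}\tau \cdot 2^i$. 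Adding the two bounds via the triangle inequality gives the claimed $3\tau \cdot 2^i$.

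The only delicate points I expect are bookkeeping ones rather than real obstacles: (i) verifying that ``weak diameter $\Delta_i$'' of a cluster of $\cP_i$ — which is a partition of the induced subgraph $G_i$ — still bounds distances in $G$, which follows since $d_G \le d_{G_i}$; and (ii) making sure the telescoping constants align so that the two halves of the chain-plus-recursion contribute exactly $\tfrac{3}{2}\tau\cdot 2^i$ each, which is what makes the induction close with the stated constant $3$.
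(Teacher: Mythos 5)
Your proof is correct and takes essentially the same route as the paper's: the paper formalizes the ``walk the linking chain'' idea as an inner induction on the level $l=\delta_i(C)$, establishing $d_G(v,f(v))\le l\cdot 3\cdot 2^{i-1}+3\tau\cdot 2^{i-1}$ and then plugging in $l\le\tau$, whereas you unroll that inner induction into an explicit sum $l\cdot(2^{i-1}+2^i)$ over the chain before applying the outer induction hypothesis to $u_{C_1}\in V_{i-1}$ — the two are arithmetically identical since $3\cdot 2^{i-1}=2^{i-1}+2^i$. Your point (i), that the weak-diameter bound on a cluster of $\cP_i$ (a partition of the induced subgraph $G_i$) carries over to $d_G$ because $d_G\le d_{G_i}$, is indeed implicitly used by the paper and is correctly justified.
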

\begin{proof}
	The proof is by induction on $i$.
	For $i=0$ the assertion holds trivially as every terminal is assigned to itself.
	We will assume the assertion for $i-1$ and prove it for $i$. Let $C\in \cC_{i}$ be some cluster, and let $v\in C$.
	Suppose first that $\delta_i(C)=1$. Let $u_C\in V_{i-1}$ be the linking vertex of $C$. By the induction hypothesis $d_G(u_C,f(u_C))\le 3\tau\cdot 2^{i-1}$.
	As the diameter of $C$ is bounded by $2^{i-1}$, and the weight of the edge towards $u_C$ is at most $2^i$ we conclude $d_G(v,f(v))\le d_G(v,u_C)+d_G(u_C,f(u_C))\le (2^{i-1}+2^i)+3\tau\cdot 2^{i-1}= 3\cdot 2^{i-1}+3\tau\cdot 2^{i-1}$.
	Generally, for $\delta_i(C)=l$, we argue by induction that for every $v\in C$ it holds that $d_{G}(v,f(v))\le l\cdot3\cdot2^{i-1}+3\tau\cdot2^{i-1}$. Indeed, let $u_C$ by the linking vertex of $C$. By the induction hypothesis it holds that $d_G(u_C,f(u_C))\le (l-1)\cdot 3\cdot 2^{i-1}+3\tau\cdot 2^{i-1}$. Using similar arguments, it holds that 
	$d_{G}(v,f(v))\le d_{G}(v,u_{C})+d_{G}(u_{C},f(u_{C}))\le(2^{i-1}+2^{i})+(l-1)\cdot3\cdot2^{i-1}+3\tau\cdot2^{i-1}=l\cdot3\cdot2^{i-1}+3\tau\cdot2^{i-1}$.
	Using \Cref{clm:scatteringLevelBound}, 
	$d_G(v,f(v))\le 3\tau\cdot 2^{i-1}+3\tau\cdot 2^{i-1}= 3\tau\cdot 2^{i}$ as required.
\end{proof}

\begin{corollary}\label{cor:dvfv}
For every vertex $v$ it holds that $d_G(v,f(v))< 6\tau\cdot D(v)$.
\end{corollary}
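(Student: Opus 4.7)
The plan is to combine the range bound $D(v) \ge 2^{i-1}$ (for a vertex $v$ first assigned in iteration $i$) with the absolute bound $d_G(v, f(v)) \le 3\tau \cdot 2^i$ from Claim~\ref{clm:scatteringDistTof}. The terminal case $v \in K$ is immediate since $f(v) = v$, so that $d_G(v, f(v)) = 0$ and the inequality holds vacuously.

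For $v \notin K$, let $i \ge 1$ be the iteration in which $v$ is first assigned, that is, $v \in V_i \setminus V_{i-1}$. I would invoke the invariant $\cR_0 \cup \cR_1 \cup \dots \cup \cR_{i-1} \subseteq V_{i-1}$ maintained by the construction (noted in the text just before Claim~\ref{clm:scatteringLevelBound}). Since $v \notin V_{i-1}$, we have $v \notin \cR_j$ for any $j \le i-1$, and by the definition of $\cR_j$ this forces $D(v) \ge 2^{i-1}$. Plugging into Claim~\ref{clm:scatteringDistTof} yields
\[
d_G(v, f(v)) \;\le\; 3\tau \cdot 2^i \;=\; 6\tau \cdot 2^{i-1} \;\le\; 6\tau \cdot D(v).
\]

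Strict inequality is immediate whenever $D(v) > 2^{i-1}$, since then the last step above is strict. In the boundary case $D(v) = 2^{i-1}$ (which forces $v \in \cR_i$), I would sharpen the inductive bound in Claim~\ref{clm:scatteringDistTof} for this specific $v$: the linking edge produced in the proof of Claim~\ref{clm:scatteringLevelBound} lies on the shortest path from $v$ to $K$ of total weight $D(v) = 2^{i-1}$, and hence has weight at most $2^{i-1}$ rather than the generic bound $2^i$; propagating this saving through the level-by-level telescoping in Claim~\ref{clm:scatteringDistTof} is enough to obtain the strict version of the bound.

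I do not foresee any substantive obstacle here: the corollary is essentially a one-line combination of the iteration-parity invariant with Claim~\ref{clm:scatteringDistTof}, and the only mild subtlety is handling the boundary case $D(v) = 2^{i-1}$ so as to preserve the strict inequality.
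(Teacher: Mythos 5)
Your main computation is fine and does give $d_G(v,f(v)) \le 6\tau\cdot D(v)$: if $v$ is first assigned at iteration $i\ge 1$, the invariant $\bigcup_{j\le i-1}\cR_j\subseteq V_{i-1}$ plus $v\notin V_{i-1}$ forces $D(v)\ge 2^{i-1}$, and \Cref{clm:scatteringDistTof} closes the computation. The problem is your proposed fix for the boundary case $D(v)=2^{i-1}$. You want the linking edges in the telescoping sum of \Cref{clm:scatteringDistTof} to have weight at most $2^{i-1}$, on the grounds that they ``lie on the shortest path from $v$ to $K$.'' This conflates two distinct objects. The chain of edges exhibited in the \emph{proof} of \Cref{clm:scatteringLevelBound} does lie on that shortest path, but it is only used to upper-bound the level $\delta_i(C)$. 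The linking vertices $u_C$ that the \emph{algorithm} actually uses to define $f$ are the ones realizing the minimal level in the definition of $\delta_i(\cdot)$; they need not coincide with the chain from \Cref{clm:scatteringLevelBound}'s proof, and the algorithm only guarantees the corresponding edges have weight at most $2^i$, not $2^{i-1}$. So the sharpening you propose does not propagate through \Cref{clm:scatteringDistTof}.

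The paper's proof sidesteps this entirely. It chooses $i$ by the dyadic scale of $D(v)$, namely $2^{i-1}<D(v)\le 2^i$, so the strict lower bound on $D(v)$ is built in; it then uses that $v$ is assigned at iteration $i$ or earlier and applies \Cref{clm:scatteringDistTof} to get $d_G(v,f(v))\le 3\tau\cdot 2^i<6\tau\cdot D(v)$. Alternatively, staying within your framing, \Cref{clm:scatteringAssignTime} already rules out your boundary case for $i\ge 2$: a vertex with $D(v)=2^{i-1}$ satisfies $2^{i-2}<D(v)\le 2^{i-1}$, and by that claim is assigned at iteration $i-2$ or $i-1$, never $i$. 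Invoking that claim, rather than reopening the internals of \Cref{clm:scatteringDistTof}, is the right way to tighten the inequality.
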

\begin{proof}
	Let $i\ge 0$ su	ch that $2^{i-1}< D(v)\le 2^i$. The vertex $v$ is assigned at iteration $i$ or earlier. By \Cref{clm:scatteringDistTof} we conclude $d_G(v,f(v))\le 3\tau\cdot 2^i< 6\tau\cdot D(v)$.
\end{proof}

\begin{claim}\label{clm:scatteringAssignTime}
	Consider a vertex $v$ such that $2^{i-1}<D(v)\le 2^i$. Then $v$ is assigned either at iteration $i-1$ or $i$.
\end{claim}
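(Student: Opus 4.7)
The plan is to prove the two bounds on the iteration index separately: first the upper bound that $v$ is assigned by iteration $i$, then the lower bound that $v$ is not assigned before iteration $i-1$. Both follow directly from the construction and from the weak-diameter guarantee of the scattering partitions $\{\cP_j\}$.

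For the upper bound, I would observe that, modulo a boundary convention, $D(v)\le 2^i$ places $v$ in $\cR_i$. Suppose $v$ is not assigned in any iteration $j<i$; then $v\in V(G_i)$, and so the (unique) cluster $C\in\cP_i$ containing $v$ is well defined. Since $v\in C$ and $v\in \cR_i$, the cluster $C$ satisfies the condition for membership in $\cC_i$, hence $C\in\cC_i$ and $v$ is assigned in $f_i$. (If one wishes to treat the equality $D(v)=2^i$ literally, the same argument applies since an identical proof works after re-indexing $\cR_i$ as $\{u:2^{i-1}<D(u)\le 2^i\}$; this does not affect any of the other claims.)

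For the lower bound, suppose for contradiction that $v$ is assigned at some iteration $j\le i-2$. Then $v\in C$ for some $C\in\cC_j$, and by definition of $\cC_j$ there exists $u\in C\cap \cR_j$, so $D(u)< 2^j$. Because $\cP_j$ is a $(1,\tau,\Delta_j)$-scattering partition of $G_j$ with $\Delta_j=2^{j-1}$, its clusters have weak diameter at most $2^{j-1}$ with respect to $d_{G_j}$, and thus also with respect to $d_G$. Hence
\[
D(v)\ \le\ D(u)+d_G(u,v)\ \le\ 2^j+2^{j-1}\ =\ 3\cdot 2^{j-1}.
\]
Combining with the hypothesis $D(v)>2^{i-1}$ yields $2^{i-1}<3\cdot 2^{j-1}$, i.e.\ $2^{i-j}<3$, which is impossible for integer $j\le i-2$. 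Thus $j\ge i-1$, completing the proof.

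There is no significant obstacle here: once the weak-diameter bound from the scattering property is invoked, the argument is arithmetic. The only mild subtlety is the boundary case $D(v)=2^i$, where one must check that the upper-bound argument still places $v$ in a cluster selected at iteration $i$; this is handled by the remark above about the indexing of $\cR_i$ and is not really a hard point.
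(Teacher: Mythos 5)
Your proof is correct and essentially mirrors the paper's: the upper bound is by construction (once $v$ survives to iteration $i$ its cluster lies in $\cC_i$), and the lower bound is the same arithmetic, $D(v)\le D(u)+d_G(u,v)\le 2^j+2^{j-1}=3\cdot 2^{j-1}$ combined with $D(v)>2^{i-1}$ forcing $j\ge i-1$ (you phrase it as a contradiction, the paper as a direct deduction, which is immaterial). You also correctly flag the small mismatch between the claim's hypothesis $2^{i-1}<D(v)\le 2^i$ and the paper's definition $\cR_i=\{u\mid 2^{i-1}\le D(u)<2^i\}$, so that a vertex with $D(v)=2^i$ formally falls in $\cR_{i+1}$; the paper glosses over this boundary and your observation that the argument is indifferent to the convention is a fair and accurate repair.
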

\begin{proof}
	Clearly if $v$ remains un-assigned until iteration $i$, it will be assigned during the $i$'th iteration.
	Suppose that $v$ was assigned during iteration $j$. Then $v$ belongs to a cluster $C\in \cC_j$. In particular there is a vertex $u\in C$ such that $D(u)\le 2^j$. As $C$ has diameter at most $2^{j-1}$, it holds that
	$$2^{i-1}<D(v)\le D(u)+d_G(v,u)\le 2^j+2^{j-1}=3\cdot 2^{j-1}~.$$
	$i,j$ are integers, hence $j\ge i-1$.
\end{proof}

		\subsection{Distortion Analysis}\label{subsec:Distoriton}
		In this section we analyze the distortion of the minor induced by the terminal partition created by our algorithm. We have several variables that are defined with respect to the algorithm. Note that all these definitions are for analysis purposes only, and have no impact on the execution of the algorithm.
		
Consider a pair of terminals $t$ and $t'$.
Let $P_{t,t'}=\left\{ t=v_{0},\dots,v_{\gamma}=t'\right\}$ be the shortest path from $t$ to $t'$ in $G$.
We can assume that there are no terminals in $P_{t,t'}$ other than $t,t'$. This is because if we will prove the distortion guarantee for every pair of terminals  $t,t'$ such that $P_{t,t'}\cap K=\{t,t'\}$, then by the triangle inequality the distortion guarantee will hold for all terminal pairs.

\paragraph{Detours:} 
The terminals $t,t'$ are fixed. During the execution of the algorithm, for every terminal $t_j$ we will maintain a \emph{detour} $\mathcal{D}_{t_j}$ (or shortly $\mathcal{D}_{j}$). A detour is a consecutive subinterval $\{a_j,\dots,b_j\}$ of $P_{t,t'}$, where $a_{j}\in \mathcal{D}_{j}$ is the leftmost (i.e., with minimal index) vertex in the detour and $b_j$ is the rightmost.
Initially $\mathcal{D}_{t}=\{t\}$ and $\mathcal{D}_{t'}=\{t'\}$, while for every $t_j\notin\{t,t'\}$, $\mathcal{D}_{j}=\emptyset$.
Every pair of detours $\mathcal{D}_{j},\mathcal{D}_{j'}$ will be disjoint throughout the execution of the algorithm.
	
A vertex $v\in P_{t,t'}$ is \emph{active}  if and only if it does not belong to any detour. 
It will hold that every active vertex is necessarily unassigned (while there might be unassigned vertices which are inactive).
Initially, $t,t'$ are \emph{inactive}, while all the other vertices of $P_{t,t'}$ are active. 
Next we consider the $i$'th iteration of the algorithm, and we will redefine the detours in each iteration as follows. We go over the terminals according to an arbitrary order $\{t_1,\dots,t_k\}$. Consider the terminal $t_{j}$ with detour $\mathcal{D}_{j}=\{a_j,\dots,b_j\}$ (which might be empty).
	If no active vertices are assigned to $t_j$ at the $i$'th iteration we do nothing. Otherwise, let $a'_{j}\in P_{t,t'}$ (resp. $b'_{j}$)
	be the leftmost (resp. rightmost) active vertex that was assigned to $t_{j}$ during the $i$'th iteration.
	Set $a_j$ to be vertex with minimal index between the former $a_j$ and $a'_j$ ($a'_j$ if there was no $a_j$).
	Similarly $b_j$ is the vertex with maximal index between the former $b_j$ and $b'_j$.
	$\mathcal{D}_{j}$ is updated to be $\{a_j,\dots,b_j\}$.
	All the vertices in $\left\{ a_{j},\dots,b_{j}\right\}= \mathcal{D}_{j}$ become inactive.
	Note that a vertex might become inactive while remaining yet unassigned.
	
	Consider an additional detour $\mathcal{D}_{j'}$. Before the updating of $\mathcal{D}_{j}$ at iteration $i$, $\mathcal{D}_{j},\mathcal{D}_{j'}$ were disjoint.
	Then we defined $a'_j,b'_j$ (to be leftmost/rightmost active vertex assigned to $t_j$) and updated $\mathcal{D}_{j}$ to be the interval from $\min\{a_j,a'_j\}$ to $\max\{b_j,b'_j\}$. Note that as $a'_j,b'_j$ were active (assuming they existed) they cannot belong to $\mathcal{D}_{j'}$. Thus after the update, $a_j,b_j$ did not belong to $\mathcal{D}_{j'}$ as well. 
	Nevertheless, it is possible that after the update $\mathcal{D}_{j}$ and $\mathcal{D}_{j'}$ are no longer disjoint.
	However as $\mathcal{D}_{j'}$ is also an interval not containing the endpoints of the new interval $\mathcal{D}_{j}$, the only such possibility is if $\mathcal{D}_{j'}$ is fully contained in $\mathcal{D}_{j}$.\footnote{This is because given two intervals $[x_1,y_1],[x_2,y_2]$ where $x_1,y_1\notin [x_2,y_2]$, by case analysis it must be the case that either the intervals are disjoint, or $[x_2,y_2]\subseteq [x_1,y_1]$.} If indeed $\mathcal{D}_{j'}\subset\mathcal{D}_{j}$, we will nullify $\mathcal{D}_{j'}\leftarrow\emptyset$ and thus maintain the disjointness property (while not changing the active/inactive status of any vertex).

After we nullify all the detours that were contained in $\mathcal{D}_{j}$, we will proceed to treat the next terminals in turn. Once we finish going over all the terminals, we proceed to the $i+1$ iteration.
Eventually, all the vertices are assigned, and hence are inactive. In particular every vertex belong to some detour. 
	In other words, as the detours are disjoint intervals,  all the vertices of $P_{t,t'}$ are partitioned to consecutive disjoint detours $\mathcal{D}_{\ell_1},\dots,\mathcal{D}_{\ell_s}$.

\paragraph{Intervals:} 
For an \emph{interval}
$Q=\left\{ v_{a},\dots,v_{b}\right\} \subseteq P_{t,t'}$, the \emph{internal length} is 
$L(Q)=d_{G}(v_{a},v_{b})$, while the \emph{external length} is $L^{+}(Q)=d_{G}(v_{a-1},v_{b+1})$. \footnote{For ease of notation we will denote $v_{-1}=t$ and $v_{\gamma+1}=t'$.}
We denote by $D(Q)=D(v_a)$ the distance from the leftmost vertex $v_a\in Q$ to its closest terminal.
Set $\cint=\frac17$ (``int'' for interval).
We partition the vertices in $P_{t,t'}$ into consecutive intervals $\mathcal{Q}$, such that for every $Q\in \mathcal{Q}$,
\begin{eqnarray}
L(Q)\le\cint\cdot D(Q)\le L^{+}(Q)~.\label{eq:Intervallength}
\end{eqnarray}
Such a partition could be obtained as follows: Sweep
along the path $P_{t,t'}$ in a greedy manner, after partitioning the prefix $v_{0},\dots,v_{h-1}$,
to construct the next interval $Q$, simply pick
the minimal index $s$ such that $ L^{+}(\left\{ v_{h},\dots,v_{h+s}\right\} )\ge\cint\cdot D(v_{h})$.
By the minimality of $s$, $L(\left\{ v_{h},\dots,v_{h+s}\right\} )\le L^{+}(\left\{ v_{h},\dots,v_{h+s-1}\right\} )\le\cint\cdot D(v_{h})$ (in the case $s=0$, trivially $L(\left\{ v_{h}\right\} )=0\le\cint\cdot D(v_{h})$).
Note that such $s$ could always be found, as $ L^{+}(\left\{ v_{h},\dots,v_{\gamma}=t'\right\} )=d_{G}(v_{h-1},t')\ge d_{G}(v_{h},t')\ge D(v_{h})=D(Q)$.

Consider some interval $Q=\left\{ v_{a},\dots,v_{b}\right\}\in \mathcal{Q}$. 
For every vertex $v\in Q$, by triangle inequality it holds that $D(Q)-L(Q)\le D(v)\le D(Q)+L(Q)$. Therefore,
\begin{eqnarray}
(1-\cint)D(Q)\le D(v)\le(1+\cint)D(Q)~.\label{eq:DvDQ}
\end{eqnarray}
Note that the set $\mathcal{Q}$ of intervals is determined before the execution of the algorithm, and is never changed. In particular, it is independent from the set of detours (which evolves during the execution of the algorithm).\\
For an interval $Q$, we denote by $i_Q$ the first iteration when some vertex $v$ belonging to the interval $Q$ is assigned.
\begin{claim}\label{clm:scattering-iQ}
 All $Q$ vertices are assigned in either iteration $i_Q$ or $i_Q+1$. 
\end{claim}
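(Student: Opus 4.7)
The plan is to leverage the fact that (by \eqref{eq:DvDQ}) all vertices of $Q$ have $D$-values within a factor of $\tfrac{4}{3}$ of one another, combined with the weak-diameter guarantee built into each scattering partition $\cP_i$. The argument splits naturally into three steps; only the first unpacks the algorithm's structure, while the remaining two are arithmetic.

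First, I would fix any vertex $v^\star \in Q$ that is assigned at iteration $i_Q$ (such a vertex exists by the definition of $i_Q$). Since $v^\star$ is assigned in iteration $i_Q$, it lies in some cluster $C \in \cC_{i_Q}$, and by the definition of $\cC_{i_Q}$ this cluster contains some vertex $u \in C \cap \cR_{i_Q}$. The weak-diameter bound for $\cP_{i_Q}$ gives $d_G(v^\star,u) \le \Delta_{i_Q} = 2^{i_Q-1}$, while $u \in \cR_{i_Q}$ gives $D(u) < 2^{i_Q}$. The triangle inequality then yields
\[
D(v^\star) \;\le\; D(u) + d_G(u,v^\star) \;<\; 2^{i_Q} + 2^{i_Q-1} \;=\; 3 \cdot 2^{i_Q-1}.
\]

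Second, I would propagate this bound from $v^\star$ to an arbitrary $w \in Q$ by invoking \eqref{eq:DvDQ} twice:
\[
D(Q) \;\le\; \tfrac{7}{6}\, D(v^\star) \;<\; \tfrac{7}{2}\cdot 2^{i_Q-1},
\qquad
D(w) \;\le\; \tfrac{8}{7}\, D(Q) \;<\; 4 \cdot 2^{i_Q-1} \;=\; 2^{i_Q+1}.
\]
The choice $\cint = \tfrac{1}{7}$ is calibrated exactly so that the product $\tfrac{8}{7}\cdot \tfrac{7}{6}\cdot 3 = 4$ is a clean power of $2$.

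Finally, $D(w) < 2^{i_Q+1}$ together with \Cref{clm:scatteringAssignTime} forces $w$ to be assigned at some iteration $\le i_Q+1$; on the other hand, the minimality of $i_Q$ forbids any vertex of $Q$ from being assigned before iteration $i_Q$. Combining the two pins $w$'s iteration into $\{i_Q, i_Q+1\}$, as claimed. I expect the only non-routine step to be the first one, as it is the only place where algorithm-specific facts about $\cC_{i_Q}$---namely the existence of an $\cR_{i_Q}$-vertex in every active cluster and the weak-diameter bound $\Delta_{i_Q} = 2^{i_Q-1}$---must be invoked; everything else is pure algebra driven by the calibrated constant $\cint$.
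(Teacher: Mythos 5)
Your proof is correct and follows essentially the same route as the paper: fix a vertex of $Q$ assigned at iteration $i_Q$, use the cluster diameter bound $\Delta_{i_Q}=2^{i_Q-1}$ together with $D(u)<2^{i_Q}$ for the witness $u\in\cR_{i_Q}$ to bound $D(v^\star)<3\cdot 2^{i_Q-1}$, propagate through \eqref{eq:DvDQ} to get $D(w)<2^{i_Q+1}$ for all $w\in Q$, and conclude. The only cosmetic difference is that you close via \Cref{clm:scatteringAssignTime} and the minimality of $i_Q$, whereas the paper concludes directly from the invariant that $\cR_j$ is fully assigned by iteration $j$.
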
 
\begin{proof}
Let $u\in Q$ be some vertex which is assigned during iteration $i_Q$. Then $u$ belongs to a cluster $C\in \cC_{i_Q}$, containing a vertex $u'\in C$ such that $D(u')\le 2^{i_Q}$. As $C$ has diameter at most $2^{i_Q-1}$, it holds that
$2^{i_Q}\ge D(u')\ge D(u)-d_{G}(u,u')\ge D(u)-2^{i_Q-1}$.
Hence $D(u)\le\frac32\cdot2^{i_Q}$. 
It follows that
\begin{eqnarray}
D(Q)\overset{(\ref{eq:DvDQ})}{\le}\frac{1}{1-\cint}\cdot D(u)\le\frac{3}{2}\cdot\frac{1}{1-\cint}\cdot2^{i_{Q}}~.\label{eq:DQ}
\end{eqnarray}
For every vertex $v\in Q$ it holds that,
\[
D(v)\le D(Q)+L(Q)\overset{(\ref{eq:Intervallength})}{\le} (1+\cint)\cdot D(Q)\overset{(\ref{eq:DQ})}{\le} \frac{3}{2}\cdot\frac{(1+\cint)}{(1-\cint)}\cdot2^{i_{Q}}=2^{i_{Q}+1}~.
\]
Therefore, by \Cref{clm:scatteringAssignTime}, in the $i_Q+1$ iteration, all the (yet unassigned) vertices of  $Q$ will necessarily be assigned.
\end{proof}

\begin{lemma}\label{lem:scatteringIntPart}
Consider an interval $Q\in \mathcal{Q}$. Then the vertices of $Q$ are partitioned into at most $O(\tau^2)$ different detours.
\end{lemma}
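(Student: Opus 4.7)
The plan is to use \Cref{clm:scattering-iQ}, which ensures that every vertex of $Q$ is assigned in iteration $i_Q$ or $i_Q+1$, and to bound the number of detours intersecting $Q$ by counting the distinct terminals that receive $Q$-vertices across these two iterations (plus an $O(1)$ contribution from detours of terminals outside $Q$ extending into $Q$). The first ingredient is the geometric estimate
\[
L(Q)\;\le\;\cint\cdot D(Q)\;\overset{\eqref{eq:DQ}}{\le}\;\tfrac{1}{7}\cdot\tfrac{3}{2}\cdot\tfrac{7}{6}\cdot 2^{i_Q}\;=\;\tfrac{1}{4}\cdot 2^{i_Q}\;=\;\tfrac{1}{2}\Delta_{i_Q}\;=\;\tfrac{1}{4}\Delta_{i_Q+1},
\]
so that both $Q$ and any of its subpaths are short enough to activate the scattering guarantees of $\mathcal{P}_{i_Q}$ and $\mathcal{P}_{i_Q+1}$.

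In iteration $i_Q$, minimality of $i_Q$ means no $Q$-vertex has yet been assigned, so $Q\subseteq V(G_{i_Q})$; as a subpath of a shortest path in $G$ whose endpoints sit in $G_{i_Q}$, it is also a shortest path in $G_{i_Q}$ of length $\le\Delta_{i_Q}$. The $(1,\tau,\Delta_{i_Q})$-scattering property of $\mathcal{P}_{i_Q}$ then forces $Q$ to meet at most $\tau$ clusters, and since each cluster of $\mathcal{C}_{i_Q}$ is mapped wholesale to a single terminal, at most $\tau$ distinct terminals receive $Q$-vertices in this iteration.

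For iteration $i_Q+1$, I would let $Q'\subseteq Q$ be the $Q$-vertices still unassigned at the beginning of the iteration and decompose $Q'$ into its maximal contiguous subintervals (``fragments'') $F_1,\dots,F_m$. Each $F_k$ is a subpath of $Q$ using only unassigned vertices, hence lies in $G_{i_Q+1}$ and, being a subpath of a shortest path in $G$, is itself a shortest path in $G_{i_Q+1}$ of length $\le L(Q)<\Delta_{i_Q+1}$. Applying the $(1,\tau,\Delta_{i_Q+1})$-scattering property of $\mathcal{P}_{i_Q+1}$ to each $F_k$ in turn yields at most $m\cdot\tau$ clusters in total, and hence at most $m\tau$ further terminals.

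The main obstacle is establishing $m=O(\tau)$. The fragments are separated by ``gaps'' of $Q$-vertices assigned in iteration $i_Q$, each gap sitting inside one of the $\le\tau$ clusters of $\mathcal{P}_{i_Q}$ meeting $Q$. Because a single connected cluster can a priori intersect $Q$ in several runs, $m\le\tau+1$ is not an immediate consequence of the scattering bound on distinct clusters; I expect to leverage the weak diameter and connectedness of the $\mathcal{P}_{i_Q}$-clusters together with the tight length bound $L(Q)\le\tfrac12\Delta_{i_Q}$ to charge each fragment to a distinct cluster of $\mathcal{C}_{i_Q}$ touching $Q$. Granting $m\le\tau+1$, the total number of terminals receiving $Q$-vertices is $\tau+(\tau+1)\tau=O(\tau^2)$; adding at most two detours that extend into $Q$ from the left/right and at most one detour that engulfs $Q$ entirely contributes an $O(1)$ term, so the vertices of $Q$ end up partitioned into $O(\tau^2)$ distinct detours, as claimed.
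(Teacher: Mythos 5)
Your proof has a genuine gap at exactly the point you flagged as ``the main obstacle,'' and the repair you sketch would not work. You decompose the still-unassigned $Q$-vertices after iteration $i_Q$ into maximal contiguous \emph{fragments} $F_1,\dots,F_m$ and hope to bound $m$ by $O(\tau)$ by ``charging each fragment to a distinct cluster of $\mathcal{C}_{i_Q}$ touching $Q$.'' This cannot succeed: the gaps between fragments consist of $Q$-vertices assigned at iteration $i_Q$, and a single connected cluster $C\in\mathcal{P}_{i_Q}$ may intersect $Q$ in several non-contiguous runs (nothing in the definition of a scattering partition or the weak-diameter bound forbids $C$ from leaving and re-entering the path $Q$ via a route outside $Q$). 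Hence one cluster can create several gaps, and the number of fragments $m$ is not controlled by the number $\tau'\le\tau$ of clusters intersecting $Q$. Your final count $\tau+(\tau+1)\tau$ therefore rests on an unproven (and in general false) inequality $m\le\tau+1$.

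The paper sidesteps this precisely by arguing about \emph{detours}, not clusters, and about \emph{active} vertices, not merely unassigned ones. Each cluster $C\in\mathcal{C}_{i_Q}$ is mapped wholesale to one terminal $t_j$, and the update rule sets $\mathcal{D}_j$ to be the \emph{interval} from the leftmost to the rightmost active $Q$-vertex assigned to $t_j$. So even if $C\cap Q$ is scattered over several runs, the whole span between them is engulfed by $\mathcal{D}_j$ and becomes inactive. Consequently the inactive $Q$-vertices after iteration $i_Q$ are covered by at most $\tau$ interval detours, and the maximal runs of \emph{active} vertices (the paper's ``slices'') number at most $\tau+1$; this replaces your $m$ and closes the gap. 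Each slice is then scattered by $\mathcal{P}_{i_Q+1}$ into at most $\tau$ clusters, giving $O(\tau^2)$ detours in total, with an initial case (some $Q$-vertex already inactive before $i_Q$, forcing $Q$ inside a single detour) disposed of separately. If you want to keep a self-contained write-up, you should switch from ``unassigned'' to ``active,'' from ``fragments'' to ``slices,'' and justify the $\tau+1$ bound through the detour-interval mechanism rather than through cluster connectivity.

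As a minor remark, your accounting at the end (at most $\tau$ terminals in iteration $i_Q$, at most $m\tau$ in iteration $i_Q+1$, plus $O(1)$ for detours extending into $Q$) also silently overcounts: a terminal receiving only \emph{inactive} $Q$-vertices at iteration $i_Q+1$ does not get its detour extended into $Q$ at all, since detour updates are triggered only by assignments of active vertices. This overcounting is harmless for an upper bound, but it underscores that the right object to count is active vertices and detour intervals, which is the structure the paper exploits.
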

\begin{proof}
By definition, by the end of the $i_Q-1$'th iteration all the vertices of $Q$ are unassigned. 
We first consider the case where by the end $i_Q-1$'th iteration some vertex $v\in Q$ is inactive. It holds that $v$ belongs to some detour $\mathcal{D}_j$. As all the vertices of $Q$ are unassigned, necessarily $Q\subset\mathcal{D}_j$. In particular, all the vertices of $Q$ belong to a single detour. This property will not change till the end of the algorithm, thus the lemma follows. 

Next, we consider the case where by the end of the $i_Q-1$'th iteration all the vertices of $Q$ are active.
The algorithm at iteration $i_Q$ creates an $\left(1,\tau,\Delta_{i_Q}\right)$-scattering partition $\mathcal{P}_{i_Q}$. 
The length of $Q$ is bounded by
\begin{eqnarray}\label{eq:LQbound}
L(Q)\overset{(\ref{eq:Intervallength})}{\le}\cint\cdot D(Q)\overset{(\ref{eq:DQ})}{\le}
\cint\cdot\frac{3}{2}\cdot\frac{1}{1-\cint}\cdot2^{i_{Q}}=\frac{1}{4}\cdot2^{i_{Q}}<\Delta_{i_Q}
\end{eqnarray}
Hence $Q$ is partitioned by $\mathcal{P}_{i_Q}$ to $\tau'\le\tau$ clusters $C_1,\dots,C_{\tau'}\in\mathcal{P}_{i_Q}$.
It follows that by the end of the $i_Q$'th iteration, the inactive vertices in $Q$ are partitioned to at most $\tau$ detours. If all the vertices in $Q$ become inactive, then we are done, as the number of detours covering $Q$ can only decrease further in the algorithm (as a result of detour nullification). Hence we will assume that some of $Q$ vertices remain active.

A  \emph{slice} is a maximal sub-interval $S\subseteq Q$ of active vertices. The active vertices in $Q$ are partitioned to at most $\tau+1$ slices $S_1,S_2,\dots,S_{\tau''}$ .\footnote{Actually, as at least one $Q$ vertex remained active, at the beginning of the $i_Q+1$ iteration the inactive vertices of $Q$ partitioned to at most $\tau-1$ detours. Therefore the maximal number of slices is $\tau$.}
By the end of the $i_Q+1$ iteration, according to \Cref{clm:scattering-iQ} all $Q$ vertices  will be assigned, and in particular belong to some detour. 
The algorithm creates a  $\left(\Delta_{i_Q+1},\tau,1\right)$-scattering partition $\mathcal{P}_{i_Q+1}$ of the unassigned vertices. By equation (\ref{eq:LQbound}) the length of every slice $S$ is bounded by $L(S)\le L(Q)\le \frac{1}{4}\cdot2^{i_{Q}}\le\Delta_{i_{Q}+1}$. Therefore
the vertices $S$ intersect at most $\tau$ clusters of $\mathcal{P}_{i_Q+1}$, and thus will be partitioned to at most $\tau$ detours. Some detours might get nullified, however in the worst case, by the end of the $i_Q+1$ iteration, the vertices in $\cup_iS_i$  are partitioned to at most $\tau\cdot (\tau+1)$ detours. In particular all the vertices in $Q$ are partitioned to at most $O(\tau^2)$ detours.
As the number of detours covering $Q$ can only decrease further in the algorithm, the lemma follows. 
\end{proof}

By the end of algorithm, we will \emph{charge} the intervals for the detours. Consider the detour  $\mathcal{D}_{j}=\{a_j,\dots,b_j\}$ of $t_j$. Let $Q_j\in\mathcal{Q}$ be the interval containing $a_j$. We will charge $Q_j$ for the detour $\mathcal{D}_j$.
Denote by $X(Q)$ the number of detours for which the interval $Q$ is charged for. By \Cref{lem:scatteringIntPart}, $X(Q)=O(\tau^2)$ for every interval $Q\in\mathcal{Q}$.

Recall that by the end of the algorithm, all the vertices of $P_{t,t'}$ are partitioned to consecutive disjoint detours $\mathcal{D}_{\ell_1},\dots,\mathcal{D}_{\ell_s}$, where $\mathcal{D}_{\ell_j}=\{a_{\ell_j},\dots,b_{\ell_j}\}$ and $a_{\ell_j},b_{\ell_j}$ belong to the cluster of $t_{\ell_j}$. In particular $t_{\ell_1}=t$ and $t_{\ell_s}=t'$, as each terminal belongs to the cluster of itself. Moreover, for every $j<s$, there is an edge $\{b_{\ell_j},a_{\ell_{j+1}}\}$ in $G$ between the cluster of $t_{\ell_j}$ to that of $t_{\ell_{j+1}}$. Therefore, in the minor induced by the partition there is an edge between $t_{\ell_j}$ to $t_{\ell_{j+1}}$. We conclude
\begin{align*}
d_{M}(t,t')\le\sum_{j=1}^{s-1}d_{G}(t_{\ell_{j}},t_{\ell_{j+1}}) & \le\sum_{j=1}^{s-1}\left[d_{G}(t_{\ell_{j}},a_{\ell_{j}})+d_{G}(a_{\ell_{j}},a_{\ell_{j+1}})+d_{G}(a_{\ell_{j+1}},t_{\ell_{j+1}})\right]\\
& \le\sum_{j=1}^{s-1}d_{G}(a_{\ell_{j}},a_{\ell_{j+1}})+2\sum_{j=1}^{s}d_{G}(t_{\ell_{j}},a_{\ell_{j}})~.
\end{align*}
Note that $\sum_{j=1}^{s-1}d_{G}(a_{\ell_{j}},a_{\ell_{j+1}})\le d_G(t,t')$ as $P_{t,t'}$ is a shortest path. Denote by $Q_{\ell_j}$ the interval containing $a_{\ell_j}$. By \Cref{cor:dvfv},  
$$d_{G}(t_{\ell_{j}},a_{\ell_{j}})=d_{G}(a_{\ell_{j}},f(a_{\ell_{j}}))\le O(\tau)\cdot D(a_{\ell_{j}})\stackrel{(\ref{eq:DvDQ})}{=} O(\tau)\cdot D(Q_{\ell_{j}})\stackrel{(\ref{eq:Intervallength})}{=} O(\tau)\cdot L^{+}(Q_{\ell_{j}})~.$$
By changing the order of summation we get
\[
\sum_{j=1}^{s}d_{G}(t_{\ell_{j}},a_{\ell_{j}})=O(\tau)\cdot\sum_{Q\in\mathcal{Q}}X(Q)\cdot L^{+}(Q)=O(\tau^{3})\cdot\sum_{Q\in\mathcal{Q}}L^{+}(Q)~.
\]
Finally, note that $\sum_{Q\in\mathcal{Q}}L^{+}(Q)\le 2\cdot d_G(t,t')$ as every edge in $P_{t,t'}$ is counted at most twice. We conclude $d_{M}(t,t')\le O(\tau^3)\cdot d_{G}(t,t')$. \Cref{thm:Scattering_Implies_SPR} now follows.

\section{Equivalence between Sparse Covers and Weak Sparse Partitions}\label{sec:covers}
Jia \etal \cite{JLNRS05} proved (implicitly) that sparse covers imply weak sparse partitions. We provide here a formal statement and a proof, for the sake of clarity and completeness.  
\begin{lemma}[\cite{JLNRS05}]\label{lem:coverToPartition}
	Suppose that a graph $G=(V,E,w)$ admits a $(\sigma,\tau,\Delta)$-weak sparse cover $\mathcal{C}$, then $G$ admits a $(\sigma,\tau,\Delta)$-weak sparse partition.
\end{lemma}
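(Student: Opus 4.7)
The plan is to build the partition $\mathcal{P}$ by fixing an arbitrary ordering $C_1, C_2, \ldots, C_m$ of the clusters in $\mathcal{C}$ and assigning each vertex $v$ to a single cluster. The natural (and, I believe, correct) choice is to assign $v$ to the cluster with smallest index $i$ such that $B_G(v, \Delta/\sigma) \subseteq C_i$. By the cover property of $\mathcal{C}$ such an index always exists, so the assignment is well-defined, and it induces a partition $\mathcal{P} = \{P_1, \ldots, P_m\}$, where $P_i$ is the set of vertices assigned to $C_i$ (ignoring empty $P_i$'s).

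The weak diameter bound is immediate: since $P_i \subseteq C_i$ and $C_i$ has weak diameter at most $\Delta$, so does $P_i$. The substantive part is the sparsity condition for the partition. Fix a ball $B = B_G(v, r)$ with $r \le \Delta/\sigma$, and consider the set of indices $I = \{ i : P_i \cap B \ne \emptyset \}$. For each such $i$, pick a witness $u \in P_i \cap B$. By the assignment rule, $B_G(u, \Delta/\sigma) \subseteq C_i$; since $d_G(v,u) \le r \le \Delta/\sigma$, this gives $v \in C_i$. Therefore every index in $I$ corresponds to a cluster of $\mathcal{C}$ that contains $v$, and by the sparsity guarantee of $\mathcal{C}$ there are at most $\tau$ such clusters, yielding $|I| \le \tau$ as required.

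There is no genuine obstacle here; the only mild subtlety is recognizing that one should assign $v$ based on the condition $B_G(v,\Delta/\sigma) \subseteq C_i$ rather than simply $v \in C_i$. That stronger condition is exactly what allows the vertex $v$ at the center of the test ball $B$ to inherit membership in every cluster hit by $B$, after which the sparsity of $\mathcal{C}$ finishes the argument.
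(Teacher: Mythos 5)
Your proof is correct and matches the paper's argument essentially line for line: assign each vertex to a cluster that covers its $\Delta/\sigma$-ball, then observe that any cluster meeting a test ball $B_G(v,\Delta/\sigma)$ at some $u$ must contain $v$ itself, so the sparsity of $\mathcal{C}$ bounds the count. The only cosmetic difference is that you fix an ordering to make the assignment deterministic, whereas the paper simply picks an arbitrary covering cluster.
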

\begin{proof}
	Let $\Delta>0$ be some parameter. We create a partition $\mathcal{P}$ as follows: each vertex $v$ joins to an arbitrary cluster $P_C$ which corresponds to a cluster $C\in\mathcal{C}$ covering $v$, that is $B_G(v,\frac\Delta\sigma)\subseteq C$. 
	Note that every cluster $P_C\in\mathcal{P}$ is contained in cluster $C\in \mathcal{C}$. 
	It immediately follows that $\mathcal{P}$ has weak diameter $\Delta$.
	
	Next consider a ball $B=B_G(v,\frac\Delta\sigma)$. Suppose that a cluster $P_C\in \mathcal{P}$ intersects $B$ at a vertex $u\in P_C\cap B$. As $u$ joined $P_C$, it holds that $B_G(u,\frac\Delta\sigma)\subseteq C$, implying  $v\in C$. Thus $B$ intersects only clusters $P_C\in\mathcal{P}$ corresponding to clusters $C\in\mathcal{C}$ containing $v$. We conclude that $B$ intersects at most $\tau$ clusters.
\end{proof}
We discuss the implications of \Cref{lem:coverToPartition} to minor free graphs in \Cref{subsec:MinorFreeCovers}.
Note that if the given partition \Cref{lem:coverToPartition} has strong diameter guarantee, the resulting sparse partition will still have weak diameter guarantee only. Furthermore there are graphs that admit strong sparse covers but do not admit strong sparse partitions with similar parameters. One example will be trees which by \Cref{thm:treeLBStrong} do not admit $(\sigma,\tau)$-strong sparse partition scheme for any constant $\sigma,\tau$, while having $(O(1),O(1))$-strong sparse cover schemes \cite{BLT14,AGMW10}.

Interestingly, sparse partitions also imply sparse covers. Furthermore, unlike the previous direction,  if the partition had strong diameter, so would the cover.
\begin{lemma}\label{lem:partitionToCover}
	Suppose that a graph $G=(V,E,w)$ admits a $(\sigma,\tau,\Delta)$-weak sparse partition $\mathcal{P}$, then $G$ admits a $(\sigma+2,\tau,(1+\frac2\sigma)\Delta)$-weak sparse cover $\mathcal{C}$.
	Furthermore, if $\mathcal{P}$ has a strong diameter guarantee, so will $\mathcal{C}$.	
\end{lemma}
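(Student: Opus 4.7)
The plan is to build the cover $\mathcal{C}$ by "inflating" each cluster of $\mathcal{P}$ by the padding radius. Concretely, for every cluster $P\in\mathcal{P}$ define $C_P := \{v\in V \mid d_G(v,P)\le \Delta/\sigma\} = B_G(P,\Delta/\sigma)$, and take $\mathcal{C}:=\{C_P : P\in\mathcal{P}\}$. Note that the target parameters satisfy $\frac{(1+2/\sigma)\Delta}{\sigma+2} = \Delta/\sigma$, so the padding radius required for the cover is exactly the padding radius of $\mathcal{P}$, which is how the numbers line up.

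First I would verify the covering property. Pick any $u\in V$ and let $P\in\mathcal{P}$ be its cluster. Then every vertex of $B_G(u,\Delta/\sigma)$ is within distance $\Delta/\sigma$ of $P$ (witnessed by $u$ itself), hence $B_G(u,\Delta/\sigma)\subseteq C_P$, which is exactly the cover requirement at radius $(1+2/\sigma)\Delta/(\sigma+2)$. Next, for sparsity, observe that $u\in C_P$ is equivalent to $P\cap B_G(u,\Delta/\sigma)\neq\emptyset$. Thus the number of clusters of $\mathcal{C}$ containing $u$ equals the number of clusters of $\mathcal{P}$ intersecting $B_G(u,\Delta/\sigma)$, which is at most $\tau$ by the sparsity of $\mathcal{P}$.

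The slightly more delicate part is the diameter bound. For the weak case, any two vertices $u,v\in C_P$ have neighbors $u',v'\in P$ with $d_G(u,u'),d_G(v,v')\le \Delta/\sigma$, and $d_G(u',v')\le \Delta$ by the weak diameter of $P$; the triangle inequality gives $d_G(u,v)\le (1+2/\sigma)\Delta$. For the strong case, one must stay inside $G[C_P]$. The key observation is that if $p\in P$ minimizes $d_G(u,p)$ and $\pi$ is a shortest $u$--$p$ path in $G$, then every $w\in\pi$ satisfies $d_G(w,p)\le d_G(u,p)\le \Delta/\sigma$, so $w\in C_P$; thus $\pi\subseteq G[C_P]$. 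Combined with a strong-diameter path inside $G[P]\subseteq G[C_P]$ between the projections $u',v'$ of $u,v$, this yields a $u$--$v$ walk in $G[C_P]$ of length at most $(1+2/\sigma)\Delta$.

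I do not expect any real obstacle here; the only mild subtlety is the strong-diameter argument, which relies on the shortest path from a vertex to its nearest point in $P$ lying entirely inside the inflated cluster $C_P$. This is a one-line monotonicity argument, so the whole proof should be short and self-contained.
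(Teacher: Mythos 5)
Your proposal is correct and follows essentially the same route as the paper: inflate each $P\in\mathcal{P}$ to $C_P=B_G(P,\Delta/\sigma)$, check covering, sparsity, and diameter, with the same observation that the shortest path from $u$ to its nearest point in $P$ stays inside $C_P$ to handle the strong-diameter case. The only difference is that you spell out the strong-diameter monotonicity argument in more detail than the paper, which leaves it as a one-line remark.
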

\begin{proof}
	For every cluster $P\in\mathcal{P}$ set $C_P=B_G(P,\frac\Delta\sigma)$. Set $\mathcal{C}=\left\{C_p\mid P\in \mathcal{P}\right\}$ to be the resulting cover.
	First, note that every ball $B_G(v,\frac\Delta\sigma)$ is contained in the cluster $C_P$ corresponding to the cluster $P\in \mathcal{P}$ containing $v$.
	
	Second, we argue that $\mathcal{C}$ has diameter $(1+\frac2\sigma)\Delta$. Consider a cluster $C_P\in \mathcal{C}$ and let $u,v\in C_P$. There are $u',v'\in P$ at distance at most $\frac\Delta\sigma$ from $u$ and $v$ respectively. By triangle inequality, $d_{G}(u,v)\le d_{G}(u,u')+d_{G}(u',v')+d_{G}(v',v)\le(1+\frac{2}{\sigma})\Delta$. Note that if $P$ has strong diameter at most $\Delta$, then as $C_P$ includes the shortest path from $u$ to $u'$ (and $v$ to $v'$) all these inequalities hold w.r.t. $d_{G[C_P]}$ as well.
	We conclude that $\mathcal{C}$ is a sparse cover with the padding parameter at most $\nicefrac{(1+\frac{2}{\sigma})\Delta}{\frac{\Delta}{\sigma}}=\sigma+2$.
	
	Third, we argue that $\mathcal{C}$ is sparse. Consider a vertex $v\in V$. For a cluster $C_P\in\mathcal{C}$ that contains $v$, necessarily $v\in B_G(P,\frac\Delta\sigma)$. In other words, $B_G(v,\frac\Delta\sigma)\cap P\ne \emptyset$. Thus $C_P$ corresponds to a cluster $P\in \mathcal{P}$ intersecting $B_G(v,\frac\Delta\sigma)$. As the number of clusters in $\mathcal{P}$ intersecting $B_G(v,\frac\Delta\sigma)$ is bounded by $\tau$, we conclude that the number of clusters containing $v$ in $\mathcal{C}$ is also bounded by $\tau$.
\end{proof}
\Cref{lem:partitionToCover} phrased in other words: suppose that a graph $G$ admits a $(\sigma,\tau)$-weak/strong sparse partition scheme, these $G$ also admits $(\sigma+2,\tau)$-weak/strong sparse cover scheme. 
Applying \Cref{lem:partitionToCover} on \Cref{thm:ddimStrong,thm:SPDstrong,thm:SPDweak}, we conclude:
\begin{corollary}\label{cor:coversFromPartitions}
	Suppose that a weighted graph $G=(V,E,w)$ has:
	\begin{enumerate}
		\item Doubling dimension $\ddim$, then for every $t\ge 1$, $G$ admits an $(O(t),2^{\nicefrac{\sddim}{t}}\cdot\tilde{O}(\ddim))$-strong sparse cover scheme.
		\item \SPD of depth $\rho$, then $G$ admits a $(O(\rho),O(\rho^2))$-strong sparse cover scheme.
		\item \SPD of depth $\rho$, then $G$ admits a $(10,5\rho)$-weak sparse cover scheme.
	\end{enumerate}
\end{corollary}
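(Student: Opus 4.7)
The plan is to observe that this corollary is essentially a mechanical composition of \Cref{lem:partitionToCover} with each of \Cref{thm:ddimStrong}, \Cref{thm:SPDstrong}, and \Cref{thm:SPDweak}, so the proof will simply track the parameter transformation $(\sigma,\tau,\Delta)\mapsto(\sigma+2,\tau,(1+\tfrac{2}{\sigma})\Delta)$ in each case, and then rescale the target diameter to restore an arbitrary $\Delta$.

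For item (1), given a parameter $t\ge 1$ and a target diameter $\Delta'$, I would instantiate \Cref{thm:ddimStrong} with $\alpha:=t$ at scale $\Delta:=\Delta'/(1+\tfrac{2}{58t})$. This produces a strong sparse partition with padding $58t$, sparsity $2^{\sddim/t}\cdot\tilde{O}(\ddim)$, and diameter $\Delta$. \Cref{lem:partitionToCover} then yields a strong sparse cover with padding $58t+2=O(t)$, the same sparsity, and diameter exactly $\Delta'$. Since $\Delta'$ was arbitrary, this gives the desired $(O(t),2^{\sddim/t}\cdot\tilde{O}(\ddim))$-strong sparse cover scheme.

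For items (2) and (3), the same recipe applies. Given target diameter $\Delta'$ in item (2), invoke \Cref{thm:SPDstrong} (with padding $\sigma=c\rho$ and sparsity $c'\rho^2$ for suitable constants $c,c'$) at scale $\Delta=\Delta'/(1+\tfrac{2}{c\rho})$, and apply \Cref{lem:partitionToCover} to obtain a strong sparse cover with padding $c\rho+2=O(\rho)$, sparsity $O(\rho^2)$, and diameter $\Delta'$. For item (3), invoke \Cref{thm:SPDweak} at scale $\Delta=\Delta'/(1+\tfrac{1}{4})=\tfrac{4}{5}\Delta'$ to get an $(8,5\rho,\tfrac{4}{5}\Delta')$-weak sparse partition, which \Cref{lem:partitionToCover} upgrades to a $(10,5\rho,\Delta')$-weak sparse cover.

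There is no real obstacle here; the only item worth double-checking is the diameter bookkeeping. \Cref{lem:partitionToCover} preserves the strong/weak diameter type (the furthermore clause), so items (1) and (2) retain their strong guarantee and item (3) its weak guarantee. The sparsity $\tau$ is unchanged by the lemma in every case, so the stated parameters match exactly what one reads off after the $\sigma\mapsto\sigma+2$ shift and diameter rescaling.
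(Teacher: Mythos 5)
Your proof is correct and matches the paper's approach exactly: the paper proves this corollary by applying \Cref{lem:partitionToCover} to \Cref{thm:ddimStrong}, \Cref{thm:SPDstrong}, and \Cref{thm:SPDweak}, using the rephrased ``scheme'' form of the lemma (strong/weak $(\sigma,\tau)$-scheme implies strong/weak $(\sigma+2,\tau)$-cover scheme) to absorb the diameter rescaling you carry out explicitly. The extra bookkeeping you do is harmless but unnecessary given the scheme formulation.
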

Strong sparse covers for doubling graphs were constructed directly by the author in a companion paper \cite{Fil19Padded}. See discussion in \Cref{subsec:contribution}.
See \Cref{fig:Venn_covers} for illustrations of the connections between the different notions of sparse covers and partitions.

\begin{figure}[t]
	\centering{\includegraphics[scale=0.6]{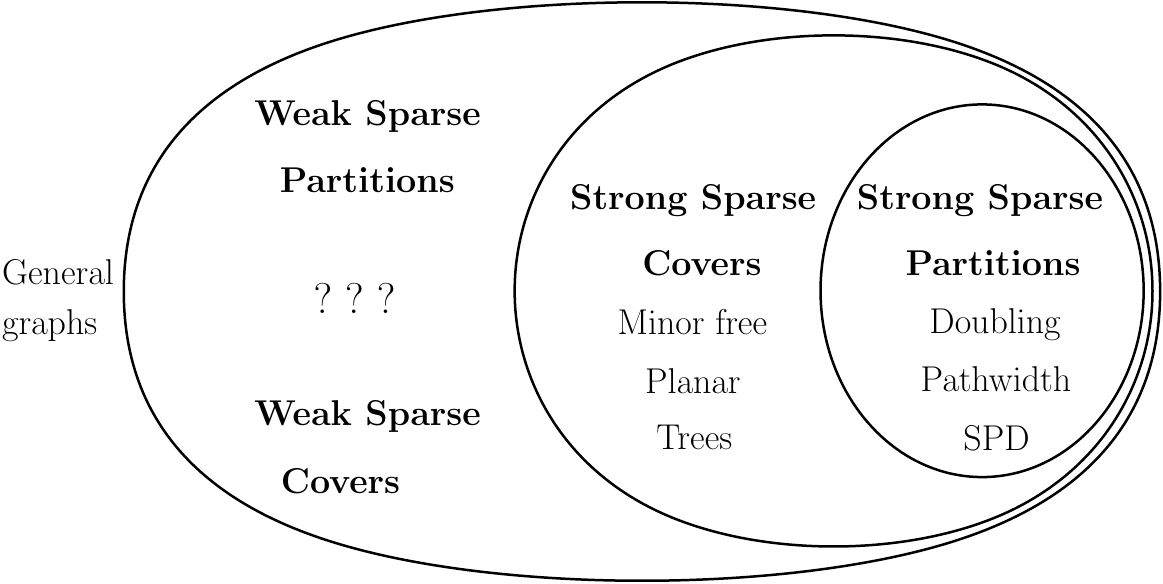}} 
	\caption{\label{fig:Venn_covers}\small \it 
		The Venn diagram demonstrates the containment relations between the set of graphs admitting weak/strong sparse covers/partitions. 
		Graphs with constant doubling dimension or \SPDdepth (or pathwidth) admit strong sparse partitions scheme with constant parameters (\Cref{cor:coversFromPartitions}).
		All graph families excluding a fixed minor admit strong sparse covers  with constant parameters \cite{AGMW10}, while no such strong sparse partitions exist (\Cref{thm:treeLBStrong}).
		The family of general graphs do not admit weak sparse partitions with constant parameters (\Cref{thm:GeneralWeakLB}).
		We currently lack an example of a graph family that admit weak sparse covers but do not admit strong sparse covers. Finding such an example, or alternatively proving that weak sparse covers imply the existence of strong sparse covers with similar parameters remains an open question.}
\end{figure}

\subsection{Implications of \Cref{lem:coverToPartition}: Minor Free Graphs}\label{subsec:MinorFreeCovers}
The graph family which has the most interesting implications due to sparse covers are minor free graphs. As previously mentioned in the literature (see e.g. \cite{AGMW10}), it implicitly follows from the padded decompositions of Klein \etal \cite{KPR93} that $K_{r,r}$ free minor graph admit $(O(r^2),2^{r})$-weak sparse partition scheme. Nevertheless, as this fact was never stated as a theorem, it was overlooked by previous works on \UST/\UTSP. Specifically, covers with worse parameters were used by \cite{HKL06,BLT14}, obtaining solutions with stretch $\Omega(\log^2 n)$ for the these problems.\footnote{Busch \etal \cite{BLT14} argued that their \UTSP construction is deterministic. However the \cite{KPR93} based cover (\Cref{thm:KPR}) is deterministic as well, and hence implies a deterministic construction of \UTSP with better parameters.}
For the sake of clarity and completeness, we provide the explicit statements and a proof sketch.

\begin{theorem}[Implicit from \cite{KPR93,FT03}]\label{thm:KPR}
	Every weighed graph that excludes $K_{r,r}$ as a minor admits an $(O(r^2),2^{r})$-weak sparse cover scheme.	
\end{theorem}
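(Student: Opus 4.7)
The idea is to invoke the Klein--Plotkin--Rao (KPR) clustering scheme and derandomize its random offset parameter, turning a distribution over partitions into an explicit family of $2^{r}$ partitions whose union furnishes the desired sparse cover.

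First I would recall the KPR construction for $K_{r,r}$-minor-free graphs, with the Fakcharoenphol--Talwar refinement of the padding parameter to $O(r^2)$. For a target diameter $\Delta$, the algorithm proceeds through $r$ recursive levels. At each level, inside every currently surviving component it picks an arbitrary root, computes BFS distances from that root, and removes a ``strip'' consisting of all BFS layers at depth $\equiv a_i \pmod{\Delta_0}$, where $\Delta_0=\Theta(\Delta/r^2)$ and $a_i$ is an offset in $\{0,1,\dots,\Delta_0-1\}$. Surviving components are passed to the next level. The standard KPR/FT analysis shows that each final cluster has weak diameter at most $\Delta$, and that the $K_{r,r}$-minor-free hypothesis guarantees every vertex is eventually clustered within $r$ levels.

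Next I would derandomize by a halving argument. Fix a vertex $v$ and let $B=B_G(v,\Delta/(cr^2))$ for a suitable constant $c$. Having committed to offset-halves $H_1,\dots,H_{i-1}\in\{L,R\}$ at earlier levels, the ball $B$ lies in a single surviving component at level $i$, and its BFS distances from the level-$i$ root form a contiguous window of length $\le 2\Delta/(cr^2)\le \Delta_0/2$. Consequently, the offsets $a_i\in\{0,\dots,\Delta_0-1\}$ that would cut $B$ form an interval of length $\le \Delta_0/2$, so at least one of the two halves $L,R$ of the offset range contains no bad offset. Choosing $H_i$ inductively to avoid that bad interval at each of the $r$ levels produces one combination $(H_1,\dots,H_r)\in\{L,R\}^{r}$ such that $B$ is never cut and therefore lies in a single cluster of the resulting partition.

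Finally, running KPR with each of the $2^{r}$ halving combinations yields partitions $\mathcal{P}_1,\dots,\mathcal{P}_{2^r}$, and setting $\mathcal{C}=\bigcup_i \mathcal{P}_i$ gives the cover: weak diameter $\le\Delta$ holds cluster-wise; every ball of radius $\Delta/O(r^2)$ is covered by the previous paragraph; and each vertex belongs to exactly one cluster per partition, hence to at most $2^{r}$ clusters of $\mathcal{C}$. The main obstacle is the interval-structure claim for ``bad offsets'' given the recursive nature of the scheme, since the component and its BFS root at level $i$ depend on the previous offset choices; this is handled precisely by the inductive, level-by-level commitment above, which crucially exploits the specific BFS-strip deletion rule of KPR (arbitrary partitions at a level would not give the contiguous bad-offset interval).
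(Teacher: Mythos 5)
Your high-level strategy coincides with the paper's: iterate KPR chopping for $r$ levels, offer two offset choices per level so that the $2^r$ resulting partitions cover every small ball, and invoke Fakcharoenphol--Talwar to bound the weak diameter by $O(r^2)\cdot\Delta_0$. The gap is in the ``halving'' step. The set of offsets that cut $B$ at a given level is a cyclic \emph{arc} on $\mathbb{Z}/\Delta_0\mathbb{Z}$ of length at most $\diam(B)\le\Delta_0/2$, not necessarily an interval in $\{0,\dots,\Delta_0-1\}$, and --- the serious problem --- such an arc can intersect \emph{both} halves $L=[0,\Delta_0/2)$ and $R=[\Delta_0/2,\Delta_0)$ (take the arc $(\Delta_0/4,3\Delta_0/4)$). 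So ``at least one of the two halves contains no bad offset'' is false. There is also a well-definedness issue: a tuple of halves $(H_1,\dots,H_r)$ does not specify a run of the algorithm, since each $H_i$ still contains many offsets, and the surviving component and BFS root at level $i$ depend on the \emph{actual} offsets used earlier, not merely on which halves they came from, so the ``inductive level-by-level commitment'' you describe is not yet a deterministic algorithm.

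Both issues are repaired by committing at each level to one of two \emph{specific} antipodal offsets, $0$ and $\Delta_0/2$ (the paper phrases this as the two annulus families $A_j^b$ of width $\Delta'$, shifted by $\Delta'/2$, indexed by $b\in\{0,\tfrac12\}$). An arc of length strictly less than $\Delta_0/2$ cannot contain two antipodal points of $\mathbb{Z}/\Delta_0\mathbb{Z}$, so at each level at least one of these two concrete offsets avoids cutting $B$; and a vector in $\{0,\tfrac12\}^r$ now determines an actual sequence of chops and hence an actual partition, so the level-by-level induction and the final $2^r$-way union work exactly as you intended. (Also choose the constant $c$ large enough that $\diam(B)<\Delta_0/2$ strictly.) With this repair your argument is the paper's.
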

\begin{proof}[Proof Sketch.]
	While we will use the celebrated partition algorithm of \cite{KPR93}, the analysis itself will be similar to the proof of Lemma 3.2 in \cite{KLMN05}.
	Let $\Delta>0$ be some parameter, and let $\Delta'$ be a parameter depending on $\Delta$ to be determined later. Fix $\rt\in V$ to be an arbitrary vertex. For $b\in\{0,\frac12\}$ and $j\ge 0$ set
	\[A_j^b=\left\{v\in V\mid (j-1+b)\Delta'\le d_G(\rt,v)<(j+b)\Delta'\right\}~.\]
	Note that fixing the parameter $b$, we obtain a partition $\mathcal{P}^b=\{A_j^b\}_{j\ge0}$. Thus we created two partitions $\mathcal{P}^0,\mathcal{P}^{\frac12}$. 
	Note also, that every ball of radius smaller than $\frac{\Delta'}{4}$ is fully contained in a cluster in one of the partitions.
	
	Consider each connected component $C$ in each cluster of a each partition $P^b$, and apply the above process again.
	Continuing this way recursively to a total depth of $r$, we obtain $2^r$ partitions of $V$, such that every ball of radius smaller than $\frac{\Delta'}{4}$ if fully contained in a cluster in one of the partitions.
	According to Fakcharoenphol and Talwar \cite{FT03}, there exists a universal constant $c>0$ such that all the created partitions have weak diameter $c\cdot r^2\cdot\Delta'$. Fix $\Delta'=\frac{\Delta}{c\cdot r^2}$. Uniting all the $2^r$ partitions we obtain an $(O(r^2),2^r,\Delta)$-weak sparse cover as required.
\end{proof}

Using \Cref{lem:coverToPartition} combined with \Cref{thm:KPR}, and then applying \Cref{thm:JLNRS05} we conclude:
\begin{corollary}\label{cor:WeakMinor}
	Every graph that excludes $K_{r,r}$ as a minor admits an $(O(r^2),2^{r})$-weak sparse partition scheme.
\end{corollary}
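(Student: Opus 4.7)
The plan is to observe that this corollary is a direct two-line combination of already established results, requiring no new ideas. Given any parameter $\Delta>0$ and any $K_{r,r}$-minor-free weighted graph $G$, I would first invoke Theorem~\ref{thm:KPR} to obtain a $(c_1 r^2, 2^r, \Delta)$-weak sparse cover $\mathcal{C}$ of $G$, where $c_1$ is the universal constant produced by the Klein--Plotkin--Rao / Fakcharoenphol--Talwar argument sketched above. Since the cover is constructed by a recursive application of shortest-path-based separators to depth $r$, this step is polynomial-time.

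Next, I would feed $\mathcal{C}$ into Lemma~\ref{lem:coverToPartition}. That lemma, with $\sigma = c_1 r^2$ and $\tau = 2^r$, outputs a partition $\mathcal{P}$ obtained by a simple greedy assignment: each vertex $v$ is placed, arbitrarily, into one cluster $P_C \in \mathcal{P}$ associated to some $C \in \mathcal{C}$ that satisfies $B_G(v,\Delta/\sigma) \subseteq C$; the Cover property of $\mathcal{C}$ guarantees at least one such $C$ exists. The lemma's proof then shows that (i)~each $P_C$ inherits weak diameter $\Delta$ from $C$, and (ii)~for every ball $B = B_G(v,\Delta/\sigma)$, any cluster $P_C$ meeting $B$ at a vertex $u$ forces $v \in C$ (because $u$ joined $P_C$ only if $B_G(u,\Delta/\sigma) \subseteq C$), so the $\tau$-sparsity of the cover transfers verbatim to the partition.

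Combining these two steps yields an $(O(r^2), 2^r, \Delta)$-weak sparse partition that is efficiently computable, which is exactly the scheme claimed in \Cref{cor:WeakMinor}. There is no genuine obstacle here: both Theorem~\ref{thm:KPR} and Lemma~\ref{lem:coverToPartition} have already done the work, and the constants $\sigma$ and $\tau$ pass through the reduction unchanged. The only bookkeeping point worth flagging is that the reference to Theorem~\ref{thm:JLNRS05} mentioned in the surrounding text is not needed for the partition scheme itself; it is only needed downstream to derive the $2^{O(r)}\cdot\log n$ stretch for \UST and \UTSP on $K_r$-minor-free graphs (noting $K_r$-minor-free implies $K_{r,r}$-minor-free).
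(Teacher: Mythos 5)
Your proposal is correct and follows exactly the paper's own route: \Cref{thm:KPR} supplies the $(O(r^2),2^r)$-weak sparse cover scheme, and \Cref{lem:coverToPartition} converts it verbatim to a weak sparse partition scheme with the same parameters. Your observation that \Cref{thm:JLNRS05} enters only in the downstream \UST/\UTSP corollary, not in \Cref{cor:WeakMinor} itself, also matches the paper's intent.
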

\begin{corollary}\label{cor:minorUTSP}
	Let $G=(V,E,w)$ be an $n$-vertex graph excluding $K_{r,r}$ as a minor. Then there is an efficient algorithm constructing a solution for the \UST problem with stretch $O(4^r\cdot r^2\cdot \log n)$.
\end{corollary}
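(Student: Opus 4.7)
The proof is a direct assembly of the machinery already developed in the excerpt, so the plan is essentially a book-keeping exercise rather than a new argument. First I would invoke \Cref{cor:WeakMinor}, which tells us that any graph excluding $K_{r,r}$ as a minor admits an $(\sigma,\tau)$-weak sparse partition scheme with $\sigma=O(r^2)$ and $\tau=2^{r}$. This step is itself obtained by chaining \Cref{thm:KPR} (the KPR-style sparse cover whose diameter bound uses the \cite{FT03} refinement) with \Cref{lem:coverToPartition}, which converts a weak sparse cover into a weak sparse partition with the same parameters.

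Next I would feed this partition scheme into \Cref{thm:JLNRS05}, the theorem of Jia \etal that turns any $(\sigma,\tau)$-weak sparse partition scheme on an $n$-vertex graph into a polynomial-time \UST construction with stretch
\[
O\!\left(\tau\,\sigma^{2}\,\log_{\tau} n\right).
\]
Substituting $\sigma=O(r^{2})$ and $\tau=2^{r}$ yields
\[
O\!\left(2^{r}\cdot r^{4}\cdot\tfrac{\log n}{r}\right)\;=\;O(2^{r}\cdot r^{3}\cdot\log n),
\]
which is absorbed into the stated bound $O(4^{r}\cdot r^{2}\cdot \log n)$ (the looser bookkeeping in the statement conveniently swallows the $r^{3}$ factor and makes the exponential term the dominant dependence on $r$).

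The only thing to double-check is that all the ingredients are genuinely black-boxable. The input to \Cref{thm:JLNRS05} requires that the sparse partition scheme be efficiently computable; the KPR construction and its \cite{FT03} diameter analysis are both polynomial time, and the cover-to-partition reduction of \Cref{lem:coverToPartition} is a trivial per-vertex assignment, so efficiency is preserved throughout. There is no genuine obstacle here: the result is, as the authors already remark, a corollary that simply had to be written down.
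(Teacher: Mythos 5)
Your proof is correct and takes exactly the paper's route: chain \Cref{thm:KPR} with \Cref{lem:coverToPartition} (yielding \Cref{cor:WeakMinor}), then apply \Cref{thm:JLNRS05} with $\sigma=O(r^2)$, $\tau=2^r$. Your observation that the resulting bound $O(2^r\cdot r^3\cdot\log n)$ is in fact slightly tighter than the paper's stated $O(4^r\cdot r^2\cdot\log n)$ is also correct --- the paper simply records a looser form of the same estimate.
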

Note that \Cref{cor:minorUTSP} has a quadratic improvement in the dependence on $n$ compared with previously explicitly stated results.

\section{General Graphs}
Jia \etal \cite{JLNRS05} constructed weak sparse partitions using the sparse covers of Awerbuch and Peleg \cite{AP90}. This approach inherently produces unconnected clusters and therefore provides only weak diameter guarantee.
The first result of this section is an efficient construction of strong sparse partition for general graphs. We construct a single partition that is good w.r.t. all ball sizes simultaneously.
The proof appears in \Cref{subsec:GeneralStrong}.
\begin{theorem}\label{thm:Generalstrong}
	Consider an $n$ vertex weighted graph $G=(V,E,w)$. For every parameter $\Delta>0$ there is a partition $\mathcal{P}$ such that for every $\alpha\ge 1$,  $\mathcal{P}$ is $(8\alpha,O(n^{\nicefrac1\alpha}\cdot\log n),\Delta)$-strong sparse partition.\\ In particular, for every $\alpha\ge 1$, $G$ admits an  $(8\alpha,O(n^{\nicefrac1\alpha}\cdot\log n))$-strong sparse partition scheme.
\end{theorem}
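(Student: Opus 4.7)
The plan is to apply the Miller--Peng--Xu clustering algorithm of \Cref{subsec:MPX} with $N=V$ and i.i.d.\ starting times $\delta_t\sim\Exp(\lambda)$, $\lambda=\Delta/(4\log n)$. First, I establish the strong diameter guarantee: $\Pr[\delta_t>\Delta/2]=e^{-\Delta/(2\lambda)}=n^{-2}$, so by a union bound $\max_t\delta_t\le\Delta/2$ with probability $\ge 1-1/n$. Conditioned on this event, every vertex $v$ is assigned to a center $t_v$ satisfying $f_v(t_v)\ge f_v(v)=\delta_v\ge 0$, hence $d_G(v,t_v)\le\delta_{t_v}\le\Delta/2$, and \Cref{clm:MPXshortestpath} implies every cluster has strong diameter at most~$\Delta$.

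For the sparsity, fix $v\in V$ and $\alpha\ge 1$, and set $r=\Delta/(8\alpha)$, $X_t=f_v(t)=\delta_t-d_G(t,v)$, $M=\max_tX_t$. By \Cref{clm:MPXintersectionProperty}, $B_G(v,r)$ intersects only clusters of centers $t$ with $X_t\in[M-2r,M]$; denote this count by $Z$. I bound $Z$ via the memoryless property in two steps. Step~(a): the tail estimate $\Pr[M\le m]\le e^{-ne^{-m/\lambda}}$ yields $M\ge\lambda(\log n-\log\log n)$ with probability $\ge 1-1/n$, giving $e^{-M/\lambda}\le(\log n)/n$. Step~(b): conditioned on $M=m$ and on the identity of the argmax, the remaining $\delta_t$'s are independent truncated exponentials, and
\[
\Pr\bigl[X_t\in[m-2r,m]\mid M=m\bigr]\;\le\;\tfrac{e^{-m/\lambda}\,e^{-d_G(t,v)/\lambda}\,(e^{2r/\lambda}-1)}{1-e^{-(m+d_G(t,v))/\lambda}}.
\]
Summing over $t\in V$ and using $\sum_te^{-d_G(t,v)/\lambda}\le n$, the conditional expectation satisfies $\E[Z\mid M]\le 1+O(n\cdot e^{-M/\lambda})\cdot(e^{2r/\lambda}-1)=O(n^{1/\alpha}\log n)$ by Step~(a), and a Chernoff bound on the $Z-1$ independent Bernoullis upgrades this to $Z=O(n^{1/\alpha}\log n)$ with probability $\ge 1-n^{-3}$.

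Finally, I would union-bound over all $v\in V$ and over the grid $\alpha\in\{1+j/\log n:0\le j\le\log^2 n\}$ of $O(\log^2 n)$ values, producing one partition $\mathcal P$ that satisfies the intersection bound at every grid point simultaneously with failure probability $o(1)$. For $\alpha>1+\log n$ we have $e^{2r/\lambda}\le 2$, so the count is $O(\log n)\le O(n^{1/\alpha}\log n)$ trivially; for intermediate $\alpha\in[\alpha_j,\alpha_{j+1}]$, monotonicity of the ball intersection count in $r$ and the estimate $n^{1/\alpha_j}/n^{1/\alpha}\le n^{1/(\alpha_j^2\log n)}=O(1)$ deliver the required $O(n^{1/\alpha}\log n)$ bound up to a constant, simultaneously for every $\alpha\ge 1$.

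The main obstacle I anticipate is the tight coupling of parameters. The choice $\lambda=\Delta/(4\log n)$ is forced by requiring $e^{2r/\lambda}=n^{1/\alpha}$ when $r=\Delta/(8\alpha)$, and the extra $\log n$ factor in the intersection count comes precisely from the $\log n$-scale fluctuation of $M$ around $\lambda\log n$. A naive Markov/MGF estimate on the order-statistic decomposition $M-X_{(k+1)}=\sum_{i=1}^{k}G_i$ with $G_i\sim\Exp(\lambda/i)$ only gives $\Pr[Z>k]\le e^{2r/\lambda}/k$, which requires $k=\tilde O(e^{2r/\lambda}\cdot n^{c})$ for a union bound; the stronger Chernoff-on-Bernoullis route that conditions on $M$ is what actually recovers the stated $O(n^{1/\alpha}\log n)$ bound and is the technical crux.
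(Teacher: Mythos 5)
Your overall construction and framework coincide with the paper's: Miller--Peng--Xu clustering with i.i.d.\ exponential starting times of mean $\Theta(\Delta/\log n)$, \Cref{clm:MPXintersectionProperty} to reduce the sparsity count to the number of centers whose $f_v$-value lies in a $2r$-window below the maximum, and a final union bound over a $\mathrm{poly}(\log n)$-size grid of $\alpha$'s (plus monotonicity). The gap is in Step~(a), and it is real, not cosmetic.

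The claimed tail bound $\Pr[M\le m]\le e^{-ne^{-m/\lambda}}$ silently ignores the shifts $d_G(t,v)$. In fact
$\Pr[M\le m]=\prod_t\bigl(1-e^{-(m+d_G(t,v))/\lambda}\bigr)\le\exp\bigl(-e^{-m/\lambda}\sum_te^{-d_G(t,v)/\lambda}\bigr)$,
and $\sum_te^{-d_G(t,v)/\lambda}$ can be as small as $\Theta(1)$ (e.g.\ when every other vertex is at distance $\gg\lambda$ from $v$). In that regime the only contribution to $M$ is $\delta_v$ itself, so $M\sim\Exp(\lambda)$ with median $\Theta(\lambda)$, not $\Theta(\lambda\log n)$; the event $M\ge\lambda(\log n-\log\log n)$ has probability $\approx(\log n)/n$, not $1-1/n$. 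With $M=\Theta(\lambda)$, Step~(b) gives $\E[Z\mid M]=O(n\cdot n^{1/\alpha})$, which is vacuous. Moreover the denominator $1-e^{-(m+d_G(t,v))/\lambda}$ in your conditional probability is $\Omega(1)$ only when $m+d_G(t,v)=\Omega(\lambda)$, so the passage to $\E[Z\mid M]\le 1+O(ne^{-M/\lambda})(e^{2r/\lambda}-1)$ is not justified at small $m$ even before Step~(a) enters. (The theorem is not falsified, because in the scenarios where $M$ is small there simply are very few nearby centers to worry about; but your argument does not exploit that and cannot close as written.)

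The paper sidesteps this by conditioning not on the maximum but on the \emph{$(s_\alpha+1)$-th largest} value $f_v(t_{(s_\alpha+1)})$, with $s_\alpha=3n^{1/\alpha}\ln n$. After fixing that center, the unordered set $\mathcal N$ of the $s_\alpha$ centers above it, and the value itself, the $s_\alpha$ gaps $f_v(t)-f_v(t_{(s_\alpha+1)})$ for $t\in\mathcal N$ are, by memorylessness, independent $\Exp(\lambda)$ variables. Hence $\Pr[\text{all }s_\alpha\text{ gaps}\le 2r_\alpha]\le(1-n^{-1/\alpha})^{s_\alpha}\le n^{-3}$, with no assumption on the magnitude of $M$ and no denominator to control. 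If you replace your conditioning on $M$ with conditioning on the $(s_\alpha+1)$-th order statistic, the rest of your plan (union bound over $v$ and the grid, monotonicity for intermediate $\alpha$) goes through as in the paper.
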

Note that by picking parameter $\alpha=\log n$, we obtain an $\left(O(\log n),O(\log n)\right)$-strong sparse partition scheme. 
\Cref{thm:Generalstrong} is also a generalization of the scattering partitions of \cite{KKN15}, while having a considerably simpler proof.
Using \Cref{thm:Scattering_Implies_SPR} we can induce a solution for the \SPR problem with distortion $\polylog (k)$. While quantitatively better solutions are known, this one has the advantage of being induced by a general framework. Furthermore, the resulting proof is shorter than all the previous ones (and arguably simpler, even though the \texttt{Relaxed Voronoi} algorithm from \cite{Fil19sicomp} is much more elegant).
\begin{corollary}\label{cor:general-SPR}
	Given a weighted graph $G=(V,E,w)$ with a set $K$ of terminals of size $k$, there is an efficient algorithms that returns a solution to the \SPR problem with distortion $\polylog( k)$.
\end{corollary}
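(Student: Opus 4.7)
The plan is to combine Theorem~\ref{thm:Scattering_Implies_SPR} and Theorem~\ref{thm:Generalstrong} with a preprocessing step that reduces the instance size to $\poly(k)$. Directly applying the two theorems to $G$ yields distortion $\polylog(n)$; to replace $n$ by $k$ we first reduce to a minor whose vertex count is polynomial in $k$, and then run the scattering-based algorithm on the reduced graph.

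First I would invoke the result of Krauthgamer, Nguyen, and Zondiner~\cite{KNZ14} noted in the related work: given $G$ with terminal set $K$ of size $k$, one can compute in polynomial time a minor $G'$ of $G$ that contains all of $K$, has at most $\binom{k}{2}^2 = O(k^4)$ Steiner vertices, and preserves terminal distances exactly, i.e., $d_{G'}(t,t') = d_G(t,t')$ for every $t,t'\in K$. In particular $|V(G')| = O(k^4)$, and any minor of $G'$ is a minor of $G$.

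Next I would verify that $G'$ satisfies the hypothesis of Theorem~\ref{thm:Scattering_Implies_SPR} with a favorable scattering parameter. For any subset $A \subseteq V(G')$, Theorem~\ref{thm:Generalstrong} applied to $G'[A]$ with $\alpha = \log|A|$ produces, in polynomial time, an $(O(\log k), O(\log k),\Delta)$-strong sparse partition of $G'[A]$ for every $\Delta>0$ (using $|A|\le O(k^4)$ so that $\log|A|=O(\log k)$ and $|A|^{1/\alpha}=O(1)$). By Observation~\ref{obs:StrongImplyScatt} this partition is also scattering with the same parameters, and by Observation~\ref{obs:beta1} it is a $(1, O(\log^2 k),\Delta)$-scattering partition. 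Hence every induced subgraph of $G'$ is efficiently $(1,O(\log^2 k))$-scatterable.

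Finally, applying Theorem~\ref{thm:Scattering_Implies_SPR} to $G'$ with $\tau = O(\log^2 k)$ yields a polynomial-time algorithm that outputs a minor $M$ of $G'$ with $d_{G'}(t,t')\le d_M(t,t')\le O(\tau^3)\cdot d_{G'}(t,t') = O(\log^6 k)\cdot d_{G'}(t,t')$. Since $M$ is a minor of $G'$ and $G'$ is a minor of $G$, the minor $M$ is also a minor of $G$; combined with $d_{G'}(t,t')=d_G(t,t')$ this gives an SPR solution for $G$ with distortion $\polylog(k)$. The only nontrivial point is ensuring the preprocessing yields an actual minor (rather than merely a distance approximation), so that composing it with the scattering-based algorithm still returns a minor of the original graph; this is provided by~\cite{KNZ14}. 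Everything else is routine composition of the two theorems.
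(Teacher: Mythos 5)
Your proposal is correct and matches the paper's proof essentially step for step: reduce to an $O(k^4)$-vertex minor via~\cite{KNZ14}, obtain $(1,O(\log^2 k))$-scatterability of all induced subgraphs from \Cref{thm:Generalstrong} together with \Cref{obs:StrongImplyScatt,obs:beta1}, and conclude via \Cref{thm:Scattering_Implies_SPR}. The explicit remark about composing minors is a useful clarification, but the argument is the same.
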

\begin{proof}
	According to Krauthgamer, Nguyen, and Zondiner \cite{KNZ14} (Theorem 2.1), $G$ contains a minor with $O(k^4)$ Steiner points, preserving exactly all distances between the terminals.\footnote{This minor is obtained by first deleting all edges which do not lie on a shortest path between two terminals, and then contracting all Steiner vertices of degree $2$.}
	I.e., there is a minor $G'$ of $G$ with $O(k^4)$ vertices containing $K$, such that for every $t,t'\in K$, $d_G(t,t')=d_{G'}(t,t')$.
	Thus we can assume that $|V|=O(k^4)$. By \Cref{thm:Generalstrong}, \Cref{obs:StrongImplyScatt}, and \Cref{obs:beta1}, $G$ and all its induced subgraphs are $\left(1,O(\log^2 k)\right)$-scatterable. The corollary follows by \Cref{thm:Scattering_Implies_SPR}.
\end{proof}

Next, we provide a lower bound on weak sparse partitions for general graphs.
Quantitatively, this lower bound is essentially equivalent to the lower bound for strong sparse partition on trees. However, qualitatively it is different as it requires strong diameter. This lower bound implies that both \Cref{thm:Generalstrong} and the weak sparse partitions of \cite{JLNRS05} are tight up to second order terms. 
The proof appears in \Cref{subsec:GeneralWeakLB}.
\begin{theorem}\label{thm:GeneralWeakLB}
	Suppose that the all $n$-vertex graphs admit a $(\sigma,\tau)$-weak sparse partition scheme. Then 
	$\tau\ge n^{\Omega(\frac{1}{\sigma})}$.
\end{theorem}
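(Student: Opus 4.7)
The plan is to exhibit, for every $n$, a specific $n$-vertex graph on which any $(\sigma,\tau)$-weak sparse partition scheme is forced to pay $\tau\ge n^{\Omega(1/\sigma)}$. I take $G$ to be a constant-degree expander on $n$ vertices: a $d$-regular graph ($d$ an absolute constant) with vertex-expansion constant $\alpha>1$ satisfying $|B_G(S,1)|\ge\alpha|S|$ for every $S\subseteq V$ with $|S|\le n/2$. Iterating the expansion yields
\[
|B_G(S,r)|\;\ge\;\min\!\bigl(\alpha^{r}|S|,\,n/2\bigr)\qquad\forall\,S\subseteq V,\ \forall\,r\in\mathbb{N},
\]
while $d$-regularity supplies the matching upper bound $|B_G(v,r)|\le d^{r}$. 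Such expanders (e.g.\ random $d$-regular graphs) are standard.

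Next, set $\Delta=\lceil c\log n\rceil$ for an absolute constant $c=c(d,\alpha)>0$ chosen small enough that $c\log(d\alpha)<1$ (for instance $c=\tfrac{1}{2\log(d\alpha)}$), and invoke the assumed scheme to obtain an $(\sigma,\tau,\Delta)$-weak sparse partition $\mathcal{P}$ of $G$. The weak diameter bound forces each cluster $A\in\mathcal{P}$ to lie inside some ball of radius $\Delta$ in $G$, whence $|A|\le d^{\Delta}$. Combined with $\sigma\ge 1$, the choice of $c$ gives $\alpha^{\lfloor\Delta/\sigma\rfloor}|A|\le \alpha^{\Delta}\cdot d^{\Delta}=(d\alpha)^{\Delta}\le n/2$ for $n$ large enough, so the $n/2$-cap in the expansion bound is never triggered and
\[
|B_G(A,\Delta/\sigma)|\;\ge\;\alpha^{\lfloor\Delta/\sigma\rfloor}\,|A|\qquad\forall\,A\in\mathcal{P}.
\]

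The remainder is the double-counting identity foreshadowed in the Technical Ideas of the introduction. Swapping the order of summation and using that each ball of radius $\Delta/\sigma$ meets at most $\tau$ clusters,
\[
\sum_{A\in\mathcal{P}}|B_G(A,\Delta/\sigma)|\;=\;\sum_{v\in V}\bigl|\{A\in\mathcal{P}:A\cap B_G(v,\Delta/\sigma)\ne\emptyset\}\bigr|\;\le\;n\tau.
\]
Combining this with the expansion lower bound and $\sum_A|A|=n$ yields $n\tau\ge \alpha^{\lfloor\Delta/\sigma\rfloor}\cdot n$, whence $\tau\ge \alpha^{\lfloor\Delta/\sigma\rfloor}=n^{\Omega(1/\sigma)}$, as claimed. (For $\sigma=\omega(\log n)$ the statement reduces to $\tau\ge 1$, which is trivial.) The one delicate point is calibrating $c$: the cluster-size bound $|A|\le d^{\Delta}$ coming from $d$-regularity must stay strictly below the $n/2$-cap of the iterated expansion bound, i.e.\ $\alpha^{\Delta/\sigma}|A|\le n/2$, uniformly in $\sigma\ge 1$. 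This reduces to a single constraint on $c$ that is easy to satisfy since $d$ and $\alpha$ are constants, so $\Delta=\Theta(\log n)$ works throughout the interesting regime $\sigma=O(\log n)$.
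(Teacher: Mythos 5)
Your proof is correct and follows essentially the same route as the paper's: a constant-degree vertex-expander, the tension between exponential ball growth and the $d^{\Delta}$ cluster-size cap forced by weak diameter, and the same double-counting of cluster--ball incidences (which the paper phrases equivalently as an expectation over a uniformly random ball center). The only cosmetic difference is that you fix $\Delta=\Theta(\log n)$ independently of $\sigma$ and verify the $n/2$-cap is never hit, whereas the paper tunes $\Delta$ so the cap is met exactly; both calibrations give $\tau\ge n^{\Omega(1/\sigma)}$.
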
 

The best scattering partitions we were able to provide are due to \Cref{thm:Generalstrong}. We believe that quadratically better scattering partitions exist. Specifically, that every $n$ vertex weighted graph is $\left(1,O(\log n)\right)$-scatterable, and furthermore that this is tight. See \Cref{con:GeneralScattering} in \Cref{sec:open}.
Unfortunately, we did not provide any lower bound on the parameters of scattering partitions. Nevertheless, we provide some evidence that no better scattering partitions from those conjectured to exist by  \Cref{con:GeneralScattering} can be constructed.
Given a partition $\mathcal{P}$, we say that an edge $\{u,v\}$ is separated if its endpoints belong to different clusters ($P(u)\ne P(v)$). Given a shortest path $\mathcal{I}=\{v_0,v_1,\dots,v_m\}$, denote by $S_{\mathcal{I}}(\mathcal{P})=\left|\{i\in[m]\mid P(v_{i-1})\ne P(v_i)\}\right|$ the number of separated edges along  $\mathcal{I}$. We say that partition $\mathcal{P}$ is $(\sigma,\tau,\Delta)$-\emph{super scattering } if $\mathcal{P}$ is connected, weakly $\Delta$-bounded, and every shortest path $\mathcal{I}$ of length at most $\frac\Delta\sigma$ has at most $\tau$ separated edges $S_{\mathcal{I}}(\mathcal{P})\le \tau$.
We say that a graph $G$ is $(\sigma,\tau)$-\emph{super-scatterable} if for every parameter $\Delta$, $G$ admits a  $(\sigma,\tau,\Delta)$-\emph{super scattering} partition.

It is straightforward to see that a $(\sigma,\tau,\Delta)$-super scattering partition is also a $(\sigma,\tau+1,\Delta)$-scattering. However, the opposite does not always hold. 
Interestingly, some of the scattering partitions created in this paper are actually super-scattering. Specifically our scattering partition for trees, cactus graphs, chordal graphs and Euclidean space are actually super-scattering.
Our strong sparse partition for general graph, doubling graphs, and graph with bounded \SPDdepth are not necessarily super-scattering. 
The proof of the following theorem appears in \Cref{subsec:LBsuper}.
\begin{theorem}\label{thm:generalLBsuper}
	Suppose that all $n$-vertex graphs are $(1,\tau)$-super scatterable. Then $\tau=\Omega(\log n)$.	
\end{theorem}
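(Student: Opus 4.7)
The plan is to exhibit, for arbitrarily large $n$, a single $n$-vertex graph on which every super-scattering partition forces $\tau=\Omega(\log n)$. Let $G$ be a $d$-regular $n$-vertex graph for a small constant $d$ (say $d=3$), having girth $g=\Omega(\log n)$ and combinatorial edge expansion $\phi=\Omega(1)$; such $G$ exist via explicit Ramanujan constructions or via random $d$-regular graphs. Place unit weights on the edges and set $\Delta=\lfloor(g-1)/2\rfloor=\Theta(\log n)$, which is small enough that $B_G(v,\Delta)$ is a tree for every vertex $v$ (any cycle inside would have length at most $2\Delta<g$, violating the girth).

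Fix any $(1,\tau,\Delta)$-super scattering partition $\mathcal{P}$ of $G$. Every cluster $C\in\mathcal{P}$ has weak diameter at most $\Delta$, so $C\subseteq B_G(v,\Delta)$ for any $v\in C$, giving $|C|\le O(d^\Delta)\le n^{1/2}$ once constants are tuned. By edge expansion, $|\partial C|\ge\phi\cdot\min(|C|,n-|C|)=\phi|C|$, and summing over clusters,
\[
2|E_{\mathrm{sep}}|\;=\;\sum_{C\in\mathcal{P}}|\partial C|\;\ge\;\phi\sum_{C}|C|\;=\;\phi n,
\]
where $E_{\mathrm{sep}}$ is the set of edges separated by $\mathcal{P}$. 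Consequently a uniformly random directed edge of $G$ is separated with probability $|E_{\mathrm{sep}}|/|E|\ge\phi/d$.

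The heart of the argument is a probabilistic sampling of shortest paths via non-backtracking walks. Draw $v_0$ uniformly from $V$, $v_1$ uniformly from $N(v_0)$, and for $t\ge 2$ draw $v_t$ uniformly from $N(v_{t-1})\setminus\{v_{t-2}\}$. A short induction on $t$, leaning on $d$-regularity, shows that for every $t\ge 1$ the directed edge $(v_{t-1},v_t)$ is uniformly distributed over the $nd$ directed edges of $G$, so the $t$-th step is separated with probability at least $\phi/d$. By linearity, the walk traverses $\ge\Delta\phi/d=\Omega(\log n)$ separated edges in expectation, so some realization achieves at least that many. Since $B_G(v_0,\Delta)$ is a tree and a non-backtracking walk in a tree strictly increases its depth at each step, $v_\Delta$ lies at $G$-distance exactly $\Delta$ from $v_0$, so the realized walk is a bona fide shortest path in $G$ of length $\Delta$. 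This forces $\tau=\Omega(\log n)$.

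The step requiring the most care is the inductive verification that the non-backtracking walk hits each directed edge uniformly at every time $t$; $d$-regularity makes this fall out cleanly, though in a non-regular graph the marginal would be biased toward higher-degree regions. The careful choice of $\Delta$ relative to the girth is what then upgrades each non-backtracking walk into a genuine shortest path of $G$, letting the averaged edge-separation bound transfer directly to a lower bound on $\tau$.
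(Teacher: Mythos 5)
Your proof is correct in outline and takes a genuinely different route from the paper. The paper works on the $\log n$-dimensional hypercube, where small-diameter clusters are small by a binomial-coefficient count and Harper's edge-isoperimetric inequality gives each such cluster $\ge \frac{d}{2}|A|$ outgoing edges; the random shortest path is obtained by starting at a uniform vertex and flipping $k$ distinct coordinates, which is automatically a geodesic and produces a uniformly random edge at every step. You instead take a constant-degree, constant-edge-expansion graph with girth $\Omega(\log n)$ (explicit Ramanujan graphs), and obtain the uniform-edge-per-step property from a non-backtracking walk, upgraded to a geodesic via the girth. Both arguments have the same skeleton --- expansion forces a constant fraction of edges to be separated, and a random geodesic of length $\Theta(\log n)$ whose edges are marginally uniform then has $\Omega(\log n)$ separated edges in expectation --- but yours is more general (any logarithmic-girth expander family works) at the cost of needing the ball-is-a-tree argument, whereas the hypercube makes geodesics transparent and gives a clean separation probability of $\ge 1/2$.

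A few small slips worth fixing, none fatal. First, the claim that a cycle inside $B_G(v,\Delta)$ would have length $\le 2\Delta$ is not right as stated; the correct argument is that a non-tree edge of $G[B_G(v,\Delta)]$ closes a cycle of length $\le 2\Delta+1$ through the BFS tree, so you need $g\ge 2\Delta+2$, i.e.\ take $\Delta=\lfloor g/2\rfloor-1$ rather than $\lfloor(g-1)/2\rfloor$. Second, random $d$-regular graphs do \emph{not} have girth $\Omega(\log n)$ (they have a constant expected number of short cycles); stick with the explicit constructions, or cite the standard fact that one can delete $O(1)$ vertices to remove short cycles without destroying expansion. Third, the bound $|C|\le O(d^\Delta)\le n^{1/2}$ is off --- in a $d$-regular tree of radius $\Delta$ the count is $O((d-1)^\Delta)$, and with the Ramanujan girth $g\approx\frac{4}{3}\log_{d-1}n$ even this is around $n^{2/3}$; but all you need is $|C|<n/2$, which does hold (and which you can guarantee with room to spare by taking $\Delta$ to be a smaller constant fraction of $\log_{d-1}n$, still $\Theta(\log n)$).
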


\subsection{Strong Sparse Partitions for General Graphs: Proof of \Cref{thm:Generalstrong}}\label{subsec:GeneralStrong}

Let $\Delta>0$ be some parameter. 
Our partition will be created using the clustering algorithm of Miller \etal \cite{MPX13} described in \Cref{sec:prem}, with the set of all vertices as centers (that is $N=V$).
For each vertex $t\in N$, we sample a shift $\delta_t$ according to exponential distribution with parameter $\lambda=\frac{\Delta}{4\ln n}$. 
As a result of the execution of \cite{MPX13}, we get a clustering $\mathcal{P}$, where each cluster is connected and associated with some center vertex.

Denote by $\phi$ the event that for all the vertices $t\in V$, $\delta_t\le\frac\Delta2$. By union bound
\[
\Pr\left[\overline{\phi}\right]\le n\cdot \Pr[\delta_t>\frac\Delta2]=n\cdot e^{-\nicefrac{\Delta}{2\lambda}}=\frac 1n~.
\]
Suppose that $\phi$ indeed occurs.	 
Consider some vertex $v\in V$, suppose that $v$ joined the cluster of the center $t\in N$. Thus 
$\delta_{t}-d_{G}(v,t)=f_v(t)\ge f_v(v)=\delta_{v}-d_{G}(v,v)\ge 0$, implying $d_{G}(v,t)\le \delta_{t}\le \frac\Delta2$. 
By \Cref{clm:MPXshortestpath}, for every vertex $v$ in the cluster $C$ of $t$ it holds that $d_{G[C]}(v,t)=d_{G}(v,t)$. 
It follows that (assuming $\phi$) $\mathcal{P}$ has strong diameter $\Delta$.

We drop now any conditioning on $\phi$. Fix some $\alpha\ge 1$.
Let $r_\alpha=\frac{\Delta}{8\alpha}$. Consider an arbitrary vertex $v$ and let $B_{v,\alpha}=B_G(v,r_\alpha)$ be the ball of radius $r_\alpha$ around $v$.
We will bound $Z_{B_{v,\alpha}}$ the number of clusters in $\mathcal{P}$ intersecting $B_{v,\alpha}$.
Consider the set $\{f_v(t)\mid t\in N\}$, and order the values according to decreasing order, that is we denote by $t_{(i)}$ the center corresponding to the $i$'th largest value w.r.t. $f_v$. Specifically $f_v(t_{(1)})\ge f_v(t_{(2)})\ge\dots$~. Note that $t_{(i)}$ is a random variable.
Set $s_\alpha=n^{\nicefrac 1\alpha}\cdot 3\ln n=O(n^{\nicefrac 1\alpha}\cdot \log n)$.
Denote by $\psi_{v,\alpha}$ the event that $f_v(t_{(1)})-f_v(t_{(s_\alpha+1)})> 2r_\alpha$.	
\begin{claim}\label{clm:MPXgap}
	For every $\alpha\ge 1$, $\Pr[\overline{\psi_{v,\alpha}}]=\Pr[f_v(t_{(1)})-f_v(t_{(s_\alpha+1)})\le 2r_\alpha]\le \frac{1}{n^3}$ .
\end{claim}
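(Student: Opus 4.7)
The plan is to exploit the memoryless property of the $\Exp(\lambda)$ distribution through a conditioning argument that sidesteps any expensive union bound over subsets. Before starting, observe the numeric identities $e^{-2r_\alpha/\lambda} = n^{-1/\alpha}$ (from $\lambda = \Delta/(4\ln n)$ and $r_\alpha = \Delta/(8\alpha)$) and $s_\alpha \cdot n^{-1/\alpha} = 3\ln n$; these are precisely what will make the final bound come out to $n^{-3}$.

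First I would condition on the identity of $t^* := t_{(s_\alpha+1)}$ and on the value $m := f_v(t^*)$. Conditional on this information, the set $S$ of the top $s_\alpha+1$ centers is determined almost surely as $\{t\in V : f_v(t) \ge m\}$, and the event $\overline{\psi_{v,\alpha}}$ is precisely the event that each of the $s_\alpha$ centers $t\in S\setminus\{t^*\}$, characterized by $\delta_t > d_G(v,t)+m$, additionally satisfies $\delta_t \le d_G(v,t) + m + 2r_\alpha$.

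Writing $a_t := d_G(v,t)+m$, the memoryless property of $\Exp(\lambda)$ gives $\Pr[\delta_t \le a_t+2r_\alpha \mid \delta_t > a_t] = 1 - e^{-2r_\alpha/\lambda} = 1 - n^{-1/\alpha}$ exactly whenever $a_t \ge 0$. When $a_t < 0$ the conditioning is vacuous, and a short direct check shows the probability is $0$ if $a_t + 2r_\alpha < 0$ and at most $1 - e^{-(a_t+2r_\alpha)/\lambda} \le 1 - e^{-2r_\alpha/\lambda}$ otherwise; in every case the per-center conditional probability is bounded by $1 - n^{-1/\alpha}$. Because the $\delta_t$'s are mutually independent, the $s_\alpha$ conditional events decouple, giving
\[
\Pr\bigl[\,\overline{\psi_{v,\alpha}} \,\bigm|\, t^*,\,m\,\bigr] \;\le\; \bigl(1 - n^{-1/\alpha}\bigr)^{s_\alpha}.
\]

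Since this upper bound is uniform in the conditioning variables $(t^*,m)$, the unconditional probability inherits the same bound, and the claim follows from $1-x \le e^{-x}$:
\[
\Pr\bigl[\,\overline{\psi_{v,\alpha}}\,\bigr] \;\le\; \bigl(1 - n^{-1/\alpha}\bigr)^{s_\alpha} \;\le\; e^{-s_\alpha \cdot n^{-1/\alpha}} \;=\; e^{-3\ln n} \;=\; n^{-3}.
\]
The only delicate step is the edge case $a_t < 0$, where memorylessness is vacuous and the bound must be checked directly via a short case analysis; the rest is a clean application of independence. Crucially, the naive $\binom{n}{s_\alpha+1}$ union bound over subsets one might otherwise have to pay is absorbed into the law of total probability, precisely because the per-slice conditional bound turns out not to depend on the choice of $t^*$ or $m$.
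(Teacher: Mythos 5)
Your proof is correct and follows essentially the same route as the paper's: condition on the identity and $f_v$ value of the $(s_\alpha+1)$'th center, then apply the memoryless property of $\Exp(\lambda)$ together with independence to bound each of the top $s_\alpha$ centers' chance of landing in the $2r_\alpha$ gap by $1-n^{-1/\alpha}$, and multiply. Your version is slightly more careful — you condition explicitly on $m$ and check the edge case $a_t<0$ directly, whereas the paper handles it silently via the inequality $\Pr[\delta_t\le a_t+2r_\alpha\mid\delta_t\ge a_t]\le\Pr[\delta_t\le 2r_\alpha]$ — but the core argument is identical.
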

\begin{proof}
	We will use the law of total probability. Fix the center $t_{(s_\alpha+1)}$ and the set $\mathcal{N}=\{t\in V\mid f_v(t)\ge f_v(t_{(s_\alpha+1)})\}$, note that we did not fix the order of the centers in $\mathcal{N}$. 
	For $t\in\mathcal{N}$, by the memoryless property of exponential distribution it holds that
	\begin{align*}
	& \Pr\left[f_{v}(t)-f_{v}(t_{(s_{\alpha}+1)})\le2r_{\alpha}\mid f_{v}(t)\ge f_{v}(t_{(s_{\alpha}+1)})\right]\\
	& \quad=\Pr\left[\delta_{t}\le2r_{\alpha}+\delta_{t_{(s_{\alpha}+1)}}-d_{G}(v,t_{(s_{\alpha}+1)})+d_{G}(v,t)\mid\delta_{t}\ge\delta_{t_{(s_{\alpha}+1)}}-d_{G}(v,t_{(s_{\alpha}+1)})+d_{G}(v,t)\right]\\
	& \quad\le\Pr[\delta_{t}\le2r_{\alpha}]=1-e^{-\nicefrac{2r_{\alpha}}{\lambda}}=1-n^{-\nicefrac{1}{\alpha}}~.
	\end{align*}
	As all the centers in $\mathcal{N}$ are independent, we conclude that 
	\begin{align*}
	\Pr[f_{v}(t_{(1)})-f_{v}(t_{(s_{\alpha}+1)})\le2r_{\alpha}] & =\Pr[\forall t\in\mathcal{N},~f_{v}(t)-f_{v}(t_{(s_{\alpha}+1)})\le2r_{\alpha}]\\
	& \le\left(1-n^{-\nicefrac{1}{\alpha}}\right)^{s_{\alpha}}<e^{-3\cdot\ln n}=\frac{1}{n^{3}}~.
	\end{align*}
\end{proof}

Suppose that  $\psi_{v,\alpha}$ indeed occurs. 
By \Cref{clm:MPXintersectionProperty}, $B_{v,\alpha}$ will not intersect clusters centered in vertices $t$ for which $f_v(t_{(1)})-f_v(t_{(t)})> 2r_\alpha$. Therefore, at most $s_\alpha$ clusters might intersect $B_{v,\alpha}$.
We conclude that if all the events $\phi,\{\psi_{v,\alpha}\}_{v\in V}$ occur, then $\mathcal{P}$ is $(\frac{\Delta}{r_\alpha},s_\alpha,\Delta)=(8\alpha,n^{\nicefrac 1\alpha}\cdot 3\ln,\Delta)$-strong sparse partition.
Set $A=\{1,1+\frac{1}{\log n},1+\frac{2}{\log n},\dots,\log n\}$ be the arithmetic progression from $1$ to $\log n$ with difference between every pair of consecutive terms being $\frac{1}{\log n}$. 
By union bound, the probability that for every $\alpha\in A$,  $\mathcal{P}$ is $(\frac{\Delta}{r_\alpha},s_\alpha,\Delta)=(8\alpha,n^{\nicefrac 1\alpha}\cdot 3\ln,\Delta)$-strong sparse partition is at least $1-\Pr[\overline{\phi}]-\sum_{\alpha\in A}\sum_{v\in V}\Pr[\overline{\psi_{v,\alpha}}]\ge1-\frac{2}{n}$.
Thus we sample such a partition with high probability. We argue that the theorem holds for this partition.
Indeed consider some $\alpha\ge 1$.
If $\alpha\le \log n$, set $\alpha'$ to be a number in $A$ such that $\alpha'\le\alpha\le\alpha'+\frac{1}{\log n}$. Else, if  $\alpha> \log n$, set $\alpha'= \log n$. 
Every ball of radius $\frac{\Delta}{r_\alpha}$ is contained in a ball of radius $\frac{\Delta}{r_{\alpha'}}$, while the number of clusters it intersects is bounded by $s_{\alpha'}=n^{\nicefrac{1}{\alpha'}}\cdot3\ln n\le n^{\nicefrac{1}{\alpha}}\cdot6\ln n=O(n^{\nicefrac{1}{\alpha}}\cdot\log n)$ as required.
The theorem follows. \QED

\subsection{Lower Bound on Weak Sparse Partitions: Proof of \Cref{thm:GeneralWeakLB}}\label{subsec:GeneralWeakLB}

Let $d>1,c>0$ be some constants such that for every $n$ there is an unweighted $n$-vertex graph $G_n$ with maximal degree $d$ and vertex expansion at least $c$. That is, every subset $A$ of $G_n$ of cardinality at most $\nicefrac{n}{2}$ has at least $c\cdot |A|$ neighbors.
Note that as the maximal degree is bounded by $d$, a ball of radius $r$ contains at most $1+d+d^2+\dots+d^{r}<d^{r+1}$ vertices.

Fix $\Delta=\frac{\log_d\frac{n}{2d}}{1+\frac1\sigma}$. Let $\mathcal{P}$ be a partition with weak diameter $\Delta$. Consider a cluster $A\in\mathcal{P}$. Set $A_r=B_G(A,r)$ to be a ball of radius $r$ around $A$. As $A$ is contained in a ball of radius $\Delta$, $|A_{\frac\Delta\sigma}|\le d^{(1+\frac1\sigma)\cdot \Delta+1}= \frac{n}{2}$. Thus we can use the expansion property in $\frac\Delta\sigma$ consecutive steps, and conclude
\[
|A_{\frac{\Delta}{\sigma}}|\ge(1+c)\cdot|A_{\frac{\Delta}{\sigma}-1}|\ge\dots\ge (1+c)^{\frac{\Delta}{\sigma}}\cdot|A_{0}|=(1+c)^{\frac{\Delta}{\sigma}}\cdot|A|~.
\]

Pick uniformly at random vertex $v\in V$. For $A\in\mathcal{P}$, let $X_A$ be an indicator for the event that $B_G(v,\frac\Delta\sigma)$ the ball of radius $\frac\Delta\sigma$ around $v$ intersects $A$.  It holds that
\[
\mathbb{E}_{v}\left[Z_{B_{G}(v,\frac{\Delta}{\sigma})}(\mathcal{P})\right]=\sum_{A\in\mathcal{P}}\Pr\left[X_{A}\right]=\sum_{A\in\mathcal{P}}\Pr\left[v\in A_{\frac{\Delta}{\sigma}}\right]=\sum_{A\in\mathcal{P}}\frac{\left|A_{\frac{\Delta}{\sigma}}\right|}{n}=(1+c)^{\frac{\Delta}{\sigma}}\cdot\sum_{A\in\mathcal{P}}\frac{\left|A\right|}{n}=(1+c)^{\frac{\Delta}{\sigma}}~.
\] 
By averaging arguments we conclude that $
\tau\ge(1+c)^{\frac{\Delta}{\sigma}}=(1+c)^{\frac{1}{\sigma}\cdot\frac{\log_{d}\frac{n}{2d}}{1+\frac{1}{\sigma}}}=n^{\Omega(\frac{1}{\sigma})}
$.\QED
\subsection{Lower Bound for Super Scattering Partition: Proof of \Cref{thm:generalLBsuper}}\label{subsec:LBsuper}
Let $G=(V,E)$ be the $d$-dimensional hypercube. That is, the vertices $V=\{0,1\}^d$ are corresponding to $0,1$ strings of length $d$, and two vertices are adjacent if their corresponding strings differ in a single bit. 
Set $\Delta=k=\frac {d}{10}$. Note that every cluster of diameter $\Delta$ includes at most ${d\choose k}\le(\frac {de}{k})^k=(10e)^{\frac {d}{10}}<2^{\frac d2}$
vertices, as it is included in a ball of radius $\Delta$.
According to the edge isoperimetric inequality for the hypercube, given such a cluster $A$, the number of outgoing edges from this cluster is at least $|A|(d-\log|A|)\ge \frac d2\cdot|A|$. In particular, given a partition $\mathcal{P}$ into $k$-bounded clusters, the total number of inter-cluster edges is at least $\frac12\sum_{A\in\mathcal{P}} \frac d2\cdot|A|=2^d\cdot\frac d4$.

Let $x_0$ be a uniformly chosen random vertex. Pick $k$ bits $b_1,b_2,\dots,b_k$ from $[d]$ without repetitions where the order is important (we pick uniformly among the $\frac{n!}{(n-k)!}$ possible choices).
Set $x_i$ to be equal to $x_0$ where we flip the bits $b_1,\dots,b_i$. 
Let $X_i$ be an indicator for the event that $x_i$ and $x_{i+1}$ belong to different clusters ($P(x_i)\ne P(x_{i+1})$). Set $X=\sum_{j=1}^k X_i$. Note that for every $i$, $\{x_{i-1},x_i\}$ is a uniformly random edge. Therefore $\Pr[X_i]\ge\frac{2^d\cdot\frac d4}{\frac12\cdot2^d\cdot d}=\frac12$ equals to the ratio between inter cluster edges to all edges.  
We conclude that $\mathbb{E}[X]\ge\frac k2$. By averaging arguments, there exists a shortest path $\mathcal{I}$ of length $k$ with $S_{\mathcal{I}}(\mathcal{P})\ge\frac k2=\Omega(d)$ separated edges. As the number of vertices in the hypercube is $2^d$, the theorem follows.\QED

\section{Trees}
In this section we deal with the most basic graph family of trees. We show that trees admit scattering and weak sparse partitions with constant parameters, while no strong sparse partitions with constant parameters are possible. Thus we have a sharp contrast between the different partition types.

The first result of the section is the scattering partition. The proof appears in \Cref{subsec:trees-scattering}. Using \Cref{thm:Scattering_Implies_SPR} one can induce a (previously known \cite{G01,FKT19}) solution for the \SPR problem on trees with distortion $O(1)$.
\begin{theorem}\label{thm:tressScat}
	Every tree $T=(V,E,w)$ is $\left(2,3\right)$-scatterable.
\end{theorem}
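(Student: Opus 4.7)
The plan is to construct $\mathcal{P}$ by a top-down labeling of a rooted version of $T$. First I would root $T$ at an arbitrary vertex $r$ and, processing vertices in BFS order from $r$, define labels $l(v)\in[0,\Delta/2]$ by setting $l(r)=0$ and, for every non-root $v$ with parent $p$,
\[
l(v)=\begin{cases} l(p)+w(p,v) & \text{if } l(p)+w(p,v)\le\Delta/2,\\ 0 & \text{otherwise.}\end{cases}
\]
The clusters of $\mathcal{P}$ are then the connected components obtained by cutting every edge $(p,v)$ with $l(v)=0$; equivalently, each vertex $c$ with $l(c)=0$ is a cluster root and its cluster consists of $c$ together with all descendants reachable without crossing another reset vertex. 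Each cluster is connected by construction, and the labeling is trivially computed in linear time.

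For the weak diameter bound, I would first observe by induction on depth that inside the cluster rooted at some $c$ one has $l(v)=d_T(c,v)$ for every $v$ in that cluster. Since $l(v)\le\Delta/2$ always, any two vertices $v,v'$ in the same cluster rooted at $c$ satisfy $d_T(v,v')\le d_T(v,c)+d_T(c,v')=l(v)+l(v')\le\Delta$, which is the required weak (in fact strong) diameter bound.

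The heart of the argument is the scattering property. Given a shortest path $\mathcal{I}$ of length at most $\Delta/2$ between $u$ and $v$, let $a$ be their LCA in the rooted $T$ and split $\mathcal{I}$ into the upward leg $u\to a$ and the downward leg $a\to v$. I will show that each leg meets at most two clusters; since both legs contain $a$, the total number of distinct clusters intersected by $\mathcal{I}$ is at most $2+2-1=3$. For the upward leg, suppose toward contradiction that it passes through three clusters $C_1,C_2,C_3$ in order, entering $C_2$ and $C_3$ via cut edges $(c_1,c_1')$ and $(c_2,c_2')$, where $c_i$ is the cluster root of $C_i$ and $c_i'=\mathrm{parent}(c_i)$. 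Using the identity $l(\cdot)=d_T(\text{cluster root},\cdot)$ inside $C_1$ and $C_2$, the portion of $\mathcal{I}$ from $u$ up to $c_2$ has length $l(u)+w(c_1,c_1')+l(c_1')$. But the reset rule that made $c_1$ a cluster root (and $c_1\ne r$, since the path continues above it) forces $l(c_1')+w(c_1,c_1')>\Delta/2$, so this portion alone has length strictly greater than $\Delta/2$, contradicting $|\mathcal{I}|\le\Delta/2$. The downward leg is handled symmetrically.

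The main obstacle is just bookkeeping: establishing $l(v)=d_T(c,v)$ inside a cluster, handling the strict-vs-weak inequality in the reset rule, and checking corner cases (the LCA $a$ coinciding with $u$ or $v$, or $c_1'$ being itself a cluster root so that $c_1'=c_2$). None of these should require fresh ideas; once the labeling is interpreted correctly, the contradiction above is the only nontrivial step.
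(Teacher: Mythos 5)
Your proof is correct, but it builds a genuinely different partition from the paper's. The paper partitions by \emph{global} distance-from-root annuli: $\mathcal{R}_i=\{u\mid d_T(u,\rt)\in[i\Delta/2,(i+1)\Delta/2)\}$, and takes the connected components of each $G[\mathcal{R}_i]$. You instead use a \emph{per-branch greedy reset}: accumulate distance down each root-to-leaf path, resetting whenever it would exceed $\Delta/2$, so your cluster boundaries are not aligned to global multiples of $\Delta/2$ (already on an unweighted path with edge weight $0.4$ and $\Delta=2$ the two schemes produce different blocks). The analyses share the key idea of splitting a short shortest path at its LCA (the paper's ``$v_j$ closest to $\rt$'' is exactly the LCA), but then diverge: the paper observes all of $\mathcal{I}$ lives in two consecutive annuli $\mathcal{R}_i\cup\mathcal{R}_{i+1}$, giving one component in $\mathcal{R}_i$ and at most two in $\mathcal{R}_{i+1}$, while you bound each of the two monotone legs to at most two clusters by showing that traversing an entire middle cluster (from the cut edge entering it to the cut edge leaving it) already costs strictly more than $\Delta/2$, which is precisely what the reset rule enforces. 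Both give the $(2,3)$ guarantee with strong diameter; the paper's argument is slightly terser and doesn't need the reset-width inequality, whereas your greedy variant tends to produce fewer, larger clusters and may be more natural to implement incrementally. Either proof is acceptable; yours is self-contained once the invariant $l(v)=d_T(c,v)$ inside each cluster and the corner cases you flag (LCA at an endpoint, $c_1'$ itself a reset vertex so $c_1'=c_2$ and $l(c_1')=0$) are written out, and none of those require new ideas.
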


Next we turn our attention to weak sparse partitions. The fact that trees admit $\left(O(1),O(1)\right)$-weak sparse partition scheme actually follows by \Cref{lem:coverToPartition}, as are there sparse covers constant parameters for trees \cite{AGMW10,BLT14}. Nevertheless, as trees are the most basic graph family, we present a direct and elegant proof, obtaining better constants and understanding. 
Our constructions of scattering and weak sparse partitions are similar. \Cref{fig:TreePartition} illustrates both partitions on the full binary tree.
The proof appears in \Cref{subsec:trees-weakSparse}.

\begin{theorem}\label{thm:treeWeak}
	Every tree $T=(V,E,w)$ admits a $\left(4,3\right)$-weak sparse partition scheme.
\end{theorem}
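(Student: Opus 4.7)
My plan is to construct, for each $\Delta>0$, a $(4,3,\Delta)$-weak sparse partition of $T$. I would root $T$ at an arbitrary vertex $r$, set $d(v):=d_T(r,v)$ and $\ell(v):=\lfloor 2d(v)/\Delta\rfloor$, so that each \emph{level} contains vertices whose depths lie in an interval of length $\Delta/2$. Since edge weights may be arbitrary, I would first subdivide every edge whose weight exceeds $\Delta/4$ by inserting auxiliary Steiner vertices; these are projected away at the end and do not affect the parameters.

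The baseline is the level partition (maximal connected subtrees of same-level vertices), which already achieves weak diameter at most $\Delta$ and the $(2,3)$-scattering property of Theorem~\ref{thm:tressScat}, but can intersect balls in too many clusters at branching points near a level boundary. My fix is to augment each level-$\ell$ connected band $C$ (with topmost vertex $v^*$) by its immediate \emph{crossings} $\{c:p(c)\in C,\ \ell(c)=\ell+1\}$, assigning each such crossing to $C$'s augmented cluster $\hat C$ rather than to the level-$(\ell+1)$ cluster it would otherwise root. The deeper level-$(\ell+1)$ cluster associated with a crossing $c$ is then the (possibly disconnected) set of $c$'s strict level-$(\ell+1)$ descendants, of weak diameter at most $\Delta$ since any two such descendants route through $c$.

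The two items to verify are: (i) each $\hat C$ has weak diameter at most $\Delta$---the band $C$ sits within depth $\Delta/2$ of $v^*$, and each absorbed crossing lies one edge (of weight at most $\Delta/4$ after subdivision) below a vertex of $C$ close to the next boundary; and (ii) any ball $B=B_T(x,\Delta/4)$ meets at most three clusters. Such a ball has diameter at most $\Delta/2$, so it intersects at most two consecutive levels $\ell$ and $\ell+1$; being itself a subtree of $T$, its level-$\ell$ vertices lie in a single band $C$, hence in one augmented cluster $\hat C$ (which also absorbs any crossings out of $C$ that lie in $B$). The level-$(\ell+1)$ vertices of $B$ that are not such crossings must therefore be strict descendants of crossings, and each such descendant lies in the deeper cluster rooted at a unique crossing.

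The main obstacle will be the bound of~$2$ on the number of distinct deeper clusters witnessed by $B$. The plan here is geometric: two descendants $u_1,u_2\in B$ coming from distinct crossings $c_1\ne c_2$ have their LCA in $C$ at depth forced above $(\ell+1)\Delta/2-\Delta/4$ by the triangle-type inequality $d_T(u_1,u_2)\le\Delta/2$ together with $d(u_i)\ge(\ell+1)\Delta/2$; this traps the two branches carrying the $u_i$ in a narrow depth window, and a third such branch would consume more than the ball's $\Delta/4$ radius budget. Combined with the single upper cluster $\hat C$, this yields at most $1+2=3$ clusters intersected, establishing the $(4,3)$ bound.
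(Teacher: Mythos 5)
Your construction has two genuine gaps, one in the diameter claim (i) and one in the sparsity claim (ii).

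\textbf{Diameter.} With levels of width $\Delta/2$, the augmented cluster $\hat C$ can have weak diameter well above $\Delta$. Concretely, take $\Delta=4$: let the root $r$ have two children $a_1,a_2$ with edges of weight $1$; let each $a_j$ have a descending chain (all edge weights $\le 1$) reaching a vertex $q_j$ at depth $1.9$; and let each $q_j$ have a child $c_j$ at depth $2.9$ via an edge of weight $1 = \Delta/4$ (so no subdivision occurs). Then $c_1,c_2$ are crossings of the level-$0$ band $C$ rooted at $r$, so $c_1,c_2\in\hat C$, yet $d_T(c_1,c_2)=2.9+2.9=5.8>\Delta$. Your argument that "each absorbed crossing lies one edge below a vertex close to the boundary" controls the \emph{depth} of a crossing, but not the \emph{distance} between two crossings whose tree path routes through the band root; the augmented cluster spans a depth window of up to $3\Delta/4$, hence can have weak diameter approaching $3\Delta/2$.

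\textbf{Sparsity.} The "radius budget" heuristic for excluding a third deeper cluster is not sound: the constraint $d_T(x,u_i)\le\Delta/4$ is per point, and a ball centered at a branching point can simultaneously contain arbitrarily many branches each of length $\le\Delta/4$. Again with $\Delta=4$: let $a$ sit at depth $1.5$ with three children $b_1,b_2,b_3$ at depth $1.9$ (level $0$); each $b_i$ has a child $c_i$ at depth $2.2$ (a crossing, level $1$), and each $c_i$ has a child $u_i$ at depth $2.4$ (level $1$). All edge weights are $\le 1=\Delta/4$ and $d_T(a,u_i)=0.9\le\Delta/4$, so $B_T(a,\Delta/4)$ contains every $u_i$. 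The $u_i$ lie in three distinct deeper clusters $D_{c_1},D_{c_2},D_{c_3}$, and $B$ also meets $\hat C$, for a total of $4>3$ clusters. This example is fully consistent with your observation that the pairwise LCAs are forced above depth $(\ell+1)\Delta/2-\Delta/4$ (indeed $d(a)=1.5\ge 1$); that observation simply places no cap on the number of branches emanating from such a vertex. There is also a bookkeeping issue: as written, a band root $v^*$ both remains in $\hat C$ and is a crossing absorbed by the parent band's augmented cluster, so the clusters are not disjoint; fixing this (by removing $v^*$ from $C$) does not repair either bound above.

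For comparison, the paper's proof uses levels of width $\Delta/4$ (not $\Delta/2$) and, rather than pushing band roots \emph{downward} as crossings, declares two level-$i$ vertices equivalent iff they share an ancestor in $\mathcal{R}_{i-1}\cup\mathcal{R}_i$, i.e.\ a one-level \emph{upward} lookup. This immediately gives weak diameter $<\Delta$, and---crucially---forces any two level-$i$ vertices at distance $\le\Delta/2$ into a common cluster, which is exactly what caps a ball's intersections at one cluster per level (three relevant levels, so $\tau=3$). The crossing-absorption idea pushes in the opposite direction and, as the second counterexample shows, cannot prevent several branches from crossing a level boundary inside a single ball.
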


\begin{figure}[t]
	\includegraphics[width=.495\textwidth]{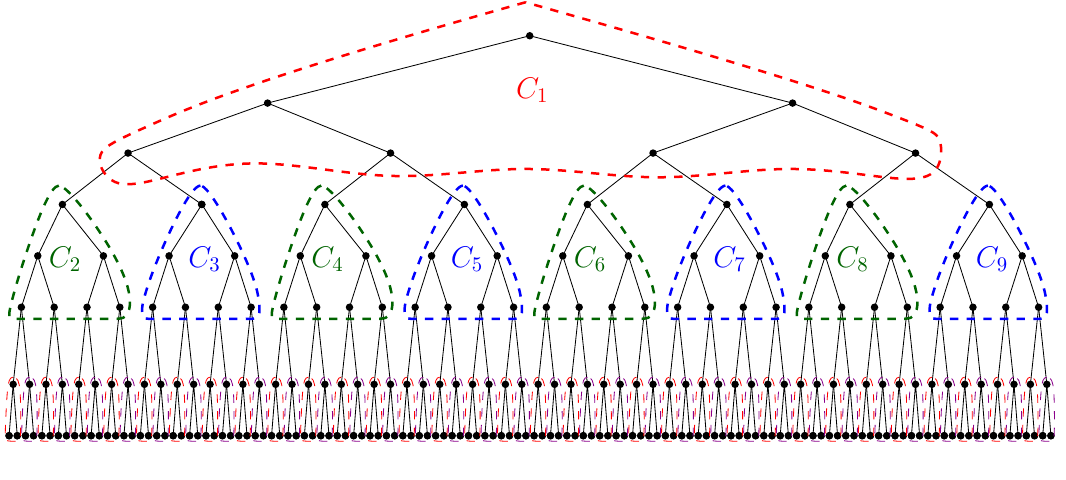}
	\includegraphics[width=.495\textwidth]{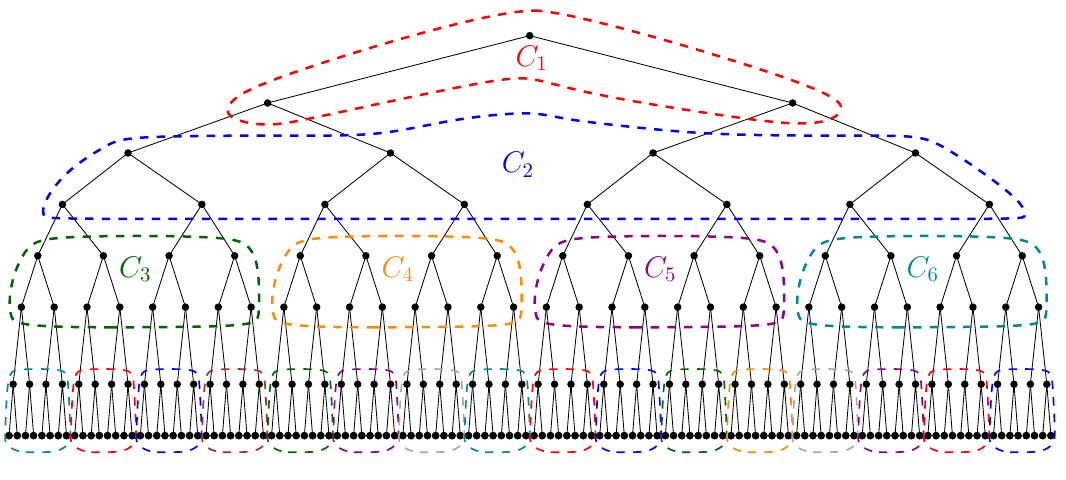}
	\caption{\label{fig:TreePartition}\small \it 
		The figure illustrates two partitions of the unweighted full binary tree of depth $7$. On the left side is the $(2,3,6)$-scattering partition from \Cref{thm:tressScat} for parameter $\Delta=6$, while on the right side is the $(4,3,8)$-weak sparse partition from \Cref{thm:treeWeak} for parameter $\Delta=8$.}
\end{figure}

Finally we turn to the impossibility result. This lower bound is especially interesting as many graph families contain trees, and thus the lower bound applies for them as well. Specifically it also holds for general graphs, all minor free graphs, chordal graphs, etc.
Using the strong sparse partitions for general graphs (\Cref{thm:Generalstrong}), we conclude that the parameters of the lower bound are tight up to second order terms.
We conclude that for strong sparse partition, trees are essentially as hard as general graphs.	
The proof appears in \Cref{subsec:trees-LB_StrongSparse}.
\begin{theorem}\label{thm:treeLBStrong}
	Suppose that all trees with at most $n$ vertices admit a $(\sigma,\tau)$-strong sparse partition scheme. Then $\tau\ge \frac13\cdot  n^{\frac{2}{\sigma+1}}$.
\end{theorem}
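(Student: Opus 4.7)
I would exhibit, for each $\sigma$, a family of trees requiring $\tau = \Omega(n^{2/(\sigma+1)})$. The candidate is the complete $d$-ary tree $T = T_{d,D}$ of depth $D$ with unit edge weights, which has $n = \sum_{i=0}^{D} d^{i} \le 2 d^{D}$ vertices (for $d \ge 2$). I would take $D = \lceil (\sigma+1)/2 \rceil$ so that $2D > \sigma$, and consider an arbitrary $(\sigma,\tau,\Delta)$-strong sparse partition $\mathcal{P}$ with $\Delta = \sigma$. Since an induced subgraph of a tree is itself a tree, strong diameter equals weak diameter, so every cluster of $\mathcal{P}$ is a connected subtree of diameter at most $\sigma$, and the ball of radius $\Delta/\sigma = 1$ contains a vertex together with its immediate tree neighbors.

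\textbf{Key structural lemma.} I would show that there exists an internal vertex $v$ such that at most one of its $d$ children lies in the same cluster as $v$. The proof is by contradiction. Assume every internal vertex has at least two children sharing its own cluster. Since the unique tree-path between two siblings passes through their parent, that parent must also belong to the common cluster. Propagating from the root $r$: the cluster $C_r$ contains $r$ along with at least two of its children; each such (internal) child contributes at least two of its own children to $C_r$; iterating $D$ levels, $C_r$ contains at least $2^{D}$ leaves drawn from at least two distinct subtrees of $r$. Any two such leaves from different subtrees are at tree-distance exactly $2D > \sigma$, contradicting the diameter bound on $C_r$.

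\textbf{From the lemma to the bound.} For the vertex $v$ supplied by the lemma, the same sibling-through-parent argument forces the $d$ children of $v$ into $d$ pairwise-distinct clusters (two siblings sharing a cluster would force $v$ into that cluster, violating the lemma). Hence the ball $B_T(v,1)$ intersects at least $d$ distinct clusters. Because $\Delta/\sigma = 1$, the sparsity condition gives $\tau \ge d$. Combined with $n \le 2 d^{D}$, this yields $\tau \ge d \ge (n/2)^{1/D} \ge \tfrac{1}{3}\, n^{2/(\sigma+1)}$, where $D = \lceil(\sigma+1)/2\rceil$ delivers the target exponent and the constant $1/3$ absorbs the factor of $2$ in $n/2$ as well as the rounding in $D$ when $\sigma$ is even.

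\textbf{Main obstacle.} The delicate step is the inductive propagation inside the key lemma: at each depth one must verify that $C_r$ doubles in ``width'' under the $\ge 2$-children assumption while remaining a connected subtree, so that after $D$ levels the $\ge 2^{D}$ accumulated leaves are genuinely forced to span at least two distinct subtrees of $r$ (as opposed to all landing in one subtree, in which case two leaves could be as close as $2(D-1)$ and fit inside the diameter budget). A careful induction on depth, tracking at each level the set of $C_r$-vertices across the different subtrees of $r$, is what makes the bound $2D$ (rather than merely $2D-2$) work and is the technical heart of the proof.
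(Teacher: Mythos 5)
Your overall strategy matches the paper's: a complete $d$-ary tree of depth about $(\sigma+1)/2$, the observation that two siblings in one connected cluster force their parent into it, and the conclusion that the unit ball around a suitable vertex meets about $d$ clusters. Two remarks on how the details compare, and one flag on the final arithmetic.

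Your key structural lemma (some internal $v$ has at most one child in its own cluster), proved by showing $C_{\rt}$ would otherwise accumulate $2^{D}$ leaves spread over two subtrees of $\rt$, is correct but more elaborate than the paper's route. The paper dichotomizes on the root's cluster $C$ directly: if at most one child of $\rt$ lies in $C$, then $B_T(\rt,1)$ already meets $\ge d+1$ clusters; otherwise at least two children of $\rt$ lie in $C$, and if every non-leaf of $C$ kept a child in $C$ one could descend from each of those two children to a leaf inside $C$, yielding two leaves at distance $2D>\Delta$ in $C$ --- contradiction. Hence some non-leaf $v\in C$ has \emph{no} child in $C$; its children lie in pairwise-distinct clusters all different from $C$, so $B_T(v,1)$ meets at least $d+1$ clusters. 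The $2^{D}$ count is unnecessary (one leaf from each of the two subtrees suffices), and insisting on zero rather than ``at most one'' shared child gets you $d+1$ rather than $d$.

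Your final sentence --- that the constant $1/3$ absorbs the rounding $D=\lceil(\sigma+1)/2\rceil$ when $\sigma$ is even --- is not correct. Rounding $D$ changes the exponent $1/D$, not a constant factor: for $\sigma=2$ your argument gives $\tau\ge(n/2)^{1/2}$, which for large $n$ falls strictly below $\tfrac13 n^{2/3}$, and no fixed constant repairs an exponent mismatch. In fairness, the paper's proof sets $D=(\sigma+1)/2$ without addressing integrality, so it too tacitly assumes $\sigma$ is an odd integer; either read the theorem for odd integer $\sigma$, or replace $\sigma$ by the next odd integer and accept the resulting slight loss in the exponent.
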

We illustrate the lower bound via two different parameter choices.
\begin{corollary}\label{cor:treeLB}
	For every large enough parameter $n>1$, there are $n$ vertex trees $T_1$, $T_2$ such that,
	\begin{OneLiners}
		\item $T_1$ do not admit $\left(\frac{\log n}{\log\log n},\log n\right)$-strong sparse partition scheme.
		\item $T_2$ do not admit $\left(\sqrt{\log n},2^{\sqrt{\log n}}\right)$-strong sparse partition scheme.		 
	\end{OneLiners}
\end{corollary}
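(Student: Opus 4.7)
The plan is to exhibit a single hard tree $T$ on at most $n$ vertices whose every $(\sigma,\tau,\Delta)$-strong sparse partition, for a carefully chosen $\Delta$, forces $\tau$ to be at least $\tfrac{1}{3}n^{2/(\sigma+1)}$. Concretely, let $D := \lceil(\sigma+1)/2\rceil$ be the smallest integer with $2D > \sigma$, and let $d \ge 2$ be the largest integer for which the complete $d$-ary tree $T$ of hop-depth $D$ with unit-weight edges (root at depth $0$, leaves at depth $D$) has at most $n$ vertices. The standard estimate $|V(T)| = (d^{D+1}-1)/(d-1) \le 2 d^{D}$ (valid for $d \ge 2$) gives $d = \Omega(n^{1/D})$, and plugging in the value of $D$ will yield the desired bound $d \ge \tfrac{1}{3} n^{2/(\sigma+1)}$ in the regime of interest. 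I will then invoke the assumed scheme on $T$ at $\Delta := \sigma$ to obtain a partition $\mathcal{P}$ whose every cluster is a connected subtree of $T$ of strong diameter at most $\sigma < 2D$, and for which every ball of radius at most $1 = \Delta/\sigma$ meets at most $\tau$ clusters.

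The heart of the argument is a purely combinatorial claim: some internal vertex $v$ of $T$ has all $d$ of its children lying in pairwise distinct clusters of $\mathcal{P}$. I will prove it by contradiction. Assume that every internal vertex of $T$ has two of its children in a common cluster. Because any two siblings in the tree $T$ are connected by a path through their parent, connectedness of the clusters forces that parent into the shared cluster; hence for every internal $v$, the cluster $C_v$ contains at least two of $v$'s children. Starting from the root $r$ and iterating this observation greedily, I can descend two simultaneous root-to-leaf paths that remain inside $C_r$: pick two distinct children $u_1, u_2 \in C_r$ of $r$, and from each $u_i$ (while it is not yet a leaf) pick a child of $u_i$ in $C_{u_i} = C_r$. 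The two resulting leaves $\ell_1, \ell_2$ both lie in $C_r$ yet belong to disjoint subtrees of $r$, so $d_T(\ell_1,\ell_2) = 2D > \sigma$, contradicting the strong-diameter bound on $C_r$.

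With the claim in hand, the ball $B_T(v,1)$ contains $v$ together with its $d$ children, and so meets at least $d$ clusters of $\mathcal{P}$; since this ball has radius $1 = \Delta/\sigma$, sparsity forces $\tau \ge d \ge \tfrac{1}{3} n^{2/(\sigma+1)}$, finishing the proof. The main obstacle is entirely in the combinatorial step above: the crucial tree-theoretic observation is that two siblings in a common connected cluster necessarily drag their parent in with them, which, when iterated from the root downward along \emph{two} of its children simultaneously, produces a pair of leaves from disjoint root-subtrees inside one cluster and realizes the forbidden distance $2D$. Everything else reduces to size-counting for complete $d$-ary trees.
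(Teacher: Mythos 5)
Your proposal re-derives the paper's Theorem~\ref{thm:treeLBStrong} rather than invoking it, which is legitimate since you are working blind, but the combinatorial core is the same as the paper's: you descend along two branches from a pair of siblings sharing a cluster, pull their parent into that cluster by connectedness, and iterate to exhibit two leaves at distance $2D>\sigma$ inside one cluster, contradicting the strong-diameter bound. The paper phrases it by case-splitting on the root's cluster $C$ and locating a non-leaf vertex of $C$ with no child in $C$; these two formulations are interchangeable. A cosmetic difference: the paper gives the root $d+1$ children to force a strict inequality $d+1 > \tau$ inside a proof by contradiction with $d=\tau$, whereas your complete $d$-ary tree yields the direct inequality $\tau \ge d$. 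Both are fine.

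Where the proof is actually incomplete is in the transition to the corollary. You establish (modulo some constant-factor slack from your ceiling $D = \lceil(\sigma+1)/2\rceil$, which for non-odd $\sigma$ only gives $d = \Omega\bigl(n^{2/(\sigma+3)}\bigr)$ rather than literally $\frac13 n^{2/(\sigma+1)}$; this is immaterial, but worth flagging since you assert the sharper constant) a lower bound of the form $\tau \ge n^{\Omega(1/\sigma)}$. The corollary then requires plugging in the two specific parameter pairs and checking they violate it, and this step is entirely missing. For $T_1$: with $\sigma=\frac{\log n}{\log\log n}$ one computes $n^{2/(\sigma+1)} \approx n^{2\log\log n/\log n} = 2^{2\log\log n} = (\log n)^2$, so any valid $\tau$ must be $\Omega\bigl((\log n)^2\bigr)$, which $\tau=\log n$ fails for large $n$. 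For $T_2$: with $\sigma=\sqrt{\log n}$ one computes $n^{2/(\sigma+1)} \approx 2^{2\sqrt{\log n}}$, so any valid $\tau$ must be $\Omega\bigl(4^{\sqrt{\log n}}\bigr)$, which $\tau=2^{\sqrt{\log n}}$ fails for large $n$. Without these calculations the corollary is not proved. Also note that your opening claim of ``a single hard tree'' is an overclaim: $D$ and $d$ depend on $\sigma$, so $T_1$ and $T_2$ are different trees, and the corollary asks for trees on exactly $n$ vertices, whereas your construction (like the paper's) yields at most $n$ vertices, requiring a trivial padding step.
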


\subsection{Scattering Partitions for Trees: Proof of \Cref{thm:tressScat}}\label{subsec:trees-scattering}
Let $\Delta>0$ be some parameter.
Let $\rt$ be an arbitrary root vertex. Partition the graph according to distances from $\rt$. Specifically for every $i\ge 0$ set $\mathcal{R}_i=\{u\mid d_T(u,\rt)\in[i\cdot \frac\Delta2,(i+1)\cdot\frac\Delta2)\}$. The clusters will be the connected components of each induced subgraph $G[\mathcal{R}_i]$. See \Cref{fig:TreePartition} for illustration.

Consider some cluster $C\subseteq \mathcal{R}_i$, and let $v_C\in C$ be the closest vertex in $C$ to $\rt$. Note that for every vertex $u\in C$, its shortest path towards $\rt$ goes through $v_C$.  In particular, $d_T(v_C,u)=d_T(\rt,u)-d_T(\rt,v_C)<(i+1)\cdot\frac\Delta2-i\cdot \frac\Delta2=\frac\Delta2$.
It follows that our partition has strong diameter $\Delta$. 

Consider a path $\mathcal{I}=\{v_0,\dots,v_m\}$ of length at most $\frac\Delta2$. Let $v_j\in\mathcal{I}$ be the closest vertex to $\rt$ among $\mathcal{I}$ vertices. Let $i\ge0$ be the index such that $v_j\in\mathcal{R}_i$, and denote by $C_j$ the cluster of $v_j$.
As the distance of all the vertices in $\mathcal{I}$ from $v_j$ is at most $\frac\Delta2$, all $\mathcal{I}$ vertices belong to either $\mathcal{R}_i$ or $\mathcal{R}_{i+1}$.
Let $a\in[1,j]$ (resp. $b\in[j,m]$) be the minimal (resp. maximal) index such that $v_a\in \mathcal{R}_i$ (resp. $v_b\in \mathcal{R}_i$). 
All the vertices $v_a,\dots,v_b$ are belonging to $\mathcal{R}_i$ and lie in a single connected component. From the other hand, all the vertices $v_0,\dots,v_{a-1},v_{b+1},\dots,v_m$ belong to $\mathcal{R}_{i+1}$ and lie in at most two connected components.\footnote{If $a=0$ then  $v_0,\dots,v_{a-1}$ is an empty set. Similarly for $b=m$.}
We conclude that the vertices of $\mathcal{I}$ intersect at most $3$ different clusters.\QED

\subsection{Weak Sparse Partitions for Trees: Proof of \Cref{thm:treeWeak}}\label{subsec:trees-weakSparse}
Let $\Delta>0$ be some parameter.
Let $\rt$ be an arbitrary root vertex. Partition the graph according to distances from $\rt$. Specifically for every $i\ge 0$, set $\mathcal{R}_i=\{u\mid d_T(u,\rt)\in[i\cdot \frac\Delta4,(i+1)\cdot\frac\Delta4)\}$.  Fix $\mathcal{R}_i$, we say that $u,v\in\mathcal{R}_i$ are equivalent $u\sim_i v$ if and only if $u$ and $v$ have a common ancestor in $\mathcal{R}_{i-1}\cup \mathcal{R}_{i}$. It is straightforward to verify that $\sim_i $ is an equivalence relation. Our partition $\mathcal{P}=\bigcup_{i\ge0}\nicefrac{\mathcal{R}_i}{\sim_i}$ will simply be all the equivalence classes for all indices $i$.  See \Cref{fig:TreePartition} for illustration.

First we argue that our partition has weak diameter $\Delta$. Consider a pair of vertices $u,v\in\mathcal{R}_i$ which are clustered together. As $u\sim_i v$, they have a common ancestor $z\in\mathcal{R}_{i-1}\cup \mathcal{R}_{i}$. Thus 
\begin{align*}
d_{T}(u,v) & \le d_{T}(u,z)+d_{T}(z,v)=\left(d_{T}(\rt,u)-d_{T}(\rt,z)\right)+\left(d_{T}(\rt,v)-d_{T}(\rt,z)\right)\\
& <2\cdot\left((i+1)\cdot\frac{\Delta}{4}-\left(i-1\right)\cdot\frac{\Delta}{4}\right)=\Delta~.
\end{align*}
Consider a pair of vertices such that $u,v\in\mathcal{R}_i$ but $u\not\sim_i v$. Their least common ancestor is at distance greater than $\frac\Delta4$ from both $u,v$, thus $d_T(u,v)>\frac\Delta2$. It follows that every pair of vertices in $\mathcal{R}_i$ at distance at most $\frac\Delta2$ are necessarily equivalent, and hence belong to the same cluster.
Let $B=B_T(v,r)$ be some ball with radius at most $r\le \frac{\Delta}{4}$. The maximal pairwise distance of a pair of vertices in $B$ is at most $\frac\Delta2$. It follows that for every index $i$, $B$ can intersect at most a single cluster from $\mathcal{R}_i$. Suppose w.l.o.g. that $v\in \mathcal{R}_i$, then the vertices of $B$ are contained in $\mathcal{R}_{i-1}\cup\mathcal{R}_i\cup\mathcal{R}_{i+1}$. It follows that $B$ intersects at most $3$ clusters.\QED

\subsection{Lower Bound on Strong Sparse Partition for Trees: Proof of \Cref{thm:treeLBStrong}}\label{subsec:trees-LB_StrongSparse}
Assume for contradiction that all $n$-vertex trees admit a $(\sigma,\tau)$-strong sparse partition scheme where $\tau<\frac13\cdot n^{\frac{2}{\sigma+1}}$.
Set $d=\tau$ and $D=\frac{\sigma+1}{2}$.
Let $T$ be a tree rooted at $\rt$, of depth $D$, such that the root has $d+1$ children, and all other vertices at distance less than $D$ from $\rt$ have exactly $d$ children. The number of vertices is bounded by
\begin{align*}
1+(d+1)+(d+1)\cdot d+\dots+(d+1)\cdot d^{D-1}  =1+(d+1)\cdot\frac{d^{D}-1}{d-1}<3\cdot d^{D}\le n~.
\end{align*}
Fix $\Delta=2D-1$, and let $\mathcal{P}$ be a $(\sigma,\tau,\Delta)=(2D-1,d,\Delta)$-strong sparse partition. 
Consider the cluster $C\in\mathcal{P}$ containing the root $\rt$. If at most one child of $\rt$ belongs to $C$, then the ball $B_T(\rt,1)$ of radius $1=\frac{\Delta}{2D-1}$ intersects $d+1>\tau$ clusters (as every  other child must belong to a different cluster, because $\mathcal{P}$ is connected).
Otherwise, there must be a non-leaf vertex $v\in C$ such that none of $v$'s children belong to $C$, as otherwise $C$ will contain a path of length $2D>\Delta$. All the children of $v$ belong to a different clusters. Therefore the ball $B_T(v,1)$ intersects $d+1$ different clusters, a contradiction.
\QED

\section{Doubling Graphs}\label{sec:ddim}
In this section we construct strong sparse partitions for graphs with bounded doubling dimension. 
Similarly to general graphs, fixing a diameter parameter $\Delta$, we construct a single partition which is simultaneously good for all ball sizes.
\begin{theorem}\label{thm:ddimStrong}
	Let $G=(V,E,w)$ be a graph with doubling dimension $\ddim$. 
	For every parameter $\Delta>0$, there is an efficiently computable partition $\mathcal{P}$, such that for every integer $\alpha\ge 1$,  $\mathcal{P}$ is $\left(58\alpha,2^{\nicefrac{\tddim}{\alpha}}\cdot\tilde{O}(\ddim),\Delta\right)$-strong sparse partition. In particular, for every $\alpha\ge 1$, $G$ admits a $(58\alpha,2^{\nicefrac{\tddim}{\alpha}}\cdot\tilde{O}(\ddim))$-strong sparse partition scheme.
\end{theorem}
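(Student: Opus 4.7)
The plan is to apply the Miller--Peng--Xu (MPX) clustering algorithm from Section~\ref{subsec:MPX}, but with two adaptations tailored to the doubling setting. First, the set of centers $N$ is taken to be a $\Theta(\Delta)$-net of $V$; by the packing property (\Cref{lem:doubling_packing}), every ball of radius $O(\Delta)$ contains only $2^{O(\sddim)}$ net points, so each vertex is ``exposed'' to only a localized set of candidate centers rather than to all of $V$ as in \Cref{thm:Generalstrong}. Second, instead of a truncated exponential (which enforces a maximum shift at the cost of distorting densities), I will sample each shift $\delta_t$ from a \emph{betailed} exponential $\Exp(\lambda)$ with $\lambda=\Theta(\Delta/\ddim)$ and threshold $\lambda_T=\Theta(\Delta)$: values above $\lambda_T$ are collapsed to $\lambda_T$. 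This preserves the memoryless property of the density on $[0,\lambda_T)$, while still giving a hard upper bound on $\delta_t$, hence on cluster radius.

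For the diameter bound, since $\delta_t\le\lambda_T=\Theta(\Delta)$ deterministically, the argument of \Cref{clm:MPXshortestpath} immediately gives that every cluster has strong diameter $O(\Delta)$, which after rescaling $\Delta$ by an appropriate constant will be exactly $\Delta$. For the locality of centers, by the same argument, a vertex $v$ can only join a center $t$ with $d_G(v,t)\le\lambda_T=\Theta(\Delta)$, and combined with the $\Delta$-net property this shows that only $2^{O(\sddim)}$ centers are ever relevant for $v$.

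The main obstacle is the sparsity bound: for a ball $B_{v,\alpha}=B_G(v,r_\alpha)$ with $r_\alpha=\Theta(\Delta/\alpha)$, one must bound the number of centers $t$ with $f_v(t)\ge f_v(t_{(1)})-2r_\alpha$, where $t_{(1)}$ maximizes $f_v$. In the general-graph argument of \Cref{thm:Generalstrong}, memorylessness of the exponential gives each other center independent probability $\le 1-e^{-2r_\alpha/\lambda}=O(2^{-\sddim/\alpha})$ to fall in this window. Here the complication is that, because of betailing, several of the largest $f_v$-values can coincide at the threshold value, and the density argument breaks at those centers. The workaround is the ``shift'' trick advertised in the Technical Ideas: instead of anchoring the window at the largest value, anchor it at the $s$-th largest. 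Set $s=2^{\sddim/\alpha}\cdot\tilde{O}(\ddim)$, and define the bad event for a vertex $v$ and scale $\alpha$ as
\[
\psi_{v,\alpha}=\bigl\{\#\{t\in N : f_v(t)\ge f_v(t_{(s)})-2r_\alpha\}\ge 2s\bigr\}.
\]
For any fixed $s$, on the event $\psi_{v,\alpha}$ at most $s-1$ of these centers can have $\delta_t=\lambda_T$; conditioning on the set of non-betailed centers and using memorylessness, each untouched center independently has probability $\le 1-e^{-2r_\alpha/\lambda}$ to lie in the window, and a Chernoff bound against the $2^{O(\sddim)}$ local candidates controls the failure probability. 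The key observation is that the density bound ``can withstand'' the $s-1$ betailed centers because we measure relative to $f_v(t_{(s)})$ rather than to $f_v(t_{(1)})$.

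Finally, to combine the bounds across all vertices $v$ and all scales $\alpha\in\{1,2,\dots,\ddim\}$ simultaneously into a single partition, I will apply the Lov\'asz Local Lemma. Each bad event $\psi_{v,\alpha}$ depends only on the shifts of the $2^{O(\sddim)}$ centers in a ball of radius $O(\Delta)$ around $v$, so it is independent of all but at most $2^{O(\sddim)}$ other bad events (through the centers themselves); tuning $s$ to include the polylogarithmic $\tilde O(\ddim)$ overhead makes each failure probability small enough that LLL applies, yielding a deterministic choice of shifts (found efficiently via the algorithmic LLL) for which no bad event occurs. The resulting partition then satisfies, by \Cref{clm:MPXintersectionProperty}, that every ball of radius $r_\alpha$ intersects at most $2s=2^{\sddim/\alpha}\cdot\tilde O(\ddim)$ clusters, for every $\alpha\ge 1$ simultaneously. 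Rescaling $\Delta$ by the absolute constant hidden in $\lambda_T$ gives the claimed padding parameter $58\alpha$.
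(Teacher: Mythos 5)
Your high-level plan — MPX with a $\Delta$-net of centers, betailed exponential shifts to preserve memorylessness while forcing a hard radius cap, locality via packing, the ``anchor at the $s$-th largest'' trick, and the algorithmic Lov\'asz Local Lemma — is exactly the route the paper takes. But the outline has two concrete gaps where the argument, as written, does not close.

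First, the claim ``on the event $\psi_{v,\alpha}$ at most $s-1$ of these centers can have $\delta_t=\lambda_T$'' is not a consequence of anything; nothing prevents many betailed centers from landing in the window $\{t : f_v(t)\ge f_v(t_{(s)})-2r_\alpha\}$, since a betailed center $t$ has $f_v(t)=\lambda_T-d_G(v,t)$ and many net points within distance $O(\Delta)$ of $v$ can satisfy this. The control you want has to be enforced probabilistically, not asserted: you need a separate bad event $\phi_v$ (``at least $s$ of the $2^{O(\sddim)}$ centers in $N_v$ hit the threshold'') and a union bound $\Pr[\phi_v]\le \binom{|N_v|}{s}e^{-s\lambda_\top/\lambda}$, which is what forces the constraint $\lambda_\top/\lambda\gtrsim\sddim$ relating the threshold and the mean. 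Moreover, the Chernoff/memorylessness step must be run in the \emph{non-betailed} space $\tilde f_v$ (i.e., against $\tilde t_{(s)}$ and $\tilde t_{(m_\alpha+1)}$), and only afterwards translated back to $f_v$: on $\overline{\phi_v}$ some $\tilde t_{(i)}$ with $i\le s$ is untouched, giving $f_v(t_{(1)})\ge f_v(\tilde t_{(i)})=\tilde f_v(\tilde t_{(i)})\ge\tilde f_v(\tilde t_{(s)})$, and then the gap propagates. Anchoring directly at the betailed $f_v(t_{(s)})$, as you propose, breaks the memorylessness argument because the ordering in the betailed space conditions on threshold hits in a way the exponential density no longer models.

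Second, the LLL dependency count is off. You apply LLL to events indexed by all $v\in V$, but two events $\Psi_v,\Psi_u$ are dependent whenever $N_v\cap N_u\neq\emptyset$, i.e.\ whenever $d_G(v,u)=O(\Delta)$, and a ball of radius $O(\Delta)$ can contain arbitrarily many \emph{vertices} (the packing bound controls only well-separated points). The paper avoids this by placing events only at the points of a fine $\eps$-net $\hat N$ with $\eps=\Theta(\Delta/\ddim)$; then the dependency degree is $|\{u\in\hat N : d_G(v,u)\le O(\Delta)\}|=2^{O(\sddim)}\cdot\poly(\ddim)$, which is what makes the LLL condition satisfiable, and finally approximates an arbitrary center $v$ and scale $\alpha$ by a nearby net point and a nearby $\alpha'$ from a $\frac1\ddim$-spaced arithmetic grid. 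Without this discretization of both the center set and the scale parameter, the union-over-$\alpha$ and the LLL application are unjustified. With these two fixes your proposal matches the paper's proof.
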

In \Cref{thm:GeneralWeakLB} we prove that if all $n$ vertex graphs admit $(\sigma,\tau)$-weak sparse partition scheme, then $\tau\ge n^{\Theta(\frac1\sigma)}$. As every $n$ vertex graph has doubling dimension at most $\log n$, it implies that \Cref{thm:ddimStrong} is tight up to second order terms. In particular, even if one replaced the diameter guarantee in \Cref{thm:ddimStrong} from strong to weak, it will still be tight up to second order terms.

An exponential improvement for \UST and \UTSP problems in the dependence on the dimension is a direct corollary from \Cref{thm:JLNRS05}.
\begin{corollary}\label{cor:ddimUTSP}
	Let $G=(V,E,w)$ be an $n$-vertex graph with doubling dimension $\ddim$. Then there is an efficient algorithm solving the \UST problem with stretch $O(\ddim^3\cdot\log n)$.
\end{corollary}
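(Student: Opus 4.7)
The plan is to derive the corollary as a direct composition of Theorem \ref{thm:ddimStrong} with Theorem \ref{thm:JLNRS05}, with a single careful choice of the free parameter $\alpha$ in the strong sparse partition scheme. First, I would invoke Theorem \ref{thm:ddimStrong} with the choice $\alpha=\ddim$ to obtain, for every scale $\Delta>0$, a strong sparse partition with parameters $\sigma = 58\,\ddim$ and $\tau = 2 \cdot \tilde O(\ddim) = \tilde O(\ddim)$. By Observation \ref{obs:StrongImplyScatt}, this scheme is in particular a $(58\,\ddim,\tilde O(\ddim))$-\emph{weak} sparse partition scheme, which is exactly the hypothesis required by Theorem \ref{thm:JLNRS05}.

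Next, I would plug these parameters into the bound $O(\tau\sigma^2\log_\tau n)$ from Theorem \ref{thm:JLNRS05}. We have $\sigma^2 = O(\ddim^2)$ and $\tau = \tilde O(\ddim)$, so $\tau\sigma^2 = \tilde O(\ddim^3)$. For the logarithmic factor, as long as $\ddim \ge 2$ we have $\log \tau = \Omega(\log \ddim)$, and therefore $\log_\tau n = O(\log n/\log \ddim)$. Combining,
\[
O(\tau \sigma^2 \log_\tau n) \;=\; \tilde O(\ddim^3)\cdot\frac{\log n}{\log \ddim} \;=\; O(\ddim^3 \log n),
\]
where the polylogarithmic-in-$\ddim$ factors hidden in $\tilde O(\cdot)$ are absorbed by the denominator $\log \ddim$. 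The degenerate case $\ddim=O(1)$ is even easier: here $\sigma,\tau = O(1)$ and Theorem \ref{thm:JLNRS05} already yields stretch $O(\log n)$.

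For efficiency, I would note that Theorem \ref{thm:ddimStrong} explicitly provides an efficiently computable partition, and Theorem \ref{thm:JLNRS05} is stated as producing a polynomial-time algorithm given such a scheme. Hence the overall \UST construction runs in polynomial time with the claimed stretch. This step is essentially bookkeeping, and I do not expect any real obstacle: the only subtlety is the parameter algebra above, in particular checking that the $\tilde O$-hidden $\polylog(\ddim)$ terms in $\tau$ are swallowed either by $\log_\tau n$ (rather than $\log n$) or by the constant in the $O(\cdot)$ bound. Once $\alpha=\ddim$ is chosen, the rest of the proof is a one-line substitution.
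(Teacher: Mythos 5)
Your proposal is correct and matches the paper's (implicit) proof: the corollary is obtained exactly by instantiating Theorem~\ref{thm:ddimStrong} at $\alpha=\Theta(\ddim)$ to get a $(O(\ddim),\tilde O(\ddim))$-weak sparse partition scheme and plugging it into Theorem~\ref{thm:JLNRS05}, where the hidden factor in $\tilde O(\ddim)$ is a single $\log\ddim$ and is indeed cancelled by the $\log_\tau n = O(\log n/\log\ddim)$ term. No gaps; the parameter bookkeeping is exactly as the paper intends.
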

Unfortunately, the doubling dimension of an induced graph $G[A]$ might be considerably larger than the doubling dimension of the original graph $G$. Thus this partition as is cannot be plugged in directly into \Cref{thm:Scattering_Implies_SPR}. We leave the construction of a solution for the \SPR problem on doubling graphs with distortion $\poly(\ddim)$ as an open problem for future work. 

The rest of this section is devoted to the proof of \Cref{thm:ddimStrong}.

\begin{proof}[Proof of \Cref{thm:ddimStrong}]
	Let $\Delta>0$ be some parameter. We will construct partition which is only $O(\Delta)$ bounded.
	As this works for every $\Delta$, eventually one can re-adjust the parameters accordingly. Let $N\subseteq X$ be a $\Delta$-net.
	Our partition will be created using the modified clustering algorithm of Miller \etal \cite{MPX13} described in \Cref{subsec:MPX}. 
	Set $\ctop=4$. 
	For each net point $t\in N$, we sample a shift $\delta_t$ according to \emph{betailed}  exponential distribution with parameters $\left(\lambda=\frac{\Delta}{\sddim},\lambda_{\top}=\ctop\cdot\Delta\right)$ .\footnote{The author is not aware of previous appearance of this distribution in the literature. The name betailed is inspired by the term ``beheaded'', as we cut off the tail of the distribution.} This distribution is the same as an exponential distribution with parameter $\lambda$, where the only difference is that any value above $\lambda_{\top}$ collapses to $\lambda_{\top}$. Formally, a variable distributed according to a betailed exponential distribution with parameters $(\lambda,\lambda_\top)$ get value in $[0,\lambda_\top]$, where the density function in $[0,\lambda_\top)$ is $f_{\lambda,\lambda_\top}(x)=\frac1\lambda e^{-\frac x\lambda}$, and the probability to get the value $\lambda_\top$ is $e^{-\frac{\lambda_\top}{\lambda}}$.
	
	As a result of the execution of \cite{MPX13} algorithm we get a clustering $\mathcal{P}$, where each cluster is connected and associated with some net point from $N$.  
	Consider some vertex $v\in V$. There is a net point $t_v\in N$ at distance at most $\Delta$ from $v$. Suppose that $v$ joined the cluster of the net point $t\in N$. Hence 
	$\delta_{t}-d_{G}(v,t)=f_v(t)\ge f_v(t_v)=\delta_{t_{v}}-d_{G}(v,t_{v})\ge-\Delta$. Therefore
	$d_{G}(v,t)\le\lambda_\top+\Delta=(\ctop+1)\cdot\Delta$. 
	By \Cref{clm:MPXshortestpath}, for every vertex $v$ in the cluster $C$ of $t$ it holds that $d_{G[C]}(v,t)=d_{G}(v,t)$. 
	It follows that $\mathcal{P}$ has strong diameter $2\cdot (\ctop+1)\cdot\Delta=10\Delta$.
	
	Fix some $\alpha\ge 1$, and let $r_{\alpha}=\frac{\ln 2}{\alpha}\cdot\Delta$. Consider an arbitrary vertex $v$ and let $B_{v,\alpha}=B_G(v,\frac{r_{\alpha}}{2})$ be the ball of radius $\frac{r_{\alpha}}{2}$ around $v$.
	Denote by $N_v\subseteq N$ the set of net points at distance at most $\left(\ctop+2\right)\cdot\Delta$ from $v$.
	For every $t\notin N_v$, $f_v(t_v)-f_v(t)\ge \left(0-d_G(v,t_v)\right)-\left(\lambda_{\top}-d_G(v,t)\right)>\Delta$. By \Cref{clm:MPXintersectionProperty}, the cluster of a net point $t\notin N_v$ will not intersect $B_G(v,\frac\Delta2)$ and in particular $B_{v,\alpha}$.
	Using the packing property (\Cref{lem:doubling_packing}), 
	\begin{eqnarray}
	|N_{v}|\le\left(\nicefrac{2\cdot\left(\ctop+2\right)\cdot\Delta}{\Delta}\right)^{\sddim}\le (3\cdot \ctop)^{\sddim}~.\label{eq:ddim-Nv}
	\end{eqnarray}
	
	\sloppy We will bound $Z_{B_{v,\alpha}}$ the number of clusters in $\mathcal{P}$ intersecting $B_{v,\alpha}$. 
	For the sake of analysis, the sample of $\delta_t$, the shift of the net point $t\in N$, will be computed in two steps: First sample a variable $\tilde{\delta}_t$ according to exponential distributed with parameter $\lambda$, secondly set $\delta_t=\min\{\tilde{\delta}_t,\ctop\cdot\Delta\}$.
	Similarly to the definition of $f_v(t)$, set $\tilde{f}_v(t)=\tilde{\delta}_t-d_G(v,t)$. 
	
	Consider the set $\{\tilde{f}_v(t)\mid t\in N_v\}$, and order the values according to decreasing order, that is we denote by $\tilde{t}_{(i)}\in N_v$ the net point corresponding to the $i$'th largest value w.r.t. $\tilde{f}_v$. Specifically $\tilde{f}_v(\tilde{t}_{(1)})\ge \tilde{f}_v(\tilde{t}_{(2)})\ge\dots$ . Note that $\tilde{t}_{(i)}$ is a random variable.
	Fix $s=\left\lceil 4\ln\left(4e\cdot\ddim^{2}\cdot\left(24\cdot\ctop\cdot\ddim\right)^{\sddim}\right)\right\rceil =\tilde{O}\left(\ddim\right)$. 
	Set $m_{\alpha}=2 s\cdot 2^{\nicefrac{\tddim}{\alpha}}$.
	\begin{claim}\label{clm:Ddim-fTildeGap}
		$\Pr[\tilde{f}_v(\tilde{t}_{(s)})-\tilde{f}_v(\tilde{t}_{(m_{\alpha}+1)})\le  r_{\alpha}]\le e^{-\frac{s}{4}}$ .
	\end{claim}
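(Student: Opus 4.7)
The plan is to use the memoryless property of the exponential distribution to reduce the claim to a binomial lower-tail estimate that is then controlled via a Chernoff bound. Concretely, I will condition on the identity $t^{\ast}=\tilde{t}_{(m_{\alpha}+1)}$, on its value $z=\tilde{f}_{v}(t^{\ast})$, and on the set $\mathcal{N}\subseteq N_{v}$ of the $m_{\alpha}+1$ net points with $\tilde{f}_{v}$-value at least $z$ (so $\mathcal{N}=\{\tilde{t}_{(1)},\dots,\tilde{t}_{(m_{\alpha}+1)}\}$ and $t^{\ast}$ is the minimizer of $\tilde{f}_{v}$ on $\mathcal{N}$). Writing $\mathcal{M}=\mathcal{N}\setminus\{t^{\ast}\}$, for each $t\in\mathcal{M}$ the condition $\tilde{f}_{v}(t)\ge z$ is just $\tilde{\delta}_{t}\ge z+d_{G}(v,t)$, and by the memoryless property the excess $\tilde{f}_{v}(t)-z=\tilde{\delta}_{t}-\bigl(z+d_{G}(v,t)\bigr)$ is distributed as $\Exp(\lambda)$, independently across $t\in\mathcal{M}$.

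Next I will translate the event of interest into a statement about these conditional excesses. Define $Y_{t}=\mathds{1}\bigl[\tilde{f}_{v}(t)-z>r_{\alpha}\bigr]$ for $t\in\mathcal{M}$. Under the above conditioning the $Y_{t}$ are i.i.d.\ Bernoulli with success probability $p=\Pr[\Exp(\lambda)>r_{\alpha}]=e^{-r_{\alpha}/\lambda}=2^{-\sddim/\alpha}$. A net point $t\in N_{v}$ contributes a value strictly above $z+r_{\alpha}$ only if it lies in $\mathcal{N}$, and $t^{\ast}$ itself contributes the value $z<z+r_{\alpha}$; therefore the event $\tilde{f}_{v}(\tilde{t}_{(s)})-\tilde{f}_{v}(t^{\ast})\le r_{\alpha}$ is equivalent to the statement that at most $s-1$ values exceed $z+r_{\alpha}$, i.e.\ $S:=\sum_{t\in\mathcal{M}}Y_{t}\le s-1$.

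Now $S$ is (conditionally) binomial with $m_{\alpha}$ trials and success probability $p$, so
$$\mathbb{E}[S]\eq m_{\alpha}\cdot p\eq\bigl(2s\cdot 2^{\sddim/\alpha}\bigr)\cdot 2^{-\sddim/\alpha}\eq 2s.$$
The standard multiplicative Chernoff lower-tail inequality with $\delta=\tfrac{1}{2}$ and $\mu=2s$ yields $\Pr[S\le s]\le e^{-\delta^{2}\mu/2}=e^{-s/4}$, and a fortiori $\Pr[S\le s-1]\le e^{-s/4}$. Since this bound does not depend on the frozen parameters $(t^{\ast},\mathcal{N},z)$, the law of total probability removes the conditioning and establishes the claim.

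The only mildly subtle point is formalising the conditioning in the first paragraph: one must check that when we jointly condition on the identity of $t^{\ast}$, on its value $z$, and on the membership of the other net points in $\mathcal{N}$, the remaining $\tilde{\delta}_{t}$'s for $t\in\mathcal{M}$ are genuinely independent $\Exp(\lambda)$-variables shifted by $z+d_{G}(v,t)$. This follows from the product form of the joint density of independent exponentials together with the memoryless property, but it is the step that must be written out carefully; the rest is routine.
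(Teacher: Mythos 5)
Your proof takes essentially the same route as the paper: condition on the identity (and value) of $\tilde{t}_{(m_\alpha+1)}$ and on the set of net points with larger $\tilde{f}_v$-value, invoke memorylessness to get i.i.d.\ exponential excesses, form Bernoulli indicators for the event that an excess exceeds $r_\alpha$, and close with the multiplicative Chernoff lower tail at $\delta=\tfrac12$, $\mu=2s$, giving $e^{-s/4}$. Your version is marginally more careful than the paper's on two small points — you explicitly separate $\mathcal{M}=\mathcal{N}\setminus\{t^*\}$ of size exactly $m_\alpha$ (the paper states $|\mathcal{N}|=m_\alpha$ but, as written with a non-strict inequality, would have $m_\alpha+1$ elements), and you note the event is precisely $S\le s-1$ rather than arguing from the complement $X>s$ — but neither change alters the bound or the structure of the argument.
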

	\begin{proof}
		We will use the law of total probability.
		Fix the net point $\tilde{t}_{(m_{\alpha}+1)}\in N_v$. Let $\mathcal{N}=\{t\in N_v\mid \tilde{f}_v(t)\ge \tilde{f}_v(\tilde{t}_{(m_{\alpha}+1)})\}$, note that $|\mathcal{N}|=m_{\alpha}$ and that we did not fixed the inner order of the points in $\mathcal{N}$. 
		For $t\in\mathcal{N}$ denote by $X_t$ the event that $\tilde{f}_v(t)-\tilde{f}_v(\tilde{t}_{(m_{\alpha}+1)})> r_{\alpha}$. By the memoryless property of exponential distribution, 
		\[
		\Pr\left[\tilde{f}_{v}(t)-\tilde{f}_{v}(\tilde{t}_{(m_{\alpha}+1)})>r_{\alpha}\mid\tilde{f}_{v}(t)\ge\tilde{f}_{v}(\tilde{t}_{(m_{\alpha}+1)})\right]\ge\Pr[\tilde{\delta}_{\alpha}>r_{\alpha}]=e^{-\nicefrac{r_{\alpha}}{\lambda}}=2^{-\nicefrac{\tddim}{\alpha}}~.
		\]
		Set $X=\sum_{t\in\mathcal{N}}X_t$. Then $\E[X]\ge m_{\alpha}\cdot 2^{-\nicefrac{\tddim}{\alpha}}=2 s$. As all $\{\tilde{f}_v(t)\}_{t\in\mathcal{N}}$ are independent, by Chernoff inequality it holds that
		\[
		\Pr[X\le\frac12\cdot\E[X]]\le e^{-\frac{\E[X]}{8}}
		< e^{-\frac{s}{4}}~.
		\]
		
		In case $X>\frac12\cdot\E[X]\ge s$, it will imply that for at least $s$ net points $t\in N_v$, $\tilde{f}_v(t)>r_\alpha+\tilde{f}_v((\tilde{t}_{(m_{\alpha}+1)})$. In particular, $\tilde{f}_v(\tilde{t}_{(s)})-\tilde{f}_v(\tilde{t}_{(m_{\alpha}+1)})> r_\alpha$. The claim now follows.
	\end{proof}
	
	Denote by $\phi_v$ the event that for at least $s$ different net points $t\in N_v$, it holds that $\delta_t=\lambda_\top$. As all $\{\delta_t\}_{t\in N_v}$ are independent,
	\begin{eqnarray}
	\Pr\left[\phi_v\right]\le{|N_{v}| \choose s}\cdot\left(e^{-\frac{\lambda_{\top}}{\lambda}}\right)^{s}\overset{(\ref{eq:ddim-Nv})}{\le}\left((3\cdot\ctop)^{\sddim}\cdot e^{-\ctop\cdot\sddim}\right)^{s}\le e^{-s\cdot\sddim}~.	\label{eq:ddim-phi-v}
	\end{eqnarray}
	Next, we incorporate the truncations into the analysis. 
	Denote by $\tilde{\psi}_{v,\alpha}$ the event that 	 $\tilde{f}_v(\tilde{t}_{(s)})-\tilde{f}_v(\tilde{t}_{(m_{\alpha}+1)})\le r_{\alpha}$. By \Cref{clm:Ddim-fTildeGap}, $\Pr[\tilde{\psi}_{v,\alpha}]\le e^{-\nicefrac{s}{4}}$.
	Similarly, denote by $\psi_{v,\alpha}$ the event that $f_v(t_{(1)})-f_v(t_{(m_{\alpha}+1)})\le  r_{\alpha}$, where $t_{(i)}$ is the net point having the $i$'th largest value w.r.t. $f_v$ in $N_v$.
	We argue that $\psi_{v,\alpha}\subseteq \tilde{\psi}_{v,\alpha}\cup\phi_v$. It will be enough to show that if both $\tilde{\psi}_{v,\alpha}$ and $\phi_v$ did not occur, neither did $\psi_{v,\alpha}$ ($\overline{\tilde{\psi}_{v,\alpha}}\wedge\overline{\phi_v}\Rightarrow\overline{\psi_{v,\alpha}}$). Indeed, consider the case that both $\tilde{\psi}_{v,\alpha}$ and $\phi_v$ did not occur. As $\phi_v$ did not occur, there is an index $i\in[1,s]$ such that $\delta_{\tilde{t}_{(i)}}=\tilde{\delta}_{\tilde{t}_{(i)}}$. For every $j\ge m_{\alpha}+1$ it holds that 
	$$f_v(t_{(1)})\ge f_v(\tilde{t}_{(i)})=\tilde{f}_v(\tilde{t}_{(i)})\ge \tilde{f}_v(\tilde{t}_{(s)})> r_{\alpha}+ \tilde{f}_v(\tilde{t}_{(m_{\alpha}+1)})\ge  r_{\alpha}+ \tilde{f}_v(\tilde{t}_{(j)})\ge   r_{\alpha}+ f_v(\tilde{t}_{(j)})~.$$
	Thus for all but at most $m_{\alpha}$ net points $t\in N_v$, it holds that $f_v(t_{(1)})-f_v(t)> r_{\alpha}$. Hence $\psi_{v,\alpha}$ did not occur.
	By \Cref{clm:MPXintersectionProperty}, it follows that $B_{v,\alpha}$ did not intersect the clusters of such net points $t\in N_v$. Therefore the number of clusters intersecting $B_{v,\alpha}$ is bounded by $Z_{B_{v,\alpha}}(\mathcal{P})\le m_{\alpha}$.
	By union bound
	\[
	\Pr_{\mathcal{P}}\left[Z_{B_{v,\alpha}}(\mathcal{P})>m_{\alpha}\right]\le\Pr_{\mathcal{P}}[\psi_{v,\alpha}]\le\Pr_{\mathcal{P}}[\phi_v\vee\tilde{\psi}_{v,\alpha}]\overset{(\ref{eq:ddim-phi-v})+\Cref{clm:Ddim-fTildeGap}}{\le}
	e^{-s\cdot\sddim}+e^{-\nicefrac{s}{4}}\le 2\cdot e^{-\nicefrac{s}{4}}~.
	\]
	
	Set $A=\{1,1+\frac{1}{\sddim},1+\frac{2}{\sddim},\dots,\ddim\}$ to be the arithmetic progression from $1$ to $\ddim$ with difference between every pair of consecutive terms being $\frac{1}{\sddim}$.
	Set $\Psi_v=\bigcup_{\alpha\in A}\psi_{v,\alpha}$ the event that $\psi_{v,\alpha}$ holds for some $\alpha\in A$.  
	By union bound $\Pr\left(\Psi_v\right)=\sum_{\alpha\in A}\Pr\left(\psi_{v,\alpha}\right)\le 2\cdot \ddim^2\cdot e^{-\nicefrac{s}{4}}$.
	
	Our next goal is to show that there is a positive probability for the event that simultaneously for every vertex $v$ and all parameters $\alpha\ge 1$, the ball $B_G(v,\frac{r_{\alpha}}{4})$ intersects at most $m_{\alpha}$ clusters.
	Set $\eps=\frac{\Delta}{4\cdot\sddim}$, and let $\hat{N}\subseteq  X$ be an $\eps$-net. 
	Set $\mathcal{A}=\{\Psi_v\}_{v\in\hat{N}}$. For a net point $v\in\hat{N}$ denote $\Gamma_v=\left\{u\in\hat{N}\mid d_G(v,u)\le 3\cdot\ctop\cdot\Delta\right\}$. For every net point $u\in \hat{N}\setminus\Gamma_v$, $N_v$ and $N_u$ are disjoint, thus $\Psi_v$ and $\Psi_u$ are independent. 
	Using the packing property (\Cref{lem:doubling_packing}), $\Psi_v$ is dependent with at most
	\[
	|\Gamma_{v}|\le\left(\frac{6\cdot\ctop\cdot\Delta}{\eps}\right)^{\sddim}=\left(24\cdot\ctop\cdot\ddim\right)^{\sddim}
	\]
	other events from $\mathcal{A}$.
	We will use the constructive version of the Lov\'asz Local Lemma by Moser and Tardos \cite{MT10}.
	\begin{lemma}[Constructive Lov\'asz Local Lemma] \label{lem:lovasz}
		Let $\mathcal{P}$ be a finite set of mutually independent random variables in a probability space. Let $\mathcal{A}$ be a set of events determined by these variables. For $A\in\mathcal{A}$ let $\Gamma(A)$ be a subset of $\mathcal{A}$ satisfying that $A$ is independent from the collection of events $\mathcal{A}\setminus(\{A\}\cup\Gamma(A))$.
		If there exist an assignment of reals $x:\mathcal{A}\rightarrow(0,1)$  such that
		\[\forall A\in \mathcal{A}~:~~\Pr[A]\le x(A)\cdot\prod_{B\in\Gamma(A)}(1-x(B))~,\]
		then there exists an assignment to the variables $\mathcal{P}$ not violating any of the events in $\mathcal{A}$. Moreover, there is an algorithm that finds such an assignment in expected time  $\sum_{A\in\mathcal{A}}\frac{x(A)}{1-x(A)}\cdot\poly \left(|\mathcal{A}|+|\mathcal{P}|\right)$.
	\end{lemma}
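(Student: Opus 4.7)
The plan is to follow the entropy-compression argument of Moser and Tardos, which establishes existence, a constructive algorithm, and the expected running time simultaneously. Define the resampling algorithm $\mathcal{M}$: first draw each variable in $\mathcal{P}$ from its distribution; while some event $A\in\mathcal{A}$ is satisfied on the current assignment, pick such an $A$ by an arbitrary fixed rule and resample every variable on which $A$ depends. If $\mathcal{M}$ halts, the output avoids every event by construction. Each iteration scans $\mathcal{A}$ in time $\poly(|\mathcal{A}|+|\mathcal{P}|)$, so the theorem reduces to bounding the expected number of times each event $A$ is resampled by $\frac{x(A)}{1-x(A)}$.

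The main device is the witness tree of a resampling step. Record the log $A_1, A_2, \dots$ of events resampled by $\mathcal{M}$. For each index $t$, build a rooted labeled tree $\tau_t$ by a backward sweep: begin with a root labeled $A_t$; for $s=t-1, t-2, \dots, 1$, if $A_s$ shares a variable with the label $A_v$ of some node $v$ already in the tree (equivalently $A_s\in\Gamma(A_v)\cup\{A_v\}$), attach a new leaf labeled $A_s$ to the deepest such $v$, otherwise skip. The deepest-attachment rule forces $\tau_t$ to be \emph{proper}: siblings receive distinct labels, and every child's label lies in $\Gamma(\text{parent})\cup\{\text{parent}\}$. Distinct steps produce distinct witness trees, because $\tau_t$ together with the fixed selection rule of $\mathcal{M}$ permits recovery of the time index $t$ from the log.

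The crucial probabilistic lemma states that for any fixed proper tree $\tau$, the probability that $\tau_t=\tau$ for some $t$ is at most $\prod_{v\in\tau}\Pr[A_v]$. I would couple $\mathcal{M}$ with an infinite table of pre-sampled values (one independent column per variable in $\mathcal{P}$), so that the $j$-th draw of a variable reads the $j$-th entry of its column. One then designs a ``$\tau$-check'' which traverses $\tau$ bottom-up and verifies that $A_v$ holds at each node $v$ when its variables are read from prescribed table entries. The propriety of $\tau$ forces the entries queried at different nodes to be disjoint, so the per-node events are mutually independent and their joint probability is $\prod_v\Pr[A_v]$; a coupling argument then shows that whenever $\tau$ appears as a witness tree of $\mathcal{M}$, the corresponding $\tau$-check must succeed, yielding the product bound.

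Finally, summing $\prod_v\Pr[A_v]\le\prod_v x(A_v)\prod_{B\in\Gamma(A_v)}(1-x(B))$ over all proper witness trees rooted at $A$ matches exactly the expected size of a Galton--Watson branching process whose root carries label $A$ and in which each node labeled $B$ independently spawns, for every $C\in\Gamma(B)\cup\{B\}$, a child labeled $C$ with probability $x(C)$. A short generating-function calculation shows that this expectation equals $\frac{x(A)}{1-x(A)}$, completing the bound on resamplings of $A$; summing over $A\in\mathcal{A}$ and multiplying by the per-iteration cost gives the stated expected running time. The primary obstacle is the $\tau$-check lemma: aligning the execution log of $\mathcal{M}$ with the pre-sample table and certifying that the witness-tree construction captures exactly those resamplings which could have made $A_t$ violated at time $t$. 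Everything else reduces to clean combinatorial bookkeeping on top of this coupling and the branching-process computation.
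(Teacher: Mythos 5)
The paper does not prove \Cref{lem:lovasz} at all: it is imported verbatim from Moser and Tardos \cite{MT10} and used as a black box in the proof of \Cref{thm:ddimStrong}, so there is no internal argument to compare against. Your sketch is essentially the original Moser--Tardos proof — the resample-while-violated algorithm, witness trees built by the backward deepest-attachment sweep (whose propriety and injectivity over time steps you correctly note), the pre-sampled resampling-table coupling giving $\Pr[\tau\text{ occurs}]\le\prod_{v\in\tau}\Pr[A_{[v]}]$, and the comparison with a multitype Galton--Watson process — and this outline is sound; it also correctly works in the variable-sharing dependency setting, which is the regime in which the lemma is applied here (the abstract ``$A$ independent of $\mathcal{A}\setminus(\{A\}\cup\Gamma(A))$'' phrasing in the statement is, for the algorithmic version, realized by taking $\Gamma(A)$ to contain every event sharing a variable with $A$, as your construction implicitly assumes). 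One step is misstated, though the conclusion survives: the weighted sum $\sum_{\tau}\prod_{v}x(A_{[v]})\prod_{B\in\Gamma(A_{[v]})}(1-x(B))$ over proper trees rooted at $A$ is \emph{not} the expected size of the branching process. The correct accounting is that the Galton--Watson process you describe generates a fixed proper tree $\tau$ with probability $p_\tau=\frac{1-x(A)}{x(A)}\prod_{v\in\tau}x'(A_{[v]})$, where $x'(B)=x(B)\prod_{C\in\Gamma(B)}(1-x(C))$; since $\sum_\tau p_\tau\le 1$, the sum in question is at most $\frac{x(A)}{1-x(A)}$, which combined with the hypothesis $\Pr[A_{[v]}]\le x'(A_{[v]})$ and the witness-tree lemma bounds the expected number of resamplings of $A$ as required. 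With that correction, and your observation that each iteration costs $\poly(|\mathcal{A}|+|\mathcal{P}|)$, the stated expected running time follows exactly as in \cite{MT10}.
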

	
	For $v\in \hat{N}$ set $x(\Psi_v)=e\cdot \Pr[\Psi_v]$. For every $v\in \hat{N}$ it holds that
	\begin{align*}
	x(\Psi_{v})\Pi_{u\in\Gamma_{v}}(1-x(\Psi_{u})) & \ge\Pr\left[\Psi_{v}\right]\cdot e\cdot\left(1-2e\cdot\ddim^2\cdot e^{-\nicefrac{s}{4}}\right)^{|\Gamma_{v}|}\\
	& \ge\Pr\left[\Psi_{v}\right]\cdot e\cdot\left(1-\frac{1}{2}\cdot\left(24\cdot\ctop\cdot\ddim\right)^{-\sddim}\right)^{\left(24\cdot\ctop\cdot \sddim\right)^{\tddim}} \ge\Pr\left[\Psi_{v}\right]~.
	\end{align*}
	By \Cref{lem:lovasz} we can efficiently choose $\{\delta_v\}_{v\in N}$ such that none of the events $\{\Psi_v\}_{v\in\hat{N}}$ occur. Consider the partition $\mathcal{P}$ obtained by choosing this values $\{\delta_v\}_{v\in N}$. 
	Fix some parameter $\alpha>1$ and some vertex $u$.
	There is a net point $v\in \hat{N}$ at distance at most $\eps\le \frac {r_{\alpha}}{4}$  from $u$.
	If $\alpha\le \ddim$, pick $\alpha'\in A$ such that $\alpha'\le \alpha\le \alpha'+\frac{1}{\sddim}$. Else, if  $\alpha> \ddim$, set $\alpha'= \ddim$. 
	The ball of radius $\frac{r_{\alpha}}{4}$ around $u$ is contained in a ball of radius $\frac{r_{\alpha'}}{2}$ around $v$.
	As $\psi_{v,\alpha'}\subseteq \Psi_{v}$ did not occur, it holds that $Z_{B_{v,\alpha'}}(\mathcal{P})\le m_{\alpha'}$. Therefore 
	$Z_{B_G(u,\frac {r_{\alpha}}{4})}(\mathcal{P})\le Z_{B_{v,\alpha'}}(\mathcal{P})\le
	m_{\alpha'}=2s\cdot2^{\nicefrac{\tddim}{\alpha'}}\le4s\cdot2^{\nicefrac{\tddim}{\alpha}}=2^{\nicefrac{\tddim}{\alpha}}\cdot\tilde{O}(\ddim)$.
	The padding padding parameter is 
	$\frac{2\cdot(\ctop+1)\cdot\Delta}{r_{\alpha}/4}=\frac{40}{\ln2}\cdot \alpha\le58\alpha$.
	We conclude that for every $\alpha\ge 1$, $\mathcal{P}$ is  $\left(58\alpha,2^{\nicefrac{\tddim}{\alpha}}\cdot\tilde{O}(\ddim),10\Delta\right)$-strong sparse partition.
	The theorem follows.
\end{proof}

\section{Euclidean Space}\label{sec:Euclidean}
Consider the $d$ dimensional Euclidean space $(\R^d,\|\cdot\|_2)$. Partitions of this space are well studied. We will study the Euclidean space from the lenses of sparse and scattering partitions. Weak sparse partitions are defined in the natural way.
A cluster $C$ is connected if for every pair of vectors $\vec{x},\vec{y}\in C$, there is a continuous function $f:[0,1]\rightarrow \mathbb{R}^d$ such that $f(0)=\vec{x}$, $f(1)=\vec{y}$, and the entire image of $f$ is inside $C$.
The shortest path between two vectors $\vec{x},\vec{y}$ is simply the interval between them $\{\vec{x}+t\cdot(\vec{y}-\vec{x})\mid t\in[0,1]\}$. 
A partition $\mathcal{P}$ is $(\sigma,\tau,\Delta)$-scattering if each cluster is connected and has weak diameter at most $\Delta$, and for every pair of points at distance at most $\frac\Delta\sigma$, the interval between them intersects at most $\tau$ clusters.

Interestingly, we show that $(\R^d,\|\cdot\|_2)$ has scattering partitions with significantly better parameters compared to the weak sparse partitions it admits. Specifically, for constant padding parameter $\sigma=\Omega(1)$ there are scattering partitions where the number of intersections $\tau=O(d)$ is linear in $d$, while every weak sparse partition with such padding parameter $\sigma$ will have exponential number of intersections $\tau=2^{\Omega(d)}$, that is not better than general space with doubling dimension $d$.

\begin{theorem}\label{thm:L2scattering}
	The Euclidean space $(\R^d,\|\cdot\|_2)$ is $(1,2d)$-scatterable.
\end{theorem}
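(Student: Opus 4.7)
The plan is to show that the standard axis-aligned cube grid, with an appropriately tuned side length, already gives a $(1,2d,\Delta)$-scattering partition. Given $\Delta>0$, set $s=\Delta/\sqrt d$ and define the partition $\mathcal{P}=\{C_{\vec k}\}_{\vec k\in\mathbb{Z}^d}$ by the half-open cubes $C_{\vec k}=\prod_{i=1}^d[k_is,(k_i+1)s)$. Each $C_{\vec k}$ is convex, hence connected, and has $\ell_2$-diameter exactly $s\sqrt d=\Delta$, so the weak-diameter condition is satisfied automatically.

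To check the scattering property, I would fix an arbitrary segment from $\vec x$ to $\vec y$ with $\|\vec y-\vec x\|_2\le\Delta$ and parametrize it as $\gamma(t)=\vec x+t(\vec y-\vec x)$, $t\in[0,1]$. The cube containing $\gamma(t)$ is labeled by the integer tuple $\vec k(t)=(\lfloor\gamma_i(t)/s\rfloor)_{i=1}^d$. For each coordinate $i$, the map $t\mapsto\lfloor\gamma_i(t)/s\rfloor$ is a monotone step function and changes value exactly $n_i:=|\lfloor y_i/s\rfloor-\lfloor x_i/s\rfloor|$ times on $[0,1]$. Since any transition between distinct cubes along the segment is witnessed by a change in at least one coordinate of $\vec k(t)$, the total number of cubes intersected is at most $1+\sum_{i=1}^d n_i$; simultaneous crossings at grid corners can only reduce this count.

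The key estimate comes from the observation that $\lfloor a\rfloor>a-1$ strictly for every $a\in\mathbb{R}$, so that
\[
n_i \;=\; \bigl|\lfloor y_i/s\rfloor-\lfloor x_i/s\rfloor\bigr| \;<\; \frac{|y_i-x_i|}{s}+1.
\]
Summing over $i$ and invoking Cauchy-Schwarz $\sum_i|y_i-x_i|\le\sqrt d\cdot\|\vec y-\vec x\|_2\le\sqrt d\cdot\Delta=sd$,
\[
\sum_{i=1}^d n_i \;<\; \sum_{i=1}^d\!\left(\frac{|y_i-x_i|}{s}+1\right) \;\le\; d+d \;=\; 2d.
\]
Since $\sum_i n_i$ is a non-negative integer strictly smaller than $2d$, it is at most $2d-1$, so the segment meets at most $2d$ cubes, which is exactly the bound we want.

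The computation itself is short; the one delicate point is the strictness of the inequality $n_i<|y_i-x_i|/s+1$ combined with the integrality of $\sum_i n_i$, which is precisely what shaves the naive count $2d+1$ down to $2d$. Boundary cases (segment endpoints lying on cube faces, passage through corners where several floors change simultaneously) are automatically absorbed into the bound once the cubes are taken half-open and the transitions are counted via the floor function.
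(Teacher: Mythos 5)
Your proof is correct and follows essentially the same route as the paper's: both use the axis-aligned cube grid (the paper scales the segment to length $\le\sqrt d$ and uses unit cubes, you scale the cube side to $\Delta/\sqrt d$), count per-coordinate hyperplane crossings, sum via the $\ell_1$--$\ell_2$ inequality to get a strict bound just below $2d+1$, and close by integrality. The only cosmetic difference is that the paper bounds each coordinate's crossings by $\lceil|c_i|\rceil$ whereas you use $|\lfloor y_i/s\rfloor-\lfloor x_i/s\rfloor|<|y_i-x_i|/s+1$; both yield the same strictness and the same final count of $2d$.
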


\begin{theorem}\label{thm:L2LB}
	Suppose that the space $(\R^d,\|\cdot\|_2)$ admits a $(\sigma,\tau)$-weak sparse partition scheme. Then $\tau>(1+\frac{1}{2\sigma})^d$.
\end{theorem}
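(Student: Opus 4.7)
The plan is to mimic the proof of \Cref{thm:GeneralWeakLB}, replacing the combinatorial vertex expansion of expanders with the volumetric expansion supplied by the Brunn--Minkowski inequality. Fix $\Delta>0$ and let $\mathcal{P}$ be the $(\sigma,\tau,\Delta)$-weak sparse partition of $(\R^d,\|\cdot\|_2)$ produced by the scheme; write $A_r := \{x\in\R^d : d(x,A)\le r\}$ for the $r$-enlargement of a set $A$. A cluster $A$ intersects the ball $B(v,\Delta/\sigma)$ exactly when $v\in A_{\Delta/\sigma}$, so the problem reduces to averaging $\sum_{A\in\mathcal{P}}\mathds{1}[v\in A_{\Delta/\sigma}]$ over a cleverly chosen random $v$.

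The key expansion bound is the following: each cluster $A$ has weak diameter at most $\Delta$, so $A\subseteq B(x,\Delta)$ for any $x\in A$, and hence $\vol(A)\le \omega_d\Delta^d$ where $\omega_d:=\vol(B(0,1))$. Brunn--Minkowski applied to $A$ and $B(0,\Delta/\sigma)$ then gives
\[
\vol(A_{\Delta/\sigma})^{1/d}\;\ge\;\vol(A)^{1/d}+\omega_d^{1/d}\cdot\tfrac{\Delta}{\sigma}\;\ge\;\vol(A)^{1/d}\bigl(1+\tfrac{1}{\sigma}\bigr),
\]
where the last inequality uses $\omega_d^{1/d}\ge\vol(A)^{1/d}/\Delta$; raising to the $d$-th power gives $\vol(A_{\Delta/\sigma})\ge(1+1/\sigma)^d\,\vol(A)$, the Euclidean analogue of the vertex-expansion step in \Cref{thm:GeneralWeakLB}.

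Since $\R^d$ is unbounded, the averaging is localized to a large cube $Q_R:=[-R,R]^d$. Sampling $v$ uniformly from $Q_R$, the expected number of clusters intersecting $B(v,\Delta/\sigma)$ is at most $\tau$ and equals $\sum_{A\in\mathcal{P}}\vol(A_{\Delta/\sigma}\cap Q_R)/(2R)^d$. Restrict the sum to the ``deep-interior'' family $\mathcal{P}':=\{A\in\mathcal{P}:A\subseteq Q_{R-\Delta/\sigma}\}$, for which $A_{\Delta/\sigma}\subseteq Q_R$. Any cluster meeting $Q_{R-\Delta-\Delta/\sigma}$ sits in a ball of radius $\Delta$ around one of its own points and therefore lies inside $Q_{R-\Delta/\sigma}$, so $\sum_{A\in\mathcal{P}'}\vol(A)\ge(2R-2\Delta-2\Delta/\sigma)^d$. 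Combining with the expansion bound,
\[
\tau\;\ge\;\bigl(1+\tfrac{1}{\sigma}\bigr)^d\cdot\frac{(2R-2\Delta-2\Delta/\sigma)^d}{(2R)^d}\;\xrightarrow[R\to\infty]{}\;\bigl(1+\tfrac{1}{\sigma}\bigr)^d\;>\;\bigl(1+\tfrac{1}{2\sigma}\bigr)^d,
\]
as claimed. The only delicate point is the boundary accounting above; a minor measurability caveat (the clusters must be Lebesgue measurable) is harmless, since one may replace each cluster by its closure without altering its volume.
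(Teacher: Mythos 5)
Your proposal is correct and follows essentially the same route as the paper: Brunn--Minkowski gives the volumetric expansion $\vol(A_{\Delta/\sigma})\ge(1+\tfrac1\sigma)^d\vol(A)$ for a cluster of weak diameter $\le\Delta$, and an averaging argument over a large bounded region yields the lower bound on $\tau$. The only cosmetic difference is the localization: the paper averages over a fixed ball $B(\vec 0,4\sigma\Delta)$ and absorbs the boundary loss into the constant (arriving at $(1+\tfrac1\sigma)^d(1-\tfrac1{4\sigma})^d$), whereas you average over a cube $Q_R$ and let $R\to\infty$, which makes the boundary term vanish and in fact yields the marginally stronger conclusion $\tau\ge(1+\tfrac1\sigma)^d$.
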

Another way to represent the parameters in \Cref{thm:L2LB} is  $\sigma>\frac{1}{2}\cdot\frac{1}{\tau^{\frac{1}{d}}-1}>\frac{1}{2}\cdot\frac{1}{1+\frac{2\ln\tau}{d}-1}=\frac{d}{4\ln\tau}$, where the second inequality holds as $\tau^{\frac{1}{d}}=e^{\frac{\ln\tau}{d}}<1+\frac{2\ln\tau}{d}$ (using that $e^{x}<1+2x$ for $x\in(0,1)$).
Note that in order to create a partition with at most polynomially many intersections, the padding parameter must be essentially linear in $d$.

\subsection{Scattering Partitions for Euclidean Space: Proof of \Cref{thm:L2scattering}}
We will prove that $(\R^d,\|\cdot\|_2)$ admits an $(1,2d,\sqrt{d})$-scattering partition.
By scaling, this will imply the general theorem.
Define $\mathcal{P}$ to be the natural partition according to axis parallel hyperplanes. That is for every $\vec{a}=(a_1,\dots,a_d)\in\Z^d$, we will have the cluster $C_{\vec{a}}=\{\vec{x}=(x_1,\dots,x_d)~\mid~\forall i,~a_i\le x_i<a_i+1\}$. It is straightforward that this partition has diameter $\sqrt{d}$.
Note that in this partition, an arbitrarily small ball centered in a vector with integer coordinates will intersect $2^d$ different clusters. Nevertheless, intervals will intersect only a small number of clusters.

Consider a pair of vectors $\vec{a},\vec{b}$ at distance at most $\|\vec{a}-\vec{b}\|_2\le \sqrt{d}$. Denote by $\vec{c}=\vec{b}-\vec{a}$ their difference, and thus the shortest path between them is the interval $I=\{\vec{a}+t\cdot\vec{c}~\mid~t\in[0,1]\}$.
Denote by $h_i^n=\{\vec{x}=(x_1,\dots,x_d)~\mid~x_i=n\}$ the $i$-axis parallel hyperplane at height $n$. Denote by $\mathcal{H}_i=\{h_i^n\mid n\in\Z\}$ the set of all $i$-axis parallel hyperplanes at integer heights.
We say that the interval $I$ crosses $h_i^n$ if there are two vectors $\vec{x},\vec{y}\in I$ s.t. $(\vec{x})_i\le n<(\vec{y})_i$.
Set $X_i$ to be the number of hyperplanes in $H_i$ which are crossed by $I$.
The number of clusters intersecting  $I$ is bounded by the number of axis parallel hyperplanes $I$ crosses, plus the initial cluster containing $\vec{a}$. That is $Z_I(\mathcal{P})\le 1+\sum_{i=1}^d X_i$.

Set $\vec{c}=(c_1,\dots,c_d)$. As the projection of $I$ on the $i$`th coordinate is an interval of length $|c_i|$, it holds that $X_i\le \left\lceil |c_{i}|\right\rceil$. The total number of intersections is bounded by 
\[
Z_I(\mathcal{P})\le 1+\sum_{i=1}^d X_i\le1+\sum_{i=1}^{d}\left\lceil |c_{i}|\right\rceil < d+1+\left\Vert \vec{c}\right\Vert _{1}\overset{(*)}{\le} d+1+\sqrt{d}\cdot\left\Vert \vec{b}-\vec{a}\right\Vert _{2}=2d+1~,
\]
where the inequality $^{(*)}$ holds as for every vector $\vec{x}\in\R^d$, $\|\vec{x}\|_1\le\sqrt{d}\cdot \|\vec{x}\|_1$. As the number of intersection is an integer, we conclude $Z_I(\mathcal{P})\le 2d$. The theorem follows.\QED
\begin{remark}
	The analysis of the partition above is tight. Indeed, consider the interval between the vectors
	$\vec{a}=(-\eps,-2\eps,-3\eps,\dots,-d\cdot\eps)$ and $\vec{b}=(\eps,1+\eps,1+\eps,\dots,1+\eps)$ for small enough $\eps$. The number of clusters the interval $[\vec{a},\vec{b}]$ intersects is exactly $2d$.
\end{remark}

\subsection{Lower Bound on Euclidean Weak Sparse Partitions: Proof of \Cref{thm:L2LB}}
The proof is in $(\R^d,\|\cdot\|_2)$ with Lebesgue measure, we will omit unnecessary notation.
Fix an arbitrary $\Delta>0$ and set $c=4\sigma$. Instead of partitioning the entire space, we will use the assumption to produce a $(\sigma,\tau,\Delta)$-weak sparse partition $\mathcal{P}$ of $B(\vec{0},c\cdot\Delta)$, the origin centered ball with radius $c\cdot\Delta$.
Let $r=\frac{\Delta}{\sigma}$.
Given a set $A$, denote by $A_r=\{x+y\mid x\in A,~y\in \R^d,~\|y\|_2\le r\}$ the set of all points at distance at most $r$ from $A$.
Note that $A_r$ is the Minkowski sum of $A$ and $B(\vec{0},r)$. By the Brunn-Minkowski Theorem (see e.g. \cite{Gar02} Corollary 5.3.), it holds that 
$$\vol^{\frac1d}(A_r)\ge \vol^{\frac1d}(A)+\vol^{\frac1d}(B(\vec{0},r))~.$$
Consider a cluster $C\in \mathcal{P}$.
As the diameter of $C$ is bounded by $\Delta$, $C$ is contained in a ball of radius $\Delta$. Therefore
$\vol(B(\vec{0},r))=(\frac r\Delta)^{d}\cdot \vol(B(\vec{0},\Delta))\ge (\frac 1\sigma)^{d}\cdot \vol(C)$.
We conclude, that for every $C\in\mathcal{P}$ it holds that
\begin{align*}
\vol(C_{r}) & \ge\left(\vol^{\frac{1}{d}}(C)+\vol^{\frac{1}{d}}(B(\vec{0},r))\right)^{d}\\
& \ge\left(\vol^{\frac{1}{d}}(C)+\frac{1}{\sigma}\cdot\vol^{\frac{1}{d}}(C)\right)^{d}=\left(1+\frac{1}{\sigma}\right)^{d}\cdot\vol(C)~.
\end{align*}

\sloppy The proof of the theorem will be concluded using the probabilistic method.
Set $\mathcal{P}'=\{C\cap {B(\vec{0},(c-1)\Delta)}\mid C\in\mathcal{P}\}$ to be the set consisting of $\mathcal{P}$ clusters limited to $B(\vec{0},(c-1)\Delta)$.
Sample a point $x$ from the ball $B(\vec{0},c\cdot\Delta)$ uniformly at random. Denote by $X$ the number of clusters in $\mathcal{P}'$ the ball $B(x,r)$ intersects. In other words, $X=\left|\left\{C\in\mathcal{P}'\mid x\in C_r \right\}\right|$.
For $C\in\mathcal{P}'$, denote by $X_C$ an indicator for the event $x\in C_r$. Then $\Pr[X_C=1]=\frac{\vol\left(C_r\right)}{\vol\left(B(\vec{0},c\cdot\Delta)\right)}$.
Thus
\begin{align*}
\mathbb{E}\left[X\right] & =\sum_{C\in\mathcal{P}'}\mathbb{E}\left[X_{C}\right]=\sum_{C\in\mathcal{P}'}\frac{\vol\left(C_{r}\right)}{\vol\left(B(\vec{0},c\cdot\Delta)\right)}\ge\left(1+\frac{1}{\sigma}\right)^{d}\cdot\sum_{C\in\mathcal{P}'}\frac{\vol\left(C\right)}{\vol\left(B(\vec{0},c\cdot\Delta)\right)}\\
& =\left(1+\frac{1}{\sigma}\right)^{d}\cdot\frac{\vol\left(B(\vec{0},(c-1)\cdot\Delta\right)}{\vol\left(B(\vec{0},c\cdot\Delta\right)}=\left(1+\frac{1}{\sigma}\right)^{d}\cdot\left(1-\frac{1}{c}\right)^{d}>\left(1+\frac{1}{2\sigma}\right)^{d}~.
\end{align*}
By an averaging argument, there is a point $x\in B(\vec{0},c\cdot\Delta)$ such that $B(x,r)$ intersects more than $\left(1+\frac{1}{2\sigma}\right)^{d}$ clusters from $\mathcal{P}'$. In particular, this ball intersects more than $\left(1+\frac{1}{2\sigma}\right)^{d}$ clusters from $\mathcal{P}$, as required. \QED

\section{Graphs with bounded \SPD}\label{sec:SPD}
In the preliminaries (\Cref{sec:prem}), and in \Cref{foot:SPD} we gave a recursive definition for \SPD. Below we provide an alternative and more applicable definition.
\begin{definition}[Shortest Path Decomposition\label{def:SPD} (\SPD) \cite{AFGN23}]
	Given a weighted graph $G=(V,E,w)$, an \SPD of depth $\rho$ is a pair
	$\left\{ \mathcal{X},\mathcal{Q}\right\} $, where $\mathcal{X}$ is a
	collection $\mathcal{X}_{1},\dots,\mathcal{X}_{\rho}$ of partial partitions of
	$V$,\footnote{That is for every $X\in \mathcal{X}_i$, $X\subseteq V$, and
		for every different subsets $X,X'\in\mathcal{X}_{i}$, $X\cap
		X'=\emptyset$.} and $\mathcal{Q}$ is a collection of sets of paths
	$\mathcal{Q}_{1},\dots,\mathcal{Q}_{\rho}$, where $\mathcal{X}_1=\{V\}$,
	$\mathcal{X}_\rho=\mathcal{Q}_\rho$, and the following properties hold:
	\begin{enumerate}
		\item For every $1\leq i\leq \rho$ and every subset $X\in\mathcal{X}_{i}$,
		there exists a unique path $Q_X\in\mathcal{Q}_{i}$ such that $Q_{X}$
		is a shortest path in $G[X]$.
		\item For every $2\leq i\leq \rho$, $\mathcal{X}_i$ consists of all
		connected components of $G[X\setminus Q_{X}]$ over all $X\in\mathcal{X}_{i-1}$.
	\end{enumerate}
\end{definition}
We say that all the paths in $\mathcal{Q}_i$ are deleted at level $i$. In this section we provide two different constructions of sparse partition for graphs with \SPDdepth $\rho$. 
First we construct an $(O(\rho),O(\rho^2))$-strong sparse partition scheme. This partition is constructed using the \cite{MPX13} clustering algorithm. Interestingly, unlike all previous executions of \cite{MPX13}, the shifts $\delta_v$ are chosen deterministically. The proof appears in \Cref{subsec:SPDstrong}.
\begin{theorem}\label{thm:SPDstrong}
	Let $G=(V,E,w)$ be a graph with \SPD $\left\{ \mathcal{X},\mathcal{Q}\right\} $ of depth $\rho$. Then $G$ admits a $(O(\rho),O(\rho^2))$-strong sparse partition scheme.
\end{theorem}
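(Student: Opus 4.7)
The plan is to construct the partition via the deterministic version of the Miller--Peng--Xu clustering algorithm (\Cref{subsec:MPX}), with centers and shifts read off the SPD $\{\mathcal{X},\mathcal{Q}\}$. Fix the target scale $\Delta$, and choose parameters $\eta=\Theta(1/\rho)$ and $\beta=2\eta$; this determines the padding $\sigma=\Theta(\rho)$ through $r:=\eta\Delta/4\le\Delta/\sigma$. For every level $i$ and every path $Q\in\mathcal{Q}_i$ I would take an $\eta\Delta$-net $N_i^Q\subseteq Q$, set $N_i=\bigcup_{Q\in\mathcal{Q}_i}N_i^Q$, and assign the deterministic shift $\delta_t=(\rho-i+1)\beta\Delta$ to every $t\in N_i$. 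Running MPX13 with center set $N=\bigcup_i N_i$ and these shifts yields the partition $\mathcal{P}$.

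For the diameter bound, every vertex $v$ is deleted at a unique SPD level $i_v$ with $v\in Q_{X_{i_v}^v}$, so the nearest net point $t_v^{i_v}\in N_{i_v}$ satisfies $d_G(v,t_v^{i_v})\le\eta\Delta$ and hence $f_v(t_v^{i_v})\ge\delta_{i_v}-\eta\Delta$. Since the center $v$ joins maximises $f_v$, its distance to $v$ is at most $\delta_1-\delta_{i_v}+\eta\Delta\le\rho\beta\Delta=O(\Delta)$, and \Cref{clm:MPXshortestpath} promotes this to a strong-diameter bound of $O(\Delta)$.

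The heart of the argument is the structural lemma: \emph{if a vertex $u$ joins a center $t\in N_j$, then $t$ lies on the path $Q_{X_j^u}$ deleted from $u$'s own level-$j$ SPD component.} Indeed, the shortest $u$--$t$ path in $G$ lies inside $u$'s cluster by \Cref{clm:MPXshortestpath}; if it exited $X_j^u$ through a vertex $w$ with $i_w<j$, then $w$'s nearest net point $t_w^{i_w}$ would have shift $\delta_{i_w}\ge\delta_j+\beta\Delta$, and a short computation using $\beta>\eta$ gives $f_w(t_w^{i_w})>f_w(t)$, contradicting $w\in C_t$.

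For sparsity I would fix $B=B_G(v,r)$, let $i_B:=\min_{u\in B}i_u$ attained by some $u^*\in B$, and use $t_{u^*}^{i_B}$ as a candidate to bound $f_v(t_v)\ge\delta_{i_B}-r-\eta\Delta$. Combining with \Cref{clm:MPXintersectionProperty}, every center $t$ at level $j$ whose cluster intersects $B$ satisfies $d_G(v,t)\le(i_B-j)\beta\Delta+3r+\eta\Delta$, and the choice $r<(\beta-\eta)\Delta/3$ then forces $j\le i_B$. For each such $j$, a repeated application of the structural lemma to any two $B$-vertices $u_1,u_2$ joining level-$j$ centers shows their shortest path in $G$ cannot pass through a vertex deleted below level $j$, so all of $B$ lies in a common level-$j$ component $X_j^B$ and every intersecting level-$j$ center lies on the single path $Q_j:=Q_{X_j^B}$. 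Routing from $v$ to any such $t$ through the corresponding $u\in B$ yields $d_{G[X_j^B]}(v,t)\le r+(i_u-j)\beta\Delta+\eta\Delta\le r+(\rho-j)\beta\Delta+\eta\Delta$, with both sub-paths kept inside $X_j^B$ by the structural lemma. Projecting $v$ onto $Q_j$ and using that consecutive net points on $Q_j$ are more than $\eta\Delta$ apart in the shortest-path metric of $G[X_j^B]$, the level-$j$ count becomes $O(\rho-j+1)$; summing over $j=1,\dots,i_B$ gives the target $O(\rho^2)$ bound. The main obstacle is the discrepancy between $G$-distances and $G[X_j^B]$-distances, which the structural lemma bridges by ruling out shortcuts through vertices deleted at lower levels once $r$ is smaller than the shift gap $\beta\Delta$.
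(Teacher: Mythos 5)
Your construction follows the same high-level strategy as the paper's: run the deterministic \cite{MPX13} clustering with centers drawn from nets on the SPD paths and shifts decreasing with depth, invoke \Cref{clm:MPXshortestpath} for the strong diameter, and combine a structural claim with \Cref{clm:MPXintersectionProperty} and \Cref{lem:SPDpathNet} to bound the number of intersecting clusters per level. The parametrization differs: you use arithmetic shifts $\delta_t=(\rho-i+1)\beta\Delta$ with a uniform $\eta\Delta$-net on every path, whereas the paper uses geometric shifts $\alpha_i=(1+2\eps)^{\rho+1-i}\Delta$ with level-scaled net spacings $\beta_i=\eps\alpha_{i+1}$. Consequently your per-level count grows as $O(\rho-j+1)$ while the paper's is a flat $2\rho+4$, but both sum to $O(\rho^2)$ over the at most $\rho$ relevant levels, so the final parameters match.

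One step is imprecisely justified. You assert that ``a repeated application of the structural lemma to any two $B$-vertices $u_1,u_2$ joining level-$j$ centers shows their shortest path in $G$ cannot pass through a vertex deleted below level $j$, so all of $B$ lies in a common level-$j$ component.'' The structural lemma constrains the shortest path from a vertex $u$ to the \emph{center $u$ joins}, not the path between two arbitrary ball vertices, so it does not apply to the $u_1$--$u_2$ path. The conclusion you need is nonetheless true and follows directly from the minimality of $i_B$: for any $u\in B$, every vertex on the shortest $v$--$u$ path lies in $B$ (its distance to $v$ is at most $d_G(v,u)\le r$), and if $v,u$ were in different level-$j$ components for some $j\le i_B$, that path would have to contain a vertex $w$ with $i_w<j\le i_B$, contradicting the definition of $i_B$. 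With the cited justification replaced by this observation, the remainder of the argument goes through.
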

A graph with bounded \SPDdepth might have an induced subgraph with much larger \SPDdepth, thus we cannot apply \Cref{thm:Scattering_Implies_SPR}.
Nevertheless, graphs with pathwidth $\rho$ have \SPDdepth $\rho+1$, and also all their subgraphs have pathwidth $\rho$.  
Using \Cref{obs:StrongImplyScatt} and \Cref{thm:Scattering_Implies_SPR} we conclude:
\begin{corollary}\label{cor:pathwidth-SPR}
	Given a weighted graph $G=(V,E,w)$ with pathwidth $\rho$, and a set $K$ of terminals, there is an efficient algorithms that returns a solution to the \SPR problem with distortion $O(\rho^9)$.
\end{corollary}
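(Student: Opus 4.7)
The plan is to chain together the building blocks that have already been established earlier in the paper. First I would observe the crucial hereditary property: since pathwidth is a minor-monotone parameter (or more simply, since a path decomposition of $G$ restricted to any $A\subseteq V$ gives a valid path decomposition of $G[A]$), every induced subgraph $G[A]$ of a pathwidth-$\rho$ graph again has pathwidth at most $\rho$. As noted in the preliminaries, pathwidth $\rho$ implies \SPDdepth at most $\rho+1=O(\rho)$, so every induced subgraph $G[A]$ of $G$ has \SPDdepth $O(\rho)$.

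Next I would apply \Cref{thm:SPDstrong} to each such $G[A]$: it admits an $\left(O(\rho),O(\rho^2)\right)$-strong sparse partition scheme. By \Cref{obs:StrongImplyScatt}, any $\left(O(\rho),O(\rho^2),\Delta\right)$-strong sparse partition is simultaneously an $\left(O(\rho),O(\rho^2),\Delta\right)$-scattering partition. Then \Cref{obs:beta1} converts this into a $\left(1,O(\rho)\cdot O(\rho^2),\Delta\right)=\left(1,O(\rho^3),\Delta\right)$-scattering partition. Thus for every $A\subseteq V$, the induced subgraph $G[A]$ is $\left(1,O(\rho^3)\right)$-scatterable, and the associated partitions can be computed efficiently.

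Finally I would invoke \Cref{thm:Scattering_Implies_SPR} with parameter $\tau=O(\rho^3)$: this produces, in polynomial time, a solution to the \SPR problem on $G$ with terminal set $K$ whose distortion is $O(\tau^3)=O(\rho^9)$, as claimed.

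There is essentially no genuine obstacle here since each ingredient is packaged as a black-box theorem earlier in the paper; the only subtlety worth flagging is the hereditary step, i.e., making sure the hypothesis of \Cref{thm:Scattering_Implies_SPR} (that \emph{every} induced subgraph is scatterable) really is inherited from pathwidth, which is why we rely on pathwidth rather than, say, treewidth or bounded doubling dimension, where the parameter need not be preserved under taking arbitrary induced subgraphs.
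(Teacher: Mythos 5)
Your proposal is correct and follows essentially the same chain of reductions the paper uses: pathwidth is hereditary under taking induced subgraphs, pathwidth $\rho$ gives \SPDdepth $O(\rho)$, \Cref{thm:SPDstrong} gives a $(O(\rho),O(\rho^2))$-strong sparse partition scheme, and then \Cref{obs:StrongImplyScatt} plus \Cref{obs:beta1} give $(1,O(\rho^3))$-scatterability for every induced subgraph, so \Cref{thm:Scattering_Implies_SPR} yields distortion $O(\rho^9)$. The paper even flags the same subtlety you did, namely that \SPDdepth itself is not hereditary so one must pass through pathwidth.
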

For weak diameter guarantee we were able to construct sparse partition with improved parameters. 
The partition algorithm is inspired by previous constructions of padded decompositions for pathwidth graphs \cite{AGGNT19,KK17}.
\begin{theorem}\label{thm:SPDweak}
	Let $G=(V,E,w)$ be a graph with \SPD $\left\{ \mathcal{X},\mathcal{Q}\right\} $ of depth $\rho$. Then $G$ admits a $(8,5\rho)$-weak sparse partition scheme.
\end{theorem}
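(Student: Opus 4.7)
For a parameter $\Delta>0$ and the given \SPD $\{\mathcal{X},\mathcal{Q}\}$ of depth $\rho$, I would build the partition top-down: at every level $i$ and every $X\in\mathcal{X}_i$, split the shortest path $Q_X$ into consecutive sub-intervals of $d_{G[X]}$-length at most $\Delta/2$. Each vertex $v\in V$ is then assigned as follows. Let $i^*(v)$ be the smallest $i$ with $d_G(v,\mathcal{Q}_i)\le\Delta/4$; this is well-defined and at most $i_v$ (the level at which $v$ is deleted, since $d_G(v,Q_{X_v})=0$ there). Pick a path $Q\in\mathcal{Q}_{i^*(v)}$ achieving the distance, let $\pi(v)\in Q$ be a $G$-nearest vertex of $Q$ to $v$, and place $v$ in the cluster corresponding to the sub-interval of $Q$ containing $\pi(v)$.

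The key structural fact driving the analysis is a \emph{no-shortcut observation}: if $v$ is assigned at level $i^*(v)=i$ to path $Q_X$, then the $G$-shortest path from $v$ to $\pi(v)$ must stay inside $X$, because otherwise it would pass through a vertex $y\in\bigcup_{j<i}\mathcal{Q}_j$ with $d_G(v,y)\le\Delta/4$, contradicting the minimality of $i^*(v)$. In particular $v\in X$ and $d_{G[X]}(v,\pi(v))=d_G(v,\pi(v))\le\Delta/4$. The same argument, applied to two vertices $u,u'\in B_G(v_0,\Delta/8)$ both assigned at level $i$, shows that they must lie in the same component $X\in\mathcal{X}_i$ (so they share the same path $Q_X$) and that $d_{G[X]}(u,u')=d_G(u,u')\le\Delta/4$.

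With this in hand, the two required properties follow. For \emph{weak diameter}: for $u,u'$ in the same cluster, which corresponds to a sub-interval $I$ of some $Q_X$,
\[
d_G(u,u')\le d_G(u,\pi(u))+d_{G[X]}(\pi(u),\pi(u'))+d_G(\pi(u'),u')\le \tfrac{\Delta}{4}+\tfrac{\Delta}{2}+\tfrac{\Delta}{4}=\Delta.
\]
For the \emph{$\tau$-intersection bound}: fix a ball $B=B_G(v_0,\Delta/8)$. At each level $i$, by the no-shortcut observation all vertices of $B$ assigned at level $i$ share a common component $X$ and path $Q_X$, and the triangle inequality gives $d_{G[X]}(\pi(u),\pi(u'))\le 3\Delta/4$ for any two such $u,u'$. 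Since $Q_X$ is a shortest path in $G[X]$, their projections span a sub-path of $Q_X$ of length at most $3\Delta/4$, which meets at most $\lceil 3/2\rceil+1\le 3$ of the sub-intervals of length $\Delta/2$. Summing over the $\rho$ levels yields at most $3\rho\le 5\rho$ clusters intersecting $B$, as required.

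The main obstacle is that the interval decomposition of $Q_X$ lives intrinsically in the $d_{G[X]}$-metric, whereas the sparse-partition guarantee must be stated in $d_G$; in general these two metrics can differ wildly because of shortcuts going through previously deleted paths. The resolution is precisely the choice of $i^*(v)$: by forcing the assignment level to be the earliest one at which $v$ is $G$-close to some deleted path, we rule out exactly those $G$-shortcuts, so that on the level-$i$ strata the metrics $d_G$ and $d_{G[X]}$ agree at the scales relevant to the analysis.
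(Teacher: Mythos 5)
Your proof is correct and follows essentially the same strategy as the paper: you assign each vertex to the first \SPD level at which it is close (at scale $\Delta/4$) to a deleted shortest path, then split that path into pieces of scale $\Delta/2$, and bound the number of intersecting clusters by (levels touched by $B$) $\times$ (pieces touched per level). The differences are in the bookkeeping. The paper measures closeness in $d_{G[X]}$ (Phase~1 of Algorithm~\ref{alg:weakSPD}), then covers each resulting set $C_X$ by balls around a net of $Q_X$ (Phase~2), and counts intersections using $i_B$ (the first level at which a vertex of $B$ is actually deleted) together with \Cref{lem:SPDpathNet}, obtaining $5$ per level. You instead measure closeness in $d_G$ and supply a clean \emph{no-shortcut observation} showing that, at the minimal level $i^*(v)$, a $G$-shortest path from $v$ to the deleted path cannot escape the component $X$ (else it would hit an earlier-level path within distance $\Delta/4$, contradicting minimality), so $d_G$ and $d_{G[X]}$ agree; you then split $Q_X$ directly into consecutive intervals and get $3$ per level. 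Both routes give $O(\rho)$; yours is slightly tighter and, by making the no-shortcut fact explicit, it removes the implicit reliance on $Q_X$ being shortest in $G$ (as opposed to $G[X]$) when invoking a path-packing bound. One small point worth spelling out in a final write-up: for the ``at most three intervals'' claim you should note that the greedy interval construction places the starting vertices of consecutive intervals more than $\Delta/2$ apart in $d_{G[X]}$, so a segment of $d_{G[X]}$-length $3\Delta/4$ contains at most two interval starts and hence meets at most three intervals.
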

By \Cref{thm:JLNRS05}, a solution to the \UST (and \UTSP) follows.
\begin{corollary}\label{cor:SPD-UTSP}
	Let $G$ be a graph with \SPDdepth $\rho$, then there is an efficient algorithm constructing a solution for the \UST problem with stretch $O(\rho\cdot \log n)$.
\end{corollary}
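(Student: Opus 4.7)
The proposed proof is a direct composition of two previously stated results: the weak sparse partition scheme for \SPDdepth$\rho$ graphs (\Cref{thm:SPDweak}) and the black-box reduction from weak sparse partitions to \UST of Jia \etal (\Cref{thm:JLNRS05}). No new combinatorial or probabilistic argument is needed beyond what the two cited theorems already provide, so the proposal is short and essentially a parameter-chase.

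First, I would apply \Cref{thm:SPDweak} to the input graph $G$. Since $G$ has \SPDdepth $\rho$, that theorem yields an efficient $(\sigma,\tau) = (8,\,5\rho)$-weak sparse partition scheme for $G$: for every $\Delta>0$ one can compute in polynomial time a partition with weak diameter $\Delta$ such that every ball of radius $\Delta/8$ meets at most $5\rho$ clusters. Next, given a designated root $\rt\in V$, I would feed this partition scheme into the algorithm of Jia \etal guaranteed by \Cref{thm:JLNRS05}. That theorem converts a $(\sigma,\tau)$-weak sparse partition scheme (computable in polynomial time) on an $n$-vertex graph into a Universal Steiner Tree rooted at $\rt$, constructed in polynomial time, with stretch $O(\tau\sigma^2\log_\tau n)$.

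The stretch bound then follows by substituting $\sigma=8$ and $\tau=5\rho$:
\[
O\bigl(\tau\sigma^2\log_\tau n\bigr) \;=\; O\!\left(5\rho\cdot 64\cdot \log_{5\rho} n\right) \;=\; O\!\left(\rho\cdot \frac{\log n}{\log(5\rho)}\right) \;=\; O(\rho\cdot \log n),
\]
where in the final step I simply use $\log(5\rho)\ge 1$ (for $\rho\ge 1$) to pass to the slightly weaker bound stated in the corollary. Polynomial-time computability is inherited from the polynomial-time guarantees of both \Cref{thm:SPDweak} (partition scheme) and \Cref{thm:JLNRS05} (the reduction), establishing the efficient algorithmic claim. (The same derivation, using the analogous \UTSP version of \Cref{thm:JLNRS05} noted in the paper, yields the corresponding \UTSP bound.)

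Because the proof is a pure composition, there is no real obstacle within this corollary itself; all of the technical work is encapsulated in \Cref{thm:SPDweak} (which must actually build a weak sparse partition using the \SPD hierarchy) and in \Cref{thm:JLNRS05} (which performs the tree assembly from the partition scheme). The only subtlety worth flagging is that \Cref{thm:JLNRS05} requires a \emph{weak} sparse partition scheme, which is exactly what \Cref{thm:SPDweak} produces — so no additional strong-diameter assumption, and in particular no appeal to \Cref{thm:SPDstrong}, is needed for the corollary.
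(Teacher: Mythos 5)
Your proposal is correct and matches the paper's own (one-line) proof exactly: the paper derives \Cref{cor:SPD-UTSP} by plugging the $(8,5\rho)$-weak sparse partition scheme of \Cref{thm:SPDweak} into \Cref{thm:JLNRS05}, which is precisely your parameter substitution $O(\tau\sigma^2\log_\tau n)=O(\rho\log n)$. Nothing further is needed.
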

The following lemma will be useful for both our constructions:
\begin{lemma}\label{lem:SPDpathNet}
	$G=(V,E,w)$ is a weighted graph with a shortest path $P$, $N\subseteq P$ is a set of vertices such that for every $u,z\in N$, $d_G(u,z)> \eps$. Then for every vertex $v\in V$ and radius $r\ge 0$, it holds that $\left|B_G(v,r)\cap N\right|\le \frac{2r}{\eps}$.
\end{lemma}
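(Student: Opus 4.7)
The proof plan is elementary and relies on the fact that subpaths of shortest paths are themselves shortest paths between their endpoints. I would begin by enumerating the net points in $B_G(v,r) \cap N$ along the direction of $P$. Writing $k = |B_G(v,r)\cap N|$, label these points $u_1, u_2, \dots, u_k$ in the order they appear on $P$. The task then reduces to bounding $k$.

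The upper bound on distance comes from the triangle inequality: since $u_1, u_k \in B_G(v,r)$, we have
\[
d_G(u_1, u_k) \;\le\; d_G(u_1, v) + d_G(v, u_k) \;\le\; 2r.
\]
The lower bound comes from the fact that $P$ is a shortest path in $G$: the subpath $P[u_1, u_k]$ is a shortest path between its endpoints, so its weight equals $d_G(u_1, u_k)$. Since all of $u_1, \dots, u_k$ lie on $P[u_1, u_k]$ in order, the weight of this subpath is at least $\sum_{i=1}^{k-1} d_G(u_i, u_{i+1})$. Using the assumption that every pair of net points is at distance strictly more than $\eps$, each term in the sum exceeds $\eps$, so the total exceeds $(k-1)\eps$.

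Combining the two bounds yields $(k-1)\eps < 2r$, i.e., $k - 1 < 2r/\eps$, from which the stated bound $k \le 2r/\eps$ follows (the constant gap between $k-1$ and $2r/\eps$ absorbing any off-by-one rounding). There is really no main obstacle here; the one point that deserves care is justifying that the subpath $P[u_1,u_k]$ realizes the distance $d_G(u_1,u_k)$, which is immediate from $P$ being a shortest path in $G$ (and therefore so is every contiguous subpath of it). Everything else is triangle inequality and the pairwise separation assumption on $N$.
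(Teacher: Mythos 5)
Your proof is correct and follows the same route as the paper's: order the net points along $P$, bound $d_G$ between the two extreme ones above by $2r$ via the triangle inequality, below by (number of gaps)$\cdot\eps$ via the shortest-path property and pairwise separation, and compare. The only wrinkle — which you flag and the paper silently elides — is that $k-1 < 2r/\eps$ gives $k \le \lceil 2r/\eps\rceil$ rather than $k \le 2r/\eps$ on the nose; this costs at most $1$ and is harmless in both uses of the lemma (\Cref{thm:SPDstrong} and \Cref{thm:SPDweak}), where a unit of slack is already absorbed by constants.
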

\begin{proof}
	Denote $P=v_0,v_1,\dots,v_m$. Let $v_i$ be the vertex with minimal index among $B_G(v,r)\cap N$ (if there is no such $v_i$, the lemma holds trivially).
	We order the vertices in the intersection $B_G(v,r)\cap N=\{v_i=v_{q_0},v_{q_0},\dots,v_{q_s}\}$ w.r.t. to the order induced by $P$. 
	By triangle inequality, $d_G(v_{q_0},v_{q_s})\le d_G(v_{q_0},v)+d_G(v,v_{q_s})\le 2r$. From the other hand, as $P$ is a shortest path
	$d_{G}(v_{q_{0}},v_{q_{s}})=\sum_{j=0}^{s-1}d_{G}(v_{q_{j}},v_{q_{j+1}})> s\cdot\epsilon$. We conclude that $s<\frac{2r}{\eps}$. The lemma now follows.
\end{proof}
\subsection{Strong Diameter for \SPD Graphs: Proof of \Cref{thm:SPDstrong}}\label{subsec:SPDstrong}
Let $\Delta>0$ be some parameter. 
Set $\eps=\frac1\rho$. For every $i$, set $\alpha_i=(1+2\eps)^{\rho+1-i}\cdot\Delta$ and  $\beta_i=\eps\cdot\alpha_{i+1}$.
We will create a partition with strong diameter $2\alpha_1<2e^2\cdot\Delta$. Afterwards, the parameters could be readjusted accordingly.
For a path $Q\in\mathcal{Q}_{i}$, let $N_Q\subseteq P$ be a $\beta_i$-net. For every vertex $v\in N_Q$ for $Q\in\mathcal{Q}_i$, set $\delta_{v}=\alpha_i$.
We execute the clustering algorithm of Miller \etal \cite{MPX13} as described in \Cref{subsec:MPX}, with the set $\bigcup_{i\in[1,\rho],Q\in\mathcal{Q}_i}N_Q$ as centers. As a result we get a partition $\mathcal{P}$. 

We first argue that our partition has strong diameter $2\alpha_1$.
Indeed, consider a vertex $v$ that belongs to a path $Q\in\mathcal{Q}_i$. There is a center $t_Q\in N_Q$ such that $d_G(v,t_Q)\le\beta_i$ and $\delta_{t_Q}=\alpha_i$. Suppose that $v$ joined a cluster centered in $t_v$. Then $\delta_{t_v}-d_G(v,t_v)=f_v(t_v)\ge f_v(t_Q)=\delta_{t_Q}-d_G(v,t_Q)$. Thus,
\[
d_G(v,t_v)\le \delta_{t_v}-\delta_{t_Q}+d_G(v,t_Q)\le \delta_{t_v}-\alpha_i+\beta_i< \delta_{t_v}\le \alpha_{1}~.
\]
By \Cref{clm:MPXshortestpath}, for every vertex $v$ in the cluster $C$ of $t_v$ it holds that $d_{G[C]}(v,t_v)=d_{G}(v,t_v)$. 
It follows that $\mathcal{P}$ has strong diameter $2\alpha_{1}<2e^2\cdot\Delta$.

We say that vertices $u,v$ were separated at level $i$ if they  belong to the same cluster of the \SPD at level $i$ (in $\mathcal{X}_i$), and either belong to different clusters of the \SPD at the $i+1$ level of the hierarchy, or if one of them is deleted during the $i$'th level (that is belong to a path $Q\in\mathcal{Q}_i$).
Set $r=\frac\eps2\cdot\Delta$. Consider some vertex $v$, and let $B=B_G(v,r)$.
\begin{claim}\label{clm:SPDseparated}
	Consider a center $t$.
	Suppose that at level $i$ of the SPD some vertex $u$ on the shortest path from $v$ to $t$ was deleted, while $t$ was not. Then no vertex from $B$ will join the cluster centered at $t$.
\end{claim}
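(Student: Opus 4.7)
The plan is to exhibit a competing center that strictly beats $t$ in the $f_w$-score for every $w \in B$, so no such $w$ can join $t$'s cluster. The natural choice is a net point $t_u \in N_Q$ close to $u$, where $Q \in \mathcal{Q}_i$ is the level-$i$ shortest path containing $u$; by the net property $d_G(u,t_u) \le \beta_i$, and crucially $\delta_{t_u} = \alpha_i$, which is large.

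The two key structural ingredients I will use are: (i) because $t$ was not deleted at level $i$ (nor at any earlier level, otherwise it would not be a center at all), $t$ is a center on a path $Q_t \in \mathcal{Q}_j$ with $j > i$, and hence $\delta_t = \alpha_j \le \alpha_{i+1}$; and (ii) the parameters are chosen so that $\alpha_i - \alpha_{i+1} = 2\eps\,\alpha_{i+1} = 2\beta_i$. This ``gap'' is exactly what will dominate the slack introduced by the triangle inequality.

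For any $w \in B$, I would then bound $d_G(w,t_u) \le r + d_G(v,u) + \beta_i$ by routing through $v$ and $u$, and $d_G(w,t) \ge d_G(v,u) + d_G(u,t) - r$ by using that $u$ lies on a shortest $v$-$t$ path, so $d_G(v,t) = d_G(v,u) + d_G(u,t)$. Substituting into $f_w(t_u) - f_w(t)$, the $d_G(v,u)$ terms cancel and, dropping the nonnegative $d_G(u,t)$, routine algebra leaves $f_w(t_u) - f_w(t) \ge (\alpha_i - \alpha_{i+1}) - 2r - \beta_i = \beta_i - 2r$.

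Finally, $r = \tfrac{\eps}{2}\Delta$ and $\beta_i = \eps\,\alpha_{i+1} \ge \eps(1+2\eps)\Delta$ together imply $\beta_i - 2r > 0$, so $f_w(t_u) > f_w(t)$, and $w$ cannot join the cluster of $t$. The only delicate point is calibrating $\alpha_i - \alpha_{i+1}$ against $\beta_i$ and $r$, which is exactly what the geometric progression $\alpha_i = (1+2\eps)^{\rho+1-i}\Delta$ was designed to achieve; the rest is triangle-inequality bookkeeping and the net-covering property of $N_Q$.
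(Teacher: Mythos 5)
Your proposal is correct and matches the paper's proof essentially verbatim: same competing center $t_Q\in N_Q$ near $u$ with $\delta_{t_Q}=\alpha_i$, same triangle-inequality bounds on $d_G(z,t_Q)$ and $d_G(z,t)$ (using that $u$ lies on a shortest $v$--$t$ path), and the same parameter inequality (your $\beta_i>2r$ is algebraically equivalent to the paper's $\alpha_i>\alpha_{i+1}+\beta_i+2r$). One small inaccuracy in your aside: $t$ being deleted at an earlier level would not prevent it from being a center (every center \emph{is} deleted at some level); the correct justification for $\delta_t\le\alpha_{i+1}$ is simply that $t$ is not deleted on or before level $i$, hence belongs to some $N_{Q'}$ with $Q'\in\mathcal{Q}_j$ for $j\ge i+1$.
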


\begin{wrapfigure}{r}{0.14\textwidth}
	\begin{center}
		\vspace{-20pt}
		\includegraphics[width=0.9\textwidth]{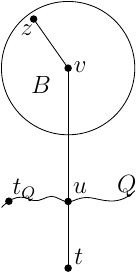}
		\vspace{-5pt}
	\end{center}
	\vspace{-15pt}
\end{wrapfigure}
\emph{Proof.} As $t$ was not deleted on or before level $i$, $\delta_t\le\alpha_{i+1}$. 
The vertex $u$ is laying on a shortest path $Q\in\mathcal{Q}_i$. There is a center $t_Q\in N_Q$ at distance at most $\beta_i$ from $u$, while $\delta_{t_Q}=\alpha_i$. 
Let $z\in B$, by triangle inequality $d_{G}(z,t_{Q})\le d_{G}(z,v)+d_{G}(v,u)+d_{G}(u,t_{Q})\le r+d_{G}(v,u)+\beta_i$. Similarly, $d_{G}(z,t)\ge d_{G}(v,t)-d_{G}(z,v)\ge d_{G}(v,u)-r$, thus $d_{G}(v,u)\le d_{G}(z,t)+r$.
See figure on the right for illustration.\\
As $\alpha_{i+1}+\beta_{i}+2r=(1+\eps)\cdot\alpha_{i+1}+\eps\cdot\Delta<(1+2\eps)\cdot\alpha_{i+1}=\alpha_{i}$, and thus $\alpha_{i+1}<\alpha_{i}-\beta_{i}-2r$, it holds that
\begin{align*}
f_{z}(t_{Q}) & =\delta_{t_{Q}}-d_{G}(z,t_{Q})\ge\alpha_{i}-\left(r+d_{G}(v,u)+\beta_{i}\right)\\
& \ge\alpha_{i}-2r-\beta_{i}-d_{G}(z,t)>\alpha_{i+1}-d_{G}(z,t)\ge\delta_{t}-d_{G}(z,t)=f_{z}(t)~.
\end{align*}
The claim now follows.
\QED

Let $i_B$ be the first level in which some vertex $u\in B$ is deleted.
$u$ belongs to a path $Q_B\in \mathcal{Q}_{i_B}$. There is a center $t_B$ at distance at most $\beta_{i_B}$ from $u$. By triangle inequality, for every vertex $z\in B$, $d_G(z,t_B)\le d_G(z,u)+d_G(u,t_B)\le \beta_{i_B}+2r$.
Furthermore 
\begin{eqnarray}\label{eq:SPD1}
f_z(t_B)=\delta_{t_B}-d_G(z,t_B)\ge \alpha_{i_B}-\beta_{i_B}-2r>\alpha_{i_B+1}~.
\end{eqnarray}
It follows that no vertex of $B$ will join the cluster of a center $t$ that belongs to a path deleted at levels $i_B+1$ and higher.
We conclude that the ball $B$ can be covered only by clusters with centers from the exact $i_B$ paths in the clusters containing $B$ in each level.
\begin{claim}
	Let $Q\in \mathcal{Q}_j$ be the path deleted from the component $X$ containing $B$ at level $j\le i_B$. Then $B$ intersects at most $2\rho+4$ clusters with centers in $N_Q$.
\end{claim}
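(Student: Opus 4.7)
The key observation is that every center in $N_Q$ shares the common shift $\delta_t=\alpha_j$, so on $N_Q$ the function $f_v$ is just $f_v(t)=\alpha_j-d_G(v,t)$. Counting clusters from $N_Q$ that meet $B$ therefore reduces to bounding the maximum distance $d_G(v,t)$ a surviving center can have, and then invoking the net-density bound (Lemma~\ref{lem:SPDpathNet}) on the $\beta_j$-net $N_Q$ sitting along the shortest path $Q$.

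My plan is a three-step argument. \emph{Step 1}: lower bound $f_v(t^*)$, where $t^*$ maximizes $f_v$, by the value at the center $t_B\in N_{Q_B}$ already produced earlier in the proof. Since the vertex $u\in B$ deleted at level $i_B$ satisfies $d_G(u,t_B)\le \beta_{i_B}$, the triangle inequality gives $d_G(v,t_B)\le r+\beta_{i_B}$, so $f_v(t^*)\ge f_v(t_B)\ge \alpha_{i_B}-\beta_{i_B}-r$. \emph{Step 2}: apply Claim~\ref{clm:MPXintersectionProperty}, so that any center $t\in N_Q$ whose cluster intersects $B$ must satisfy $f_v(t)\ge f_v(t^*)-2r$, which rearranges to
\[
d_G(v,t)\;=\;\alpha_j-f_v(t)\;\le\;\alpha_j-f_v(t^*)+2r\;\le\;\alpha_j-\alpha_{i_B}+\beta_{i_B}+3r.
\]
\emph{Step 3}: feed this radius into Lemma~\ref{lem:SPDpathNet} with the $\beta_j$-net $N_Q\subseteq Q$, yielding a bound of $2(\alpha_j-\alpha_{i_B}+\beta_{i_B}+3r)/\beta_j$ on the number of centers in question.

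The only real obstacle is the arithmetic simplification: I need $-\alpha_{i_B}+\beta_{i_B}+3r\le 0$ so the count collapses to $2\alpha_j/\beta_j$. Using $\beta_{i_B}=\eps\,\alpha_{i_B+1}$ and $\alpha_{i_B}=(1+2\eps)\alpha_{i_B+1}$, we get $\alpha_{i_B}-\beta_{i_B}=(1+\eps)\alpha_{i_B+1}\ge (1+\eps)\Delta$, while $3r=\tfrac{3\eps}{2}\Delta$, so the inequality reduces to $1+\eps\ge\tfrac{3\eps}{2}$, which holds comfortably. Plugging $\eps=1/\rho$ into the remaining quantity gives $2\alpha_j/\beta_j=2(1+2\eps)/\eps=2/\eps+4=2\rho+4$, matching the claim. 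One subtlety worth flagging is the case $j<i_B$: then $\alpha_j$ can strictly exceed $\alpha_{i_B}$, but this excess is paid for dimension-by-dimension by $\beta_j$ growing in lockstep with $\alpha_j$, which is why the final count is independent of the gap $i_B-j$.
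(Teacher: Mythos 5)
Your three-step plan is close in spirit to the paper's argument (the arithmetic in particular is correct, and using \Cref{clm:MPXintersectionProperty} to turn the gap in $f_v$-values into a bound on $d_G(v,t)$ is a clean reformulation of what the paper does by hand via equation~(\ref{eq:SPD1})). However, Step 3 has a genuine gap: the invocation of \Cref{lem:SPDpathNet} is not valid as stated. That lemma requires the host graph, the spacing of the net $N_Q$, and the shortest-path property of the path $Q$ to all live in one and the same metric. Here $Q\in\mathcal{Q}_j$ is a shortest path only in $G[X]$, and $N_Q$ is a $\beta_j$-net with respect to $d_{G[X]}$; by contrast, the radius bound you derive, $d_G(v,t)\le\alpha_j-\alpha_{i_B}+\beta_{i_B}+3r$, is a bound on the \emph{$G$-distance}. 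Since $d_G\le d_{G[X]}$, a center can satisfy $d_G(v,t)\le\alpha_j$ while having $d_{G[X]}(v,t)$ arbitrarily large, and such a center would escape the packing count when the lemma is applied inside $G[X]$. Applying the lemma in $G$ instead is also illegal, because $Q$ need not be a shortest path in $G$ and the points of $N_Q$ need not be $\beta_j$-separated in $d_G$.

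The paper closes precisely this hole with \Cref{clm:SPDseparated}: if the $G$-shortest path from $v$ to $t$ leaves $X$, some vertex on it was deleted at a level earlier than $j$, and \Cref{clm:SPDseparated} then shows that $t$'s cluster cannot touch $B$ at all. This lets the paper restrict attention to the subset $N'_Q\subseteq N_Q$ of centers for which $d_{G[X]}(v,t)=d_G(v,t)$; for those centers the bound $d_G(v,t)\le\alpha_j$ does transfer to $d_{G[X]}(v,t)\le\alpha_j$, and \Cref{lem:SPDpathNet} can legitimately be applied inside $G[X]$. Your write-up never makes this reduction, so the final packing step is unsupported. (The ``subtlety'' you do flag about $j<i_B$ is a non-issue, as you correctly observe; the one that matters is the $d_G$ vs.\ $d_{G[X]}$ mismatch.) To repair your proof, insert an appeal to \Cref{clm:SPDseparated} (or re-derive its content) to discard the centers with distorted $G[X]$-distance to $v$ before applying the net-density lemma, and run the lemma in $G[X]$ rather than $G$.
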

\begin{proof}
	Consider some center $t\in N_{P}$. If $d_{G[C]}(v,t)>d_{G}(v,t)$ then some vertex on the shortest path from $v$ to $t$ was deleted in an earlier level. By \Cref{clm:SPDseparated}, no vertex from $B$ will join the cluster of $t$.
	Denote by $N'_Q\subseteq N_Q$ the subset of centers for which $d_{G[X]}(v,t)=d_{G}(v,t)$. The centers in $N'_Q$ lie on a shortest path (in $G[C]$), and all are at distance greater than $\beta_j$ apart  (w.r.t. $d_{G[X]}$).
	
	Let $t\in N'_Q$ be a center such that some vertex $z\in B$ joined the cluster of $t$. Then $d_{G}(v,t)\le \alpha_j$, as otherwise $f_z(t)=\delta_{t}-d_G(z,t)\le \alpha_j-d_G(v,t)+r<r$. By equation (\ref{eq:SPD1}), $z$ will not join the cluster of $t$, a contradiction. 	
	By \Cref{lem:SPDpathNet}, there are at most $\frac{2\alpha_j}{\beta_j}=\frac2\eps+4=2\rho+4$ vertices in $N'_Q$.  The claim follows.
\end{proof}
To wrap up, the vertices of $B$ can join only clusters with centers laying on $i_B\le \rho$ paths. In each such path, there are at most $2\rho+4$ centers, to the cluster of which a vertex from $B$ might join.
As the maximal diameter is $e^2\cdot\Delta$, the padding parameter is $\frac{e^2\cdot\Delta}{r}=\frac{2\cdot e^2}{\eps}=2\cdot e^2\cdot \rho$. We conclude that $\mathcal{Q}$ is a $\left(2\cdot e^2\cdot \rho,(2\rho+4)\rho,e^2\cdot\Delta\right)$-strong sparse partition. Thus $G$ admits a  $\left(O(\rho),O(\rho^2)\right)$-strong sparse partition scheme as required.
\QED

\subsection{Weak Diameter for \SPD Graphs: Proof of \Cref{thm:SPDweak}}\label{subsec:SPDweak}
Let $\Delta>0$ be some parameter. 
The clustering will be done in two phases and described in \Cref{alg:weakSPD}.

\begin{algorithm}[]
	\caption{\texttt{\SPD Weak Sparse Partition}}	\label{alg:weakSPD}
	\DontPrintSemicolon
	\SetKwInOut{Input}{input}\SetKwInOut{Output}{output}
	\Input{Graph $G=(V,E,w)$, \SPD $\{\mathcal{X},\mathcal{Q}\}$ of depth $\rho$, parameter $\Delta>0$}
	\Output{$(8,5\rho)$-weak sparse partition}
	\BlankLine
	Let $\mathcal{A} \leftarrow V$\;
	\For {$i=1$ to $\rho$}{
		\For {$X\in \mathcal{X}_i$}{
			$C_X\leftarrow\{v\in \mathcal{A}\mid d_{G[X]}(v,Q_X)\le \frac\Delta4\}$\;
			$\mathcal{A} \leftarrow \mathcal{A}\setminus C_X$\;
		}	
	}
	\For {$i=1$ to $\rho$}{
		\For {$X\in \mathcal{X}_i$}{
			$N_X=\{t_1,t_2,\dots\}$ is a $\frac\Delta4$-net of $Q_X\in\mathcal{Q}_i$ w.r.t. $d_{G[Q_X]}$\;			
			\For {$j\ge 1$}{
				$C_{t_j}\leftarrow\{u\in C_X\mid d_{G[X]}(t_i,u)\le \frac\Delta2\}\setminus\cup_{l<j}C_{t_l}$\;
			}
		}	
	}
	\Return $\{C_{t_j}\}_{i\in[\rho],X\in\mathcal{X}_i,t_j\in N_X}$\;
\end{algorithm}

\paragraph{First phase.} 
This is an iterative process. The set of active vertices will be denoted by $\mathcal{A}$, initially $\mathcal{A}=V$.
In level $i$, for every cluster $X\in \mathcal{X}_i$ recall that $Q_X\in \mathcal{Q}_i$ is a shortest path w.r.t. $G[X]$. 
Set  $C_X=\{v\in \mathcal{A}\mid d_{G[X]}(v,Q_X)\le \frac\Delta4\}$, that is the set of active vertices in $X$ at distance at most $\frac\Delta4$ from $Q_X$ w.r.t. $d_{G[X]}$. All the vertices in $C_X$ cease to be active ($\mathcal{A}\leftarrow\mathcal{A}\setminus C_X$).
By the end of the algorithm all the vertices became inactive, thus we constructed a partition $\{C_X\}_{i\in[k\underline{}],X\in\mathcal{X}_i}$. 
Note that $Q_X$ is not necessarily contained in $C_X$, in particular $C_X$ might not be connected.
Nevertheless, the cluster $C_X$ is contained in $B_{G[X]}(Q_X,\frac\Delta4)$.

\paragraph{Second phase.} 
Next, each cluster $C_X$ is partition into balls of bounded (weak) diameter. Let $N_X$ be a $\frac\Delta4$ net of $Q_X$ w.r.t. $G[Q_X]$. Note that $N_X$ might not be contained in $C_X$. Order the vertices in $N_X=\{t_1,t_2,\dots\}$ arbitrarily. For $t_j\in N_X$, set $C_{t_j}=\{u\in C_X\mid d_{G[X]}(t_j,u)\le \frac\Delta2\}\setminus\cup_{l<j}C_{t_l}$.
Our final partition is simply $\{C_{t_j}\}_{i\in[\rho],X\in\mathcal{X}_i,t_j\in N_X}$. 
Note that each vertex $u\in X$ joins some cluster. Indeed, as $u\in C_X$, there is $w\in Q_X$ s.t. $d_{G[X]}(u,w)\le\frac\Delta4$. Furthermore, there is a net point $t_j$ at distance at most $\frac\Delta4$ from $w$, and thus at most $\frac\Delta2$ from $u$. Hence unless $u$ already joined a cluster before step $j$, $u$ will join $C_{t_j}$. 

It is straightforward that the weak diameter of our partition is bounded by $\Delta$, as every cluster $C_{t_j}$ contained in $B_{G[X]}(t_j,\frac\Delta2)\subseteq B_G(t_j,\frac\Delta2)$.
Fix $r=\frac\Delta8$ and an arbitrary vertex $v$. Denote $B=B_G(v,r)$. 
Let $i_B$ be the first level of the \SPD where some vertex $t_B$ from $B$ is deleted. Denote by $X^1\in \mathcal{X}_1,X^2\in \mathcal{X}_2,\dots, X^{i_B}\in \mathcal{X}_{i_B}$ the clusters containing $B$ in each of the levels before $i_B$. For every $u\in B$ it holds that $d_{G[X^{i_B}]}(u,Q_{X^{i_B}})\le d_{G[B]}(u,t_B)\le 2r=\frac\Delta4$. Thus all the active vertices in $B$ join the cluster $C_{X^{i_B}}$ (and became inactive).
We conclude that $B\subseteq \bigcup_{j\le i_B}C_{X^{j}}$.

Fix $j\in[i_B]$ and set $B_j=B\cap C_{X^j}$. The vertices of $B_j$ can join the the cluster of centers at distance at most $\frac\Delta2+r=\frac58\Delta$ from $v$. By \Cref{lem:SPDpathNet}, $B_j$ is partitioned to at most $\nicefrac{2\cdot \frac58\Delta}{\frac\Delta4}=5$ clusters.
As $i_B\le \rho$, the vertices of $B$ are partitioned to at most $5\rho$ clusters.
\QED

\section{Chordal Graphs}
This section is devoted to Chordal graphs. Brandst{\"{a}}dt Chepoi and Dragan \cite{BCD99} proved that Chordal graphs embed into a single tree with constant distortion. Specifically given a Chordal graph $G=(V,E)$. There is a weighted tree $T=(V,E,w)$, such that for every $v,u\in V$, $d_G(u,v)\le d_T(u,v)\le 6\cdot d_G(u,v)$.
Given such a tree $T$, every $(\sigma,\tau,\Delta)$-weak sparse partition $\mathcal{P}$ of $T$ is also $(6\sigma,\tau,\Delta)$-weak sparse partition of $G$. Using \Cref{thm:treeWeak} we conclude,
\begin{corollary}\label{cor:chordalWeak}
	Every Chordal Graph admits a $\left(24,3\right)$-weak sparse partition scheme.
\end{corollary}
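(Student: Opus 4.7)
The plan is to verify the parenthetical claim made just before the corollary and then simply plug in the parameters from \Cref{thm:treeWeak}. Let $G = (V, E)$ be a chordal graph and let $T = (V, E_T, w_T)$ be the tree from Brandst\"adt--Chepoi--Dragan on the same vertex set $V$, satisfying $d_G(u,v) \le d_T(u,v) \le 6 \cdot d_G(u,v)$ for all $u, v \in V$. Since $T$ and $G$ share the same vertex set, any partition of $V$ can be viewed simultaneously as a partition of $T$ and of $G$. Given a parameter $\Delta > 0$, I would invoke \Cref{thm:treeWeak} on $T$ to obtain a $(4, 3, \Delta)$-weak sparse partition $\mathcal{P}$ of $T$, and then argue that $\mathcal{P}$ is a $(24, 3, \Delta)$-weak sparse partition of $G$.

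The verification has two parts. For the diameter bound, any cluster $C \in \mathcal{P}$ satisfies $d_T(u, v) \le \Delta$ for all $u, v \in C$, and since $d_G(u, v) \le d_T(u, v)$ the cluster also has weak $G$-diameter at most $\Delta$. For the sparsity condition, fix any vertex $v \in V$ and consider the $G$-ball $B_G(v, \Delta/24)$. For every $u \in B_G(v, \Delta/24)$ we have $d_T(v, u) \le 6 \cdot d_G(v, u) \le \Delta/4$, so $B_G(v, \Delta/24) \subseteq B_T(v, \Delta/4)$. By the sparsity guarantee of $\mathcal{P}$ on $T$ (with padding parameter $\sigma = 4$), the ball $B_T(v, \Delta/4)$ intersects at most $3$ clusters of $\mathcal{P}$, so the same holds for the smaller set $B_G(v, \Delta/24)$.

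Since \Cref{thm:treeWeak} produces such a partition efficiently for every $\Delta$, and the tree $T$ itself is constructed in polynomial time by \cite{BCD99}, the overall scheme is efficient, yielding the claimed $(24, 3)$-weak sparse partition scheme for chordal graphs. There is essentially no obstacle here: the only thing to be slightly careful about is that the Brandst\"adt--Chepoi--Dragan tree has the same vertex set as $G$ (not a Steiner tree), which is indeed the form stated in the paragraph preceding the corollary, so the same partition can be transferred between the two graphs without any relabeling.
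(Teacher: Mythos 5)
Your proof is correct and follows exactly the approach the paper intends: apply \Cref{thm:treeWeak} to the Brandst\"adt--Chepoi--Dragan tree $T$ on the same vertex set, then transfer the partition to $G$ using $d_G \le d_T \le 6\,d_G$, which inflates the padding parameter by a factor of $6$ while preserving the diameter bound and the intersection count. The arithmetic ($4\cdot 6 = 24$, $\tau = 3$ unchanged) and the ball-containment argument are all exactly what the paper's one-line justification asserts.
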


As the family of Chordal graphs include all trees, by \Cref{thm:treeLBStrong} there is no $(O(1),O(1))$-strong sparse partition scheme for Chordal graphs.
The only question left is regarding scattering partitions. One can try the same approach above of embedding Chordal graphs into trees. Unfortunately, Chordal graph embed only into non-subgraph trees. Specifically, Prisner \cite{Pri97} showed that Chordal graphs do not embed into a spanning trees with any constant distortion. Therefore the scattering partition for trees do not imply scattering partition for Chordal graphs.
Nevertheless, we are able to construct scattering partitions for Chordal graphs. Interestingly, the parameters we obtain equal to the parameters of scattering partition for trees.
\begin{theorem}\label{thm:chordalScat}
	Every Chordal graph $G=(V,E)$ is  $\left(2,3\right)$-scatterable.
\end{theorem}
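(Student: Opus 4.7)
The plan is to construct the partition inductively along a clique tree $T$ of the chordal graph $G$ (in which every bag is a clique of $G$), rooted at an arbitrary bag $B_0$ and processed in BFS order. Every cluster I create will carry a designated center $t$, and the \emph{label} of a vertex $v$ assigned to a cluster with center $t$ is $d_G(v,t)$. Throughout the construction I maintain the invariant stated in the technical overview: within every processed bag, the vertices whose label is strictly less than $\Delta/2$ all belong to a single cluster.

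\textbf{Algorithm and basic properties.} At the root bag $B_0$ I pick any $t_0\in B_0$, open a cluster $C_0$ centered at $t_0$, and place every $v\in B_0$ into $C_0$ with label $d_G(v,t_0)\le 1$. When processing a later bag $B$ with parent $B'$, let $S=B\cap B'$; since bag-subtrees of vertices in a tree decomposition are connected and bags are processed top-down, $S$ is exactly the set of already-assigned vertices of $B$. If some $u\in S$ has label strictly less than $\Delta/2$, then by the invariant at $B'$ all such low-label vertices of $S$ share a common cluster $C^*$ with some center $t^*$; I add every $v\in B\setminus S$ to $C^*$ with label $d_G(v,t^*)$. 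Because $B$ is a clique in $G$, $d_G(v,t^*)\le 1+d_G(u,t^*)\le\Delta/2$, so labels never exceed $\Delta/2$. Otherwise every $u\in S$ has label exactly $\Delta/2$ (or $S=\emptyset$); I open a fresh cluster on the unassigned vertices of $B$ centered at an arbitrary one of them, with labels at most $1$. The invariant is preserved by induction (in the first case every new low-label vertex of $B$ joins $C^*$; in the second case the only low-label vertices of $B$ are the members of the fresh cluster); every cluster has weak diameter at most $\Delta$ through its center; and each cluster is connected in $G$ because every newly admitted vertex is adjacent in the clique bag to a prior cluster member.

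\textbf{Scattering.} Consider a shortest path $\mathcal{I}=v_0,\ldots,v_m$ with $m\le\Delta/2$. Two consecutive path vertices are adjacent in $G$ and therefore share a bag, so the invariant forces them into the same cluster whenever their labels are both strictly less than $\Delta/2$. Hence each cluster change along $\mathcal{I}$ occurs across an edge at least one of whose endpoints is \emph{high} (label exactly $\Delta/2$), and every maximal run of consecutive low-label vertices of $\mathcal{I}$ is confined to a single cluster. Let $v_{j^*}$ be a path vertex minimizing the label on $\mathcal{I}$, with label $\ell^*$, cluster $C^*$ and center $t^*$. The plan is to split $\mathcal{I}$ at $v_{j^*}$ into a left and a right extremity subpath and argue that each one introduces at most one cluster beyond $C^*$, yielding at most three clusters intersecting $\mathcal{I}$.

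The main obstacle is showing that each extremity subpath is confined to at most one additional cluster. The plan is a case analysis that tracks, in the clique tree, the home bag of every cluster touched by the extremity together with the chain-propagation structure of the algorithm: the low-label members of $C^*$ form a connected ``skeleton'' in the clique tree anchored at $t^*$'s home bag $B^*$, and once $\mathcal{I}$ leaves $C^*$ through a high-label gate it enters a single subtree of the clique tree in which the algorithm's rules only permit the opening of one fresh cluster that the remaining length of $\mathcal{I}$ can reach; the shortness constraint $m\le\Delta/2$, combined with the fact that opening a fresh cluster requires every parent-bag vertex to already carry the maximal label $\Delta/2$, prevents the appearance of further clusters on that side. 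The remaining regime to handle is $\ell^*=\Delta/2$, where $\mathcal{I}$ consists entirely of high-label vertices and an independent argument based on the clique-tree skeleton of $C^*$ is required; executing this case analysis carefully will constitute the main technical work of the proof.
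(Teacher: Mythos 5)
Your construction is essentially the same as the paper's: process a clique tree top--down, label each vertex by its distance to its cluster center, attach new bag vertices to the unique low-label cluster meeting the bag (or open a fresh cluster if none exists), and maintain the invariant that the sub-$\Delta/2$-label vertices of each bag share a cluster. The diameter, connectivity, and invariant checks you give are correct, and processing a whole bag at once instead of one newly-introduced vertex per bag is a harmless repackaging.

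However, the scattering bound --- which \emph{is} the theorem --- is not proven. You explicitly defer ``the main technical work of the proof,'' and the deferred step is exactly the part that needs an idea, not mere bookkeeping. Your sketch (anchor at the minimum-label vertex $v_{j^*}$ and argue each extremity of $\mathcal{I}$ adds at most one cluster) is missing the quantitative lemma that makes the paper's argument close. The paper anchors the path at the vertex $v_i$ of $\mathcal{I}$ that is \emph{highest in the clique-tree order} (first to be clustered). Because $\mathcal{I}$ is a shortest path and consecutive path vertices share a bag, the bags $\mathcal{B}_{i+1},\dots,\mathcal{B}_j$ lie in order along a path in the clique tree, so $v_{j-1}\in\mathcal{B}_j$; and when the cluster first changes at $v_j$, the new center $\pi(v_j)$ is introduced in a bag on this tree path that still contains $v_{j-1}$, forcing $\{v_{j-1},\pi(v_j)\}\in E$ by chordality. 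This yields the key bound $\delta_{v_j}\le 2$: the path exits the first cluster directly into a nearly fresh one whose labels can then grow by at most one per step and stay below $r=\Delta/2$ for the next $r-2\ge m$ steps, so no third cluster change can occur on that side. Without a statement of this kind, the observation that cluster changes only happen across high-label gates does not by itself cap their number: nothing prevents the path from entering the new cluster at label $\approx r$ and changing cluster again almost immediately. Your minimum-label anchor does not obviously support the tree-path ordering that the chordal-adjacency step requires (the minimum-label vertex of $\mathcal{I}$ need not be the topmost in the clique-tree poset), and the separate ``all labels equal $\Delta/2$'' case you flag is a symptom of this gap rather than an isolated corner case.
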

As every induced subgraph of a Chordal graph is also Chordal, using \Cref{thm:Scattering_Implies_SPR} we conclude,\footnote{Note that Chordal graphs are unweighted, while the minor constituting the solution to the \SPR problem will necessarily be weighed.}
\begin{corollary}\label{cor:chordalSPR}
	Given a Chordal graph $G=(V,E)$ with a set $K$ of terminals, there is an efficient algorithm that returns a solution to the \SPR problem with distortion $O(1)$.
\end{corollary}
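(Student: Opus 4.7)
The plan is to build the $(2,3,\Delta)$-scattering partition inductively along a clique-tree decomposition of $G$, enforcing throughout the single invariant highlighted in the technical ideas: within every (processed) bag, all vertices whose label $\ell(v):=d_G(v,t_v)$ is strictly less than $\Delta/2$ lie in a single common cluster. Concretely, $G$ admits a tree decomposition $T$ whose bags are cliques, and I would root $T$ at an arbitrary bag and process bags in top-down (BFS) order. When processing a bag $B$, let $U\subseteq B$ be the subset of vertices already assigned. In \textbf{Case~A}, some $u\in U$ has $\ell(u)<\Delta/2$, and every still-unassigned $v\in B$ is added to $u$'s cluster $C_{t_u}$; in \textbf{Case~B} I pick any unassigned $v\in B$, open a new cluster $C_v$ centered at $v$, and place every remaining unassigned $u\in B$ into $C_v$. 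Every vertex is assigned at the first BFS bag containing it, producing a partition of $V$.

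I would then prove the invariant by induction on the BFS order. The only delicate step is showing that if two previously-assigned $u_1,u_2\in B$ both have label $<\Delta/2$, they must already share a cluster. Let $B(u_i)$ denote the BFS-first bag containing $u_i$; by the subtree property of tree decompositions, the path in $T$ from $B(u_i)$ to $B$ stays inside $u_i$'s bag-subtree, so the Steiner point $M$ of $\{B(u_1),B(u_2),B\}$ in $T$ contains both $u_1$ and $u_2$. A depth count rules out $M=B$, hence $M$ was processed before $B$, and the inductive invariant at $M$ forces $u_1$ and $u_2$ into a common cluster. The Case~A/B rule then preserves the invariant at $B$.

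Given the invariant, weak diameter and connectedness are immediate. In Case~A the newly added vertex $v\in B$ inherits $d_G(v,t_u)\le d_G(v,u)+\ell(u)\le 1+(\Delta/2-1)=\Delta/2$, since the bag is a clique in the unweighted graph $G$, and in Case~B the new vertices receive labels at most $1$; consequently every cluster is contained in the $\Delta/2$-ball around its center and hence has weak diameter at most $\Delta$. Connectedness holds because every newly added vertex is clique-adjacent to some vertex already in the target cluster.

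The scattering bound $\tau=3$ will be the hard step. Consider any shortest path $\mathcal{I}=v_0,\dots,v_s$ with $s\le\Delta/2$ touching $m$ distinct clusters. Because consecutive vertices on $\mathcal{I}$ share a clique-bag, the invariant forces every cluster-transition edge on $\mathcal{I}$ to contain a ``tight'' vertex of label exactly $\Delta/2$. My plan is to exploit that the primary-cluster assignment partitions the bag tree into connected subtrees (each cluster serves as primary in a connected set of bags containing its center), so the walk in the bag tree traced by $\mathcal{I}$ must cross $m-1$ primary-subtree boundaries, each witnessed by a tight vertex whose shortest path to its cluster center has length exactly $\Delta/2$ and branches off $\mathcal{I}$. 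Assuming $m\ge 4$ one would extract two such disjoint branches off $\mathcal{I}$ whose total length together with a connecting subsegment of $\mathcal{I}$ contradicts that $\mathcal{I}$ itself is a shortest path. Pinning down the precise constant $\tau=3$ (rather than a slightly weaker one) is the most subtle point and relies on combining the tight-label constraint with the clique-bag structure, which forces consecutive tight transitions on $\mathcal{I}$ to be rigid enough to preclude a fourth cluster.
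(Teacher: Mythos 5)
Your construction is essentially the paper's own: a clique-tree decomposition processed root-to-leaves, labels $\ell(v)=d_G(v,\pi(v))$, and the invariant that within any bag all vertices of label $<\Delta/2$ share a cluster. The invariant proof via the median bag and the weak-diameter/connectivity arguments are fine and match the paper's. To finish the corollary you would additionally need to note that every induced subgraph of a chordal graph is chordal (so \Cref{thm:Scattering_Implies_SPR} applies after \Cref{obs:beta1} converts $(2,3)$ to $(1,6)$-scattering), but that is routine.

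The genuine gap is the scattering bound $\tau=3$. What you have there is an announced \emph{plan}, not a proof, and I do not think the plan as stated goes through. It is correct that any transition edge of $\mathcal{I}$ must have a ``tight'' endpoint of label exactly $\Delta/2$; but having two tight vertices $w_1,w_2$ on $\mathcal{I}$, each at distance $\Delta/2$ from its own cluster center, creates no tension with $\mathcal{I}$ being a shortest path: the two branches run to two \emph{different} centers, so concatenating them with a subsegment of $\mathcal{I}$ is not a competing $t$--$t'$ path for any pair whose distance you control, and there is no triangle-inequality contradiction to extract. Moreover the count of tight vertices can be as small as $\lceil(m-1)/2\rceil$ because one tight vertex can serve two adjacent transitions, so ``two disjoint branches'' is not even guaranteed for $m=4$. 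The actual mechanism in the paper is different and genuinely needed: let $v_i$ be the apex of $\mathcal{I}$ in the bag-tree partial order. Because $\mathcal{I}$ is induced, the bag-path from $\mathcal{B}_i$ to $\mathcal{B}_j$ passes through $\mathcal{B}_{i+1},\dots,\mathcal{B}_{j-1}$ in order, forcing $v_{j-1}\in\mathcal{B}_{\pi(v_j)}$ and hence (cliques) $\{v_{j-1},\pi(v_j)\}\in E$; this yields the crucial bound $\delta_{v_j}\le 2$ at the \emph{first} transition past the apex, from which labels grow by at most one per step and the far cluster persists along the rest of the path. Without this ``first transition has label $\le 2$'' step, bounding $\tau$ by $3$ does not follow, and your sketch contains no substitute for it.
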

\begin{proof}[Proof of \Cref{thm:chordalScat}.]
	Let $\Delta\in\N$ be some integer parameter. We can assume that $\Delta\ge 3$, as otherwise the trivial partition where each vertex is in a separate cluster fulfill the requirements.
	Set $r=\frac\Delta2$. We begin by describing the algorithm creating the partition $\mathcal{P}$. Each cluster $C\in \mathcal{P}$ will have a center vertex $\pi(C)\in C$. Each vertex $v\in V$ will admit a label $\delta_v\in[0,r]$. We denote by $\pi(v)$ the center of the cluster containing $v$.
	
	Let $\mathcal{T}$ be a tree decomposition for the chordal graph $G$. We can assume that $\mathcal{T}$ is rooted in a bag $B_{\rt}$ containing a single vertex $\rt$, and that every bag $B$ contains exactly one new vertex not belonging to its parent (w.r.t. the order defined by the root bag). We will denote the first bag introducing a vertex $v$ by $B_v$. Note that this defines a bijection between the vertices to the bags. Moreover, this induces a partial order on the vertices $V$, where $v\preceq u$ if $B_v$ is a decedent of $B_u$.
	
	The partition $\mathcal{P}$ of $G$ is defined inductively w.r.t. this partial order. Initially we create a cluster $C_{\rt}$ and set $\pi(C_{\rt})=\pi(\rt)=\rt$, $\delta_{\rt}=0$.
	Consider a vertex $v\in V$, which is introduced at bag $B_v$. By induction all the other vertices in $B_v$ are already labeled and clustered. Let $u$ be the vertex with minimal label among $B\setminus\{v\}$ (breaking ties arbitrarily).
	\begin{OneLiners}
		\item If $\delta_u<r$, set $\pi(v)=\pi(u)$ and $\delta_v=d_G(v,\pi(v))$.
		\item Else ($\delta_u=0$), create new cluster $C_v$ centered at $v$. Set $\pi(C_v)=\pi(v)=v$ and $\delta_v=0$.
	\end{OneLiners}  
	This finishes the description of the algorithm. 
	As each vertex joins a cluster where it has a neighbor (or starts a new cluster), the connectivity follows. 
	For every vertex $v$ which is not a cluster center, $v$ joined a cluster centered in $\pi(u)$ for some neighbor $u$ of $v$ with label $\delta_u<r$.
	It holds that $d_G(v,\pi(v))\le d_G(u,\pi(u))+1\le r$. In other words, the distance from every cluster center to all other vertices in the cluster is bounded by $r$. It follows that the created partition has weak diameter $2r\le\Delta$.
	
	Next, we claim by induction on the tree decomposition, that in every bag $B$, all the vertices with labels strictly smaller than $ r$, $\{v\in B\mid \delta_v< r\}$ belong to the same cluster. Indeed consider a bag $B_v$ introducing $v$, and let $u$ be a vertex minimizing $\delta_u$ among $B'=B\setminus\{v\}$. If $\delta_u\ge  r$, there is nothing to prove. Otherwise, $\pi(v)=\pi(u)$ and by the induction hypothesis  all the vertices with labels strictly smaller than $ r$ belong to the cluster centered at $\pi(u)$.

	Finally we are ready to prove that the partition is indeed scattering. Consider a path $\mathcal{I}=v_0,v_1,\dots,v_q$ where $q\le r$. We will abuse notation and denote $\mathcal{B}_{v_i}$ by $\mathcal{B}_i$ for $i\in[0,q]$.
	Suppose that $v_i$ is the maximal vertex of $\mathcal{I}$ w.r.t. the partial order induced by $\mathcal{T}$. In particular $v_i$ is the first vertex from $\mathcal{I}$ to be clustered in our algorithm.
	Let $j>i$ be the minimal index such that $\pi(v_j)\ne \pi(v_i)$ (if exist).
	As $\mathcal{I}$ is a shortest path, for every $a\in [0,q-2]$, there is no edges between $v_a$ to $v_{a+2}$. Therefore, the path in $\mathcal{T}$ between $\mathcal{B}_i$ to $\mathcal{B}_j$ must go through $\mathcal{B}_{i+1}\dots,\mathcal{B}_{j-1}$ in that order. In particular, $\mathcal{B}_j$ necessarily contains $v_{j-1}$. 
	As $\pi(v_j)\ne \pi(v_i)$ it follows that $\delta_{v_{j-1}}= r$. 
	It must be that $\mathcal{B}_{j-1}$ contains a vertex $\tilde{u}$ such that $\pi(\tilde{u})=\pi(v_i)=\pi(v_{j-1})$ and $\delta_{\tilde{u}}= r-1$. 
	
	We argue that $\delta_{v_j}\le2$. If $v_j$ is a cluster center, then $\delta_{v_j}=0$.
	Otherwise, there is a vertex $u\in\mathcal{B}_j$ such that $v_j$ joins the cluster centered in $\pi(u)$.
	As $\pi(\pi(u))=\pi(u)\ne \pi(\tilde{u})$, and $\delta_{\pi(u)}=0$, necessarily $\pi(u)\notin \mathcal{B}_{j-1}$ (as $\tilde{u}\in\mathcal{B}_{j-1}$ and $\delta_{\tilde{u}}< r$). In particular, the bag $\mathcal{B}_{\pi(u)}$ introducing $\pi(u)$ lies in $\mathcal{T}$ on the path between  $\mathcal{B}_{j-1}$ and  $\mathcal{B}_{j}$. As $v_{j-1}\in  \mathcal{B}_{j-1}\cap \mathcal{B}_{j}$, it must hold that $v_{j-1}\in  \mathcal{B}_{\pi(u)}$. As $G$ is a Chordal graph, it follows that $\{v_{j-1},\pi(u)\}\in E$. We conclude that $\delta_{v_j}=d_G(v_j,\pi(u))\le d_G(v_j,v_{j-1})+d_G(v_{j-1},\pi(u))=2$. See the figure bellow for illustration.
	
	\begin{center}
		\includegraphics[scale=.9]{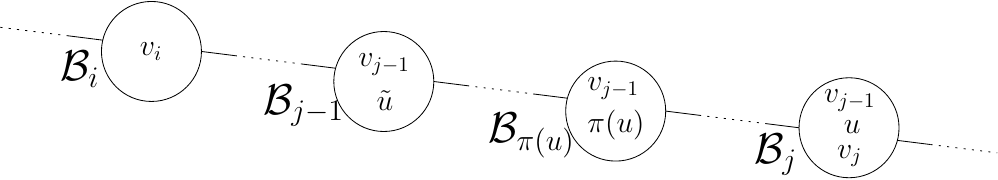}
	\end{center}
	
	Observe, that for every pair of neighboring vertices $\{v,u\}$ where $v$ was introduced before $u$ it holds that $\delta_u\le \delta_v+1$. By induction, for every $j'>j$, $\delta_(v_{j'})\le \delta_(v_{j})+|j'-j|\le 2+|j'-j|$. In particular, it follows that all the vertices $v_j,v_{j+1},\dots,v_{\min\{q,j-2+ r\}}$ belong to the cluster centered at $\pi(v_j)$ (as each cluster $\mathcal{B}_{j'}$ contains some vertex belonging to $C_{\pi(v_j)}$ with label strictly less than $ r$). 
	
	Finally, using case analysis we argue that all the vertices in $\mathcal{I}$ belong to at most $3$ clusters.
	If $1<i<q$, then all the vertices $v_i,\dots,v_q$ belong to at most two clusters ($C_{\pi(v_i)},C_{\pi(v_j)}$). By a symmetric argument, all the vertices $v_1,\dots,v_i$ belong to at most two clusters. Thus the theorem follows.
	For the case $i=1$, we can argue that all the vertices $v_1,\dots,v_{q-1}$ belong to at most two clusters, and hence again the theorem follows. The case $i=q$ is symmetric.
\end{proof}

\section{Cactus Graphs}
In this section we construct scattering partitions for Cactus graphs. Note that weak sparse partitions for this family already follow by \Cref{cor:WeakMinor}, while the lower bound \Cref{thm:treeLBStrong} on strong sparse partitions holds for Cactus graphs, as they contain the family of trees.

\begin{theorem}\label{thm:cactus}
	The family of Cactus graphs is $\left(4,5\right)$-scatterable.
\end{theorem}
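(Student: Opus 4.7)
Fix $\Delta>0$, pick an arbitrary root $\rt\in V$, and write $\ell(v)=d_G(v,\rt)$ for the level of $v$. For every integer $i\ge 0$ define the level slice $\mathcal{R}_i=\{v\in V\mid \ell(v)\in [i\Delta/2,(i+1)\Delta/2)\}$, and let $\mathcal{P}$ consist of all connected components of $G[\mathcal{R}_i]$, across all $i$. Each cluster is connected by construction, so it remains to verify weak diameter at most $\Delta$ and that every shortest path of length at most $\Delta/4$ intersects at most five clusters of $\mathcal{P}$.

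\textbf{Weak diameter.} For a cluster $C\subseteq\mathcal{R}_i$ and any two vertices $u,v\in C$, a $G[C]$-path between $u$ and $v$ decomposes along the block-cut tree of $G$ into segments within consecutive blocks. For each such block $B$, the segment lies inside $B\cap\mathcal{R}_i$, which forces either the top $t_B$ or the antipode $a_B$ to lie in $\mathcal{R}_i$; in either case the realized cycle-arc has length at most $\Delta/2$, since the whole segment is contained in a slice of width $\Delta/2$. Combining these block-wise bounds via the triangle inequality, using the cactus block-cut tree structure, yields a $G$-path between $u$ and $v$ of length at most $\Delta$.

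\textbf{Scattering.} Let $\mathcal{I}=v_0,\ldots,v_q$ be a shortest path with $q\le\Delta/4$, and let $v_j$ be a vertex of $\mathcal{I}$ of minimum level. All levels along $\mathcal{I}$ lie in $[\ell(v_j),\ell(v_j)+q]$, an interval of width at most $\Delta/4$, so $\mathcal{I}$ is contained in at most two consecutive slices $\mathcal{R}_i\cup\mathcal{R}_{i+1}$. The technical core is the following structural lemma: along any shortest path in a cactus, the level function is piecewise monotone with at most four pieces, i.e., it has at most two interior local minima and at most one interior local maximum. This is proved by decomposing the path along the block-cut tree into blocks $B_1,\ldots,B_m$ traversing a unique ``LCA block'' $B^*$: the segment in every non-$B^*$ block runs from a non-top cut vertex to the top $t_{B_k}$ via the shortest arc and is therefore monotone in level, while within $B^*$ the shortest arc between the entry/exit cut vertices $c^*_{in},c^*_{out}$ either passes through $t_{B^*}$ (producing a single local minimum) or through the antipode $a_{B^*}$ (producing a dec-inc-dec-inc pattern with two local minima at $c^*_{in},c^*_{out}$ and one local maximum at $a_{B^*}$). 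Consequently the level function crosses any horizontal line at most four times, so the single boundary at level $(i+1)\Delta/2$ is crossed at most four times, splitting $\mathcal{I}$ into at most five contiguous sub-paths. Each sub-path lies entirely in one slice and is connected in $G$, hence in a single connected component of $G[\mathcal{R}_\bullet]$, giving $Z_{\mathcal{I}}(\mathcal{P})\le 5$.

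\textbf{Main obstacle.} The heart of the argument is the structural lemma bounding the number of monotone pieces of the level function along a shortest path in a cactus; the case analysis inside the LCA block $B^*$ must carefully handle both the top-arc and antipode-arc shortest options, and one must verify that no block outside $B^*$ can introduce an additional interior extremum. The weak diameter bound is also subtle, because the tree-style argument using a common ancestor does not directly apply once cycles are present; one must argue block-by-block that each segment admits a top-arc or antipode-arc shortcut of bounded length inside the slice.
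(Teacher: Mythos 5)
Your approach (level slices from a root, clusters = connected components of each slice) is genuinely different from the paper's: the paper builds the partition inductively along the path-attachment composition of the cactus and obtains clusters with strong diameter and explicit centers, whereas your construction is a direct analogue of the paper's tree partition and yields only weak diameter. Your scattering step — the piecewise-monotonicity lemma giving at most $4$ boundary crossings, hence at most $5$ sub-intervals — is correct and, once the LCA-block case analysis is written out carefully, is a clean alternative to the paper's detour/composition bookkeeping.

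However, there is a genuine gap in the weak-diameter step. With slice width $\Delta/2$, it is \emph{not} true that a connected component of $G[\mathcal{R}_i]$ has weak diameter at most $\Delta$. Concretely, take $\Delta=4$, a root $\rt$, and a cycle through $\rt$ with four vertices $\rt,a,a^*,b$ and edge weights $w(\rt,a)=w(b,\rt)=2$, $w(a,a^*)=w(a^*,b)=1.9$; attach pendant edges $a\!-\!u$ and $b\!-\!v$ of weight $1.9$. Then $\ell(a)=\ell(b)=2$, $\ell(a^*)=\ell(u)=\ell(v)=3.9$, $\ell(\rt)=0$, so $\{a,a^*,b,u,v\}$ is a single connected component of $G[\mathcal{R}_1]$ with $\mathcal{R}_1=[2,4)$, yet $d_G(u,v)=1.9+1.9+1.9+1.9=7.6>\Delta$. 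The block-wise reasoning breaks because (i) an arc inside a block that passes through $t_B$ or $a_B$ while staying in a slice of width $W$ has length up to $2W$, not $W$, and (ii) even the $\le W$ bounds on non-LCA blocks do not naively ``combine via the triangle inequality'' — there can be many blocks. The correct bookkeeping is to telescope through the cut vertices: since in every non-LCA block the segment runs to that block's bottom cut vertex, one gets
\[
d_G(u,v)=\bigl(\ell(u)-\ell(a)\bigr)+d_G(a,b)+\bigl(\ell(v)-\ell(b)\bigr),
\]
where $a,b$ are the cut vertices at which the shortest path enters and leaves the LCA block, and this is $<4W$ where $W$ is the slice width (the middle term is $<2W$, each outer term $<W$). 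With $W=\Delta/2$ this gives $2\Delta$, matching the counterexample up to slack. To actually prove the stated theorem with your construction, take $W=\Delta/4$: then weak diameter $<\Delta$, while a shortest path of length $\le\Delta/4=W$ still spans at most two slices and hence at most $5$ clusters, recovering $(4,5)$-scatterability.
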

As every subgraph of a Cactus graph is also a Cactus graph, using \Cref{thm:Scattering_Implies_SPR} we conclude,
\begin{corollary}\label{cor:cactusSPR}
	Given a Cactus graph $G=(V,E,w)$ with a set $K$ of terminals, there is an efficient algorithm that returns a solution to the \SPR problem with distortion $O(1)$.
\end{corollary}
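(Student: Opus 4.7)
The plan is to mimic the tree construction (Theorem~\ref{thm:tressScat}) closely, using a BFS tree to define the clusters so that the weak-diameter bound carries over, and then doing a more delicate cactus-specific analysis for the scattering bound. Fix $\Delta>0$, root $G$ at an arbitrary vertex $\rt$, and let $T$ be a BFS tree of $G$. Set $\mathcal{R}_i=\{v:d_G(v,\rt)\in[i\Delta/4,(i+1)\Delta/4)\}$, and let $\mathcal{P}$ consist of the connected components of $T[\mathcal{R}_i]$ over all $i$. Each cluster is then a subtree of $T$ whose vertex depths lie in an interval of width $\Delta/4$; by the LCA argument used in Theorem~\ref{thm:tressScat} its $T$-diameter is at most $\Delta/2$, and since $d_G\le d_T$, the weak diameter in $G$ is at most $\Delta/2<\Delta$.

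For the scattering property, let $\mathcal{I}=v_0,\dots,v_s$ be a shortest path in $G$ of length at most $\Delta/4$, and let $v_{i^*}$ be a vertex of $\mathcal{I}$ of minimum BFS-depth, lying in some band $\mathcal{R}_{j^*}$. The triangle inequality forces every vertex of $\mathcal{I}$ to have depth within $\Delta/4$ of $d_G(v_{i^*},\rt)$, so $\mathcal{I}\subseteq\mathcal{R}_{j^*}\cup\mathcal{R}_{j^*+1}$. Split $\mathcal{I}$ at $v_{i^*}$ into two ``sides.'' I would then argue \emph{(a)} that the portion of $\mathcal{I}$ in the shallow band $\mathcal{R}_{j^*}$ sits inside a single cluster, namely the connected component of $T[\mathcal{R}_{j^*}]$ containing $v_{i^*}$, and \emph{(b)} that on each side, the portion in the deep band $\mathcal{R}_{j^*+1}$ contributes at most two clusters. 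Summing gives $1+2+2=5$.

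The main obstacle is step~\emph{(b)}. A change of cluster along one side of $\mathcal{I}$ while staying in $\mathcal{R}_{j^*+1}$ corresponds to $\mathcal{I}$ traversing an edge of $G$ that is not in $T$; in a cactus each such non-tree edge is the unique non-tree edge of some cycle $C$. The key claim is that the length budget of $\Delta/4$ allows at most one such non-tree-edge crossing per side inside $\mathcal{R}_{j^*+1}$: using the non-tree edge of $C$ requires $\mathcal{I}$ to enter $C$ at a cut-vertex, travel along the arc of $C$ that contains the non-tree edge until reaching it, and continue out, all of which commits a substantial portion of the side's length, so only one such detour per side fits within $\Delta/4$. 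To see that the bound of $5$ is actually attained, consider a ``fan'' of cycles $C_j=v_0,a_j,b_j,a_j',v_0$ of length $4$ attached at a common cut-vertex $v_0$ placed just below a band boundary, with BFS tree choosing $a_j$ as the parent of $b_j$: then the shortest path $b_1\to a_1'\to v_0\to a_2'\to b_2$ has length $4=\Delta/4$ and visits the five distinct clusters $\{a_1,b_1\}$, $\{a_1'\}$, the shallow cluster of $v_0$, $\{a_2'\}$, and $\{a_2,b_2\}$, showing that the constant $5$ is tight for this construction.
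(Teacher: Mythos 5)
Your proposal replaces the paper's proof of the underlying scattering theorem (Theorem~\ref{thm:cactus}) with a different construction: bands of BFS--tree depth, as in the tree case, rather than the paper's ear-by-ear composition of the cactus with clusters grown from one- or two-vertex centers. Unfortunately the BFS-band construction does not actually give a $(4,5,\Delta)$-scattering partition on cacti, so both claims~(a) and~(b) as stated are false, and the gap you flag at step~(b) is not the only problem.

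The basic obstruction, which you implicitly rely on not occurring, is that in a tree the depth $d_G(\cdot,\rt)$ is monotone along each half of a shortest path emanating from its deepest ancestor, so the path meets the shallow band $\mathcal{R}_{j^*}$ in a single $T$-connected piece. In a cactus this fails: a shortest path can dip into $\mathcal{R}_{j^*}$ several times, at vertices that are $T$-disconnected inside $\mathcal{R}_{j^*}$. Here is a concrete counterexample with all unit weights. Take $\Delta=28$, so bands have width $r=\Delta/4=7$. Build a root $\rt$, two disjoint arms $\rt,a_1,\dots,a_{14}$ and $\rt,b_1,\dots,b_{14}$, and add the edge $\{a_{14},b_{14}\}$, forming one long cycle. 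Attach a $4$-cycle $a_{13},p_a,q_a,p_a'$ at $a_{13}$ and a $4$-cycle $b_{13},p_b,q_b,p_b'$ at $b_{13}$. This is a cactus. In the shortest-path tree $T$, $a_i$ has parent $a_{i-1}$, $b_i$ has parent $b_{i-1}$, $p_a,p_a'$ have parent $a_{13}$, $q_a$ has parent $p_a$ (say), and symmetrically for the $b$-side; the non-tree edges are $\{a_{14},b_{14}\}$, $\{q_a,p_a'\}$, $\{q_b,p_b'\}$. Now $\mathcal{I}=q_a,p_a',a_{13},a_{14},b_{14},b_{13},p_b',q_b$ is a shortest $q_a$--$q_b$ path of length exactly $7=\Delta/4$ (the alternative through $\rt$ has length $26$). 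Its depth sequence is $15,14,13,14,14,13,14,15$, so it lies in $\mathcal{R}_1\cup\mathcal{R}_2=[7,14)\cup[14,21)$. But the two band-$1$ vertices $a_{13}$ and $b_{13}$ lie in \emph{different} components of $T[\mathcal{R}_1]$ (their tree path passes through $\rt$, in $\mathcal{R}_0$), and in $T[\mathcal{R}_2]$ the vertices $\{p_a,q_a\},\{p_a'\},\{a_{14}\},\{b_{14}\},\{p_b'\},\{p_b,q_b\}$ form six singleton-or-pair components. Altogether $\mathcal{I}$ meets $8$ distinct clusters. So the partition you build is not $(4,5,\Delta)$-scattering, and in fact claim~(a) (one shallow cluster) and claim~(b) (two deep clusters per side) both fail on this instance.

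The informal length-budget argument you offer for~(b) does not save this: the cycle whose non-tree edge is $\{a_{14},b_{14}\}$ has its BFS entry point at $\rt$, two bands above, so crossing that non-tree edge costs almost nothing yet splits clusters; and the two small $4$-cycles contribute further cluster-splitting non-tree crossings at negligible length. The paper avoids all of this by not using a single BFS tree at all: it builds the cactus by attaching paths one ear at a time and grows clusters of radius $\le\Delta/4$ around one- or two-vertex centers, arguing the $\tau\le 5$ bound directly from the order in which ears are attached (the key continuity fact is Observation~\ref{obs:CactusContinuity}). Finally, note that even a correct scattering partition by itself is not the statement of the corollary; you still need to observe (as the paper does in one line) that every induced subgraph of a cactus is again a cactus, and then invoke Theorem~\ref{thm:Scattering_Implies_SPR} to get the $O(1)$-distortion \SPR solution.
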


As a preliminary to the proof of \Cref{thm:cactus}, we begin with a characterization of Cactus graphs:
\begin{claim}\label{clm:CactusDecomp}
	Each weighted Cactus graph $G=(V,E,w)$ can be composed as sequence of Cactus graphs $G_0,G_1,\dots,G_s =G$ (here $G_i=(V_i,E_i)$ ) such that:
	\begin{enumerate}
		\item $G_0$ is a single vertex.
		\item The graph $G_i$ is obtained by attaching a path $P_i=(v_0,v_1,\dots,v_q)$ (disjoint from $V_{i-1}$) to a single vertex $u_i$ of $G_{i-1}$ to either one or both endpoints of $P_i$. That is $V_i=V_{i-1}\cup\{v_0,v_1,\dots,v_q\}$ and $E_i$ consist of the edges in $E_{i-1}$, the edges along the path $P_i$, and at least one of the edges $\{u_i,v_0\},\{u_i,v_q\}$.
	\end{enumerate}
	Further, it holds that the shortest path metrics of $G_i$ and $G$ agree on $V_i$. In other words $\forall u,v\in V_i,~d_{G_i}(u,v)=d_G(u,v)$.
\end{claim} 
\begin{proof}
	The proof is by induction on the number of vertices $n$. The base case is when $G$ is either a single vertex, a path graph, or a cycle graph. All these cases are trivial. Suppose that the claim hols for every Cactus graph with strictly less than $n$ vertices, and consider a cactus graph $G=(V,E)$ with $n$ vertices, which is not covered by tha base of the induction. We define an auxilery graph as follows: Let $\cC$ be the collection of cycles in $G$. Let $E_1\subseteq E$ be subset of $G$ edges that do not belong to a cycle. Our auxilery graph $\cal{T}$ will have $\mathcal{V}=\mathcal{C}\cup E_1$ as its vertex set. For every $X,Y\in \cal{V}$ that share a vertex, we will add an edge to $\cal{G}$. Clearly, as $G$ is a Cactus graph, $\cal{T}$ is a tree. Further, every $X,Y\in \cal{V}$ share at most a single vertex (as every edge belongs to at most a single cycle).
	
	Let $X\in \cal{V}$ be a leaf in the auxilery tree $\cal{T}$ (note that $\cal{T}$ is not a singleton, as we are not in the base case). Let $x\in X$ be the unique vertex that belongs to an additional node in $\cal{T}$ other than $X$. Let $G'$ be the graph obtained by deleting all the vertices in $X$ other than $x$. Note that $G'$ is a cactus graph with strictly less than $n$ vertices, and hence by the induction hypothesis it has a decomposition $G_0,G_1,\dots,G_s =G'$ as in the claim. 
	There are two cases. Suppose first that $X$ is an edge $\{x,y\}$. Then we set $P_{s+1}=(y)$ to be a singleton vertex and attach it to $G_s$ via $x$. $G_{s+1}=G$ as required.
	The second case is that $X$ is a cycle $C=(v_0,v_1\dots,v_q)$ where $v_0=v_q=x$. Here we set $P_{s+1}=(v_1,\dots,v_{q-1})$ and attach it to $G_s$ using the two edges $\{x,v_1\},\{v_{q-1},x\}$ to obtain $G_{s+1}$. Clearly $G_{s+1}=G$, and for every $y,z\in G_s$, $d_G(y,z)=d_{G_s}(y,z)$. The claim now follows.
\end{proof}

We are now ready to proceed to the proof of \Cref{thm:cactus}.	
\begin{proof}[Proof of \Cref{thm:cactus}.]
	Let $\Delta>0$ be some parameter. Set $r=\frac\Delta4$. Each cluster we create $C_T$ will have a center $T$.
	The center $T$ might be either a singleton, or a set of size $2$.
	The clustering is defined inductively using \Cref{clm:CactusDecomp}. 
	First, $G_0=\{v\}$, create a cluster $C_v$ with $v$ as a center. 
	In each step we will either extend previously created clusters or create a new clusters. 
	Consider the $i$'th step in the composition procedure. Suppose that all the vertices in $G_{i-1}$ are already clustered. A new path $P_i=\{v_0,v_1,\dots,v_m\}$ is attached to $G_{i-1}$ at vertex $u\in G_{i-1}$ which belong to a cluster $C_T$. 
	We consider two cases:
	\begin{itemize}
		\item $P_i$ is attached to $u$ via a single edge $\{v_0,u\}$.
		The prefix of the vertices $v_0,\dots,v_{j-1}$ which are at distance at most $r$ (w.r.t. $d_{G_i}=d_{G}$) from $T$ join the cluster $C_T$. Let $v_j$ be the first vertex at distance greater then $r$ from $T$ (if it exist).
		$v_{j}$ is defined as a center of a new cluster $C_{v_j}$. The prefix of the remaining vertices $v_{j},\dots,v_{q-1}$ at distance at most $r$ from $v_j$ join $C_{v_j}$. The vertex $v_q$ (if it exist) is defined as the center of a new cluster $C_{v_q}$. We process  in this manner until all the vertices of $P_i$ are clustered.
		
		\item  $P_i$ is a attached to $\{u\}$ via two edges $\{v_0,u\},\{v_m,u\}$. The clustering of $P_i$ has three phases:
		\begin{itemize}
			\item  The prefix $v_0,v_1,\dots,v_{j-1}$ (resp. the suffix $v_{j'+1},v_{j'+2},\dots,v_{m}$)  of the vertices which are at distance at most $r$ from $T$ join the cluster $C_T$. If not all the vertices of $P_i$ are clustered, denote $t_a=v_j$ and $t_b=v_{j'}$ (it is possible that $t_a=t_b$) and proceed to the next phase.
			\item If $d_{G[P_i]}(t_a,t_b)\le 2r$ go to the next phase. Otherwise, create two new clusters $C_{t_a},C_{t_b}$ centered at $t_a,t_b$ respectively. The prefix (resp. suffix) of the remaining vertices $t_{a}=v_{p'},\dots,v_{p-1}$ (resp. $v_{q+1},\dots,v_{q'}=t_b$) at distance at most $r$ from $t_a$ (resp. $t_b$) joins the cluster $C_{t_a}$ (resp. $C_{t_b}$).
			If all the vertices were clustered we are done. Otherwise, denote $t_a=v_p$ and $t_b=v_{q}$. Repeat phase two.
			\item Set $\tilde{T}=\{t_a,t_b\}$. Create a new cluster $C_{\tilde{T}}$ with $\tilde{T}$ as a center. All the yet unclustered vertices in $P_i$ join $C_{\tilde{T}}$.
		\end{itemize}
	\end{itemize}
	
	\begin{claim}
		The partition has strong diameter $4r=\Delta$.
	\end{claim}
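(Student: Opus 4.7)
The plan is to prove the claim by induction on the composition step $i \in \{0, \dots, s\}$, maintaining the following sharper invariant throughout the algorithm: for every single-center cluster $C_T$ with $T = \{t\}$, every $v \in C_T$ satisfies $d_{G[C_T]}(v, t) = d_G(v, t) \le r$; for every pair-center cluster $C_{\tilde{T}}$ with $\tilde{T} = \{t_a, t_b\}$, every $v \in C_{\tilde{T}}$ satisfies $d_{G[C_{\tilde{T}}]}(v, \tilde{T}) \le r$ and $d_{G[C_{\tilde{T}}]}(t_a, t_b) \le 2r$. Applying the triangle inequality inside each cluster gives a strong diameter bound of $2r$ for single-center clusters and $r + 2r + r = 4r = \Delta$ for pair-center clusters, which is the statement to prove.

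The base case $i = 0$ is trivial since $G_0 = \{\rt\}$. For the inductive step I would separately verify the invariant for each new or grown cluster created when $P_i = v_0,\dots,v_m$ is attached. Freshly created single-center clusters ($C_{v_j}$ in case 1; $C_{t_a}, C_{t_b}$ in phase 2 of case 2) consist of a consecutive sub-path of $P_i$ rooted at the center with every vertex within $r$ of the center along $P_i$; since this sub-path is entirely contained in the cluster the invariant is immediate. The freshly created pair cluster $C_{\tilde{T}}$ in phase 3 is the remaining sub-path of $P_i$ between $t_a$ and $t_b$ of total length at most $2r$ by the case hypothesis, so the invariant follows directly from the structure of $P_i$.

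The substantive case is the growth of an existing cluster $C_T$ when $P_i$ is attached at $u \in C_T$. The cactus property makes $u$ a cut vertex separating the newly attached path or cycle from $G_{i-1}$, so for every prefix vertex $v_l$ we have $d_G(v_l, T) = d_G(v_l, u) + d_G(u, T)$. In case 1 (single-edge attachment), $d_G(v_l, u) = d_{G[P_i]}(v_0, v_l) + w(v_0, u)$ via the simple $P_i$-prefix path which lies entirely in the augmented $G[C_T]$. Concatenating with the $d_{G[C_T]}(u, t) = d_G(u, t)$ path promised by the induction yields $d_{G[C_T]}(v_l, t) \le d_G(v_l, t) \le r$, preserving the invariant.

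The main obstacle is case 2 (two-edge cycle attachment), where $d_G(v_l, u)$ is the minimum of the two directions around the cycle while $G[C_T]$ may initially contain only one direction. I would resolve this with a monotonicity observation: as $l$ grows, the via-$v_0$ length $d_{G[P_i]}(v_0, v_l) + w(v_0, u)$ strictly increases while the via-$v_m$ length strictly decreases, so if a prefix vertex's shorter direction is via $v_m$, all subsequent vertices $v_{l+1}, \dots, v_m$ are even closer to $T$ and are also included in the prefix, forcing the prefix and suffix together to cover the entire $P_i$. In that full-cycle subcase $G[C_T]$ contains both directions, hence $d_{G[C_T]}(v_l, u) = d_G(v_l, u)$ and the argument concludes as in case 1; otherwise the prefix is strict and every prefix vertex's shortest path to $u$ is realized via $v_0$ and thus lies in $G[C_T]$. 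The symmetric statement handles the suffix, and the analogous argument propagates the weaker radius-$r$ bound when the cluster being grown is a pair cluster, using the same cut-vertex property at $u$.
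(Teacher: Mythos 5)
Your approach is the same as the paper's — an induction on the composition steps maintaining a local distance invariant on each cluster, which the paper condenses into a single ``straightforward by induction'' sentence — and the prefix/suffix monotonicity observation that handles the two-edge (cycle) attachment is the right resolution of the only real subtlety in that induction. However, there is one gap: for pair-center clusters you only carry the invariant $d_{G[C_{\tilde{T}}]}(v,\tilde{T})\le r$, dropping the equality $d_{G[C_{\tilde{T}}]}(v,\tilde{T})=d_G(v,\tilde{T})$ that you do keep for singleton centers. The equality is what the induction actually needs, and the paper asserts it uniformly for all clusters. The problem is that pair-center clusters do get grown in later steps (when some $P_{i'}$ is attached at a $u\in C_{\tilde{T}}$), and for a newly joined prefix vertex $v_\ell$ the only route to $\tilde{T}$ inside the augmented $G[C_{\tilde{T}}]$ passes through $u$; from $d_{G[C_{\tilde{T}}]}(u,\tilde{T})\le r$ alone you obtain only $d_{G[C_{\tilde{T}}]}(v_\ell,\tilde{T})\le d_G(v_\ell,u)+r$, which can exceed $r$ even though $d_G(v_\ell,\tilde{T})=d_G(v_\ell,u)+d_G(u,\tilde{T})\le r$. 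You need $d_{G[C_{\tilde{T}}]}(u,\tilde{T})=d_G(u,\tilde{T})$ to make the concatenation tight, so you should strengthen your pair-cluster invariant to include the equality (and then verify it at creation in phase~3 and at growth) exactly as you already do for singleton centers.
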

	\begin{proof}
		Consider a cluster $C_T$.
		It is straightforward by induction that for every vertex $v\in C_T$, $d_{G[C_T]}(v,T)=d_{G}(v,T)\le r$. Thus if $T$ is a singleton, then $C_T$ has strong diameter $2r$.
		Otherwise ($|T|=2$), consider a pair of vertices $v,u\in C_T$. There are centers $t_v,t_u\in T$ such that $d_{G[C_T]}(t_v,v)\le r$, $d_{G[C_T]}(u,t_u)\le r$, and $d_{G[C_T]}(t_v,t_u)\le 2r$. The claim follows by triangle inequality.
	\end{proof}
	
	For two indices $i,i'$ denote $P_i^{i'}=\cup_{q=i}^{i'}P_q$, and $P_{>i}=\cup_{q>i}P_q$.
	For a cluster $C_T$ created during the $i$'th phase, denote by $\tilde{C}_T=C_T\cap P_i$ the set of vertices belonging to $C_T$ by the end of the $i$'th phase.
	Using a straightforward induction, we have the following observation.
	\begin{observation}\label{obs:CactusContinuity}
		\sloppy Consider a center $T$ of the cluster $C_T$ created during the $i$'th phase.
		Then $B_{G[\tilde{C}_T\cup P_{>i}]}(T,r)\subseteq C_T$. In words, every vertex $v\in  P_{>i}$ for which there is a path towards $T$ of length at most $r$ containing vertices from $P_{>i}$ and $\tilde{C}_T$ only, will join $C_T$.
	\end{observation}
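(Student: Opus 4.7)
The plan is to prove the observation by induction on the phase index. For $j \geq i$ let $\hat{G}_j := G_j[\tilde{C}_T \cup P_{i+1}^j]$ (with $P_{i+1}^i := \emptyset$), and let $C_T^{(j)}$ denote the state of $C_T$ at the end of phase $j$. I will show that $B_{\hat{G}_j}(T, r) \subseteq C_T^{(j)}$ for every $j \geq i$; specializing to the last phase $j = s$ gives $\hat{G}_s = G[\tilde{C}_T \cup P_{>i}]$ and yields the observation. The base case $j = i$ is immediate because $B_{\hat{G}_i}(T, r) \subseteq \tilde{C}_T \subseteq C_T^{(i)}$.

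For the inductive step, a key structural point is that attaching $P_j$ to $G_{j-1}$ creates no shortcuts inside $\hat{G}_{j-1}$: whether $P_j$ is pendant or forms a cycle through the single attachment vertex $u_j$, every walk through $P_j$ must re-enter $\hat{G}_{j-1}$ via $u_j$, so no distance inside $\hat{G}_{j-1}$ can be improved. Hence $B_{\hat{G}_j}(T, r) \cap \hat{G}_{j-1} = B_{\hat{G}_{j-1}}(T, r) \subseteq C_T^{(j-1)} \subseteq C_T^{(j)}$, leaving only the new vertices $B_{\hat{G}_j}(T, r) \cap P_j$ to control. If $u_j \notin \hat{G}_{j-1}$, or if $u_j \in \hat{G}_{j-1} \setminus B_{\hat{G}_{j-1}}(T, r)$, then every path from $T$ into $P_j$ inside $\hat{G}_j$ must traverse $u_j$ and has length exceeding $r$, so this intersection is empty and there is nothing to prove.

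In the remaining case $u_j \in B_{\hat{G}_{j-1}}(T, r)$, induction gives $u_j \in C_T^{(j-1)}$, so when the algorithm reaches phase $j$ it sees the cluster of $u_j$ as $C_T$ and attempts to absorb a prefix (and, in the cycle case, also a suffix) of $P_j$ into $C_T$ using the $d_G$-distance rule. Fix any $v = v_\ell \in P_j \cap B_{\hat{G}_j}(T, r)$ together with a shortest path $\pi$ from $T$ to $v$ in $\hat{G}_j$ of length at most $r$. Since $\pi$ is simple and $u_j$ is the unique gateway from $\hat{G}_{j-1}$ into $P_j$, $\pi$ enters $P_j$ at exactly one attachment neighbor, without loss of generality $v_0$. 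Then for every $k \leq \ell$ the initial subpath $T \to \cdots \to u_j \to v_0 \to \cdots \to v_k$ of $\pi$ is a valid walk in $G$ of length at most $r$, giving $d_G(v_k, T) \leq d_{\hat{G}_j}(v_k, T) \leq r$. By the algorithm's absorption rule, the entire prefix $v_0, \ldots, v_\ell$ joins $C_T$ during phase $j$, so $v \in C_T^{(j)}$, closing the induction.

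I do not anticipate a serious obstacle; the argument is essentially structural. The single-gateway property of cactus attachments forces any short path from $T$ into $P_j$ to traverse a contiguous prefix (or suffix) of $P_j$, and the ``no shortcuts'' property ensures the induction passes through cleanly. The only bookkeeping subtlety is connecting the algorithm's $d_G$-distance to the $d_{\hat{G}_j}$-distance of the restricted subgraph, which is handled by the trivial inequality $d_G \leq d_{\hat{G}_j}$.
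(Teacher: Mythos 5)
Your induction over the composition phases is correct, and it is precisely the ``straightforward induction'' that the paper invokes without writing out: the single-attachment-vertex (gateway) property of the cactus composition prevents shortcuts and forces any short path from $T$ into $P_j$ to traverse $u_j$ and then a contiguous prefix (or suffix) of $P_j$, which the absorption rule then places in $C_T$. Since the paper gives no further detail, your argument matches its intended proof; the only minor bookkeeping worth stating explicitly is that $T\subseteq\tilde{C}_T$ and that the subpath of $\pi$ preceding its first entry into $P_j$ stays inside $\hat{G}_{j-1}$, both of which follow immediately from the gateway property you already use.
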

	Consider a shortest path $Q=\{z_0,z_1,\dots,z_\alpha\}$ of length at most $r=\frac\Delta4$.
	Suppose that in the composition procedure, $P_i$ is the path with minimal index which intersects $Q$.
	The intersection between $Q$ and $P_i$ must be an interval  $Q\cap P_i=\{z_{j'+1},\dots,z_{j-1}\}$.
	This is, as the shortest path between vertices in $G_i$ do not contain edges from $G[P_{> i}]$.
	As $w(Q)\le r$, by case analysis, $Q\cap P_i$ can be divided to at most $3$ consecutive clusters. 
	
	Consider the suffix $\{z_{j},z_{j+1},\dots,z_\alpha\}$. Denote by $C_{z_{j-1}}$ the cluster the vertex $z_{j-1}$ joined to. We argue that other than to $C_{z_{j-1}}$, the suffix vertices can join to at most one additional cluster. 
	By the composition procedure, the vertices $\{z_{j},z_{j+1},\dots,z_\alpha\}$ must be added to $G$ in paths $P_{i_1},P_{i_2}\cdots,P_{i_q}$ where $i_1<i_2<\dots<i_q$ and $(Q\cap P_{i_1})\circ (Q\cap P_{i_{2}}) \circ \dots\circ (Q\cap P_{i_q})=\{z_{j},z_{j+1},\dots,z_\alpha\}$.
	Let $i_p$ be the minimal index such that not all the vertices of $P_{i_p}=\{x_0,x_1\dots,x_\beta\}$ join $C_{z_{j+1}}$. If there is no such index, we are done.
	Note that only $x_0$ and $x_\beta$ might have edges to formerly introduced vertices. Hence $x_0$ or $x_\beta$ belong to $Q\cap P_{i_p}$.	
	W.l.o.g. $Q\cap P_{i_p}=\{x_0,x_1\dots,x_\gamma\}$.
	Let $q$ be the minimal index such that $x_q$ not joining the cluster of $z_{j+1}$. 
	Following the construction algorithm, $x_q$ must belong to the center of a cluster $C_T$. As $Q$ is of length at most $r$, all the vertices $\{x_{q+1}\dots,x_\gamma\}\subseteq Q$ join $C_{T}$. By \Cref{obs:CactusContinuity}, all the vertices in $Q\cap P_{i_{p}+1},Q\cap P_{i_{p}+2},\dots,Q\cap P_{i_q}$ also join $C_{T}$.
	
	By symmetric arguments, the prefix vertices $z_{1},z_{2},\dots,z_{j'}$ join to at most a single cluster other than the cluster of $z_{j'+1}$. The theorem follows.
\end{proof}

\section{Discussion and Open Problems}\label{sec:open}
In this paper we defined scattering partitions, and showed how to apply them in order to construct solutions to the \SPR problem. We proved an equivalence between sparse partitions and sparse covers. Finally, we constructed many sparse and scattering partitions for different graph families (and lower bounds), implying new results for the \SPR, \UST, and \UTSP problems.
An additional contribution of this paper is a considerable list of (all but question (5)) new intriguing open questions and conjectures.
\begin{enumerate}
	\item \textbf{Planar graphs:} The \SPR problem is most  fascinating and relevant for graph families which are closed under taking a minor.
	Note that already for planar graphs (or even treewidth $2$ graphs), the best upper bound for the \SPR problem is $O(\log k)$ (same as general graphs), while the only lower bound is $8$. 
	The most important open question coming out of this paper is the following conjecture:
	\begin{conjecture}\label{conj:minor}
		Every graph family excluding a fixed minor is $(O(1),O(1))$-scatterable.
	\end{conjecture}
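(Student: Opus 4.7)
The plan is to attack \Cref{conj:minor} by induction on the excluded clique size $r$, guided by the KPR shortest-path decomposition underlying \Cref{thm:KPR}. The base case $r=3$ (forests) follows from \Cref{thm:tressScat}. For the inductive step, given a $K_r$-minor-free graph $G$ and parameter $\Delta>0$, I would fix an arbitrary $\rt$, grow a BFS shortest-path tree, and extract a ``first-layer'' shortest path $P_1$ whose removal disconnects $G$ into components that, in the KPR iteration, remain $K_{r-1}$-minor-free. Partition $P_1$ into consecutive sub-paths $S_1,S_2,\dots$ of weight roughly $\Delta$. For each component $H$ of $G\setminus P_1$, invoke the inductive hypothesis on $H$ (with a slightly larger scale parameter) to get a constant-parameter scattering partition of $H$, and then ``glue'' each cluster of $H$ to the segment $S_i$ whose closest point on $P_1$ is closest to that cluster. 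This mimics the layered glue step used in \Cref{thm:tressScat} and \Cref{thm:chordalScat}.

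The diameter bound should come from the telescoping sum $\Delta+c\cdot\Delta+c^{2}\cdot\Delta+\cdots\le c^{r}\cdot\Delta$ for a suitable constant $c$, giving a padding $\sigma=\sigma(r)$ depending only on $r$. For the scattering count, a shortest path $\mathcal{I}$ of length at most $\Delta/\sigma$ enters only $O(1)$ segments $S_i$ at the top level (immediate from the short length of $\mathcal{I}$ against the length $\Theta(\Delta)$ of each $S_i$), and by induction it intersects $O_r(1)$ clusters within each component of $G\setminus P_1$ that it visits. Summing across the $O(1)$ such components that $\mathcal{I}$ can visit yields a recurrence $\tau_r\le c_1+c_2\cdot\tau_{r-1}$, hence $\tau_r=O(c_2^{r})$.

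The main obstacle, and the reason the conjecture is nontrivial, is ensuring that $\mathcal{I}$ indeed visits only boundedly many components of $G\setminus P_1$ and that, within one component $H$, the trace of $\mathcal{I}$ behaves like a shortest path in $H$ (rather than exploiting short-cuts through $P_1$). A shortest path in $G$ can enter $H$, exit via $P_1$, wander briefly along $P_1$, and re-enter $H$, creating multiple disjoint sub-paths inside $H$ whose total contribution to $\tau$ must still be $O_r(1)$. The natural fix is to expand each top-level cluster to swallow all off-path vertices whose nearest point on $P_1$ lies in $S_i$, so that a sub-path of $\mathcal{I}$ inside one component corresponds to a contiguous sub-interval of $P_1$ of length $\le|\mathcal{I}|\ll\Delta$, touching $O(1)$ segments; however making this cluster-swallowing compatible with the connectedness requirement across recursion levels (crucial for the scattering definition) is the delicate technical point.

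I expect that a clean resolution will require going beyond bare KPR and invoking the Robertson--Seymour structure theorem to handle the ``almost-embeddable'' pieces separately, together with a refinement of the tree-scattering construction that produces partitions whose inter-cluster adjacency pattern is itself controlled (so that the induction composes cleanly). An intermediate milestone would be to establish the conjecture first for planar graphs, where the additional planar structure should allow one to control the interaction between a shortest path $\mathcal{I}$ and the nested shortest paths of the KPR hierarchy directly.
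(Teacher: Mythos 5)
The paper states this as \Cref{conj:minor}, an explicitly open conjecture --- the author remarks just below it that the statement is unresolved even for outerplanar and treewidth-$2$ graphs --- so there is no proof in the paper to compare you against. What you have written is an attack plan, and you yourself flag that it does not close the conjecture. Two of the gaps you gesture at are genuine and, in their present form, fatal.

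First, the inductive hypothesis cannot be invoked the way your recurrence $\tau_r\le c_1+c_2\tau_{r-1}$ needs it. That bound presumes that for each component $H$ of $G\setminus P_1$ that a short shortest path $\mathcal{I}$ visits, the trace $\mathcal{I}\cap H$ meets only $O(\tau_{r-1})$ clusters because $H$ is $(\sigma_{r-1},\tau_{r-1})$-scatterable. But scatterability of $H$ only controls paths that are shortest \emph{in $H$}. Once $P_1$ is deleted, distances inside $H$ can be arbitrarily larger than in $G$, so $\mathcal{I}\cap H$ is in general not a shortest path in $H$, and the inductive bound simply does not apply to it. Your proposed ``cluster-swallowing'' remedy changes which cluster a vertex joins, not which paths you need to be shortest, so it does not touch the issue; repairing it would require a strengthened induction hypothesis that bounds intersections for a larger class of ``almost-shortest'' paths, and it is not clear such a strengthening is even true.

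Second, the structural claim driving the induction is false as stated: deleting a single shortest path from a $K_r$-minor-free graph does not produce $K_{r-1}$-minor-free components. (Delete any shortest path from a large planar grid; the components are still planar and still contain $K_4$ minors.) The actual KPR/Fakcharoenphol--Rao argument that the paper cites in \Cref{thm:KPR} deletes $\Theta(\Delta)$-width BFS annuli and recurses $r$ times inside each leftover piece; the resulting diameter bound is $O(r^2\Delta)$, not the geometric telescoping you write, and crucially the output is a weak sparse \emph{cover}, hence via \Cref{lem:coverToPartition} a weak sparse partition whose clusters need not be connected. Scattering partitions require connected clusters, which is exactly why the KPR machinery does not resolve the conjecture and why the paper had to build connected clusters directly in its tree, chordal, and cactus constructions. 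Your sketch has not shown how to reproduce that bookkeeping beyond those special families, and your closing paragraph concedes as much.
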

	Note that proving this conjecture for a family $\mathcal{F}$, will imply a solution to the \SPR problem with constant distortion. 
	
	\item \textbf{Scattering Partitions for General Graphs:} While we provide almost tight upper and lower bounds for sparse partitions,  for scattering partitions, the story is different.
	\begin{conjecture}\label{con:GeneralScattering}
		Consider an $n$ vertex weighted graph $G$ such that between every pair of vertices there is a unique shortet path. Then $G$ is $\left(1,O(\log n)\right)$-scatterable. Furthermore, this is tight.
	\end{conjecture}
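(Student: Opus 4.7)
The conjecture has two directions: the upper bound that every $n$-vertex graph with unique shortest paths is $(1,O(\log n))$-scatterable, and a matching $\Omega(\log n)$ lower bound. For the upper bound, my plan is to run the \cite{MPX13} clustering algorithm (\Cref{subsec:MPX}) with the entire vertex set as centers and i.i.d.\ shifts $\delta_v \sim \Exp(\lambda)$ at scale $\lambda = \Theta(\Delta/\log n)$. The same tail computation used in the proof of \Cref{thm:Generalstrong} (a union bound over $n$ exponentials) ensures $\max_v \delta_v = O(\Delta)$ with probability at least $1 - 1/n^c$, so \Cref{clm:MPXshortestpath} gives strong diameter $O(\Delta)$, and one rescales $\lambda$ so that the diameter is exactly $\Delta$. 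Note that the previously known bound from the $(O(\log n),O(\log n))$-scattering partition of \cite{KKN15}, combined with \Cref{obs:beta1}, only yields $(1,O(\log^2 n))$-scattering, so the point of this choice of $\lambda$ is precisely to shave the extra $\log n$ factor.

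To analyze scattering, fix a shortest path $\mathcal{I} = v_0, \dots, v_m$ of total weight at most $\Delta$. Since in any sequence of labels the number of distinct entries is at most one more than the number of changes, $Z_\mathcal{I}(\mathcal{P}) \le S_\mathcal{I}(\mathcal{P}) + 1$, so it suffices to bound $S_\mathcal{I}(\mathcal{P})$, the number of cluster-changing edges along $\mathcal{I}$. A standard edge-separation analysis for \cite{MPX13} (the near-top gap in $\{f_v(t)\}_t$ is distributed like an $\Exp(\lambda)$ random variable by the memoryless property, so the chance that the gap drops below $2w(e)$ is $O(w(e)/\lambda)$) yields $\Pr[e \text{ separated}] = O(w(e)/\lambda)$, hence $\mathbb{E}[S_\mathcal{I}(\mathcal{P})] = O(\Delta/\lambda) = O(\log n)$. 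The decisive step is concentration: one wants $\Pr[S_\mathcal{I}(\mathcal{P}) > C \log n] \le 1/n^3$, at which point a union bound over the $\binom{n}{2}$ pairs of endpoints (each determining a unique shortest path by assumption) completes the argument. To obtain such a tail I would view $S_\mathcal{I}(\mathcal{P})$ as a sum over ``regeneration blocks'': after each switch, the memoryless property lets one restart the analysis with a fresh exponential shift on the new winning center, recasting $S_\mathcal{I}(\mathcal{P})$ as a sum of conditionally bounded Bernoullis amenable to a Freedman/Bernstein-type martingale tail bound. I expect this concentration step, and in particular the dependence created by centers shared between the conditioned stages of the path, to be the main technical obstacle of the upper bound.

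For the lower bound, I would take a generic edge-weight perturbation of the $d$-dimensional hypercube chosen so that all shortest paths become unique while the metric is preserved up to a $1\pm o(1)$ factor, so that $n = 2^d$. Following the proof of \Cref{thm:generalLBsuper}, the edge-isoperimetric inequality forces any weakly $(d/10)$-bounded partition to separate a constant fraction of all hypercube edges, and a uniformly random shortest path of length $d/10$ therefore has $\Omega(d) = \Omega(\log n)$ separated edges in expectation, so some specific path attains this bound. The remaining delicate point is that separated edges only upper bound the number of distinct clusters along $\mathcal{I}$ up to revisits; I would use the cluster-size bound $|C| \le 2^{d/2}$ implied by the isoperimetric computation together with connectedness and the diameter bound to control the expected number of cluster re-entries of a random shortest path, and conclude $Z_\mathcal{I}(\mathcal{P}) \ge (1-o(1))\cdot S_\mathcal{I}(\mathcal{P}) = \Omega(\log n)$ for some $\mathcal{I}$. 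This revisit-counting step is where I expect the lower-bound argument to be most subtle, since it requires a uniform bound across all partitions rather than a single target one.
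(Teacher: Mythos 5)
This statement is Conjecture~\ref{con:GeneralScattering} --- the paper does not prove it. The paper explicitly states ``we do not provide any lower bound for scattering partitions, we present some evidence that general graphs are not $\left(O(1),O(1)\right)$-scatterable,'' and Theorem~\ref{thm:generalLBsuper} (on \emph{super}-scattering) is offered only as heuristic support. So there is no proof in the paper to compare yours against; you are attempting to resolve an open problem.

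On the merits, your outline correctly identifies the two hard steps and does not close either of them. For the upper bound, the entire content of the conjecture is precisely the concentration step you flag: the expectation bound $\E[S_{\mathcal I}(\mathcal P)] = O(\Delta/\lambda)$ is classical, and plugging it into a union bound needs a tail like $\Pr[S_{\mathcal I} > C\log n] \le n^{-3}$. Your ``regeneration blocks'' sketch is exactly where a careful argument is needed: the shifts $\{\delta_v\}$ are reused across the whole path, so the blocks are not independent, and it is not clear that the conditional increments have the bounded-variance / bounded-difference structure needed for a Freedman-type bound. The paper's \cite{KKN15}-based bound of $(O(\log n), O(\log n))$ (hence $(1,O(\log^2 n))$ via Observation~\ref{obs:beta1}) is the state of the art precisely because that extra $\log n$ has not been removed; asserting it can be done with a martingale bound, without supplying the filtration and increment bounds, is not a proof.

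For the lower bound there is a genuine gap you acknowledge but do not fill. The isoperimetric argument of Theorem~\ref{thm:generalLBsuper} bounds $S_{\mathcal I}(\mathcal P)$ --- the number of \emph{separated edges} --- from below. But $Z_{\mathcal I}(\mathcal P)$, the number of \emph{distinct} intersected clusters, satisfies only $Z_{\mathcal I}(\mathcal P) \le S_{\mathcal I}(\mathcal P)+1$, which is the wrong direction. A path can cross many cluster boundaries while oscillating among a constant number of connected clusters, so $S_{\mathcal I}=\Omega(\log n)$ does not imply $Z_{\mathcal I}=\Omega(\log n)$. Your proposed patch, $Z_{\mathcal I} \ge (1-o(1))S_{\mathcal I}$ via cluster-size and connectedness bounds, is an unproved claim, and it is precisely the obstruction that prevented the authors from upgrading their super-scattering lower bound to a scattering lower bound. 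Nothing in your sketch rules out a partition of the perturbed hypercube where every cluster has weak diameter $d/10$, a uniformly random shortest path of length $d/10$ alternates between $O(1)$ clusters, and yet $\Omega(d)$ of its edges are separated.

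In short: your plan is a reasonable research program and correctly isolates the two places where new ideas are required, but neither the concentration step nor the revisit-counting step is carried out, so this does not establish the conjecture.
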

	\Cref{thm:generalLBsuper} provides some evidence that \Cref{con:GeneralScattering} cannot be pushed further. However, any nontrivial lower bound will be interesting.
	Furthermore, every lower bound larger than $8$ for the general \SPR problem will be intriguing.
	
	\item \textbf{Doubling graphs:} While we constructed strong sparse partition for doubling graphs (which imply scattering), it has no implication for the \SPR problem. This is due to the fact that \Cref{thm:Scattering_Implies_SPR} required scattering partition for every induced subgraph. As induced subgraphs of a doubling graph might have unbounded doubling dimension, the proof fails to follow through.
	We leave the required readjustments to future work.

	\item \textbf{Sparse Covers:} We classify various graph families according to the type of partitions/covers they admit, as exhibited in \Cref{fig:Venn_covers}. We currently lack any example of a graph family that admits weak sparse covers but does not admit strong sparse covers. It will be interesting to find such an example, or even more so to prove that every graph that admits weak sparse cover, also has strong sparse cover with (somewhat) similar parameters.
	
	\item \textbf{Treewidth graphs:} The parameters in some of our partitions perhaps might be improved. The most promising example in this context is treewidth $\rho$ graphs. As such graphs exclude $K_{\rho+2}$ as a minor, by \Cref{thm:KPR} they admit $\left(O(\rho^2),2^{\rho+2}\right)$-weak sparse partition scheme. However, they might admit sparse partitions with parameter polynomial, or even logarithmic in $\rho$..
	Recently Hershkowitz and Li \cite{HL22} constructed an $(O(1),O(1))$-scattering partitions for series parallel graphs (alternatively treewidth $2$ graphs), answering an open question posed in the conference version of this paper \cite{Fil20}. However, already for treewidth $3$ graphs the question is wide open.
\end{enumerate}

\section*{Acknowledgments}
The author would like to thank Alexandr Andoni, Robert Krauthgamer, Jason Li, Ofer Neiman, Anastasios Sidiropoulos, and Ohad Trabelsi for helpful discussions.

{\small
	\bibliographystyle{alphaurlinit}
	\bibliography{SteinerBib}
}
\appendix

\end{document}